\pdfoutput=1
\documentclass[a4paper,onecolumn,nopdfoutputerror]{quantumarticle}
\usepackage{cite} 
\usepackage{amsmath, amssymb}
\usepackage{mathrsfs}
\usepackage{xcolor}
\usepackage{url}
\usepackage{dirtytalk}
\usepackage{graphicx}

\usepackage{wrapfig} 
\usepackage[T1]{fontenc}
\usepackage{tcolorbox}
\usepackage{mathtools}
\usepackage{svg}
\usepackage{bm}
\usepackage{multirow}
\usepackage[ruled,lined,linesnumbered]{algorithm2e}
\DontPrintSemicolon                       
\usepackage{xfrac}
\usepackage{braket}
\usepackage{amsthm}
\usepackage{hyperref}
\usepackage{quantikz}
\usepackage{booktabs}
\usepackage{array}
\usepackage{float}

\usepackage{glossaries-extra}
\setabbreviationstyle[acronym]{long-short} 
\makenoidxglossaries 
\glssetcategoryattribute{acronym}{nohyperfirst}{true}



\newacronym{pmp}{PMP}{Pontryagin's Maximum Principle}
\newacronym{lqr}{LQR}{linear quadratic regulator}
\newacronym[longplural={Non-Linear Programs}, shortplural={NLPs}]
           {nlp}{NLP}{non-linear programming}
\newacronym[
  longplural={ordinary differential equations},
  shortplural={ODEs}
]{ode}{ODE}{ordinary differential equation}
\newacronym[
  longplural={partial differential equations},
  shortplural={PDEs}
]{pde}{PDE}{partial differential equation}
\newacronym{qpe}{QPE}{quantum phase estimation}
\newacronym{qft}{QFT}{quantum Fourier transform}
\newacronym{aa}{AA}{amplitude amplification}
\newacronym{qrw}{QRW}{quantum random walk}
\newacronym{lcu}{LCU}{linear combination of unitaries}
\newacronym{qsp}{QSP}{quantum signal processing}
\newacronym{qsvt}{QSVT}{quantum singular value transformation}
\newacronym{qlsa}{QLSA}{quantum linear-systems algorithm}
\newacronym{hs}{HS}{hamiltonian simulation}
\newacronym[
  longplural={algebraic Riccati equations},
  shortplural={AREs}
]{are}{ARE}{algebraic Riccati equation}
\newacronym[
  longplural={differential Riccati equations},
  shortplural={DREs}
]{dre}{DRE}{differential Riccati equation}
\newacronym{hjb}{HJB}{Hamilton--Jacobi--Bellman}
\newacronym[
  longplural={initial-value problems},
  shortplural={IVPs}
]{ivp}{IVP}{initial-value problem}
\newacronym[
  longplural={two-point boundary-value problems},
  shortplural={TPBVPs}
]{tpbvp}{TPBVP}{two-point boundary-value problem}
\newacronym[
  longplural={boundary-value problems},
  shortplural={BVPs}
]{bvp}{BVP}{boundary-value problem}
\newacronym{be}{BE}{block encoding}
\newacronym{dp}{DP}{dynamic programming}
\newacronym{hhl}{HHL}{Harrow–-Hassidim-–Lloyd}
\newacronym{qubo}{QUBO}{Quantum Approximate Optimization Algorithm}

\newacronym{hpm}{HPM}{Hamiltonian Propagator Method}
\newacronym{bdhm}{BDHM}{Block-Discretized Hamiltonian Method}

\usepackage{cleveref}
\allowdisplaybreaks

\hypersetup{
    colorlinks=true,
    linkcolor=blue,
    citecolor=magenta,
    urlcolor=magenta
}

\newtheorem{theorem}{Theorem}

\newtheorem{definition}{Definition}
\newtheorem{assumption}{Assumption}

\newtheorem{lemma}{Lemma}
\newtheorem{corollary}{Corollary}
\newtheorem{remark}{Remark}

\newtheorem{proposition}{Proposition}
\newtheorem{benchmark}{Benchmark}
\newtheorem{problem}{Problem}

\begin{document}
\title[]{Quantum algorithms for optimal control: large-scale finite-horizon linear-quadratic regulation. }

\author{Xiajie Huang\texorpdfstring{\textsuperscript{*,\textdagger}}{}}
\affiliation{School of Mathematical Sciences, Shanghai Jiao Tong University, Shanghai 200240, China}

\author{Cristina Álvarez Yuste\texorpdfstring{\textsuperscript{*}}{}}
\affiliation{Quantum Mads, Calle Larrauri 1, Edificio A, piso 3, puerta 28, 48160 Derio, Spain}
\affiliation{Department of Physical Chemistry, University of the Basque Country UPV/EHU, Apartado 644, 48080 Bilbao, Spain}

\author{Nikita Guseynov\texorpdfstring{\textsuperscript{\textdaggerdbl}}{}}
\affiliation{Global College, Shanghai Jiao Tong University, Shanghai 200240, China}

\author{Nana Liu}
\affiliation{School of Mathematical Sciences, Shanghai Jiao Tong University, Shanghai 200240, China}
\affiliation{Global College, Shanghai Jiao Tong University, Shanghai 200240, China}
\affiliation{Institute of Natural Sciences, Shanghai Jiao Tong University, Shanghai 200240, China.}
\affiliation{Ministry of Education Key Laboratory in Scientific and Engineering Computing, Shanghai Jiao Tong University, Shanghai 200240, China}

\author{Javier González-Conde\texorpdfstring{\textsuperscript{\textsection}}{}}
\affiliation{Quantum Mads, Calle Larrauri 1, Edificio A, piso 3, puerta 28, 48160 Derio, Spain}
\affiliation{Department of Physical Chemistry, University of the Basque Country UPV/EHU, Apartado 644, 48080 Bilbao, Spain}
\affiliation{EHU Quantum Center, University of the Basque Country UPV/EHU, Apartado 644, 48080 Bilbao, Spain}

\begin{abstract}
We develop quantum algorithms for optimal control over a prescribed finite time interval (finite horizon), focusing on continuous-time linear–quadratic regulation (LQR). This foundational control problem often requires solving a nonlinear matrix Riccati equation, which can be costly for large systems and long control durations on conventional classical solvers. We instead work directly with the state-costate optimality equations, reformulating the \gls{lqr} problem as a multiple-shooting linear system. Importantly, we prove a condition-number bound for the system that is independent of the horizon, under stabilizability and detectability assumptions. By incorporating the initial and final conditions into a single linear system, we can solve the subsequent linear system more efficiently with a quantum linear systems subroutine. For a given initial state, our quantum algorithms returns classical estimates of the optimal cost and the optimal control inputs at a selected time, without constructing a Riccati solution operator. Given sparse-matrix block-encodings, an initial-state preparation oracle and a fixed number of control inputs, the query and gate complexities scale polylogarithmically with system size and control duration, provided that sparsity, relevant norms, and stability parameters are bounded. The dependence on the target error is inverse-linear up to logarithmic factors. We establish the required bounds uniformly in system size and duration for a family of damped mass–spring chains with two endpoint controls. Numerical experiments support the analysis.

\end{abstract}

\renewcommand{\thefootnote}{\fnsymbol{footnote}}

\footnotetext[1]{These authors contributed equally to this work.}

\footnotetext[2]{Corresponding author. Emails: xj.huang@sjtu.edu.cn}

\footnotetext[3]{Corresponding author. Emails: guseynov.nm@gmail.com}

\footnotetext[4]{Corresponding author. Emails: javier.gonzalezc@ehu.eus}

\renewcommand{\thefootnote}{\arabic{footnote}}

\maketitle

\section{Introduction}
\label{se:introduction}
Optimal control theory studies how to choose control actions over time in order to steer a dynamical system toward a prescribed objective while optimizing a performance criterion \cite{kirk1970}. It has found broad applications in electric power systems \cite{molzahn2017}, aerospace engineering \cite{conway2010}, process control \cite{qin2003} and mathematical finance \cite{merton1971} etc. A particularly important and analytically tractable class of optimal control problems is the \gls{lqr} model \cite{kirk1970,anderson1990,liberzon2012,BennerMena2018LQR}, in which the dynamics are linear and the cost functional is quadratic. Owing to its simple yet widely applicable linear--quadratic structure, \gls{lqr} models have been extensively studied. Moreover, it serves as a fundamental computational primitive in nonlinear optimal control \cite{LiTodorov2004ILQR,SiderisBobrow2005SLQ}. These connections make \gls{lqr} both a classical benchmark problem and a practically relevant subroutine in modern optimal-control algorithms. 

\gls{lqr} problems are commonly studied in two closely related but conceptually distinct settings: finite-horizon and infinite-horizon control. The finite-horizon formulation optimizes system performance over a prescribed time interval, often with a terminal objective, and therefore leads to a time-dependent optimal feedback law. This setting is particularly natural for transient control tasks, such as maneuvering, tracking, terminal-state regulation, and receding-horizon \cite{kirk1970,anderson1990,liberzon2012, rawlings2017model}. By contrast, the infinite-horizon formulation concerns long-time regulation and, under standard stabilizability and detectability assumptions, yields a stationary stabilizing feedback law. From a computational perspective, these two formulations lead to \gls{dre}: classical finite-horizon \gls{lqr} methods typically require solving a \gls{dre}, whereas the infinite-horizon case is usually reduced to an \gls{are} \cite{kirk1970,anderson1990,liberzon2012}. Since the finite-horizon problem involves a time-dependent matrix differential equation rather than a stationary algebraic equation, it is often computationally more involved, especially for large-scale systems or when high accuracy is required over long time intervals \cite{BennerMena2018LQR,BennerBujanovicKuerschnerSaak2020}. Motivated by both its computational challenges and its relevance to time-limited control tasks, this work focuses on the finite-horizon \gls{lqr} problem.

Recent advances in quantum computing have opened new possibilities for high-dimensional scientific computation. Foundational quantum algorithms such as Shor's factoring algorithm \cite{Shor1994Algorithms} and Grover's search algorithm \cite{Grover1996Search} demonstrate that quantum computers can offer substantial complexity advantages for certain computational tasks. In scientific computing, \glspl{qlsa} \cite{Harrow2009,Ambainis2012VTAA,childs2017,chakraborty2018power,Gilyen2019,Costa2022} and quantum algorithms for ordinary and partial differential equations \cite{Berry2014,Montanaro2016FEM,Berry2017LinearODE,childs2020quantum,Childs2021PDE,PhysRevLett.133.230602,PhysRevA.108.032603,AnLiuLin2023LCHS,ShangGuoAnZhao2025Lindbladians} have emerged as important algorithmic primitives. These developments naturally raise the question of whether quantum algorithms can also accelerate large-scale optimal-control computations. 

Recently, Clayton~\emph{et al.}~\cite{Clayton2024DifferentiableQuantumControl} developed a differentiable quantum-control framework that combines quantum Lyapunov-equation solvers with policy-gradient methods for infinite-horizon \gls{lqr} problems. In parallel, recent quantum algorithms for Lyapunov, Sylvester, algebraic Riccati, and matrix-function problems provide potential building blocks for infinite-horizon \gls{lqr} and related control applications \cite{Liu2025MatrixGeometricMeans,Somma2025LinearMatrixEquations,Benedetti2026Lyapunov,WangLiu2026SignEmbedding, RodenasRuizZhaoLee2026NonlinearMatrixEquations}. Block-encoding and \gls{qsvt} techniques have also been used to formulate quantum pipelines for discrete-time  Linear-Quadratic-Gaussian (LQG) control \cite{Dehaghani2025LQG}, and this discrete-time framework has subsequently been extended to more general discrete-time nonlinear constrained optimal-control problems \cite{Dehaghani2025BarrierSQP}. These works illustrate the growing interest in quantum approaches to optimal control, while also highlighting that the literature remains at an early stage.

Quantum algorithms are particularly well-developed for linear problems, most notably linear systems and linear differential equations. By contrast, quantum algorithms for genuinely nonlinear problems remain less mature, despite recent progress based on linearization or embedding techniques \cite{liu2021dissipative,krovi2023,JinLiu2025QuantumHJ}. This limitation is especially relevant for the \gls{dre}, which is fact a nonlinear matrix differential equation. Recently, Krovi~\cite{Krovi2024OptimalControl} proposed a quantum algorithm for the \gls{dre} based on a M\"obius transformation that linearizes the Riccati dynamics. However, this approach requires solving a matrix-inversion problem whose condition number can grow exponentially with the time horizon \(T\), leading to an exponential dependence on \(T\) in the overall complexity \cite{SchiffShnider1999MobiusRiccati,BehrBennerHeiland2021DavisonMaki}. This becomes prohibitive for long-horizon finite-time control. 

More recently, Wang et al. \cite{wang2026near} developed a quantum method for Riccati equations based on weighted Riesz operators.\footnote{We became aware of this
work while completing the present manuscript.} By treating stable and unstable components of the dynamics separately, their method avoids directly encoding exponentially growing forward propagation. It constructs quantum encodings of Riccati solution matrices and can estimate classical feedback controls given quantum access to the current state. This provides a complementary route to optimal control: this formulation works through the Riccati operator, whereas in our method, we directly use  the state–costate boundary-value system.

In this work, we develop quantum algorithms for continuous-time
finite-horizon \gls{lqr} with a fixed initial state. The outputs
are classical estimates of the optimal cost or the optimal
control vector at one selected temporal grid point. We formulate the state--costate boundary-value problem as a
multiple-shooting linear system and prove a bound on its condition number. Combining this bound with Gauss--Legendre
discretization~\cite{wu2024structure} and quantum estimation
procedures gives the requested classical outputs. In the stated
input model, under suitable bounds on the input and stability
parameters and for a fixed control dimension, the query and
additional gate complexities depend polylogarithmically on the
state dimension and horizon. We also establish the required dimension-independent bounds
on the stability parameters for a family of coupled mass--spring
chains with two endpoint controls under the stated assumptions. Classical numerical experiments examine the conditioning and
discretization estimates for three families of control problems.

The remainder of the paper is organized as follows.
In Section~\ref{sec:informal-results}, we give an overview
of our results and the main ideas behind them.
We formulate the problem and state the assumptions in
Section~\ref{sec:problem-formulation}, and present the main
theorems in Section~\ref{sec:main-results}.
In Section~\ref{sec:algorithm-proof-overview}, we describe
the algorithms and outline the proofs.
We present numerical experiments in Section~\ref{sec:numerics}
and conclude in Section~\ref{sec:conclusion}.
The appendices contain the quantum primitives, detailed
constructions, and proofs.

\section{Overview of results and key ideas}
\label{sec:informal-results}
Optimal control seeks to minimize a cost function for a system
whose \(n\)-dimensional state evolves under \(d\) time-dependent
control inputs. In \gls{lqr}, the dynamics is linear and the
cost function is quadratic. We consider a fixed initial state
and a finite time horizon \(T\). Our algorithms estimate 
the optimal cost and the optimal control vector at one selected
temporal grid point. Both outputs are classical, with a target
error \(\varepsilon\) measured in absolute value for the cost
and in Euclidean norm for the control vector.
The formal problem is stated in
Section~\ref{sec:problem-formulation}.


\paragraph{Informal complexity bounds.}
We assume the sparse-matrix block-encodings and initial-state
preparation oracle specified in
Assumption~\ref{ass:be}, together with stabilizability and
detectability as in Assumption~\ref{ass:sta&det}.
Under suitable bounds on the input and stability parameters,
both the oracle-query counts and the additional gate counts
admit the following simplified bounds.

\[
\begin{aligned}
\text{Optimal-cost estimation:}\qquad
&\widetilde{\mathcal O}\!\left(
\varepsilon^{-1}\operatorname{polylog}(n,d,2+T)
\right),\\
\text{Optimal-control estimation:}\qquad
&\widetilde{\mathcal O}\!\left(
d^{3/2}\varepsilon^{-1}\operatorname{polylog}(n,d,2+T)
\right).
\end{aligned}
\]
Here logarithmic factors in the inverse error and failure
probability are suppressed. Theorems~\ref{thm:main-cost}
and~\ref{thm:main-control} give the full parameter dependence
and distinguish query costs from additional gates.
For fixed \(d\), both bounds are polylogarithmic in \(n\)
and \(T\) within this access model. 

\paragraph{Conditioning from the boundary conditions.}



\gls{pmp} \cite{pontryagin1962,liberzon2012} introduces an auxiliary vector
called the costate. Along an optimal trajectory, it describes
the sensitivity of the minimum remaining cost to changes in
the state. For LQR, the state of the dynamics system and costate together satisfy a
linear differential equation, with conditions imposed at both
the initial and terminal times.

We collect the state and costate at the temporal grid points
and impose the evolution equations between adjacent points,
together with the endpoint conditions. This construction,
known as multiple shooting, gives a block-structured linear
system. Its condition number measures the sensitivity of its
solution to perturbations and enters the cost of the quantum
algorithm.

Although the state--costate dynamics can contain growing modes,
the boundary conditions allow these modes to be treated in
the appropriate direction. A change of variables separates
a part that decays under forward propagation from a part that
decays under backward propagation. The initial and terminal
conditions supply data at the corresponding ends. The influence
of one grid point on another therefore decays with their
separation, which allows us to bound the inverse of the
assembled system.

For a fixed grid spacing \(h\), we obtain a condition-number
bound independent of the time horizon \(T\), with dependence
on \(h\) and the underlying \gls{lqr} problem. This dependence
includes the problem-dependent constant \(M\), defined in
Section~\ref{sec:main-results}. The growth of \(M\) with the
state dimension \(n\) depends on the problem family and need
not be mild, Section~\ref{subsec:state_dimension_dependence} illustrates how the
conditioning depends on the mode of dimensional growth.
This is why the informal complexity bounds above require
suitable bounds on the stability parameters. However, in this work we also provide a concrete example for which
\(M\) admits an upper bound independent of the state
dimension \(n\).



\paragraph{Dependence on the state dimension.}
Our example is a coupled mass--spring chain with two endpoint
controls. Each mass contributes a position and a velocity to
the state, so \(n\) is twice the number of masses, while \(d=2\).

For fixed positive damping, on-site and coupling stiffnesses,
and cost weights, with masses bounded above and away from zero,
we prove that the stability parameters, including \(M\),
as well as the coefficient norms and matrix sparsity, admit
bounds independent of the chain length. With a uniformly
bounded initial-state norm, both estimation tasks therefore
have polylogarithmic dependence on \(n\) and \(T\) in the
stated input model. The formal statement is given in
Section~\ref{sec:main-results}.

\section{Problem formulation and assumptions}
\label{sec:problem-formulation}

We consider the continuous-time finite-horizon \gls{lqr} problem
\begin{equation}\label{eq: LQR payoff}
\min_u J[u]
:=\frac12\int_0^T
\bigl(x(t)^\top Qx(t)+u(t)^\top Ru(t)\bigr)\,\mathrm{d}t
+\frac12x(T)^\top Gx(T),
\end{equation}
subject to
\begin{equation}\label{eq: LQR dynamics}
\dot{x}(t)=Ax(t)+Bu(t),\qquad x(0)=x^0.
\end{equation}
Here \(T>0\), \(x(t)\in\mathbb R^n\), \(u(t)\in\mathbb R^d\),
\(A\in\mathbb R^{n\times n}\), \(B\in\mathbb R^{n\times d}\),
\(Q,G\in\mathbb R^{n\times n}\), and \(R\in\mathbb R^{d\times d}\).
Throughout the paper, we assume
\[
Q=Q^\top\succeq0,\qquad G=G^\top\succeq0,\qquad R=R^\top\succ0.
\]
We write \(\|\cdot\|\) for the Euclidean norm of a vector and the
spectral norm of a matrix, \(\|X\|_{\max}:=\max_{i,j}|X_{ij}|\),
and \(\kappa(X):=\|X\|\|X^{-1}\|\) for a nonsingular square matrix. Further notation and conventions are collected in
Appendix~\ref{app:notation}.

For a fixed initial state \(x^0\), we study the following two tasks.
The temporal grid is \(t_k=kh\), where \(h=T/(m-1)\), \(m\geq2\),
and \(k=0,\ldots,m-1\).

\begin{problem}[Optimal-cost estimation]
\label{pro:optimal-cost-estimation}
Let \(J^*\) be the optimal value in
\eqref{eq: LQR payoff}--\eqref{eq: LQR dynamics}.
Given \(\varepsilon>0\) and \(0<\delta<1\), compute a classical
estimate \(J^*_{\rm est}\) such that
\[
\Pr\bigl(|J^*-J^*_{\rm est}|\leq\varepsilon\bigr)\geq1-\delta.
\]
\end{problem}

\begin{problem}[Optimal-control estimation]
\label{pro:optimal-control-estimation}
Let \(u^*(t)\) be the optimal control in
\eqref{eq: LQR payoff}--\eqref{eq: LQR dynamics}.
Given a grid index \(k\), \(\varepsilon>0\), and \(0<\delta<1\),
compute a classical vector \(u_{\rm est}(t_k)\in\mathbb R^d\) such that
\[
\Pr\bigl(\|u^*(t_k)-u_{\rm est}(t_k)\|\leq\varepsilon\bigr)
\geq1-\delta.
\]
\end{problem}

\paragraph{Input model.}
We assume block-encoding access to the system matrices. An exact
\((\alpha_X,a_X)\)-block-encoding of a matrix \(X\) is a unitary
\(U_X\) satisfying
\[
(\langle0|_{a_X}\otimes I)U_X(|0\rangle_{a_X}\otimes I)
=X_{\rm pad}/\alpha_X,
\]
where \(X_{\rm pad}\) is obtained by zero-padding \(X\) to a square
matrix of power-of-two dimension. The definition allowing
approximation error is given in Appendix~\ref{sec:block-encoding}.

\begin{assumption}[Input access]\label{ass:be}
Each of \(A,B,Q,R,G\) is \(s\)-sparse, meaning that every row and
column has at most \(s\) nonzero entries. For each nonzero
\(X\in\{A,B,Q,R,G\}\), we have an exact
\((\alpha_X,a_X)\)-block-encoding \(U_X\), with
\(\alpha_X=s\|X\|_{\max}\). Writing \(l_n=\lceil\log_2n\rceil\) and
\(l_d=\lceil\log_2d\rceil\), using the normalization and ancilla counts of
the sparse-matrix construction in Lemma~\ref{lem:be-sparse} we take
\[
a_A=a_Q=a_G=l_n+3,\qquad
a_R=l_d+3,\qquad a_B=\max\{l_n,l_d\}+3.
\]
For \(x^0\neq0\), its norm is known and a unitary \(U_{x^0}\)
prepares \(|x^0\rangle=x^0/\|x^0\|\), with zero padding if needed.
The controlled versions and inverses of these oracles are available.
\end{assumption}

This access model uses the sparse-matrix construction in Lemma~\ref{lem:be-sparse}. Here, for ease of analysis, we assume that the block-encodings specified in the input access model are exact. We count oracle queries and additional
one-qubit and CNOT gates separately. 

\paragraph{Stability assumptions.}
To obtain conditioning estimates uniform in the horizon, we impose
the following assumptions.

\begin{assumption}[Stabilizability and detectability]
\label{ass:sta&det}
The pair \((A,BR^{-1/2})\) is stabilizable and the pair
\((Q^{1/2},A)\) is detectable. This means, there exist matrices
\(K\in\mathbb R^{d\times n}\) and \(F\in\mathbb R^{n\times n}\)
such that \(A+BR^{-1/2}K\) and \(A+FQ^{1/2}\) are Hurwitz -- all the eigenvalues of these matrices have strictly negative
real parts.
\end{assumption}

These assumptions are used for the horizon-uniform estimates;
they are not required for finite-horizon solvability under the
positivity conditions above. 

\paragraph{Boundary-value formulation.}
Applying the \gls{pmp} gives the following necessary and sufficient
conditions for optimality \cite{pontryagin1962,liberzon2012}:
\begin{equation}\label{eq: ODE for LQR}
\frac{\mathrm{d}}{\mathrm{d}t}
\begin{bmatrix}x(t)\\ \lambda(t)\end{bmatrix}
=\underbrace{
\begin{bmatrix}A&-BR^{-1}B^\top\\-Q&-A^\top\end{bmatrix}
}_{=:H}
\begin{bmatrix}x(t)\\ \lambda(t)\end{bmatrix},
\end{equation}
with
\begin{equation}\label{eq: BC for LQR}
x(0)=x^0,\qquad \lambda(T)=Gx(T).
\end{equation}
Here \(x(t)\) and \(\lambda(t)\) are the optimal state and costate. We denote the system matrix by \(H\) to reflect its Hamiltonian
structure, namely,
\[
H^\top \mathbb J_{2n}+\mathbb J_{2n}H=0,\qquad
\mathbb J_{2n}:=\begin{bmatrix}
0&I_n\\
-I_n&0
\end{bmatrix}.
\]
The term ``Hamiltonian matrix'' here refers to this algebraic
property and should be distinguished from a Hamiltonian operator in quantum mechanics.
In particular, \(H\) need not be Hermitian. Then, the optimal control and cost satisfy
\begin{equation}\label{eq: u and lambda relation}
u^*(t)=-R^{-1}B^\top\lambda(t),
\qquad J^*=\frac12(x^0)^\top\lambda(0).
\end{equation}
The cost identity is proved in
Appendix~\ref{app:optimal-cost-identity-proof-v2}.
Thus both tasks reduce to extracting particular quantities from
the solution of this linear \gls{tpbvp}.

\section{Main results}
\label{sec:main-results}

Let \(y_k=(x(t_k)^\top,\lambda(t_k)^\top)^\top\) and
\(\Phi_{\rm ex}=e^{hH}\). The exact nodal values satisfy
\[
y_k-\Phi_{\rm ex}y_{k-1}=0,\qquad k=1,\ldots,m-1,
\]
together with
\[
B_0y_0+B_Ty_{m-1}=\begin{bmatrix}x^0\\0\end{bmatrix},
\qquad
B_0=\begin{bmatrix}I_n&0\\0&0\end{bmatrix},\quad
B_T=\begin{bmatrix}0&0\\-G&I_n\end{bmatrix}.
\]
Placing the boundary condition first and then the propagation
equations gives \(L_{\rm ex}q_{\rm ex}=b\), where
\(q_{\rm ex}=(y_0^\top,\ldots,y_{m-1}^\top)^\top\).
The full matrix is given in Section~\ref{sec:symplectic-preserving}.

\paragraph{Conditioning.}
Let \(P(t)\) be the solution of the finite-horizon \gls{dre} in Eq.~\eqref{eq:finite-horizon-riccati} with terminal
value \(P(T)=G\), and let \(\Psi(t,s)\) be the propagator of
\(\dot v=(A-BR^{-1}B^\top P(t))v\).
Under Assumption~\ref{ass:sta&det}, there are constants
\(M_c\geq1\), \(\alpha_c>0\), and \(M_P\geq0\), independent of
\(T\), such that
\[
\|\Psi(t,s)\|\leq M_ce^{-\alpha_c(t-s)},\quad 0\leq s\leq t\leq T,
\qquad \sup_{t\in[0,T]}\|P(t)\|\leq M_P.
\]
These estimates are established in
Propositions~\ref{prop:finite-horizon-closed-loop-stability}
and~\ref{prop:horizon-uniform-riccati-bound}. Set
\begin{equation}
\label{eq:def-M}
M:=M_c^2
\left(1+\frac{M_c^2\|B\|^2\|R^{-1}\|}{2\alpha_c}\right)^2
(1+M_P)^2.
\end{equation}

\begin{theorem}[Horizon-uniform conditioning]
\label{thm:main-conditioning}
Under Assumption~\ref{ass:sta&det}, \(L_{\rm ex}\) is nonsingular and
\[
\|L_{\rm ex}^{-1}\|
\leq M\frac{1-e^{-\alpha_chm}}{1-e^{-\alpha_ch}}.
\]
Moreover,
\begin{equation}\label{eq:condition-number-scaling}
\kappa(L_{\rm ex})
\leq M\left(1+\max\left\{\|e^{hH}\|,\sqrt{1+\|G\|^2}\right\}\right)
\frac{1-e^{-\alpha_chm}}{1-e^{-\alpha_ch}}.
\end{equation}
In particular, for fixed system matrices and a fixed step size
\(h>0\), this bound is independent of \(T\).
If \(L\) is obtained by replacing \(\Phi_{\rm ex}\) with its approximation \(\Phi\) 
and \(\|\Phi-\Phi_{\rm ex}\|\|L_{\rm ex}^{-1}\|\leq1/2\), then
\(L\) is nonsingular and \(\kappa(L)\leq3\kappa(L_{\rm ex})\).
\end{theorem}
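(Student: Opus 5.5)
The plan is to invert $L_{\rm ex}$ explicitly by solving $L_{\rm ex}q=r$ for an arbitrary right-hand side $r=(r_{\rm bc},r_1,\dots,r_{m-1})$. The tool is the Riccati change of variables $z(t)=\lambda(t)-P(t)x(t)$, which decouples the Hamiltonian flow into a forward-stable part and a backward-stable part.

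\textbf{Step 1: the decoupling transformation.} Let $S=BR^{-1}B^\top$ and $A_{\rm cl}(t)=A-SP(t)$. Since $P$ solves the \gls{dre} with $P(T)=G$, a direct computation shows that the transformation
\[
T(t)=\begin{bmatrix}I&0\\-P(t)&I\end{bmatrix}
\]
conjugates $e^{hH}$ on each interval $[t_{k-1},t_k]$ into a block upper-triangular propagator
\[
\begin{bmatrix}\Psi(t_k,t_{k-1})&W_k\\0&\Psi(t_{k-1},t_k)^{-\top}\end{bmatrix}.
\]
The $z$-component evolves by $\dot z=-A_{\rm cl}^\top z$, so it is propagated backward by $\Psi^\top$. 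The coupling block is
\[
W_k=-\int_{t_{k-1}}^{t_k}\Psi(t_k,s)\,S\,\Psi(t_{k-1},s)^{\top}\,ds .
\]
The point of this form is that, written as backward dynamics for $z$, every block is contractive in the right direction.

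\textbf{Step 2: solving the transformed system with a discrete variation of constants.} The boundary condition $\lambda_{m-1}-Gx_{m-1}=\cdot$ becomes $z_{m-1}=\cdot$, because $P(T)=G$. I then solve in two sweeps.
\begin{itemize}
\item Solve backward for the $z_k$. They are sums of the terms $\Psi(t_j,t_k)^\top(\cdot)r_j$, which are bounded by $M_ce^{-\alpha_c(t_j-t_k)}$.
\item Solve forward for the $x_k$, starting from $x_0=r_{\rm bc}$. Here the $z$'s enter through the coupling integrals.
\end{itemize}
The integral $\int\Psi S\Psi^\top$ over all intermediate times is bounded by $M_c^2\|B\|^2\|R^{-1}\|/(2\alpha_c)$, since $\int_0^\infty e^{-2\alpha_c\tau}\,d\tau=1/(2\alpha_c)$. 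This is the source of the middle factor in $M$.

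The factors $(1+M_P)$ come from the two changes of variables:
\begin{itemize}
\item applying $T(t_k)$ to the right-hand side costs $\|T\|\le 1+M_P$;
\item returning via $T^{-1}$ costs another $1+M_P$.
\end{itemize}
This accounts for the $(1+M_P)^2$ factor in $M$.

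Collecting the estimates, each block $(L_{\rm ex}^{-1})_{jk}$ is bounded by $M e^{-\alpha_c h|j-k|}$. The row sums and column sums are then both at most $\sum_{\ell=0}^{m-1}e^{-\alpha_c h\ell}=\frac{1-e^{-\alpha_chm}}{1-e^{-\alpha_ch}}$. A Schur test (the bound $\|X\|\le\sqrt{\|X\|_1\|X\|_\infty}$ at the block level) gives the stated bound on $\|L_{\rm ex}^{-1}\|$. Nonsingularity follows because this construction produces a solution for every $r$.

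\textbf{Step 3: $\|L_{\rm ex}\|$ and the perturbation claim.} Write $L_{\rm ex}=D+N$, where $D$ is the block-diagonal part and $N$ collects the $-\Phi_{\rm ex}$ blocks.
\begin{itemize}
\item The first block row is $[B_0,0,\dots,B_T]$, and $\|[B_0\ B_T]\|\le\sqrt{1+\|G\|^2}$.
\item The identities on the diagonal have norm $1$.
\item The subdiagonal $-\Phi_{\rm ex}$ blocks have norm at most $\|e^{hH}\|$.
\end{itemize}
Bounding the identity part by $1$ and the off-diagonal parts (the $\Phi$ blocks together with the boundary row) by their maximum gives $\|L_{\rm ex}\|\le 1+\max\{\|e^{hH}\|,\sqrt{1+\|G\|^2}\}$. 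The boundary row and the subdiagonal occupy disjoint block columns apart from the first and last, so I will check this carefully. Multiplying by the bound on $\|L_{\rm ex}^{-1}\|$ gives \eqref{eq:condition-number-scaling}.

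For the perturbed matrix, $\|L-L_{\rm ex}\|=\|\Phi-\Phi_{\rm ex}\|$, because the difference is block-subdiagonal with identical blocks. A Neumann series then gives $\|L^{-1}\|\le 2\|L_{\rm ex}^{-1}\|$. Also $\|L\|\le\|L_{\rm ex}\|+\tfrac{1}{2}\|L_{\rm ex}^{-1}\|^{-1}\le\tfrac32\|L_{\rm ex}\|$, so $\kappa(L)\le3\kappa(L_{\rm ex})$.

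\textbf{Main obstacle.} The hardest part is Step 2, and specifically getting exactly the constant $M$ in \eqref{eq:def-M} for the block bound. The forward $x$-sweep picks up a convolution of the decaying $z$-solution with the decaying $\Psi$. A naive estimate would give a factor of $m$ or $1/(\alpha_ch)$ rather than a geometric decay in $|j-k|$. My approach is to express the $x$-contribution as a single integral $\int\Psi(t_j,s)S\Psi(t_k,s)^\top ds$ over $s\le\min(t_j,t_k)$. This integral is bounded by $M_c^2\|S\|e^{-\alpha_c|t_j-t_k|}/(2\alpha_c)$, which yields geometric decay with the right constant.
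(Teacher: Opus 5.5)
Your route differs from the paper's and can be made to work, but Step 2 as written does not prove $\|L_{\rm ex}^{-1}\|\le M\gamma_h$, with $\gamma_h=(1-e^{-\alpha_chm})/(1-e^{-\alpha_ch})$.

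\textbf{Comparison with the paper.} The paper conjugates by the full Riccati--Lyapunov factor $\mathcal S(t)$, which includes $W(t)$. This turns the system into $\operatorname{diag}(L_v,L_z)$, with both pieces one-sided (triangular). The geometric sum then enters only once, each piece costs $M_c\gamma_h$, and all coupling is paid pointwise through $\|\mathcal S\|,\|\mathcal S^{-1}\|\le(1+M_P)(1+\|W\|)$. You keep only the Riccati factor, so you are left with a genuinely two-sided coupling kernel $\int_0^{\min(t_k,t_j)}\Psi(t_k,s)D\Psi(t_j,s)^\top\,\mathrm ds$. This kernel equals $\Psi(t_k,t_j)W(t_j)$ for $k\ge j$, so $W$ reappears implicitly. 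Your decay estimate for it is correct.

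\textbf{Two claims in Step 2 fail.}
\begin{enumerate}
\item The block bound $Me^{-\alpha_ch|k-j|}$ is false in the native block indexing of $L_{\rm ex}$. Block column $0$ carries the terminal datum $z_{m-1}=\lambda_{m-1}-Gx_{m-1}$, which equals the second half of $r_0$. Its influence decays in $m-1-k$, not in $k$. Indeed $\|(L_{\rm ex}^{-1})_{m-1,0}\|\ge(1+\|G\|)^{-1}$, not $O(e^{-\alpha_cT})$.
\item For a two-sided kernel, $\max_k\sum_j e^{-\alpha_ch|k-j|}$ can reach about $(1+e^{-\alpha_ch})/(1-e^{-\alpha_ch})\approx2\gamma_h$, not $\gamma_h$. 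So a uniform block bound with constant $M$ plus the Schur test gives only about $2M\gamma_h$, which is not the theorem.
\end{enumerate}

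\textbf{Repair.} Split the boundary row. Attach its $z$-part to node $m-1$, as the paper's signed permutation $\widehat g$ does, and attach the $z$-part of matching row $k$ to node $k-1$. In the variables $(x,z)$ the transformed inverse is then $\begin{bmatrix}X_{xx}&X_{xz}\\0&X_{zz}\end{bmatrix}$.
\begin{itemize}
\item $X_{xx}$ and $X_{zz}$ are one-sided, so each has norm at most $M_c\gamma_h$.
\item Only $X_{xz}$ is two-sided, and the block Schur test gives $\|X_{xz}\|\le2c_W\gamma_h$ with $c_W=M_c^2\|D\|/(2\alpha_c)$.
\end{itemize}
Hence $\|L_{\rm ex}^{-1}\|\le(1+M_P)^2(M_c+2c_W)\gamma_h$. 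This is at most $M\gamma_h$, because $M_c\ge1$, $\|D\|\le\|B\|^2\|R^{-1}\|$, and $1+2c\le(1+c)^2$. So your method does not produce $M$ ``exactly''. It gives a smaller constant and avoids the paper's extra factor $M_c$ from rescaling the $z$-rows by $\Psi_k^\top$, but only once the one-sided and two-sided parts are kept separate.

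\textbf{Step 3.} Put $B_0$ in the diagonal part $\operatorname{diag}(B_0,I,\dots,I)$, which has norm $1$, and group only $B_T$ with the $-\Phi_{\rm ex}$ blocks. Then each block row and column of the off-diagonal part has one nonzero block, so its norm is $\max\{\|e^{hH}\|,\sqrt{1+\|G\|^2}\}$. Lumping the whole first row with the $\Phi$ blocks puts $B_0$ and $-\Phi_{\rm ex}$ in the same block column and yields only $\sqrt{1+\|e^{hH}\|^2}$ there. The perturbation argument is correct.

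\textbf{Typo.} The $z$-block of your transformed propagator should be $\Psi(t_k,t_{k-1})^{-\top}=\Psi(t_{k-1},t_k)^\top$, not $\Psi(t_{k-1},t_k)^{-\top}$.
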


\noindent The proof uses a Riccati--Lyapunov transformation to separate
forward-stable and backward-stable propagation; see
Section~\ref{sec:condition_error}. The Riccati and Lyapunov
solutions are used only in the conditioning analysis and are not computed by the quantum algorithm.

\paragraph{Quantum estimation.}
For the quantum algorithms below, choose
\[
m=1+\max\{1,\lceil8\pi T\|H\|\rceil\},\qquad h=\frac{T}{m-1},
\]
so that \(h\|H\|<1/(4\pi)\). We approximate \(e^{hH}\) by a
Gauss--Legendre Runge--Kutta method and construct a block-encoding
of \(L^{-1}\) using \gls{qsvt}
(Lemma~\ref{lem:inverse-be-approx}).
The required quantities are then estimated using Hadamard tests
and amplitude estimation (Lemma~\ref{lem:Hadamard-test}).
The stage number is increased to meet the target accuracy while
the temporal grid is kept fixed. The Runge--Kutta coefficient
oracles depend only on the stage number and can be constructed
with polynomial cost in that number; this cost is absorbed in
the logarithmic factors below.

For this grid, Theorem~\ref{thm:main-conditioning} gives
\(\|L_{\rm ex}^{-1}\|=\mathcal O(M(1+\alpha_c^{-1}\|H\|))\).
To state the two complexity bounds, define
\[
\mathcal C
:=M^2(1+\alpha_c^{-1}\|H\|)^2s^3\kappa(R)^2
\bigl(s^2\kappa(R)+s\|G\|_{\max}+1\bigr).
\]
The notation \(\widetilde{\mathcal O}\) suppresses logarithmic
factors in the precision and problem parameters.

\begin{theorem}[Optimal-cost estimation]
\label{thm:main-cost}
Under Assumptions~\ref{ass:be} and~\ref{ass:sta&det}, there exists
a quantum algorithm that solves Problem~\ref{pro:optimal-cost-estimation}
using
\[
\widetilde{\mathcal O}\left(
\frac{\mathcal C\|x^0\|^2}{\varepsilon}\log\frac{2}{\delta}
\right)
\]
oracle queries.
\end{theorem}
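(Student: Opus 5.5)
The strategy is to reduce $J^*=\tfrac12 (x^0)^\top\lambda(0)$ to an inner product $\langle \psi|L^{-1}|\phi\rangle$ and estimate it with a Hadamard test plus amplitude estimation.

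\textbf{Setting up the linear system.} Take the grid $m=1+\max\{1,\lceil 8\pi T\|H\|\rceil\}$. Replace $\Phi_{\rm ex}=e^{hH}$ by the $r$-stage Gauss--Legendre Runge--Kutta propagator $\Phi$, writing it through its stage equations so that the propagator never has to be block-encoded directly. Since $h\|H\|<1/(4\pi)$, the local error $\|\Phi-\Phi_{\rm ex}\|$ decays like $(c\,h\|H\|)^{2r}$. Choosing $r=\mathcal O(\log(\cdot))$ then gives:
\begin{itemize}
\item $\|\Phi-\Phi_{\rm ex}\|\,\|L_{\rm ex}^{-1}\|\le 1/2$;
\item a global nodal error $\|q-q_{\rm ex}\|\le \|L^{-1}\|\,\|(L-L_{\rm ex})q_{\rm ex}\|$ of size at most $\varepsilon/(3\|x^0\|)$.
\end{itemize}
To bound $\|q_{\rm ex}\|$ polynomially, use $\|y_k\|\le (1+M_P)M_c\|x^0\|$, which follows from $\lambda=Px$ and the closed-loop estimate. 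By the last clause of Theorem~\ref{thm:main-conditioning}, $\kappa(L)\le 3\kappa(L_{\rm ex})$ and $\|L^{-1}\|\le 2\|L_{\rm ex}^{-1}\|=\mathcal O(M(1+\alpha_c^{-1}\|H\|))$. Because $m$ enters only through $\log m$ ancillas, the dependence on $T$ is polylogarithmic.

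\textbf{Block-encoding $L$ and $L^{-1}$.}
\begin{itemize}
\item First build a block-encoding of $H$ from $U_A,U_B,U_Q,U_R$. The term $BR^{-1}B^\top$ requires a QSVT inverse of $R/\alpha_R$, costing $\widetilde{\mathcal O}(s\kappa(R))$ queries with normalization $\sim s\|B\|_{\max}^2 s\kappa(R)/\|R\|$. Combining this with $A$ and $Q$ by LCU accounts for the factor $s^2\kappa(R)+s\|G\|_{\max}+1$ in the normalization $\alpha_L$ of $L$. The $G$ enters through $B_T$, and the identity and shift blocks on the time register enter through LCU.
\item Then apply Lemma~\ref{lem:inverse-be-approx} to obtain a block-encoding of $L^{-1}/\beta$ with $\beta\sim\|L^{-1}\|$, using $\widetilde{\mathcal O}(\alpha_L\|L^{-1}\|)$ calls to $U_L$.
\end{itemize}
The right-hand side $b=(x^0,0,\dots)$ is prepared by $U_{x^0}$. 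The readout vector $e_0\otimes(x^0,0)$, i.e. the $x$-component at node~$0$ paired with the $\lambda$-slot, is prepared by the same oracle.

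\textbf{Estimation and error budget.} The target is
\[
J^*\approx \tfrac12\|x^0\|^2\,\beta\,\langle \tilde x^0|U_{L^{-1}}|b\rangle,
\]
where $\tilde x^0$ is placed in the $\lambda(0)$ slot. A Hadamard test with amplitude estimation recovers the real part to additive accuracy $\eta$ using $\mathcal O(\eta^{-1}\log(2/\delta))$ repetitions; the median trick handles the failure probability. Set $\eta=\varepsilon/(\|x^0\|^2\beta)$. The error then splits into three parts, each allotted $\varepsilon/3$: discretization, QSVT inversion, and statistical.

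Multiplying the pieces gives the query count
\[
\frac{\|x^0\|^2\beta}{\varepsilon}\cdot \alpha_L\|L^{-1}\|\cdot s\kappa(R)\cdots .
\]
This should reproduce $\mathcal C=M^2(1+\alpha_c^{-1}\|H\|)^2\cdot s^3\kappa(R)^2(\dots)$, with one factor of $\|L^{-1}\|$ coming from $\beta$ and another from the QSVT degree.

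\textbf{Main obstacle.} The hardest part is the bookkeeping of $\alpha_L$ and the Runge--Kutta stage structure. The stage-equation form must not inflate the condition number beyond a constant, and the normalization has to be carried precisely so the final product matches $\mathcal C$. My first attempt would be to eliminate the stages inside the block-encoding, computing $\Phi$ via a small QSVT inverse of $I-hA_{\rm RK}\otimes H$. Since $h\|H\|$ is small, this operator has condition number $\mathcal O(1)$, so the elimination should not change the conditioning.
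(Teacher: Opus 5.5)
Your route is the paper's. It uses the same grid and eliminates the Gauss--Legendre stages through a QSVT inverse of $\mathcal G_h=I-\mathcal A\otimes hH$; that version, not keeping stage unknowns inside the shooting system, is the one Theorem~\ref{thm:main-conditioning} covers. It also uses the same perturbation condition $\Delta_hK_{\rm ex}\le\frac12$, the same chain $R^{-1}\to H\to\Phi\to L\to L^{-1}$, and a Hadamard-test readout of $\bra{c}L^{-1}\ket{b}$ with an $\varepsilon/3$ split.

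The gap is that the actual content of the theorem, the factor $\mathcal C$, is never derived. Your product stops at ``$\cdots$'', and your one remark on the stage elimination points the wrong way. That $\kappa(\mathcal G_h)<3$ does not make the inversion cheap. Lemma~\ref{lem:inverse-be-approx} costs $\widetilde{\mathcal O}(\alpha_{\mathcal G_h}\|\mathcal G_h^{-1}\|)$ queries, where $\alpha_{\mathcal G_h}=1+h\alpha_{\mathcal A}\alpha_H$ is the normalization of the available encoding. Here $\alpha_H=2s\|A\|_{\max}+s\|Q\|_{\max}+8s^2\|B\|_{\max}^2\|R^{-1}\|$ can exceed $\|H\|$ by up to a factor of order $s^2\kappa(R)$. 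Using $\|B\|_{\max}^2\|R^{-1}\|\le\kappa(R)\|BR^{-1}B^\top\|\le\kappa(R)\|H\|$ gives $\alpha_H\le 11s^2\kappa(R)\|H\|$, so with $h\|H\|\le 1/(8\pi)$ one only has $h\alpha_H=\mathcal O(s^2\kappa(R))$. The costs then compose as follows:
\begin{itemize}
\item each $U_\Phi$ costs $\widetilde{\mathcal O}(s^2\kappa(R))$ calls to $U_H$ and $U_{\mathcal A}$;
\item each $U_H$ costs $\widetilde{\mathcal O}(s\kappa(R))$ calls to $U_R$;
\item the outer inversion costs $\widetilde{\mathcal O}(\alpha_L\nu)$ calls to $U_L$;
\item the readout costs $\mathcal O(\nu\|x^0\|^2\varepsilon^{-1}\log(2/\delta))$ calls to $U_{L^{-1}}$.
\end{itemize}
The factor $s^3\kappa(R)^2=(s^2\kappa(R))(s\kappa(R))$ therefore needs both inner inversions, whereas your count contains only the second.

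The same $h$-cancellation is what yields $\alpha_\Phi=\mathcal O(1+s^2\kappa(R))$ and hence $\alpha_L=\mathcal O(\Gamma_L)$. ``Combining with $A$ and $Q$ by LCU'' on its own produces a normalization proportional to $\|H\|$, not a dimensionless factor. You also need to choose the inner accuracies $\varepsilon_\Phi,\varepsilon_{\mathcal G_h^{-1}},\varepsilon_H$ of order $\varepsilon_{L^{-1}}/\nu^2$ to meet the input-error hypothesis of Lemma~\ref{lem:inverse-be-approx}. This costs only logarithms, but it must be part of the argument.

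There are three smaller points. First, the readout vector must be $c=e_0\otimes(0,x^0)$; $e_0\otimes(x^0,0)$ is $b$ itself. Second, with $J\approx\frac12\|x^0\|^2\beta\mu$, your choice $\eta=\varepsilon/(\|x^0\|^2\beta)$ gives statistical error $\varepsilon/2$, so take $\eta=2\varepsilon/(3\beta\|x^0\|^2)$. Third, amplitude estimation costs $\mathcal O((1+\beta\|x^0\|^2/\varepsilon)\log(2/\delta))$ and every run pays the full inversion cost. A bound proportional to $1/\varepsilon$ for all $\varepsilon$ therefore needs the paper's preliminary check: output $0$ if $x^0=0$ or $\varepsilon\ge\frac12K_{\rm ex}\|x^0\|^2$. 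This is admissible because $|J^*|\le\frac12\|x^0\|\|\lambda(0)\|\le\frac12K_{\rm ex}\|x^0\|^2$.

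Your estimate $\|q_{\rm ex}\|\le\sqrt m\,(1+M_P)M_c\|x^0\|$ is fine, since it only enters the stage number logarithmically. The bound $\|q_{\rm ex}\|\le K_{\rm ex}\|x^0\|$ avoids the dependence on $m$ altogether.
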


\begin{theorem}[Optimal-control estimation]
\label{thm:main-control}
Under Assumptions~\ref{ass:be} and~\ref{ass:sta&det}, there exists
a quantum algorithm that solves Problem~\ref{pro:optimal-control-estimation}
at any selected grid node using
\[
\widetilde{\mathcal O}\left(
\frac{\mathcal C\,s\,d^{3/2}\|x^0\|\|B\|_{\max}\|R^{-1}\|}
{\varepsilon}\log\frac{2d}{\delta}
\right)
\]
oracle queries.
\end{theorem}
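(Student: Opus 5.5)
The plan is to reduce the task to estimating $d$ scalar inner products of the form $\langle \phi_j|L^{-1}|b\rangle$ with Hadamard tests, and then control three error sources: discretization, block-encoding, and sampling.

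\emph{Readout.} By \eqref{eq: u and lambda relation}, $u^*(t_k)=-R^{-1}B^\top\lambda(t_k)$. Let $E_k$ denote the selector that picks $\lambda$ out of the $k$-th block of $q_{\rm ex}$. Then each component is
\[
u^*_j(t_k)=-e_j^\top R^{-1}B^\top E_k q_{\rm ex},\qquad j=1,\ldots,d .
\]
Since $b=(x^0,0,\ldots)$, we have $q_{\rm ex}=\|x^0\|\,L_{\rm ex}^{-1}|b\rangle$. Each component is therefore $\|x^0\|$ times an inner product between $|b\rangle$, pushed through $L^{-1}$, and the fixed vector
\[
w_j:=E_k^\top B R^{-1}e_j .
\]

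\emph{Steps.} I would carry them out in this order.
\begin{enumerate}
\item Fix $m$ and $h$ as in the quantum-estimation paragraph. Replace $\Phi_{\rm ex}$ by the Gauss--Legendre propagator $\Phi$ with enough stages that $\|\Phi-\Phi_{\rm ex}\|\|L_{\rm ex}^{-1}\|\le 1/2$, so that Theorem~\ref{thm:main-conditioning} gives $\kappa(L)\le 3\kappa(L_{\rm ex})$. The first-order perturbation bound gives
\[
\|q-q_{\rm ex}\|\le 2\|L_{\rm ex}^{-1}\|^2\|\Phi-\Phi_{\rm ex}\|\,\|b\| .
\]
Since the Gauss--Legendre error decays superexponentially in the stage number at $h\|H\|<1/(4\pi)$, a logarithmic number of stages makes the induced control error at most $\varepsilon/3$.
\item Build a block-encoding of $L$ from $U_A,U_B,U_Q,U_R,U_G$ by LCU. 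Its normalization is polynomial in $s$, $\|G\|_{\max}$ and $\kappa(R)$; this is where the factor $s^2\kappa(R)+s\|G\|_{\max}+1$ in $\mathcal C$ comes from. Then apply Lemma~\ref{lem:inverse-be-approx} to get an $(\alpha_{\rm inv},\cdot,\epsilon')$-encoding of $L^{-1}$ with $\alpha_{\rm inv}=\mathcal O(\|L^{-1}\|)=\mathcal O(M(1+\alpha_c^{-1}\|H\|))$.
\item The vector $w_j$ involves $R^{-1}$. Rather than invert $R$ separately, I would either fold $R^{-1}$ into the same QSVT-based inversion (giving the $\|R^{-1}\|$ and $\kappa(R)$ factors), or prepare $B R^{-1}e_j$ by applying a block-encoding of $B$ to an inverted $R$ acting on $|j\rangle$. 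This step contributes the factors $s\|B\|_{\max}\|R^{-1}\|$.
\item Run the Hadamard test of Lemma~\ref{lem:Hadamard-test} between the state $U_{L^{-1}}U_b|0\rangle$ and the prepared $|w_j\rangle$, with amplitude estimation. To reach additive error $\varepsilon/(3\sqrt d)$ on $u_j$, given the amplitude rescaling $\|x^0\|\,\alpha_{\rm inv}\,\alpha_w$, we need $\mathcal O(\sqrt d\,\|x^0\|\alpha_{\rm inv}\alpha_w/\varepsilon)$ repetitions. Each repetition costs $\widetilde{\mathcal O}(\alpha_{\rm inv}\cdot\alpha_L)$ queries for the QSVT inversion.
\item Median amplification with failure probability $\delta/d$ per component, plus a union bound, gives the $\log(2d/\delta)$ factor. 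Summing over $d$ components yields $d\cdot\sqrt d=d^{3/2}$. The errors from discretization, encoding and sampling combine in Euclidean norm to at most $\varepsilon$.
\end{enumerate}

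\emph{Main obstacle.} The main difficulty is bookkeeping the normalization constants so that the product matches $\mathcal C\,s\,\|B\|_{\max}\|R^{-1}\|$. In particular, the two factors of $M(1+\alpha_c^{-1}\|H\|)$ (one from the QSVT degree, one from the amplitude scale) must be accounted for, and the stage-dependent cost of the Runge--Kutta oracles must be shown to enter only polylogarithmically. I would first verify the LCU normalization of $L$ (including the implicit Runge--Kutta stage solve) and confirm that it is bounded by $s^2\kappa(R)+s\|G\|_{\max}+1$ up to logs.
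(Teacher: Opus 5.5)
Your algorithm is the paper's. You use a fixed grid with $h\|H\|<1/(4\pi)$, Gauss--Legendre stages chosen so that $\Delta_hK_{\rm ex}\le1/2$ and the control discretization error is at most $\varepsilon/3$, a QSVT inverse of $L$, and $d$ Hadamard-test/amplitude-estimation readouts, each at accuracy $\varepsilon/(3\sqrt d)$ with failure probability $\delta/d$. Your $w_j$-overlap is the transpose of the paper's product encoding $C_k=R^{-1}B^\top S_{\lambda,k}L^{-1}$, whose normalization is $8\|R^{-1}\|\cdot s\|B\|_{\max}\cdot8\nu$.

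One caveat on the readout. Lemma~\ref{lem:Hadamard-test} needs normalized states prepared by unitaries, and a normalized $\ket{w_j}$ would require $\|BR^{-1}e_j\|$, which the access model does not supply. You must therefore use the flagged, subnormalized preparation with known normalization $\alpha_w$ (your rescaling suggests this), or fold these factors into the encoded operator as the paper does.

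The genuine gap is in the query accounting of steps 2 and 4: you never produce the factor $s^3\kappa(R)^2$ in $\mathcal C$. $L$ is not an LCU of the input oracles, and two nested QSVT inversions sit inside it.
\begin{itemize}
\item The block $-BR^{-1}B^\top$ of $H$ requires inverting $R$. This costs $\widetilde{\mathcal O}(\alpha_R\|R^{-1}\|)=\widetilde{\mathcal O}(s\kappa(R))$ calls to $U_R$ per use of $U_H$ (Lemma~\ref{lem:be-H-construction}).
\item $\Phi$ in \eqref{eq:rk-propagator-definition} contains $\mathcal G_h^{-1}$. Although $\|\mathcal G_h^{-1}\|<2$, the QSVT cost scales with $\alpha_{\mathcal G_h}\|\mathcal G_h^{-1}\|$, where $\alpha_{\mathcal G_h}=1+h\alpha_{\mathcal A}\alpha_H=\mathcal O(s^2\kappa(R))$. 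The reason is that $\alpha_H$ can exceed $\|H\|$ by up to a factor $11s^2\kappa(R)$ (Lemma~\ref{lem:be-Phi-construction}).
\end{itemize}
So one call to $U_L$ costs $\widetilde{\mathcal O}(s^3\kappa(R)^2)$ input queries. One call to $U_{L^{-1}}$ then costs $\widetilde{\mathcal O}(\Gamma_Ls^3\kappa(R)^2\nu)$, not the $\widetilde{\mathcal O}(\alpha_{\rm inv}\alpha_L)$ you state. Your per-repetition count is only valid as a count of $U_L$ calls, so as written it is too small.

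To close this you need three further pieces.
\begin{itemize}
\item Propagate the internal accuracies through the nested inversions so they cost only logarithms. The paper takes $\varepsilon_\Phi=\Theta(\varepsilon_{L^{-1}}/\nu^2)$ and then fixes $\varepsilon_{\mathcal G_h^{-1}}$ and $\varepsilon_H$ accordingly.
\item Keep the normalization of $U_\Phi$ bounded independently of $p$. The paper uses $\sqrt p\,\|r\|\le\sqrt{4\pi/3}$ and $\alpha_{\mathcal A}\le4\pi$.
\item Check that the separate readout inversion of $R$, at cost $\widetilde{\mathcal O}(s\kappa(R))$, is dominated by the cost of $U_{L^{-1}}$. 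That inversion contributes only $\|R^{-1}\|$ to the readout normalization; the $\kappa(R)$ inside $\Gamma_L$ comes from $\alpha_H$, not from the readout.
\end{itemize}

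Two smaller omissions. First, the QSVT parameter must be a classically available bound $\nu=2K_{\rm ex}$, not $\|L^{-1}\|$ itself. Second, Lemma~\ref{lem:Hadamard-test} costs $(1+\alpha/\eta)\log(\cdot)$, so reaching the stated $1/\varepsilon$ form needs an early return of $0$ when $\varepsilon\ge s\|B\|_{\max}\|R^{-1}\|K_{\rm ex}\|x^0\|$. This return is valid because $\|u^*(t_k)\|\le s\|B\|_{\max}\|R^{-1}\|K_{\rm ex}\|x^0\|$, and the trivial cases $x^0=0$ and $B=0$ must also be handled separately.
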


In both theorems, the additional one-qubit and CNOT gate count
is bounded by the corresponding query bound multiplied by
\(\log(2d)\log^2(2n)\log^2(2m)\), up to the logarithmic factors
already suppressed. The algorithm and error choices are given in
Section~\ref{sec:quantum-algorithm}; the complexity bounds follow
from Appendices~\ref{app:optimal-cost-estimation-proof-v2}
and~\ref{app:optimal-control-estimation-proof-v2}.
The factor \(d^{3/2}\) comes from estimating \(d\) components to
accuracy of order \(\varepsilon/\sqrt d\).

These bounds have polylogarithmic dependence on \(n\) and \(T\)
when the remaining parameters are bounded or grow at most
polylogarithmically, including \(d\) for the control-vector output.
They concern the cost and the control at one selected time;
obtaining the full classical control trajectory requires readout
at each requested node.

\paragraph{A dimension-uniform example.}
The dependence on \(n\) also requires control of the stability
constants as the system grows. The coupled mass--spring chain
provides one such family.

\begin{corollary}[Mass--spring chain]
\label{cor:main-mass-spring}
Consider Benchmark~\ref{bench:mass_spring_chain}, with two endpoint
controls and fixed positive damping, on-site and coupling
stiffnesses, and cost weights. Suppose that
\[
0<m_-\leq m_i\leq m_+<\infty,\qquad i=1,\ldots,n_m,
\]
where \(m_-\) and \(m_+\) are independent of \(n_m\).
Then \(M_c\), \(M_P\), and \(\alpha_c^{-1}\) admit bounds
independent of \(n_m\) and \(T\). Consequently, at a fixed
admissible step size, the condition-number bound is uniform in
both \(n=2n_m\) and \(T\).
If \(\|x^0\|\) is also uniformly bounded, both estimation
algorithms have polylogarithmic dependence on \(n\) and \(T\)
in the stated access model.
\end{corollary}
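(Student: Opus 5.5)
The plan is to reduce everything to three horizon- and dimension-uniform facts: an exponential decay rate for the optimal closed-loop finite-horizon propagator, a uniform bound on the Riccati solution, and bounds on the coefficient norms and sparsity. These are fed into Theorem~\ref{thm:main-conditioning} and Theorems~\ref{thm:main-cost}--\ref{thm:main-control}. Each chain matrix is banded: \(A\) is block tridiagonal, \(B\) has two nonzero columns, and \(Q,R,G\) are diagonal or banded with fixed weights. Hence \(s=O(1)\), and \(\|A\|,\|B\|,\|Q\|,\|R\|,\|R^{-1}\|,\|G\|,\kappa(R)\) depend only on \(m_\pm\), the stiffnesses, the damping and the weights. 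By Gershgorin, \(\|H\|=O(1)\). The substantive work is bounding \(M_c\), \(\alpha_c^{-1}\) and \(M_P\) independently of \(n_m\).

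\textbf{Dissipativity of the open loop.} Change variables to the energy coordinates \(z=(M_m^{1/2}v,\,K^{1/2}q)\), where \(M_m=\mathrm{diag}(m_i)\) and \(K\) is the stiffness matrix. Since \(K\) includes a positive on-site term, \(K\succeq k_0 I\) uniformly. The similarity transformation and its inverse have norms bounded by constants depending on \(m_\pm\), \(k_0\) and \(\|K\|\). In these coordinates the drift is \(\tilde A=S-D\), where \(S\) is skew-symmetric and \(D=\mathrm{diag}(c\,M_m^{-1},0)\succeq 0\). Damping acts only on velocities, so \(\tilde A\) is not strictly dissipative in the plain energy norm. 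To fix this, use the standard cross-term Lyapunov function
\[
V=\tfrac12 v^\top M_m v+\tfrac12 q^\top K q+\epsilon\, q^\top M_m v .
\]
For \(\epsilon\) small, chosen depending only on \(m_\pm\), \(c\), \(k_0\) and \(\|K\|\), this gives \(\dot V\le -2\beta V\) with \(\beta>0\) uniform in \(n_m\). Consequently \(\|e^{tA}\|\le C_0e^{-\beta t}\) with \(C_0\) and \(\beta\) independent of \(n_m\). In particular the uncontrolled system is stable, so stabilizability is witnessed by \(K=0\) and detectability by \(F=0\).

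\textbf{Riccati bound.} Comparing with the zero control gives \(x^\top P(t)x\le \int_t^T\|e^{(s-t)A}x\|_Q^2\,ds+\|e^{(T-t)A}x\|_G^2\), which is at most \((\|Q\|C_0^2/(2\beta)+\|G\|C_0^2)\|x\|^2\). This bounds \(M_P\) uniformly in \(n_m\) and \(T\).

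\textbf{Closed-loop decay.} This step is the main obstacle. The paper's Proposition~\ref{prop:finite-horizon-closed-loop-stability} supplies constants for \(\Psi\), but with unspecified dependence on the data, so I would redo the estimate with explicit constants. Take \(W(t,x)=x^\top(P(t)+\gamma \Pi)x\), where \(\Pi\) is the fixed uniformly positive matrix defining \(V\) and \(\gamma\) is large. Using the \gls{dre}, one gets
\[
\tfrac{d}{dt}\,x^\top P x=-x^\top Qx-x^\top PBR^{-1}B^\top Px
\]
along the closed loop, plus the open-loop decay of \(x^\top\Pi x\). The extra closed-loop term \(-2\gamma\, x^\top\Pi BR^{-1}B^\top P x\) can be absorbed by Young's inequality into \(\gamma\beta\,x^\top\Pi x+\gamma C\,x^\top PBR^{-1}B^\top Px\).

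This absorption is delicate because \(\gamma\) multiplies both terms. I would instead scale the cross terms through the uniform bounds on \(\|B\|\), \(\|R^{-1}\|\), \(M_P\) and the uniform lower bound on \(\Pi\), balancing the weights so that \(\dot W\le-\alpha W\) with \(\alpha\) uniform. Should this fail, the fallback is the variation-of-constants identity \(\Psi=e^{(t-s)A}-\int_s^t e^{(t-r)A}BR^{-1}B^\top P(r)\Psi(r,s)\,dr\). Combined with the \(L^2\) bound \(\int_s^T\|B^\top P\Psi x\|^2\,dr\le\|R\|\,x^\top P(s)x\), which is uniformly bounded, it yields a uniform \(L^2\) bound on \(\Psi\). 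A Datko-type argument with constants depending only on \(C_0\), \(\beta\), \(\|B\|\), \(\|R^{-1}\|\) and \(M_P\) then upgrades this to uniform exponential decay.

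\textbf{Conclusion.} Either route makes \(M_c\), \(\alpha_c^{-1}\), and hence \(M\) in \eqref{eq:def-M}, uniform in \(n_m\) and \(T\). At fixed admissible \(h\), the bound \eqref{eq:condition-number-scaling} is then uniform as well. Substituting these uniform bounds into \(\mathcal C\) in Theorems~\ref{thm:main-cost} and~\ref{thm:main-control}, with \(d=2\) and bounded \(\|x^0\|\), leaves only the polylogarithmic dependence on \(n\) and \(T\) from the \(\widetilde{\mathcal O}\) factors and the gate overhead.
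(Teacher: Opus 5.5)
Your proof is correct in substance, but it reaches the uniform constants by a different and more elementary route than the paper. The paper passes through the infinite-horizon Riccati solution. It proves $\underline p I\preceq P_+\preceq\bar p I$, where the lower bound needs $Q\succeq q_-I$ and a bound on $\|A-DP_+\|$. It then bounds the Lyapunov solutions $W_1,W_2$, controls the terminal correction $\widetilde G$ through the monotonicity of $X\mapsto X(I+W_2X)^{-1}$, and substitutes everything into the explicit Callier--Winkin--Willems constants of Remark~\ref{re:ad-M_c-alpha_c} and Proposition~\ref{prop:horizon-uniform-riccati-bound}.

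You instead work directly with the finite-horizon $P(t)$. Zero-control comparison bounds $M_P$ immediately. Closed-loop decay comes from one of two routes:
\begin{itemize}
\item a Lyapunov function combining $P(t)$ with the open-loop energy matrix $\Pi$;
\item the bound $\int_s^T\|R^{-1/2}B^\top P\Psi x\|^2\,\mathrm dr\le x^\top P(s)x$, together with variation of constants and a Datko argument.
\end{itemize}
The uniform pointwise bound that Datko's argument needs follows from the same identity by Cauchy--Schwarz, or from $\|A_c(t)\|\le\|A\|+\|D\|M_P$ and Gr\"onwall.

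This is legitimate because Theorem~\ref{thm:inverse-bound-L} uses only the two estimates on $\Psi$ and $P$, so any admissible constants can be inserted. Your route is shorter and never touches $P_+$, $W_1$, $W_2$ or $\widetilde G$. It uses only uniform open-loop exponential stability plus uniform bounds on $\|B\|,\|R^{-1}\|,\|Q\|,\|G\|$, so it does not need $Q$ to be uniformly positive definite. The paper's route instead certifies uniform bounds on exactly the constants produced by its general stabilizable/detectable machinery.

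One correction to your first route: $\gamma$ must be small, not large. Write $D=BR^{-1}B^\top$, and let $\pi_-I\preceq\Pi\preceq\pi_+I$ with $A^\top\Pi+\Pi A\preceq-2\beta\Pi$.
\begin{itemize}
\item Applying Young's inequality to $2\gamma x^\top\Pi DPx$ with parameter $\eta$ produces $\gamma\eta^{-1}x^\top PDPx$.
\item The only term available to absorb it is $-x^\top PDPx$, which has coefficient one, so you need $\eta\ge\gamma$.
\item The companion term $\gamma\eta\,x^\top\Pi D\Pi x\le\gamma\eta\|D\|\pi_+x^\top\Pi x$ must then be dominated by $2\gamma\beta x^\top\Pi x$.
\end{itemize}
Taking $\eta=\gamma=\beta/(\|D\|\pi_+)$ gives $\dot W\le-\gamma\beta\,x^\top\Pi x$. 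Together with $\gamma\pi_-\|x\|^2\le W\le(M_P+\gamma\pi_+)\|x\|^2$, this yields explicit $M_c$ and $\alpha_c$ uniform in $n_m$ and $T$.

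Also, Gershgorin bounds eigenvalues rather than $\|H\|$ for non-normal $H$. Bound $\|H\|$ blockwise instead, as in \eqref{eq:chain-H-norm-bound}.
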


This follows from Theorem~\ref{thm:Mc-alphac-Mp-bound} in
Appendix~\ref{app:Proofs for the benchmark}, together with the
uniform bounds on the matrix norms and sparsity of this family.

Figure~\ref{fig:pipeline} provides a roadmap of the proposed quantum
algorithms for optimal-cost estimation and optimal-control estimation
at a selected temporal node, together with the horizon-uniform
conditioning analysis underlying their complexity bounds.

\begin{figure*}[!htbp]
    \centering
    \makebox[\linewidth][c]{%
        \includegraphics[width=\linewidth]{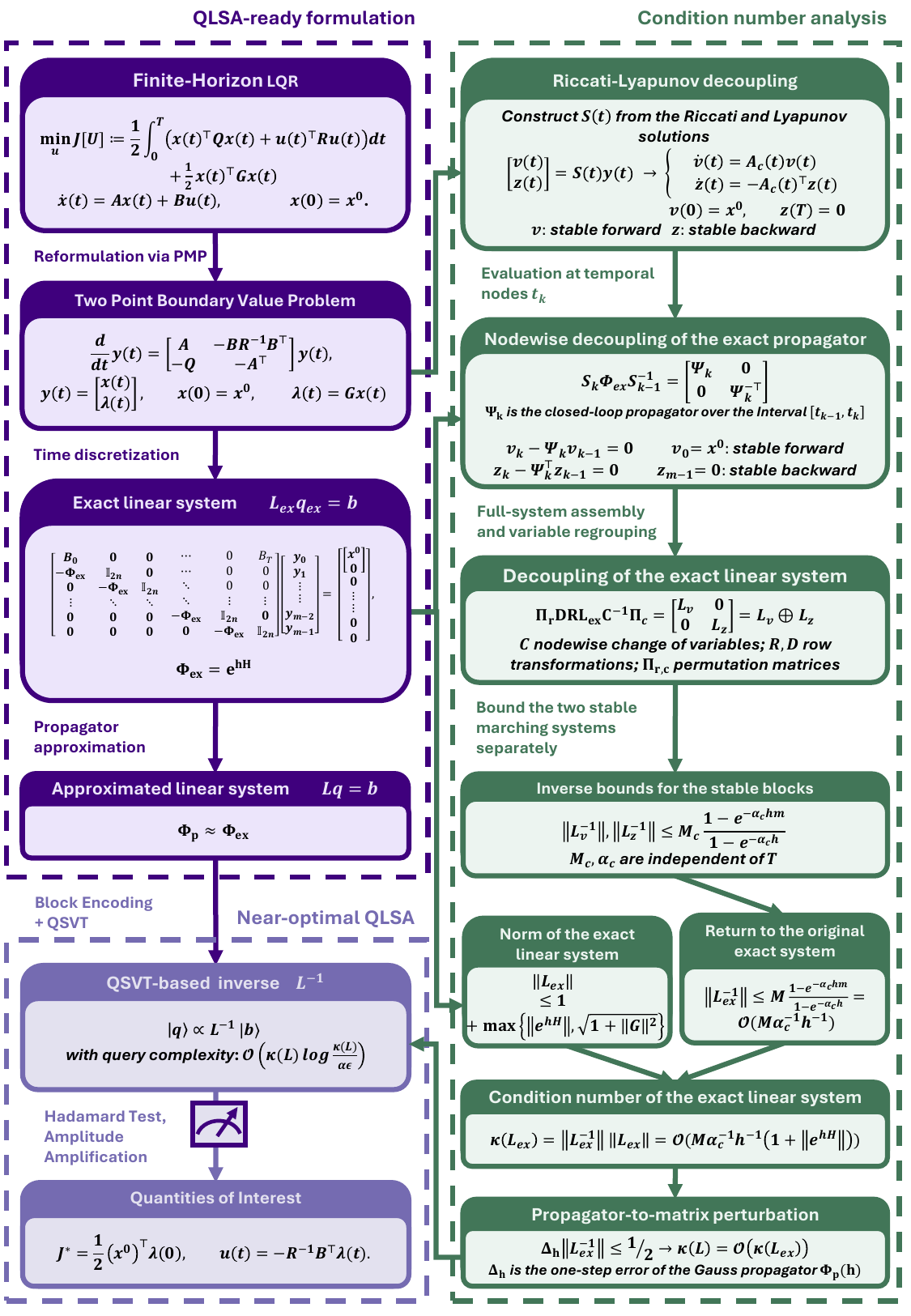}%
    }
    \caption{Roadmap of the proposed quantum algorithms and their
    conditioning analysis.
    Left: reformulation and discretization of the finite-horizon LQR
    problem, followed by quantum estimation of the optimal cost and
    the optimal control at a selected temporal node.
    Right: Riccati--Lyapunov decoupling and perturbation analysis
    establishing horizon-uniform condition-number bounds.}
    \label{fig:pipeline}
\end{figure*}

\section{Algorithm and proof overview}
\label{sec:algorithm-proof-overview}

We describe the construction of the quantum algorithms and outline the
proofs of the results in Section~\ref{sec:main-results}. We first analyse
the exact nodal system, then introduce the propagator approximation
and the quantum implementation.

\subsection{Multiple-shooting formulation}
\label{sec:symplectic-preserving}

With the notation of Section~\ref{sec:main-results}, set
\begin{equation}\label{eq:def_q_ex}
q_{\rm ex}=(y_0^\top,\ldots,y_{m-1}^\top)^\top
\in\mathbb R^{2nm}.
\end{equation}
The boundary conditions and the propagation equations give
\(L_{\rm ex}q_{\rm ex}=b\), where
\begin{equation}\label{eq:exact-shooting-operator-block-form}
L_{\rm ex}=
\begin{bmatrix}
B_0&0&\cdots&B_T\\
-\Phi_{\rm ex}&I_{2n}&&0\\
&\ddots&\ddots&\\
0&&-\Phi_{\rm ex}&I_{2n}
\end{bmatrix},
\end{equation}
and
\begin{equation}\label{eq:def-b}
b=(b_0^\top,0^\top,\ldots,0^\top)^\top,
\qquad b_0=\begin{bmatrix}x^0\\0\end{bmatrix}.
\end{equation}
The first block row imposes the two boundary conditions; the remaining
rows match the solution on adjacent intervals. Since
\(\Phi_{\rm ex}=e^{hH}\), this system gives the exact nodal values of
the continuous solution. Replacing \(\Phi_{\rm ex}\) by an approximation
\(\Phi\) gives the system \(Lq=b\) used by the algorithm.

\subsection{Horizon-uniform conditioning}
\label{sec:condition_error}

The Hamiltonian state--costate dynamics may contain modes that grow
under forward propagation. The two-point boundary conditions allow
these modes to be controlled by propagating the appropriate variables
from opposite endpoints. We make this mechanism explicit through a
Riccati--Lyapunov change of variables, following the decoupling approach
in
\cite{TrelatZhangZuazua2018Turnpike,YeZengZhang2026PeriodicDichotomy}.
The Riccati and Lyapunov solutions introduced below are used only in
the analysis; the quantum algorithm does not construct them.

Let \(P(t)\) solve
\begin{equation}\label{eq:finite-horizon-riccati}
-\dot P=A^\top P+PA-PBR^{-1}B^\top P+Q,
\qquad P(T)=G,
\end{equation}
and write \(A_c(t)=A-BR^{-1}B^\top P(t)\).
Define \(W(t)\) by
\begin{equation}\label{eq:lyapunov-W}
\dot W=A_cW+WA_c^\top+BR^{-1}B^\top,
\qquad W(0)=0.
\end{equation}
The variables
\[
z=\lambda-Px,
\qquad
v=x+Wz
\]
satisfy
\begin{equation}\label{eq:vz-decoupled-dynamics}
\dot v=A_c(t)v,
\qquad
\dot z=-A_c(t)^\top z,
\qquad
v(0)=x^0,
\quad
z(T)=0.
\end{equation}
If \(\Psi(t,s)\) denotes the propagator of
\(\dot v=A_c(t)v\), then, for \(0\leq s\leq t\leq T\),
\[
v(t)=\Psi(t,s)v(s),
\qquad
z(s)=\Psi(t,s)^\top z(t).
\]
The backward-propagation identity is established in
Lemma~\ref{lem:propagation-of-z} of
Appendix~\ref{app:transformed-one-step-map}.
Thus \(v\) is propagated forward from the initial endpoint and \(z\)
backward from the terminal endpoint.

Under Assumption~\ref{ass:sta&det},
Propositions~\ref{prop:finite-horizon-closed-loop-stability}
and~\ref{prop:horizon-uniform-riccati-bound} in
Appendix~\ref{app:horizon-uniform-bounds} give constants
\(M_c\geq1\), \(\alpha_c>0\), and \(M_P\geq0\), independent of \(T\),
such that
\[
\|\Psi(t,s)\|\leq M_c e^{-\alpha_c(t-s)},
\qquad
\sup_{t\in[0,T]}\|P(t)\|\leq M_P.
\]
Explicit choices of these constants are given in
Appendices~\ref{app:ad-M_c-alpha_c}
and~\ref{app::horizon-uniform-riccati-bound}.
The propagation in each of the two directions is therefore
exponentially decaying, with constants that remain valid as the
horizon varies.

The change of variables is
\begin{equation}\label{eq:riccati-lyapunov-transform}
\begin{bmatrix}v\\z\end{bmatrix}
=
\mathcal S(t)
\begin{bmatrix}x\\\lambda\end{bmatrix},
\qquad
\mathcal S(t)=
\begin{bmatrix}
I_n-W(t)P(t)&W(t)\\
-P(t)&I_n
\end{bmatrix}.
\end{equation}
Its factorization
\[
\mathcal S(t)=
\begin{bmatrix}I_n&W(t)\\0&I_n\end{bmatrix}
\begin{bmatrix}I_n&0\\-P(t)&I_n\end{bmatrix}
\]
shows that it is invertible.
The transformation and the decoupled dynamics are proved in
Theorem~\ref{thm:riccati-lyapunov-decoupling} of
Appendix~\ref{app:riccati-lyapunov-proof}.

Although the optimal solution has \(z(t)=0\), a bound on
\(\|L_{\rm ex}^{-1}\|\) must control arbitrary right-hand sides,
including perturbations of the boundary and matching equations.
Both transformed components are therefore needed.
Let \(\Psi_k=\Psi(t_k,t_{k-1})\).
Applying \(\mathcal S(t_k)\) to the nodal variables and the
corresponding matching rows gives the interior residuals
\[
v_k-\Psi_kv_{k-1},
\qquad
z_k-\Psi_k^{-\top}z_{k-1}.
\]
Multiplying the second residual by \(\Psi_k^\top\) puts it in
backward form,
\[
\Psi_k^\top z_k-z_{k-1}.
\]
The boundary residuals become \(v_0\) and \(z_{m-1}\).
After grouping the variables and equations, the transformed matrix
is block diagonal, with blocks denoted by \(L_v\) and \(L_z\).
The precise row and variable transformations are given in
Appendix~\ref{app:transformed-one-step-map}, culminating in the
factorization~\eqref{eq:L-diag-relationship}.

The forward and backward solution operators have kernels
\(\Psi(t_i,t_j)\) for \(i\geq j\) and
\(\Psi(t_j,t_i)^\top\) for \(i\leq j\), respectively.
Their exponential decay controls the operator norms through a
geometric sum:
\begin{equation}\label{eq:Lv-Lz-inverse-bound}
\max\{\|L_v^{-1}\|,\|L_z^{-1}\|\}
\leq M_c\gamma_h,
\qquad
\gamma_h:=\sum_{j=0}^{m-1}e^{-\alpha_c jh}.
\end{equation}
Lemma~\ref{lem:Lv-Lz-inverse-bounds} in
Appendix~\ref{app:triangular-inverse-bounds-proof} proves this
estimate by applying weighted Cauchy--Schwarz to the forward and
backward recursions.

It remains to bound the transformations uniformly in the horizon.
Following the proof of Theorem~\ref{thm:inverse-bound-L} in
Appendix~\ref{app:inverse-bound-L},
equation~\eqref{eq:lyapunov-W} gives
\[
W(t)=\int_0^t
\Psi(t,s)BR^{-1}B^\top\Psi(t,s)^\top\,\mathrm ds,
\qquad
\|W(t)\|\leq
\frac{M_c^2\|B\|^2\|R^{-1}\|}{2\alpha_c}.
\]
The triangular factorization of \(\mathcal S(t)\) implies
\[
\max\{\|\mathcal S(t)\|,\|\mathcal S(t)^{-1}\|\}
\leq
(1+\|P(t)\|)(1+\|W(t)\|)
\leq a_S,
\]
where
\[
a_S:=
(1+M_P)
\left(
1+\frac{M_c^2\|B\|^2\|R^{-1}\|}{2\alpha_c}
\right).
\]
The inverse variable transformation and the matching-row
transformation each have norm at most \(a_S\).
The additional backward-row factors satisfy
\(\max\{1,\|\Psi_k\|\}\leq M_c\).
Combining these bounds with
\eqref{eq:Lv-Lz-inverse-bound} yields
\begin{equation}\label{eq:L-inverse-bound-explicit}
\|L_{\rm ex}^{-1}\|
\leq M_c^2 a_S^2\gamma_h
=M\gamma_h
=M\frac{1-e^{-\alpha_c hm}}{1-e^{-\alpha_c h}},
\end{equation}
where \(M\) is the constant defined in
\eqref{eq:def-M}.
Theorem~\ref{thm:norm-bounds-L} in
Appendix~\ref{app:norm-bounds-L} further gives
\[
\|L_{\rm ex}\|
\leq
1+\max\{\|e^{hH}\|,\sqrt{1+\|G\|^2}\}.
\]
Together, these estimates prove the exact-system conditioning bound
in Theorem~\ref{thm:main-conditioning}.

The dependence on the grid is explicit:
\[
\gamma_h
\leq
\min\left\{m,\frac{1}{1-e^{-\alpha_c h}}\right\}
\leq
\min\left\{m,1+\frac{1}{\alpha_c h}\right\}.
\]
Consequently, for fixed system matrices and fixed \(h>0\), the
condition-number bound is uniform in \(T\).
Under grid refinement, the same analysis gives
\(\|L_{\rm ex}^{-1}\|=\mathcal O(h^{-1})\) and
\(\kappa(L_{\rm ex})=\mathcal O(h^{-1})\) as \(h\to0\);
see also \eqref{eq:condition-number-order}.
This distinction motivates the accuracy strategy in the next
subsection: keep the temporal grid fixed and increase the
approximation order.
Dependence on the state dimension remains encoded in the system
norms and stability constants; the mass--spring case is treated in
Section~\ref{subsec:dimension-uniform} and
Appendix~\ref{app:Mc-alphac-Mp-bound}.

\subsection{Gauss discretization and perturbation bounds}
\label{subsec:gauss-perturbation}

We keep the temporal grid specified in
Section~\ref{sec:main-results} fixed and approximate \(e^{hH}\)
by the \(p\)-stage Gauss--Legendre Runge--Kutta method
\cite{wu2024structure}.
The stage number \(p\) will be chosen to meet the required
accuracy.
Let \(c_i\in(0,1)\) be the Gauss nodes and \(\ell_j\) their
Lagrange basis polynomials.
The coefficient matrix and weights are
\begin{equation}\label{eq:def-A-r}
\mathcal A_{ij}=\int_0^{c_i}\ell_j(s)\,\mathrm ds,
\qquad
r_j=\int_0^1\ell_j(s)\,\mathrm ds,
\qquad
0\leq i,j<p.
\end{equation}
Set
\(
\mathcal G_h=I_p\otimes I_{2n}-\mathcal A\otimes hH.
\)
Eliminating the stage derivatives gives the one-step map
\begin{equation}\label{eq:rk-propagator-definition}
\Phi=I_{2n}
+(r^\top\otimes I_{2n})\mathcal G_h^{-1}
(\mathbf 1_p\otimes hH),
\end{equation}
where \(\mathbf 1_p=(1,\ldots,1)^\top\).
The stage equations leading to this formula are given in
Appendix~\ref{app:gauss-stage-system}.

The coefficient bounds in
Lemmas~\ref{lem:uniform-bou-pr}
and~\ref{lem:mathcal-A-norm-bound} give
\[
\|\mathcal A\|
\leq p\|\mathcal A\|_{\max}
\leq2\pi,
\qquad
\sqrt p\,\|r\|\leq\sqrt{4\pi/3}.
\]
Consequently, the condition \(h\|H\|<1/(4\pi)\) imposed in
Section~\ref{sec:main-results} ensures
\[
\|\mathcal G_h\|<\frac32,
\qquad
\|\mathcal G_h^{-1}\|<2,
\qquad
\kappa(\mathcal G_h)<3,
\qquad
\|\Phi\|\leq2,
\]
uniformly in \(p\), see
Lemma~\ref{lem:gauss-rk-propagator-v2}.
The same coefficient estimates also control the normalizations
used in the quantum construction:
\[
\alpha_{\mathcal A}
=
2^{\lceil\log_2p\rceil}\|\mathcal A\|_{\max}
\leq4\pi,
\qquad
\|r\|\,\|\mathbf 1_p\|
=
\sqrt p\,\|r\|
\leq\sqrt{4\pi/3}.
\]
The first bound is \eqref{eq:gauss-coefficient-normalization};
the corresponding block-encodings are constructed in
Appendix~\ref{app:auxiliary-factors} and used in
Appendix~\ref{app:be-Phi-construction}.
These bounds allow the stage number to increase while the
step-size restriction, stage-system conditioning, and these
normalization factors remain independent of \(p\).

The Gauss map preserves the symplectic form associated with the
Hamiltonian matrix \(H\):
\[
\Phi^\top\mathbb J_{2n}\Phi=\mathbb J_{2n}.
\]
This property is proved in Lemma~\ref{lem:gauss-symplecticity}.
Its approximation accuracy follows from a separate property:
the Gauss stability function is the diagonal \([p/p]\)
Pad\'e approximant of the exponential.
Lemma~\ref{lem:rk-global-error-v2} uses the matrix
remainder formula to obtain
\begin{equation}\label{eq:rk-one-step-error}
\Delta_h:=\|\Phi-\Phi_{\rm ex}\|
\leq
\frac{8(p!)^2}{(2p)!(2p+1)!}
(h\|H\|)^{2p+1}.
\end{equation}
This estimate holds for \(h\|H\|<1\) and does not require
\(H\) to be normal or diagonalizable.
The stronger step-size restriction above additionally supplies
the uniform stage-system bounds needed for quantum implementation.

We next transfer the conditioning estimate from \(L_{\rm ex}\)
to \(L\).
Let
\(
S_m=\sum_{k=1}^{m-1}E_{k,k-1},
\)
where
\(
E_{ij}=\ket{i}^{(m)}\bra{j}^{(m)}.
\)
is a \(m\times m\) matrix unit.
Since the boundary row is unchanged, the propagation blocks give
\begin{equation}\label{eq:shooting-matrix-perturbation}
L-L_{\rm ex}
=
-S_m\otimes(\Phi-\Phi_{\rm ex}),
\qquad
\|L-L_{\rm ex}\|=\Delta_h,
\end{equation}
where we used \(\|S_m\|=1\).
Thus the perturbation of the assembled matrix is controlled by
the one-step error without an additional factor \(m\). Define
\(
\theta:=\Delta_h\|L_{\rm ex}^{-1}\|.
\)
If \(\theta<1\), the factorization
\[
L=L_{\rm ex}
\bigl(I+L_{\rm ex}^{-1}(L-L_{\rm ex})\bigr)
\]
and the Neumann series imply that \(L\) is nonsingular and
\[
\|L^{-1}\|
\leq
\frac{\|L_{\rm ex}^{-1}\|}{1-\theta}.
\]
In particular, if \(\theta\leq1/2\), then
\[
\|L^{-1}\|\leq2\|L_{\rm ex}^{-1}\|,
\qquad
\kappa(L)\leq3\kappa(L_{\rm ex}).
\]
These statements are proved in
Theorem~\ref{thm:approx-shooting-conditioning}. Together with the preceding subsection, they complete the proof
of Theorem~\ref{thm:main-conditioning}.

For the nodal solution, the identity
\(
q-q_{\rm ex}
=
L^{-1}(L_{\rm ex}-L)q_{\rm ex}
\)
gives
\[
\|q-q_{\rm ex}\|
\leq
\frac{\theta}{1-\theta}\|q_{\rm ex}\|.
\]
This is the estimate~\eqref{eq:appendix-q-qhat-first-bound}
in Appendix~\ref{app:global-error}.
When \(\theta\leq1/2\), this yields
\begin{equation}\label{eq:overview-nodal-error}
\|q-q_{\rm ex}\|
\leq
2\Delta_h\|L_{\rm ex}^{-1}\|^2\|x^0\|.
\end{equation}
A further estimate exploiting the exponential decay of the
optimal trajectory is given in Theorem~\ref{thm:global-error}.

\subsection{Quantum implementation and output estimation}
\label{sec:quantum-algorithm}

Both requested outputs can be expressed through matrix elements
of operators constructed from \(L^{-1}\). We first construct
block-encodings of these operators and then estimate the required
matrix elements using Hadamard tests and amplitude estimation.

\paragraph{Inverse block-encoding.}
We use the temporal grid specified in
Section~\ref{sec:main-results}. Let
\[
K_{\rm ex}
=
M\frac{1-e^{-\alpha_chm}}{1-e^{-\alpha_ch}}
\]
be the bound on \(\|L_{\rm ex}^{-1}\|\) established in
Theorem~\ref{thm:main-conditioning}. The algorithms take the
required norm and stability bounds as classical input.
Complexity estimates expressed in terms of exact norms assume
that constant-factor upper bounds are available; otherwise,
the supplied bounds replace the corresponding norm parameters.

We choose the Gauss stage number \(p\) so that
\[
\Delta_h K_{\rm ex}\leq\frac12.
\]
The perturbation estimate then gives
\[
\|L^{-1}\|\leq 2K_{\rm ex},
\qquad
\|q-q_{\rm ex}\|
\leq 2\Delta_hK_{\rm ex}^2\|x^0\|.
\]
We therefore use the available inversion bound
\[
\nu:=2K_{\rm ex}.
\]

Under Assumption~\ref{ass:be}, we first construct a block-encoding
of \(R^{-1}\), then combine it with those of \(A,B,Q\) to encode
\(H\). Equation~\eqref{eq:rk-propagator-definition} reduces the
construction of \(\Phi\) to the inversion of \(\mathcal G_h\).
The time shift and the boundary blocks then give a block-encoding
of \(L\). These constructions are given in
Appendices~\ref{app:be-H-construction},
\ref{app:be-Phi-construction}, and~\ref{app:be-L-construction}.

Set
\[
\Gamma_L=s^2\kappa(R)+s\|G\|_{\max}+1.
\]
The normalization of the block-encoding of \(L\) constructed in Appendix~\ref{app:be-L-construction} satisfies
\(\alpha_L=\mathcal O(\Gamma_L)\). Applying
Lemma~\ref{lem:inverse-be-approx}, with sufficiently accurate
input encodings, gives an
\((\alpha_{L^{-1}},a_{L^{-1}},\varepsilon_{L^{-1}})\)-block-encoding
\(U_{L^{-1}}\), where
\[
\alpha_{L^{-1}}=8\nu,
\qquad
0<\varepsilon_{L^{-1}}\leq\nu.
\]
One use of \(U_{L^{-1}}\) requires
\begin{equation}
\label{eq:overview-inverse-query-cost}
Q_{\rm inv}
=
\widetilde{\mathcal O}
\bigl(\Gamma_Ls^3\kappa(R)^2\nu\bigr)
\end{equation}
queries to the input matrix oracles. The suppressed factors
include logarithmic dependence on
\(\varepsilon_{L^{-1}}^{-1}\). The precise internal accuracies
are given in Appendix~\ref{app:be-L-construction}.

\paragraph{Output quantities and error allocation.}
Let \(S_{\lambda,k}\in\mathbb R^{n\times2nm}\) select the costate
at node \(k\):
\[
S_{\lambda,k}
=
e_k^\top\otimes[\,0\ \ I_n\,],
\qquad k=0,\ldots,m-1,
\]
where \(e_k\) is the \(k\)th standard basis vector of
\(\mathbb R^m\). The discrete outputs are
\begin{equation}
\label{eq:overview-discrete-outputs}
J_h
=
\frac12(x^0)^\top S_{\lambda,0}L^{-1}b,
\qquad
u_h(t_k)
=
-R^{-1}B^\top S_{\lambda,k}L^{-1}b.
\end{equation}

We allocate \(\varepsilon/3\) to each of the discretization,
block-encoding, and estimation errors. The discretization
errors satisfy
\begin{equation}
\label{eq:overview-output-discretization}
\begin{aligned}
|J_h-J^*|
&\leq
\frac{\|x^0\|}{2}\|q-q_{\rm ex}\|
\leq
\Delta_hK_{\rm ex}^2\|x^0\|^2,
\\
\|u_h(t_k)-u^*(t_k)\|
&\leq
\|R^{-1}B^\top\|\|q-q_{\rm ex}\|
\\
&\leq
2\|R^{-1}B^\top\|
\Delta_hK_{\rm ex}^2\|x^0\|.
\end{aligned}
\end{equation}
These inequalities determine the additional accuracy required
of the Gauss approximation for each task.

For a unitary \(U\) with \(a\) block-encoding ancillas, write
\[
\mathcal B(U)
=
(\bra0_a\otimes I)U(\ket0_a\otimes I).
\]
Thus \(\alpha\mathcal B(U)\) approximates the encoded operator
when \(U\) has normalization \(\alpha\). Zero padding is
understood in the matrix-element expressions below.

\paragraph{Optimal-cost estimation.}
Suppose first that \(\|x^0\|>0\), and let
\[
c=S_{\lambda,0}^\top x^0.
\]
Since
\[
\|b\|=\|c\|=\|x^0\|,
\qquad
\ket b=\frac{b}{\|x^0\|},
\qquad
\ket c=\frac{c}{\|x^0\|},
\]
the estimated cost can be written as
\[
J_h
=
\frac{\|x^0\|^2}{2}\bra c L^{-1}\ket b.
\]
Both states can be prepared with one call to \(U_{x^0}\).
Preparing \(\ket b\) adds zero-initialized registers, while
preparing \(\ket c\) also shifts the populated indices into
the initial-costate block; see
Appendix~\ref{app:optimal-cost-estimation-proof-v2}.

The encoded quantity is
\[
\widetilde J_h
=
\frac{\alpha_{L^{-1}}\|x^0\|^2}{2}
\operatorname{Re}
\bigl[\bra c\,\mathcal B(U_{L^{-1}})\ket b\bigr].
\]
Its encoding error satisfies
\[
|\widetilde J_h-J_h|
\leq
\frac{\|x^0\|^2}{2}\varepsilon_{L^{-1}}.
\]
Choosing
\[
\varepsilon_{L^{-1}}
=
\min\left\{
\nu,\frac{2\varepsilon}{3\|x^0\|^2}
\right\}
\]
therefore makes this contribution at most \(\varepsilon/3\).
The remaining estimation error is controlled by the precision
of the matrix-element estimate, as specified in
Algorithm~\ref{alg:optimal-cost}.

\begin{algorithm}[!h]
\KwIn{The input oracles and \(\|x^0\|\) from
Assumption~\ref{ass:be}; the norm and stability bounds described
above; \(T>0\), \(\varepsilon>0\), and \(0<\delta<1\).}
\KwOut{An estimate \(J^*_{\rm num}\) such that
\(\Pr(|J^*_{\rm num}-J^*|\leq\varepsilon)\geq1-\delta\).}

\If{\(\|x^0\|=0\)}{
    \KwRet{\(J^*_{\rm num}=0\)}\;
}

Choose \(m,h\) as in Section~\ref{sec:main-results},
evaluate \(K_{\rm ex}\), and set \(\nu=2K_{\rm ex}\)\;

\If{\(\varepsilon\geq K_{\rm ex}\|x^0\|^2/2\)}{
    \KwRet{\(J^*_{\rm num}=0\)}\;
}

Using the Gauss approximation bound in
\eqref{eq:rk-one-step-error}, choose \(p\) so that
\[
\Delta_hK_{\rm ex}\leq\frac12,
\qquad
\Delta_hK_{\rm ex}^2\|x^0\|^2
\leq\frac{\varepsilon}{3}
\]
\;

Set
\[
\alpha_{L^{-1}}=8\nu,
\qquad
\varepsilon_{L^{-1}}
=
\min\left\{
\nu,\frac{2\varepsilon}{3\|x^0\|^2}
\right\}
\]
\;

Construct \(U_H,U_\Phi,U_L\), and use \gls{qsvt} to obtain
\(U_{L^{-1}}\) with normalization \(\alpha_{L^{-1}}\) and
error at most \(\varepsilon_{L^{-1}}\), using the internal
accuracies from Appendix~\ref{app:be-L-construction}\;

Set \(b\) as in \eqref{eq:def-b} and
\(c=S_{\lambda,0}^\top x^0\); prepare \(\ket b\) and \(\ket c\)\;

Set
\[
\eta_J
=
\frac{2\varepsilon}
{3\alpha_{L^{-1}}\|x^0\|^2}
\]
\;

Use a Hadamard test and amplitude estimation to estimate
\[
\mu_J
=
\operatorname{Re}
\bigl[\bra c\,\mathcal B(U_{L^{-1}})\ket b\bigr]
\]
to additive error \(\eta_J\) and failure probability
\(\delta\); denote the result by \(\widehat\mu_J\)\;

\KwRet{\(
J^*_{\rm num}
=
\dfrac{\alpha_{L^{-1}}\|x^0\|^2}{2}\widehat\mu_J
\)}\;

\caption{Quantum estimation of the optimal cost}
\label{alg:optimal-cost}
\end{algorithm}

The early return is justified by
\[
|J^*|
\leq
\frac12K_{\rm ex}\|x^0\|^2.
\]
On the branch requiring quantum estimation, the choice of
\(\eta_J\) gives
\[
|J^*_{\rm num}-\widetilde J_h|
\leq
\frac{\alpha_{L^{-1}}\|x^0\|^2}{2}\eta_J
=
\frac{\varepsilon}{3}
\]
with probability at least \(1-\delta\). Consequently,
\[
\begin{aligned}
|J^*_{\rm num}-J^*|
\leq
|J^*_{\rm num}-\widetilde J_h|
+
|\widetilde J_h-J_h|
+
|J_h-J^*|
\leq\varepsilon.
\end{aligned}
\]

\paragraph{Optimal-control estimation.}
Let \(\rho_R\geq\|R^{-1}\|\) be the supplied inverse-norm bound,
and set
\[
\alpha_B=s\|B\|_{\max},
\qquad
\beta=\alpha_B\rho_R.
\]
Then \(\|R^{-1}B^\top\|\leq\beta\). For the selected node \(k\),
define
\[
C_k=R^{-1}B^\top S_{\lambda,k}L^{-1},
\qquad
u_h(t_k)=-C_kb.
\]

Combining the inverse encoding of \(L\), the exact encodings
of \(S_{\lambda,k}\) and \(B^\top\), and an inverse encoding
of \(R\) with normalization \(8\rho_R\), gives an encoding
\(U_{C_k}\) with normalization
\[
\alpha_{C_k}
=
8\alpha_B\rho_R\alpha_{L^{-1}}
=
64\beta\nu.
\]
All factors are extended to a common system register before
multiplication, as described in
Appendix~\ref{app:optimal-control-estimation-proof-v2}.

The precision of the inverse encoding of \(R\) used here is
chosen separately from that used in constructing \(H\).
If its error is \(\varepsilon_{R^{-1}}^{(u)}\), the product
encoding satisfies
\[
\varepsilon_{C_k}
\leq
\alpha_B\alpha_{L^{-1}}\varepsilon_{R^{-1}}^{(u)}
+
8\beta\varepsilon_{L^{-1}}.
\]
The choices in Algorithm~\ref{alg:optimal-control} ensure
\[
\varepsilon_{C_k}
\leq
\frac{\varepsilon}{3\|x^0\|}.
\]

Let \(\widetilde C_k\) be the leading
\(d\times2nm\) block of
\(\alpha_{C_k}\mathcal B(U_{C_k})\), and set
\[
\widetilde u_h=-\operatorname{Re}(\widetilde C_kb).
\]
The encoding error is then bounded directly in Euclidean norm:
\[
\|\widetilde u_h-u_h(t_k)\|
\leq
\|\widetilde C_k-C_k\|\|b\|
\leq
\varepsilon_{C_k}\|x^0\|
\leq
\frac{\varepsilon}{3}.
\]
The algorithm estimates the \(d\) components of
\(\widetilde u_h\) separately.

\begin{algorithm}[!h]
\KwIn{The input oracles and \(\|x^0\|\) from
Assumption~\ref{ass:be}; the norm and stability bounds described
above; \(T>0\), a valid grid index \(k\),
\(\varepsilon>0\), and \(0<\delta<1\).}
\KwOut{An estimate \(u_{\rm num}(t_k)\in\mathbb R^d\) such that
\(\Pr(\|u_{\rm num}(t_k)-u^*(t_k)\|\leq\varepsilon)
\geq1-\delta\).}

\If{\(\|x^0\|=0\) or \(B=0\)}{
    \KwRet{\(u_{\rm num}(t_k)=0\in\mathbb R^d\)}\;
}

Choose \(m,h\) as in Section~\ref{sec:main-results},
evaluate \(K_{\rm ex}\), and set
\[
\nu=2K_{\rm ex},
\qquad
\beta=\alpha_B\rho_R,
\qquad
\alpha_{L^{-1}}=8\nu
\]
\;

\If{\(\varepsilon\geq\beta K_{\rm ex}\|x^0\|\)}{
    \KwRet{\(u_{\rm num}(t_k)=0\in\mathbb R^d\)}\;
}

Using the Gauss approximation bound in
\eqref{eq:rk-one-step-error}, choose \(p\) so that
\[
\Delta_hK_{\rm ex}\leq\frac12,
\qquad
2\beta\Delta_hK_{\rm ex}^2\|x^0\|
\leq\frac{\varepsilon}{3}
\]
\;

Set
\[
\varepsilon_{L^{-1}}
=
\min\left\{
\nu,\frac{\varepsilon}{48\beta\|x^0\|}
\right\},
\qquad
\varepsilon_{R^{-1}}^{(u)}
=
\min\left\{
\rho_R,
\frac{\varepsilon}
{6\alpha_B\alpha_{L^{-1}}\|x^0\|}
\right\}
\]
\;

Construct \(U_H,U_\Phi,U_L\), and use \gls{qsvt} to obtain
\(U_{L^{-1}}\) with normalization \(\alpha_{L^{-1}}\) and
error at most \(\varepsilon_{L^{-1}}\)\;

Construct the readout encoding of \(R^{-1}\) with
normalization \(8\rho_R\) and error at most
\(\varepsilon_{R^{-1}}^{(u)}\). Combine it with the encodings
of \(B^\top,S_{\lambda,k},L^{-1}\) to obtain \(U_{C_k}\),
with normalization \(\alpha_{C_k}=64\beta\nu\)\;

Set \(b\) as in \eqref{eq:def-b} and prepare \(\ket b\)\;

Set
\[
\eta_u
=
\frac{\varepsilon}
{3\sqrt d\,\alpha_{C_k}\|x^0\|}
\]
\;

\For{\(j=0,\ldots,d-1\)}{
    Use a Hadamard test and amplitude estimation to estimate
    \[
    \mu_{k,j}
    =
    \operatorname{Re}
    \bigl[\bra j\,\mathcal B(U_{C_k})\ket b\bigr]
    \]
    to additive error \(\eta_u\) and failure probability
    \(\delta/d\); denote the result by
    \(\widehat\mu_{k,j}\)\;

    Set
    \[
    (u_{\rm num}(t_k))_j
    =
    -\alpha_{C_k}\|x^0\|\widehat\mu_{k,j}
    \]
    \;
}

\KwRet{\(u_{\rm num}(t_k)\)}\;

\caption{Quantum estimation of the optimal control at \(t_k\)}
\label{alg:optimal-control}
\end{algorithm}

The early return follows from
\[
\|u^*(t_k)\|
\leq
\beta K_{\rm ex}\|x^0\|.
\]
Otherwise, the union bound ensures that all \(d\) matrix-element
estimates satisfy their error bounds with probability at least
\(1-\delta\). On this event,
\[
\|u_{\rm num}(t_k)-\widetilde u_h\|
\leq
\alpha_{C_k}\|x^0\|\sqrt d\,\eta_u
=
\frac{\varepsilon}{3}.
\]
Adding the encoding and discretization errors gives
\[
\begin{aligned}
\|u_{\rm num}(t_k)-u^*(t_k)\|
\leq
\|u_{\rm num}(t_k)-\widetilde u_h\|
+
\|\widetilde u_h-u_h(t_k)\|
+
\|u_h(t_k)-u^*(t_k)\|
\leq\varepsilon.
\end{aligned}
\]

\paragraph{Complexity.}
On the branches requiring quantum estimation,
Lemma~\ref{lem:Hadamard-test} gives the respective readout
call counts
\[
\begin{aligned}
N_J
&=
\mathcal O\left(
\frac{\nu\|x^0\|^2}{\varepsilon}
\log\frac{2}{\delta}
\right),
\\
N_u
&=
\mathcal O\left(
\frac{d^{3/2}\beta\nu\|x^0\|}{\varepsilon}
\log\frac{2d}{\delta}
\right).
\end{aligned}
\]
These counts include calls to the corresponding operator
encodings, state-preparation unitaries, and their inverses.
For the control, the factor \(d^{3/2}\) comes from estimating
\(d\) components, each to accuracy
\(\varepsilon/(3\sqrt d)\).

One call to \(U_{C_k}\) uses one call to \(U_{L^{-1}}\),
together with the encodings of \(B^\top\), \(S_{\lambda,k}\),
and \(R^{-1}\). The additional matrix-oracle cost is bounded
by the same asymptotic expression as
\eqref{eq:overview-inverse-query-cost}; see
Appendix~\ref{app:optimal-control-estimation-proof-v2}.
Multiplying the readout call counts by \(Q_{\rm inv}\) gives
the total query bounds. In particular, one factor of \(\nu\)
comes from constructing the inverse encoding, and a second
comes from its normalization in the readout.

For the grid in Section~\ref{sec:main-results},
\[
\nu
=
\mathcal O\bigl(M(1+\alpha_c^{-1}\|H\|)\bigr).
\]
Substituting this estimate gives the query bounds in
Theorems~\ref{thm:main-cost} and~\ref{thm:main-control}.
The additional gate counts have the further factor
\[
\log(2d)\log^2(2n)\log^2(2m),
\]
up to the suppressed logarithmic factors. The Runge--Kutta
coefficient oracles have implementation costs polynomial in
\(p\). Since the required stage number grows at most
logarithmically with the reciprocal approximation accuracy,
these costs are included in the suppressed factors.
The detailed query and gate counts are given in
Appendices~\ref{app:optimal-cost-estimation-proof-v2}
and~\ref{app:optimal-control-estimation-proof-v2}.

\subsection{Dimension-uniform bounds for the mass--spring chain}
\label{subsec:dimension-uniform}

The horizon-uniform estimate alone does not control the dependence
of \(M_c,\alpha_c,M_P\) on the state dimension. We verify this
dependence for the following family.

\begin{benchmark}[Coupled mass--spring chain]
\label{bench:mass_spring_chain}
Consider \(n_m\geq2\) masses with displacements
\(\omega(t)\in\mathbb R^{n_m}\) and endpoint forces
\(u(t)\in\mathbb R^2\), governed by
\begin{equation}\label{eq:coupled-oscillator}
\mathcal M\ddot\omega+\mathcal D\dot\omega+\mathcal K\omega
=\mathcal F u,
\end{equation}
where
\[
\mathcal M=\operatorname{diag}(m_1,\ldots,m_{n_m}),
\quad \mathcal D=\delta I_{n_m},
\quad \mathcal F=[e_1,e_{n_m}],
\]
and
\[
\mathcal K=k_0I_{n_m}
+k_1\operatorname{tridiag}(-1,2,-1).
\]
Here \(e_1,e_{n_m}\) denote the first and last standard basis
vectors. Assume \(0<m_-\leq m_i\leq m_+<\infty\), with
\(m_-,m_+\) and the positive parameters
\(\delta,k_0,k_1,\mu,\gamma\) independent of \(n_m\).
With \(x=(\rho^\top,\omega^\top)^\top\) and
\(\rho=\mathcal M\dot\omega\), the first-order system is
\begin{equation}\label{eq:ode-for-oscillator}
\dot x=Ax+Bu,
\qquad
A=\begin{bmatrix}
-\delta\mathcal M^{-1}&-\mathcal K\\
\mathcal M^{-1}&0
\end{bmatrix},
\quad
B=\begin{bmatrix}\mathcal F\\0\end{bmatrix}.
\end{equation}
We take
\begin{equation}\label{eq:def-Q-benchmark}
Q=\operatorname{diag}(\mathcal M^{-1},\mathcal K),
\qquad R=\mu I_2,\qquad G=\gamma Q.
\end{equation}
Thus \(n=2n_m\), \(d=2\), and \(\|\mathcal F\|=1\).
\end{benchmark}

The proof of Corollary~\ref{cor:main-mass-spring} starts from
\[
m_-I\preceq\mathcal M\preceq m_+I,
\qquad
k_0I\preceq\mathcal K\preceq(k_0+4k_1)I.
\]
These bounds and the fixed damping give a Lyapunov energy
uniformly equivalent to \(\|x\|^2\), and hence
\[
\|e^{At}\|\leq C_0e^{-\alpha_0t},
\]
with \(C_0,\alpha_0^{-1}\) independent of \(n_m\);
see Lemma~\ref{lem:open-loop-stable}. In particular,
Assumption~\ref{ass:sta&det} holds.

Comparing the infinite-horizon optimal cost with the zero-control
cost gives a uniform upper bound for the stabilizing algebraic
Riccati solution \(P_+\). Since \(Q\) is uniformly positive
definite and \(\|A-BR^{-1}B^\top P_+\|\) is bounded, the integral
representation of \(P_+\) also gives a uniform lower bound. Thus
\[
\underline p I\preceq P_+\preceq\overline p I,
\]
where \(\underline p,\overline p>0\) are independent of \(n_m\).
The auxiliary Lyapunov estimates and the terminal correction
then give constants \(C,\alpha_*>0\), independent of \(n_m\) and
\(T\), such that
\[
M_c\leq C,\qquad M_P\leq C,\qquad \alpha_c\geq\alpha_*.
\]
The full argument and explicit constants are given in
Appendix~\ref{app:Proofs for the benchmark}.

The matrix norms and sparsity are uniformly bounded for this
family, while \(d=2\) and \(\kappa(R)=1\). Thus \(\mathcal C\)
in Section~\ref{sec:main-results} is bounded independently of
\(n_m\) and \(T\). If \(\|x^0\|\) is uniformly bounded, the
complexity bounds have polylogarithmic dependence on \(n\) and
\(T\) in the stated input model, as claimed in
Corollary~\ref{cor:main-mass-spring}. The assumption on the initial-state norm has a direct physical
interpretation. In the momentum--displacement coordinates used above,
the total initial mechanical energy is
\[
E_{\mathrm{mech}}^0
=
\frac12(\rho^0)^\top M^{-1}\rho^0
+
\frac12(\omega^0)^\top K\omega^0
=
\frac12(x^0)^\top Qx^0.
\]
The uniform bounds on the masses and stiffness matrix imply
\[
\frac{\min\{m_+^{-1},k_0\}}{2}\|x^0\|^2
\le E_{\mathrm{mech}}^0
\le
\frac{\max\{m_-^{-1},k_0+4k_1\}}{2}\|x^0\|^2.
\]
Since the constants in these inequalities are independent of the
chain length, uniform boundedness of \(\|x^0\|\) is equivalent to
uniform boundedness of the total initial mechanical energy.
Thus, for fixed additive accuracy, the stated polylogarithmic
complexity applies to families with bounded total initial energy
under the assumed input-access model. A bound on the energy per
mass alone does not ensure this condition, since the total energy
may grow with the number of masses.

\section{Numerical experiments}
\label{sec:numerics}

We examine the accuracy and conditioning of the shooting systems
used by the quantum algorithms. The experiments (i) validate the
shooting formulation against a Riccati reference, (ii) confirm the
Gauss convergence orders and perturbation bounds, (iii) show that
\(\kappa(L_{\rm ex})\) saturates as \(T\) grows at fixed \(h\) but
grows like \(h^{-1}\) under grid refinement, and (iv) illustrate how
the dependence on \(n\) is determined by the mode of dimensional growth.

\subsection{Benchmark problems and numerical setup}
\label{subsec:numerical-setup}
\label{app:benchmark}

We use the mass--spring chain, vehicle platoon, and
reaction--diffusion models shown in
Fig.~\ref{fig:benchmark-models}. In the dimension study in
Section~\ref{subsec:state_dimension_dependence}, we increase
the number of masses in the chain, the number of independent
local regulators in the platoon, or the number of spatial
grid points on a fixed domain.

\begin{figure}[!h]
\centering
\begin{minipage}[b]{0.33\linewidth}
\centering
\includegraphics[width=\linewidth]{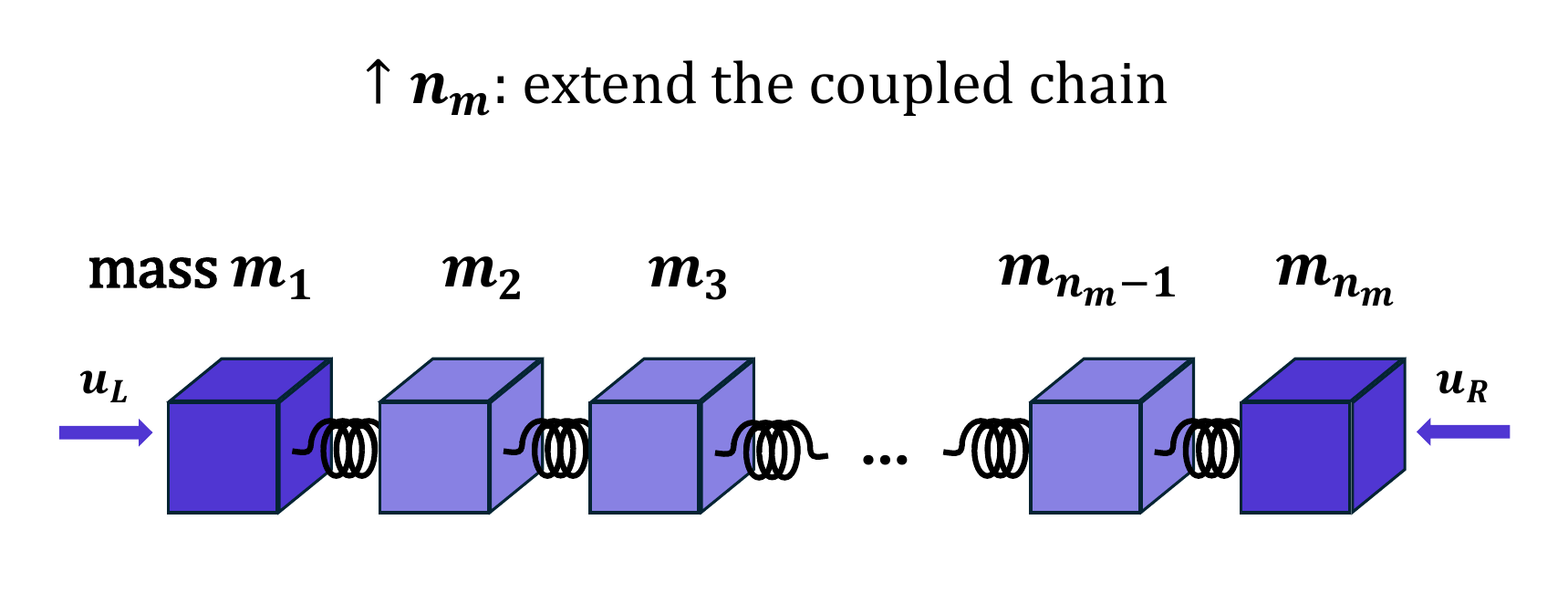}
\par\smallskip
{\footnotesize
(a) Mass--spring chain\\
\(n=2n_m\), \(d=2\)\par}
\end{minipage}\hfill
\begin{minipage}[b]{0.33\linewidth}
\centering
\includegraphics[width=\linewidth]{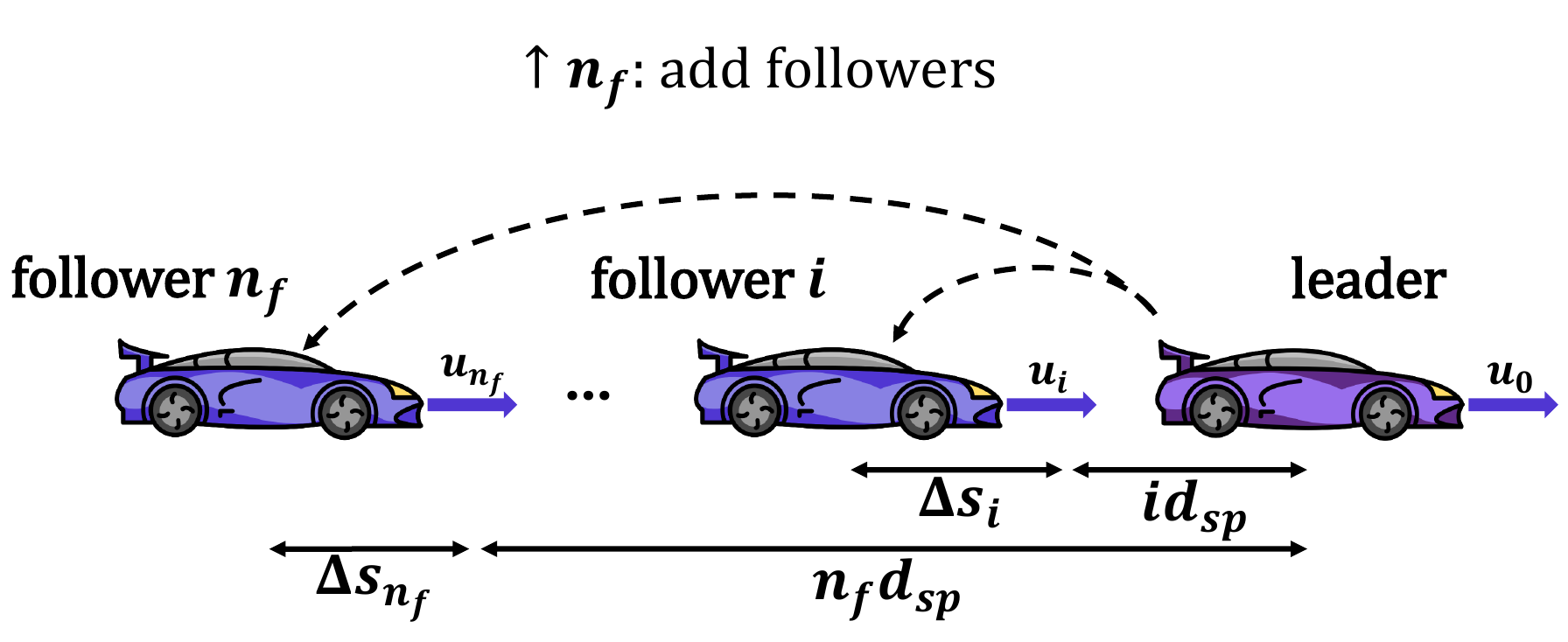}
\par\smallskip
{\footnotesize
(b) Vehicle platoon\\
\(n=3n_f\), \(d=n_f\)\par}
\end{minipage}\hfill
\begin{minipage}[b]{0.33\linewidth}
\centering
\includegraphics[width=\linewidth]{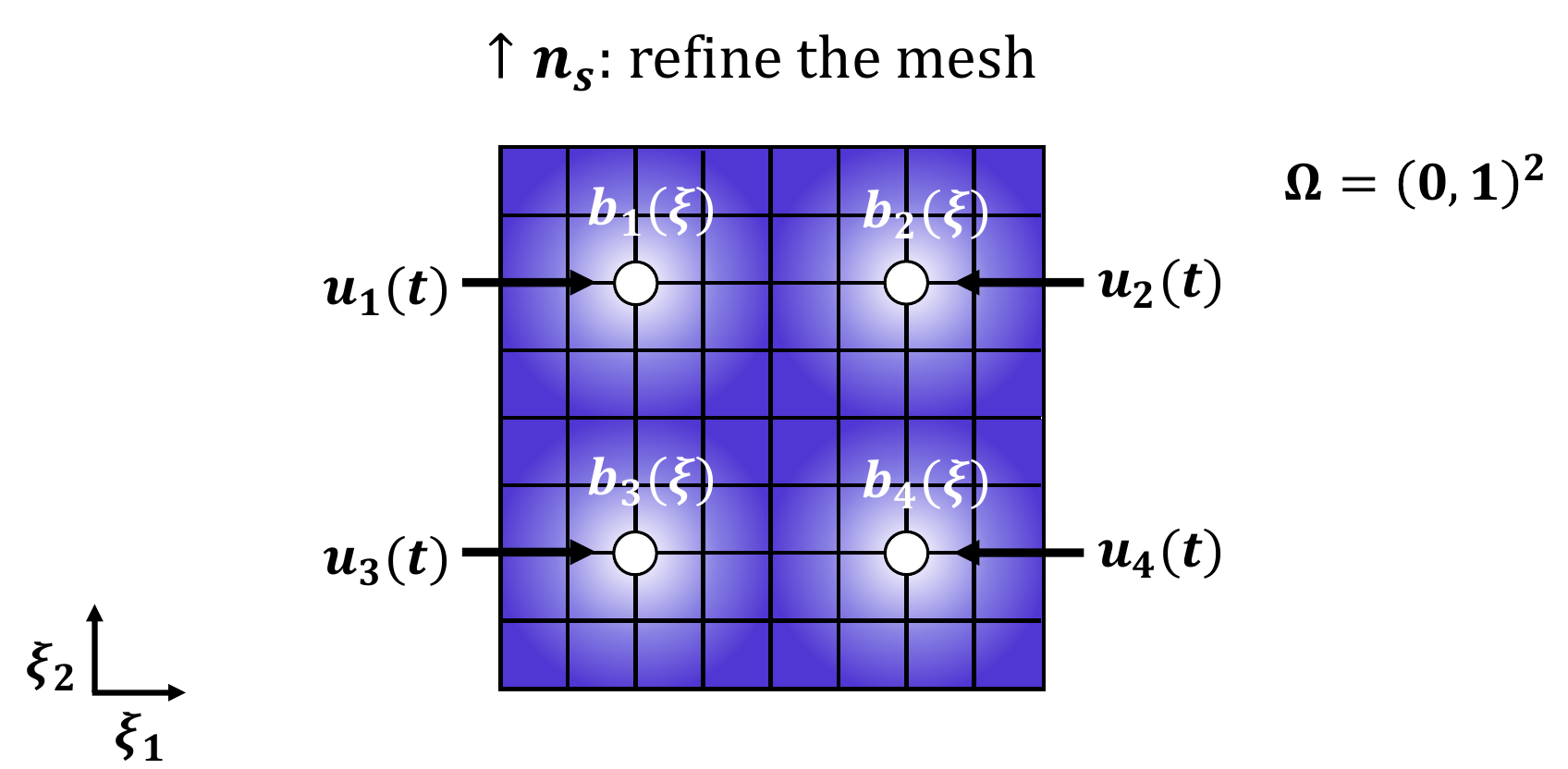}
\par\smallskip
{\footnotesize
(c) Reaction--diffusion\\
\(n=n_s^2\), \(d=4\)\par}
\end{minipage}
\caption{Three LQR benchmark problems. (a) Forces applied to the two end masses are used to bring a coupled mass--spring chain to rest at equilibrium. Damping and ground attachments are omitted from the schematic. (b) Each follower is controlled to reduce spacing, velocity, and acceleration errors relative to a prescribed leader trajectory. (c) Four distributed Gaussian inputs regulate a reaction--diffusion field toward zero; the circles mark the actuator centers.}
\label{fig:benchmark-models}
\end{figure}

The mass--spring model is defined in
Benchmark~\ref{bench:mass_spring_chain} and illustrated in
Fig.~\ref{fig:benchmark-models}(a). Its state consists of
the momenta and signed displacements from equilibrium of
\(n_m\) masses. For the representative instance, we take
six unit masses and apply controls at the two endpoints,
giving \((n,d)=(12,2)\). We set
\[
\delta=0.4,\qquad k_0=k_1=1,\qquad
\mu=0.5,\qquad \gamma=1.
\]
The initial momenta are zero, and the initial displacements are
\[
\omega_i^0=\sqrt{\frac{2}{n_m+1}}
\sin\!\left(\frac{\pi i}{n_m+1}\right),
\qquad i=1,\ldots,n_m,
\]
so that \(\|x^0\|=1\).

\begin{benchmark}[Vehicle platoon]
\label{bench:platoon}
The leader in Fig.~\ref{fig:benchmark-models}(b) supplies a
prescribed reference trajectory. Each of the \(n_f\) followers
has a three-dimensional state describing position, velocity,
and acceleration errors relative to its desired motion.
To fix the sign convention, let \(s_i,v_i,a_i\) denote these
physical quantities for vehicle \(i\), with \(i=0\) denoting
the leader. For a desired intervehicle spacing \(d_{\rm sp}\),
write
\[
\Delta s_i=s_0-s_i-i d_{\rm sp},
\qquad
x_i=\begin{bmatrix}-\Delta s_i\\v_i-v_0\\a_i-a_0\end{bmatrix}.
\]
Thus a zero error state represents the desired moving formation.
The controls are deviations from the nominal follower inputs
along the reference trajectory. The local error dynamics are
determined by
\[
A_0=
\begin{bmatrix}
0&1&0\\0&0&1\\0&0&-\tau^{-1}
\end{bmatrix},
\qquad
b_0=\begin{bmatrix}0\\0\\\tau^{-1}\end{bmatrix}.
\]
Stacking the follower states gives
\[
\begin{aligned}
A&=I_{n_f}\otimes A_0,& B&=I_{n_f}\otimes b_0,\\
Q&=I_{n_f}\otimes Q_0,& R&=I_{n_f}\otimes R_0,\\
G&=I_{n_f}\otimes G_0.&&
\end{aligned}
\]
The followers are coupled to the prescribed reference, but their
error dynamics and costs are independent of one another.
Hence \(n=3n_f\) and \(d=n_f\).
We take \(\tau=2\), \(Q_0=G_0=I_3\), and \(R_0=0.5\).
The representative instance has \(n_f=5\).
Its initial position and velocity errors are sampled uniformly from
\([-3,3]\) and \([-1,1]\), respectively, with zero acceleration
errors.
\end{benchmark}

\begin{benchmark}[Reaction--diffusion system]
\label{bench:reaction_diffusion}
We discretize a reaction--diffusion field on an
\(n_s\times n_s\) interior grid of \((0,1)^2\), as illustrated
in Fig.~\ref{fig:benchmark-models}(c). The state consists of
the field values at the interior grid points. Set
\[
\begin{aligned}
h_s&=(n_s+1)^{-1},\qquad
D_1=h_s^{-2}\operatorname{tridiag}(1,-2,1),\\
D_2&=I_{n_s}\otimes D_1+D_1\otimes I_{n_s},\qquad
A=\nu_{\rm d} D_2+\rho_{\rm r} I_n.
\end{aligned}
\]
Here \(n=n_s^2\), and the discrete Laplacian imposes homogeneous
Dirichlet boundary conditions. Each of the four controls acts
through a Gaussian spatial profile with standard deviation \(0.10\).
The set of actuator centers is
\[
\{(0.2,0.2),\ (0.2,0.8),\ (0.8,0.2),\ (0.8,0.8)\}.
\]
The columns of \(B\) are the sampled profiles, each normalized
to unit Euclidean norm. We take
\[
\nu_{\rm d}=0.01,\qquad \rho_{\rm r}=0.15,\qquad
Q=I_n,\qquad R=0.3I_4,\qquad G=I_n.
\]
These state weights act on the unscaled vector of nodal values.
The representative instance has \(n_s=5\), hence \((n,d)=(25,4)\).
The initial state samples
\[
\zeta(\xi,0)=\sin(\pi\xi_1)\sin(\pi\xi_2)
+0.25\sin(2\pi\xi_1)\sin(3\pi\xi_2)
\]
at the interior grid points.
\end{benchmark}

Unless the horizon is varied, we use \(T=10\).
We write \(\vartheta=h\|H\|\) for the normalized temporal step.
Given a target \(\vartheta_{\rm tar}\), the fixed-horizon grid is
\begin{equation}\label{eq:numerical-grid-choice}
m=1+\left\lceil\frac{T\|H\|}{\vartheta_{\rm tar}}\right\rceil,
\qquad h=\frac{T}{m-1},
\end{equation}
so that \(\vartheta\leq\vartheta_{\rm tar}\).
The fixed-target comparisons use \(0.05\) or \(0.075\), both
below the sufficient block-encoding step bound \(1/(4\pi)\).
The algorithmic grid in Section~\ref{sec:main-results} uses the
more conservative choice \(h\|H\|\leq1/(8\pi)\).

\subsection{Accuracy of the shooting discretization}
\label{subsec:numerical-accuracy}

\paragraph{Consistency with the Riccati formulation.}
We first check the shooting formulation using \(e^{hH}\)
with that obtained by numerically
solving the differential Riccati equation. We integrate
\eqref{eq:finite-horizon-riccati} backward and then integrate
the closed-loop state equation forward. The reference costate and control are
\[
\lambda_{\rm ref}(t)=P(t)x_{\rm ref}(t),
\qquad
u_{\rm ref}(t)=-R^{-1}B^\top\lambda_{\rm ref}(t).
\]
The reference cost is evaluated from the original objective
\eqref{eq: LQR payoff}. The shooting outputs are obtained
independently from \eqref{eq:overview-discrete-outputs}, with
\(L\) replaced by \(L_{\rm ex}\).

All three representative instances use \(m=21\) for this comparison.
For \(v\in\{x,\lambda,u\}\), let \(\widehat v_j\) denote
the values recovered from the numerical shooting solution and set
\[
E_v=
\frac{\left(\sum_{j=0}^{m-1}
\|\widehat v_j-v_{\rm ref}(t_j)\|^2\right)^{1/2}}
{\left(\sum_{j=0}^{m-1}\|v_{\rm ref}(t_j)\|^2\right)^{1/2}}.
\]
The state, costate, and control discrepancies are below
\(1.8\times10^{-12}\). Both the relative cost discrepancy
\(|\widehat J_{\rm ex}-J_{\rm ref}|/|J_{\rm ref}|\) and the
normalized residual
\(\|L_{\rm ex}\widehat q_{\rm ex}-b\|/\|b\|\)
are below \(3.4\times10^{-15}\).
The close agreement between the two numerical solutions provides
a numerical check of the shooting formulation and the formulas
for the optimal cost and control.

\paragraph{Gauss discretization.}
We next replace \(e^{hH}\) by the \(p\)-stage Gauss--Legendre
approximation \(\Phi_p(h)\) introduced in
Section~\ref{subsec:gauss-perturbation}.
Let the history vectors \(q_{\rm ex}\) and \(q\) collect the nodal states and costates obtained with the exact and approximate propagators on the same grid.  With
\(\mathbb J_{2n}
=\bigl[\begin{smallmatrix}
0&I_n\\-I_n&0
\end{smallmatrix}\bigr]\), define
\begin{equation}\label{eq:numerical-gauss-errors}
\begin{aligned}
\Delta_h^{(p)}
&=\|\Phi_p(h)-e^{hH}\|,
&
E_q
&=\frac{\|q-q_{\rm ex}\|}{\|q_{\rm ex}\|},
\\[3pt]
\eta_h
&=\Delta_h^{(p)}\|L_{\rm ex}^{-1}\|,
&
\delta_{\rm sp}^{(p)}
&=\|\Phi_p(h)^\top\mathbb J_{2n}\Phi_p(h)-\mathbb J_{2n}\|.
\end{aligned}
\end{equation}
These quantities measure the one-step error, the relative
history-vector error, the size of the shooting perturbation,
and the defect in the symplectic identity, respectively.
Whenever \(\eta_h<1\), the perturbation estimate in
\eqref{eq:appendix-q-qhat-first-bound} gives
\begin{equation}\label{eq:numerical-relative-error-bound}
E_q\leq\frac{\eta_h}{1-\eta_h}.
\end{equation}

We first examine convergence as the time step decreases.
For fixed system matrices, \(T\), and \(p\),
Lemma~\ref{lem:rk-global-error-v2} gives
\(\Delta_h^{(p)}=\mathcal O(h^{2p+1})\), while
Theorem~\ref{thm:inverse-bound-L} gives
\(\|L_{\rm ex}^{-1}\|=\mathcal O(h^{-1})\).
Thus, \eqref{eq:numerical-relative-error-bound} yields
\(E_q=\mathcal O(h^{2p})\) as \(h\to0\).
We test these orders on the representative mass--spring
instance with \(p=1,2,3\) and target normalized steps
\[
\vartheta_{\rm tar}\in
\{0.04,0.05,0.06,0.07,0.075,0.1,0.2,0.3,0.5,0.8\}.
\]
For the one-step error, we set
\(h=\vartheta_{\rm tar}/\|H\|\), they all satisfy \(h\|H\|<1\), as required by Lemma~\ref{lem:rk-global-error-v2}. For the history vector, we compute \(E_q\) using the grid specified by \eqref{eq:numerical-grid-choice} and plot it against the resulting normalized step \(h\|H\|\).

For \(p=1,2,3\), the fitted slopes for the one-step error
in Fig.~\ref{fig:gauss_convergence}(a) are
\(2.996\), \(4.997\), and \(6.992\), respectively.
The corresponding slopes for \(E_q\) in panel~(b) are
\(1.997\), \(3.996\), and \(6.019\).
These slopes agree with the predicted orders \(2p+1\)
and \(2p\).
We restrict the fits to \(h\|H\|\leq0.5\) and exclude
points with errors below \(5\times10^{-14}\) to reduce
the influence of roundoff.

We then keep the time grid fixed and increase the number
of Gauss stages.
For the same mass--spring instance, we take \(T=10\)
and \(m=1032\), so that
\[
\vartheta:=h\|H\|\simeq0.07499<\frac{1}{4\pi},
\]
and consider \(p=1,\ldots,8\).
At this fixed step size,
Lemma~\ref{lem:rk-global-error-v2} gives
\[
\Delta_h^{(p)}
\leq B_p(\vartheta)
:=
\frac{8(p!)^2}{(2p)!(2p+1)!}\,
\vartheta^{2p+1}.
\]
At this fixed time step, increasing \(p\) from \(1\) to \(8\) reduces the one-step error from \(5.72\times10^{-6}\) to \(1.26\times10^{-46}\), as shown in Fig.~\ref{fig:gauss_convergence}(c). Over the same range, the relative history-vector error in panel~(d) decreases from \(3.23\times10^{-5}\) to \(3.09\times10^{-47}\). For every tested \(p\), the one-step error lies below \(B_p(\vartheta)\), and the history-vector error lies below the perturbation bound \eqref{eq:numerical-relative-error-bound}.

\begin{figure*}[t]
\centering
\includegraphics[width=0.46\textwidth]{%
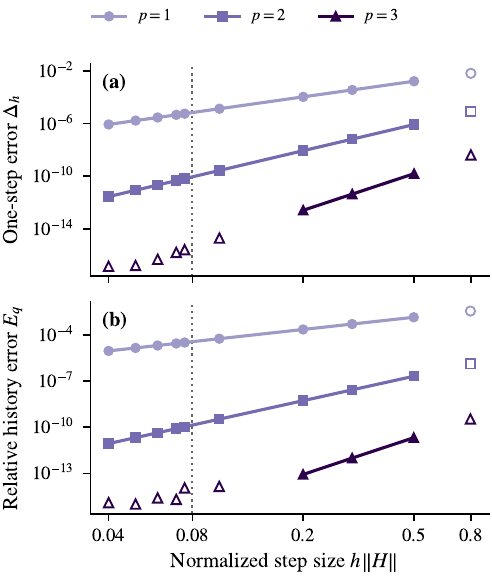}
\hfill
\includegraphics[width=0.46\textwidth]{%
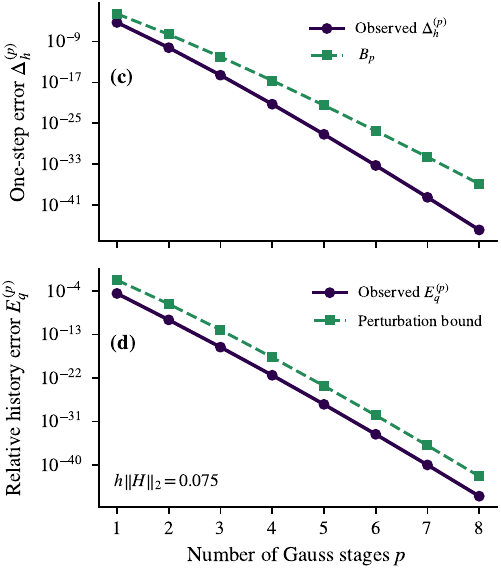}
\caption{Convergence of the Gauss--Legendre discretization
for the representative mass--spring instance.
Panels~(a) and~(b) show the one-step propagator error
\(\Delta_h^{(p)}\) and the relative history-vector error
\(E_q\), respectively, as functions of the normalized step
\(h\|H\|\) for \(p=1,2,3\).
Filled markers indicate the points used in the slope fits,
which satisfy \(h\|H\|\leq0.5\) and have errors at least
\(5\times10^{-14}\); open markers indicate the excluded points.
The vertical dotted line marks the sufficient block-encoding
threshold \(h\|H\|=1/(4\pi)\).
Panels~(c) and~(d) show the corresponding errors as the number
of Gauss stages increases on a fixed grid with \(m=1032\)
and \(h\|H\|\simeq0.07499\), together with the one-step bound
\(B_p(\vartheta)\) and the perturbation bound
\eqref{eq:numerical-relative-error-bound}.
The history-vector errors in panels~(b) and~(d) are
computed at \(T=10\).}
\label{fig:gauss_convergence}
\end{figure*}


\paragraph{Accuracy across the benchmarks.}
Table~\ref{tab:symplectic_accuracy_across_benchmarks} reports the results for the three representative instances with \(p=2\) and \(\vartheta_{\rm tar}=0.05\). In all three cases, \(\eta_h<5.8\times10^{-8}\), so the perturbation condition \(\eta_h\leq1/2\) is satisfied. The computed condition numbers of \(L_{\rm ex}\) and \(L\) agree to the precision shown in the table. The measured symplecticity defects are below \(7.8\times10^{-16}\), consistent with the symplecticity of the Gauss map in exact arithmetic.

\begin{table}[t]
\centering
\small
\setlength{\tabcolsep}{4pt}
\renewcommand{\arraystretch}{1.12}
\begin{tabular}{@{}lrcccccc@{}}
\toprule
Benchmark & \(m\) & \(\Delta_h^{(2)}\) & \(\delta_{\rm sp}\) & \(E_q\)
& \(\eta_h\) & \(\kappa(L_{\rm ex})\) & \(\kappa(L)\)\\
\midrule
Mass--spring & 1548
& \(8.53\times10^{-12}\) & \(3.33\times10^{-16}\) & \(2.10\times10^{-11}\)
& \(4.20\times10^{-8}\) & \(2.46\times10^4\) & \(2.46\times10^4\)\\
Vehicle platoon & 342
& \(3.54\times10^{-11}\) & \(3.33\times10^{-16}\) & \(6.22\times10^{-10}\)
& \(1.00\times10^{-8}\) & \(5.77\times10^2\) & \(5.77\times10^2\)\\
Reaction--diffusion & 704
& \(1.04\times10^{-10}\) & \(7.77\times10^{-16}\) & \(1.21\times10^{-10}\)
& \(5.72\times10^{-8}\) & \(1.12\times10^3\) & \(1.12\times10^3\)\\
\bottomrule
\end{tabular}
\caption{Accuracy and conditioning of the two-stage Gauss
discretization for the three representative instances at
\(T=10\). Each grid is chosen by
\eqref{eq:numerical-grid-choice} with
\(\vartheta_{\rm tar}=0.05\). }
\label{tab:symplectic_accuracy_across_benchmarks}
\end{table}


\subsection{Temporal-resolution and horizon dependence}
\label{subsec:temporal-resolution-horizon}

We study \(\kappa(L_{\rm ex})\) for the representative
mass--spring instance through the two experiments in
Fig.~\ref{fig:conditioning_mechanism}.
In the first, we refine the time grid while keeping \(T\) fixed;
in the second, we increase \(T\) while keeping \(h\) fixed.

We first fix \(T=10\) and vary \(m\), with \(h=T/(m-1)\).
Figure~\ref{fig:conditioning_mechanism}(a) shows that
\(\kappa(L_{\rm ex})\) initially decreases and then increases
as the grid is refined. The smallest observed condition number
occurs at \(m=12\), where \(h\|H\|\simeq7.03\).
This is well above the sufficient block-encoding threshold
\(1/(4\pi)\); the corresponding block-encoding step condition \(h\|H\|<1/(4\pi)\) is first
satisfied at \(m=973\).
On finer grids, the measured condition number grows
approximately in proportion to \(h^{-1}\), consistent with
\eqref{eq:condition-number-order}. The purple curve shows the numerically computed condition
number \(\kappa(L_{\rm ex})\).
The green band is obtained by multiplying the lower and upper
bounds on \(\|L_{\rm ex}\|\) in
Theorem~\ref{thm:norm-bounds-L} by the numerically computed
\(\|L_{\rm ex}^{-1}\|\).
Thus, comparing the curve with this band checks the bounds
on \(\|L_{\rm ex}\|\). It does not independently test the
theoretical bound on \(\|L_{\rm ex}^{-1}\|\) used in
Theorem~\ref{thm:main-conditioning}.


We next vary the horizon while keeping \(h\|H\|=0.075\) fixed. For each target horizon \(T_{\rm tar}\), we set
\[
m=1+\operatorname{round}(T_{\rm tar}/h),
\qquad T=(m-1)h.
\]
This keeps \(h\) fixed while allowing \(T\) to differ slightly from \(T_{\rm tar}\). The target horizons range from \(1\) to \(60\), and we plot the actual values of \(T\). 

Theorem~\ref{thm:main-conditioning} gives
\[
U_h=1+\max\{\|e^{hH}\|,\sqrt{1+\|G\|^2}\},
\qquad 
\kappa(L_{\rm ex})
\leq MU_h\frac{1-e^{-\alpha_c(T+h)}}{1-e^{-\alpha_ch}}
\leq\frac{MU_h}{1-e^{-\alpha_ch}}, 
\]
where the final bound is independent of \(T\). Figure~\ref{fig:conditioning_mechanism}(b) shows that \(\kappa(L_{\rm ex})\) grows more slowly as the horizon increases, changing by about \(3.8\%\) between \(T\simeq40\) and \(T\simeq60\). This apparent leveling off is consistent with Theorem~\ref{thm:main-conditioning}, which guarantees uniform boundedness in \(T\).

\begin{figure}[!h]
\centering
\includegraphics[width=0.96\linewidth]{%
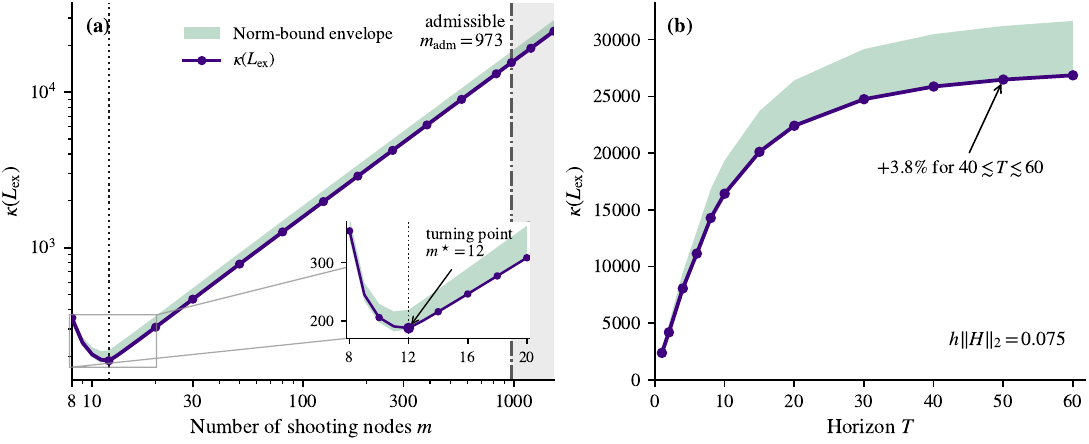}
\caption{Conditioning of the exact mass--spring shooting system.
Panel~(a): temporal refinement at \(T=10\), with the sampled
minimum and the first node count satisfying
\(h\|H\|<1/(4\pi)\). Panel~(b): horizon variation at the fixed
normalized step \(h\|H\|=0.075\). The shaded envelopes are
\([\ell_h\|L_{\rm ex}^{-1}\|,U_h\|L_{\rm ex}^{-1}\|]\), using
the numerically computed inverse norm; they display the norm
bounds of Theorem~\ref{thm:norm-bounds-L}.}
\label{fig:conditioning_mechanism}
\end{figure}

Together, the two experiments support the conditioning analysis: the observed dependence on \(T\) is consistent with a uniform bound at fixed \(h\), whereas reducing \(h\) at fixed \(T\) can worsen conditioning. These results motivate improving the discretization accuracy by increasing the number of Gauss stages \(p\) while keeping the time grid fixed.

\subsection{State-dimension and actuation dependence}
\label{subsec:state_dimension_dependence}

Theorem~\ref{thm:main-conditioning} gives a bound that is uniform in the horizon for a fixed system. As \(n\) increases, the stability parameters, coefficient norms, and selected time step may also change. We examine how \(\kappa(L_{\rm ex})\) depends on \(n\) in the three benchmark families.

We fix \(T=10\) and choose the time grid using \eqref{eq:numerical-grid-choice} with \(\vartheta_{\rm tar}=0.075\). We increase the state dimension \(n\) in each of the three \gls{lqr} families. Within each family, we keep the physical parameters fixed and define the cost weights in the same way at every dimension. Figure~\ref{fig:state_dimension_dependence} shows the results.

For the mass--spring chain, increasing \(n\) from \(16\) to \(100\) changes \(m\) from \(1052\) to \(1080\) and \(\kappa(L_{\rm ex})\) from \(1.8843\times10^4\) to \(2.0841\times10^4\). For \(n\geq64\), the condition number varies by only about \(0.34\%\). This weak dependence on \(n\) is consistent with Theorem~\ref{thm:Mc-alphac-Mp-bound}, which bounds the stability parameters independently of the chain length. The coefficient norms and matrix sparsity also remain uniformly bounded, with \(d=2\) endpoint controls throughout.

For the vehicle platoon, reordering the rows and columns of \(L_{\rm ex}\) gives a direct sum of identical shooting matrices, one for each independently controlled follower. The Hamiltonian norm and selected time grid are unchanged as \(n_f\) increases, so adding followers only repeats the singular values. The condition number therefore remains \(3.9067\times10^2\) for the tested dimensions \(18\leq n\leq99\). The cost of estimating the full control vector still depends on \(n_f\), since \(d=n_f\). The initial errors are sampled separately for each follower, so \(\|x^0\|\) can also grow with \(n_f\). This norm enters the output-estimation bounds but does not affect \(\kappa(L_{\rm ex})\).

For reaction--diffusion, the discrete Laplacian satisfies \(\|D_2\|=\Theta(h_s^{-2})\). As the spatial mesh is refined, \(\|H\|\) increases and the grid rule selects smaller time steps: \(m\) rises from \(466\) at \(n=16\) to \(1314\) at \(n=100\). The condition number initially decreases, reaching its smallest observed value of \(756.68\) at \(n=25\), before rising to \(1714.53\) at \(n=100\). Both the spatial mesh and the time step change in this experiment, so the observed increase in the condition number does not by itself show that the stability constant \(M\) grows with \(n\).

\begin{figure}[!h]
\centering
\includegraphics[width=0.70\linewidth]{%
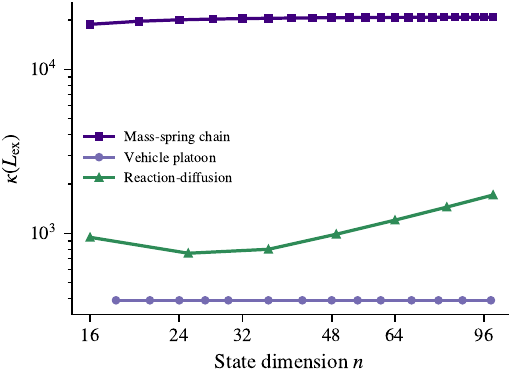}
\caption{State-dimension dependence of the exact shooting-system
condition number for the three benchmark families. All use
\(T=10\), with the time grid chosen separately for each
instance to satisfy \(h\|H\|\leq0.075\). Increasing \(n\)
extends the coupled chain, adds independent followers, or
refines the reaction--diffusion mesh, respectively.}
\label{fig:state_dimension_dependence}
\end{figure}

\paragraph{Actuation structure.}
We finally compare five actuation configurations for a chain
with \(n_m=16\). All five matrices satisfy \(\|\mathcal F\|=1\).
We use the same time grid, with \(T=10\) and \(m=1073\), giving
\(h\|H(\mathcal F)\|\leq0.07498\) in every case.
The three configurations with two control channels are
\[
\mathcal F_{\rm end}=[e_1,e_{16}],\qquad
\mathcal F_{\rm q}=[e_4,e_{13}],\qquad
\mathcal F_{\rm aq}=[\tfrac13e_4,e_{13}].
\]
The other two configurations apply controls to the masses
indexed by \(\mathcal I=\{1,3,\ldots,15\}\):
\[
\mathcal F_{\rm dist}=[e_i]_{i\in\mathcal I},\qquad
\mathcal F_{\rm het}=[g_j e_{2j-1}]_{j=1}^8,
\]
where
\[
(g_1,\ldots,g_8)=
\left(\frac13,\frac56,\frac12,1,
\frac25,\frac{11}{15},\frac35,\frac9{10}\right).
\] 
In each case, \(B=\begin{bmatrix}\mathcal F\\0\end{bmatrix}\) and \(R=\mu I_d\). All remaining parameters are those of the representative instance.

The condition numbers in Table~\ref{tab:mass_spring_actuation} are of the same order of magnitude for all five actuation configurations. They range from \(1.3867\times10^4\) to \(2.0420\times10^4\). The bounds on the stability parameters in Theorem~\ref{thm:Mc-alphac-Mp-bound} depend on \(\mathcal F\) only through \(\|\mathcal F\|\). With the remaining parameters fixed, \(\|H\|\) also admits an upper bound that depends on \(\mathcal F\) only through \(\|\mathcal F\|^2\). Since all five configurations satisfy \(\|\mathcal F\|=1\) and use the same time grid, these estimates give a common a priori upper bound on \(\kappa(L_{\rm ex})\). The measured differences arise from the dependence of the Hamiltonian on the full  matrix \(\mathcal F\mathcal F^\top\), whose structure is not determined by \(\|\mathcal F\|\).


\begin{table}[t]
\centering
\small
\setlength{\tabcolsep}{5pt}
\renewcommand{\arraystretch}{1.12}
\begin{tabular}{@{}lrcc@{}}
\toprule
Configuration & \(d\) & \(\sigma_{\min}(L_{\rm ex})\)
& \(\kappa(L_{\rm ex})\)\\
\midrule
Endpoints & 2 & \(2.5306\times10^{-4}\) & \(2.0420\times10^4\)\\
Quarter points & 2 & \(2.5628\times10^{-4}\) & \(2.0163\times10^4\)\\
Asymmetric quarter points & 2
& \(2.5594\times10^{-4}\) & \(2.0190\times10^4\)\\
Every second mass, uniform & 8
& \(3.7265\times10^{-4}\) & \(1.3867\times10^4\)\\
Every second mass, heterogeneous & 8
& \(3.6942\times10^{-4}\) & \(1.3988\times10^4\)\\
\bottomrule
\end{tabular}
\caption{Exact shooting-system conditioning for five actuation
configurations with \(\|\mathcal F\|=1\), at \(n_m=16\),
\(T=10\), and \(m=1073\). The control dimension \(d\) is
listed because equal spectral norms do not imply equal
numbers of control channels.}
\label{tab:mass_spring_actuation}
\end{table}


\section{Conclusion}
\label{sec:conclusion}
In this work, we develop quantum algorithms that return classical estimates of the optimal cost or the optimal control for continuous-time finite-horizon \gls{lqr}. We show that these algorithms have the potential to reduce the computational cost of optimal control for large systems and long time horizons. Our approach combines multiple shooting with a Gauss--Legendre Runge--Kutta method to formulate the state--costate boundary-value problem as a linear system. Although the underlying Hamiltonian state–costate dynamics contain both growing and decaying modes, we use a Riccati–Lyapunov decomposition to separate forward-stable and backward-stable propagation. Under stabilizability and detectability assumptions, this allows us to establish a condition-number bound for the multiple-shooting linear system that is independent of the horizon \(T\). The bound involves a problem-dependent stability coefficient \(M\), whose scaling with the state dimension \(n\) depends on the \gls{lqr} family. Under the stated input-access assumptions and at fixed target accuracy, the query and additional gate complexities are polylogarithmic in \(n\) and \(T\), provided that \(M\) and the remaining parameters entering the complexity bounds are bounded or grow at most polylogarithmically. This offers the prospect of quantum speedups over conventional classical solvers for suitable problem families. While these conditions are not guaranteed for general \gls{lqr} problems, we provide an explicit expression for \(M\) and establish the required uniform bounds for a concrete family of damped mass–spring chains. Numerical experiments further support our theoretical analysis. Since \gls{lqr} can arise as a subproblem in more general optimal control problems, our results may also contribute to the development of quantum algorithms for these broader problems.

\section{Acknowledgements}
NL acknowledges funding from the Science and Technology Commission of Shanghai Municipality (STCSM) grant no. 24LZ1401200 (21JC1402900) and the Shanghai Pilot Program for Basic Research. NL is also supported by NSFC grants No.12471411, No. W2632089 and No. 12341104, the Shanghai Jiao Tong University 2030 Initiative, the Shanghai Science and Technology Innovation Action Plan (24LZ1401200) and the Fundamental Research Funds for the Central Universities.
CAY and JGC acknowledge funding from IDAE-Nuevos modelos de negocio para la transición energética project number PR-NMN-01-2023-000373.
JGC acknowledges support from the Basque Government BasQ initiative under the
Q-STREAM project. They also acknowledge support from OpenSuperQ+100 (Grant No.~101113946) of the EU
Flagship on Quantum Technologies, from Project Grant No.~PID2024-156808NB-I00, from the Spanish Ministry for Digital Transformation
and of Civil Service of the Spanish Government through the QUANTUM ENIA project call Quantum Spain, and
by the EU through the Recovery, Transformation and Resilience Plan--Next Generation EU within the
framework of the Digital Spain 2026 Agenda, and from the Elkartek project KUBIKIT - kuantikaren
berrikuntzarako ibilbide teknologikoak (ELKARTEK25/79).

\section{Author contributions}
JGC and NL proposed the original idea of applying quantum computing to the \gls{lqr} problem and supervised the project. XH led the project, developed the theoretical framework and worked out the analysis results. CAY conducted the numerical experiments. NG assisted XH and CAY with the theoretical analysis and numerical experiments. All authors discussed the results and contributed to writing and revising the manuscript.

In this work, quantum algorithms, their complexity analysis, and the numerical experiments were designed by humans. Large language models (LLMs) were used to inspire some steps in the proofs, such as the definition and usage of the function $v_j$ in Lemma~\ref{lem:uniform-bou-pr}, the construction of the function $g_a$ in Lemma~\ref{lem:h_a-polynomial-approx}, the application of the diagonal Pad\'e remainder formula in Lemma~\ref{lem:rk-global-error-v2}, and the idea of studying the monotonicity of matrix functions on the set \(\mathcal{D}_W\) in Lemma~\ref{lem:chain-terminal-correction}. LLMs were also used to assist in the development of the code for the numerical experiments and to polish the language of the manuscript.

\bibliographystyle{unsrt}
\bibliography{references}

\appendix

\section{Notation and quantum algorithm primitives}
\label{app:quantum-algorithm-primitives}

We collect the notation and quantum primitives used in the algorithm
and its complexity analysis.

\subsection{Notation}
\label{app:notation}

For a positive integer \(m\), write
\([m]:=\{0,\ldots,m-1\}\) and \(l_m:=\lceil\log_2m\rceil\).
Unless otherwise stated, vector and matrix indices start from zero.
For \(j\in[m]\), let \(\ket{j}^{(m)}\in\mathbb C^m\) be the
\(j\)th standard basis vector and let
\(\bra{j}^{(m)}=(\ket{j}^{(m)})^\dagger\).
When \(m=2^r\), we also write \(\ket{j}_r:=\ket{j}^{(2^r)}\).
Thus a superscript specifies the dimension, whereas a subscript
specifies the number of qubits. We omit these indices when they are
clear from context, and write \(\ket{0}_a=\ket{0}^{\otimes a}\).

For a nonzero vector \(v\in\mathbb C^m\), \(\ket{v}\) denotes
the normalized state obtained by zero-padding \(v\) to dimension
\(2^{l_m}\) and dividing by \(\|v\|\).
We use \(\|v\|_1=\sum_{j\in[m]}|v_j|\).
For a matrix \(A\), \(A^\dagger\) and \(A^+\) denote its
conjugate transpose and Moore--Penrose inverse, respectively.
We write \(\mathbb I_m\) for the identity matrix and
\(\sigma(A)\) for the spectrum of a square matrix.
For a real symmetric matrix \(S\succeq0\), \(S^{1/2}\) denotes
its positive semidefinite square root; \(S^{-1/2}\) is defined
when \(S\succ0\).

We use standard asymptotic notation. The notation
\(\widetilde{\mathcal O}\) suppresses logarithmic factors in the
relevant parameters, while \(\operatorname{poly}\) and
\(\operatorname{polylog}\) denote polynomial and polylogarithmic
factors, respectively.

\subsection{Block-encoding}
\label{sec:block-encoding}

We use the following definition, including zero padding for
rectangular matrices \cite{Gilyen2019}.

\begin{definition}[Block-encoding]
\label{def:block-encoding}
Let \(A\in\mathbb C^{n\times m}\),
\(r=\lceil\log_2\max\{n,m\}\rceil\), and \(N=2^r\).
Define
\begin{equation}\label{eq:def-A-pad}
A_{\rm pad}=
\begin{bmatrix}A&0\\0&0\end{bmatrix}
\in\mathbb C^{N\times N}.
\end{equation}
A unitary \(U_A\) acting on \(a+r\) qubits is an
\((\alpha,a,\varepsilon)\)-block-encoding of \(A\) if
\[
\left\|A_{\rm pad}
-\alpha(\bra{0}_a\otimes\mathbb I_N)
U_A(\ket{0}_a\otimes\mathbb I_N)\right\|
\leq\varepsilon,
\]
where \(\alpha>0\), \(a\geq0\), and \(\varepsilon\geq0\).
When \(\varepsilon=0\), we call \(U_A\) an exact
\((\alpha,a)\)-block-encoding.
We also allow a larger power-of-two padding dimension when a
common system register is needed.
\end{definition}

Here \(a\) counts the block-encoding ancillas, not the system
qubits. If system qubits are subsequently initialized and projected
onto \(\ket{0}\) to obtain a smaller encoded matrix, they are
included in the new ancilla count.
Queries below include controlled calls and calls to inverse
oracles. Additional gate counts exclude the implementation of
the queried oracles.

\subsection{Generic block-encoding}
\label{app:generic-be}

The sparse-access construction gives a finite-precision
implementation of the matrix access used in the main text.

\begin{lemma}[Block-encoding of sparse-access matrices
\cite{Gilyen2019,Morales2026QLSSurvey}]
\label{lem:be-sparse}
Let \(A\in\mathbb C^{n\times m}\) be nonzero and \(s\)-sparse,
and set \(r=\lceil\log_2\max\{n,m\}\rceil\).
Suppose that sparse-access oracles provide the positions of
the nonzero entries in each row and column,
\[
O_r:\ket{i}\ket{k}\mapsto\ket{i}\ket{\rho_{ik}},
\qquad
O_c:\ket{l}\ket{j}\mapsto\ket{\gamma_{lj}}\ket{j},
\]
and a value oracle provides
\[
O_A:\ket{i}\ket{j}\ket{0}_b
\mapsto\ket{i}\ket{j}\ket{\widetilde A_{ij}},
\]
where \(\widetilde A_{ij}\) is a \(b\)-bit approximation of
\(A_{ij}/\|A\|_{\max}\).
The position registers have \(r+1\) qubits. Missing entries are
assigned distinct dummy indices with zero values, and the maps
are extended to unitaries.
With sufficient value precision, one can construct an
\((s\|A\|_{\max},r+3,\varepsilon)\)-block-encoding of \(A\)
using one call to each position oracle, two calls to the value
oracle or its inverse, and
\[
\mathcal O\!\left(
r+\log^{2.5}\!\left(2+\frac{s^2\|A\|_{\max}}{\varepsilon}\right)
\right)
\]
additional one-qubit and CNOT gates.
\end{lemma}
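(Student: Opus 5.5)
The plan is to follow the standard sparse-access construction of \cite{Gilyen2019}. First, reduce to the square case: pad \(A\) to \(A_{\rm pad}\in\mathbb C^{N\times N}\), where \(N=2^r\). The row and column oracles, with dummy indices for missing entries, extend to unitaries on registers of \(r+1\) qubits; the extra qubit flags dummy indices. Introduce two state-preparation unitaries on \(\ket{0}\ket{0}_{r+1}\ket{j}\):
\[
V_c:\ket{0}\ket{0}\ket{j}\mapsto \frac{1}{\sqrt s}\sum_{l<s}\ket{0}\ket{\gamma_{lj}}\ket{j},
\qquad
V_r:\ket{0}\ket{0}\ket{i}\mapsto \frac{1}{\sqrt s}\sum_{k<s}\ket{0}\ket{i}\ket{\rho_{ik}}.
\]
Each is built from Hadamards on \(\lceil\log_2 s\rceil\) qubits followed by one call to \(O_c\) (respectively \(O_r\)), with a register swap. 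Since \(\lceil\log_2 s\rceil\le r+1\), this costs \(\mathcal O(r)\) additional gates.

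Between the two preparations, insert the amplitude-rotation gadget acting on one fresh ancilla qubit:
\[
\ket{i}\ket{j}\ket{0}\mapsto\ket{i}\ket{j}\bigl(\widetilde A_{ij}\ket{0}+\sqrt{1-|\widetilde A_{ij}|^2}\ket{1}\bigr).
\]
It is implemented by calling \(O_A\), applying a controlled rotation conditioned on the \(b\)-bit value register, and uncomputing with \(O_A^{-1}\). That is the two value-oracle calls. The full circuit is \(U=V_r^\dagger\,\mathrm{ROT}\,V_c\). A direct computation gives
\[
\bigl(\bra{0}\bra{0}\bra{i}\bigr)U\bigl(\ket{0}\ket{0}\ket{j}\bigr)=\frac{1}{s}\widetilde A_{ij}
\]
on genuine index pairs, and zero on dummy pairs. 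Here we use that the row- and column-sparsity patterns agree: \((i,j)\) occurs once in each superposition exactly when \(A_{ij}\neq0\). So \(U\) block-encodes \(\widetilde A/s\), and rescaling gives normalization \(s\|A\|_{\max}\). The ancilla count is \(r+1\) (position register) plus one (rotation qubit) plus one for the padding/flag bookkeeping, for \(r+3\) in total. The value register of \(O_A\) is returned to \(\ket{0}_b\), so it counts as workspace rather than as block-encoding ancillas.

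For the error, the block-encoded matrix is \(s\|A\|_{\max}\) times the matrix with entries \(\widetilde A_{ij}/s\). The difference from \(A_{\rm pad}\) is an \(s\)-sparse matrix whose entries are bounded by \(\|A\|_{\max}(2^{-b}+\epsilon_{\rm rot})\), where \(\epsilon_{\rm rot}\) is the synthesis error of the rotation. An \(s\)-sparse matrix has spectral norm at most \(s\) times its max entry, so the total error is \(\le s\|A\|_{\max}(2^{-b}+\epsilon_{\rm rot})\). It therefore suffices to take per-entry precision \(\eta=\varepsilon/(s\|A\|_{\max})\) for both the value bits and the rotation. In the stated form it is safer to take \(\eta\sim\varepsilon/(s^2\|A\|_{\max})\), which explains the \(s^2\) in the logarithm.

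The main obstacle is the gate count of the controlled rotation. It must map the \(b\)-bit value \(\widetilde A_{ij}\) to the angle \(\arccos\widetilde A_{ij}\) (or equivalent) to precision \(\eta\). I would try a reversible fixed-point arithmetic evaluation of \(\arccos\), using a polynomial or Newton-type scheme with \(b=\mathcal O(\log(1/\eta))\). This costs polylogarithmic gates, and by careful accounting, such as the inequality-testing/comparator trick of \cite{Morales2026QLSSurvey}, it costs \(\mathcal O(\log^{2.5}(1/\eta))\) one-qubit and CNOT gates. Controlled single-qubit rotations are applied bitwise, and the arithmetic is uncomputed afterward. Adding the \(\mathcal O(r)\) gates for the Hadamards and swaps gives the claimed bound \(\mathcal O(r+\log^{2.5}(2+s^2\|A\|_{\max}/\varepsilon))\).
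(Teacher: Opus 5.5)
Your proposal is the standard sparse-access construction of \cite{Gilyen2019}. The paper offers no proof of its own beyond this citation, so this is the intended argument: two uniform-superposition preparations through the position oracles sandwich an \(O_A\)-controlled rotation and its uncomputation. Your ancilla count \(1+(r+1)+1=r+3\) matches that source. So do the matrix-element computation, which relies on the row and column sparsity patterns agreeing, and the error bound \(\|E\|\leq s\|E\|_{\max}\) for the \(s\)-sparse difference matrix \(E\).

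One detail needs correcting. Hadamard gates on \(\lceil\log_2 s\rceil\) qubits give amplitudes \(2^{-\lceil\log_2 s\rceil/2}\). Your circuit therefore encodes \(\widetilde A/2^{\lceil\log_2 s\rceil}\), with normalization \(2^{\lceil\log_2 s\rceil}\|A\|_{\max}\). This exceeds the stated \(s\|A\|_{\max}\) whenever \(s\) is not a power of two. Replace the Hadamards by an exact preparation of \(s^{-1/2}\sum_{k<s}\ket{k}\), as in the cited construction. This still costs \(\mathcal O(\log s)\subseteq\mathcal O(r)\) one-qubit and CNOT gates when arbitrary one-qubit rotations are allowed.

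Your account of the \(\log^{2.5}\) rotation cost is only a sketch, but the paper likewise imports that cost from the cited works. Also, per-entry precision \(\varepsilon/(s\|A\|_{\max})\) already suffices and fits under the stated bound, so the \(s^2\) in the logarithm needs no separate justification.
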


For the small dense auxiliary matrices, we use the following
construction.

\begin{lemma}[Block-encoding of dense matrices
\cite{CampsVanBeeumen2022FABLE}]
\label{lem:be-dense}
Let \(A\in\mathbb C^{n\times m}\) be nonzero, and set
\(r=\lceil\log_2\max\{n,m\}\rceil\) and \(N=2^r\).
There is an exact \((N\|A\|_{\max},r+1)\)-block-encoding of
\(A\) using \(\mathcal O(N^2)\) one-qubit and CNOT gates.
\end{lemma}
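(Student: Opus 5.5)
We would prove the lemma by the FABLE-type construction: the encoding is two layers of Hadamard gates around a single uniformly controlled (multiplexed) rotation that writes the rescaled entries of \(A\) into the amplitude of one ancilla qubit. First replace \(A\) by \(A_{\rm pad}\in\mathbb C^{N\times N}\) as in \eqref{eq:def-A-pad}. Since \(\|A_{\rm pad}\|_{\max}=\|A\|_{\max}\), it suffices to treat square \(N\times N\) matrices. Set \(\hat a_{ij}:=A_{ij}/\|A\|_{\max}\), so \(|\hat a_{ij}|\le1\), and write \(\hat a_{ij}=|\hat a_{ij}|e^{\mathrm i\phi_{ij}}\) with \(|\hat a_{ij}|=\cos\theta_{ij}\) and \(\theta_{ij}\in[0,\pi/2]\).

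\textbf{Oracle.} Use three registers: a rotation qubit, an \(r\)-qubit register \(\mathsf R\), and the \(r\)-qubit system register \(\mathsf S\). Define the oracle
\[
O_A:\ket{0}\ket{i}_{\mathsf R}\ket{j}_{\mathsf S}\mapsto\bigl(\hat a_{ij}\ket0+\sqrt{1-|\hat a_{ij}|^2}\,e^{\mathrm i\psi_{ij}}\ket1\bigr)\ket{i}\ket{j}.
\]
This is one multiplexed \(R_Y(2\theta_{ij})\) on the rotation qubit, controlled by the \(2r\) qubits of \(\mathsf R\mathsf S\), followed by one multiplexed phase gate, which can be taken diagonal with entry \(e^{\mathrm i\phi_{ij}}\) on \(\ket0\). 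For real \(A\) the phase layer is unnecessary: signs are absorbed by letting \(\theta_{ij}\in[0,\pi]\).

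\textbf{Circuit and matrix elements.} Take
\[
U_A=(I\otimes H^{\otimes r}\otimes I)\,(I\otimes\mathrm{SWAP}_{\mathsf R,\mathsf S})\,O_A\,(I\otimes H^{\otimes r}\otimes I),
\]
with both Hadamard layers acting on \(\mathsf R\). Start from \(\ket0\ket{0}_r\ket{j}\).
\begin{enumerate}
\item The first Hadamard layer gives \(N^{-1/2}\sum_k\ket0\ket k\ket j\).
\item \(O_A\) places amplitude \(\hat a_{kj}\) on the rotation qubit being \(\ket0\).
\item The swap exchanges the registers, giving \(\ket j_{\mathsf R}\ket k_{\mathsf S}\).
\item The final Hadamard layer followed by projection of \(\mathsf R\) onto \(\ket0_r\) contributes another factor \(N^{-1/2}\).
\end{enumerate}
Hence
\[
(\bra0\bra0_r\bra i)\,U_A\,(\ket0\ket0_r\ket j)=\hat a_{ij}/N=A_{ij}/(N\|A\|_{\max}),
\]
exactly. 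This is an exact \((N\|A\|_{\max},r+1)\)-block-encoding, with ancillas the rotation qubit and \(\mathsf R\).

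\textbf{Gate count.} The Hadamards and the swap cost \(\mathcal O(r)\) gates. The main cost, and the only step needing care, is implementing the multiplexed rotations on \(2r\) controls. We would use the standard Gray-code decomposition of a uniformly controlled \(R_Y\) (respectively \(R_Z\)). It consists of \(N^2\) single-qubit rotations interleaved with \(N^2\) CNOTs. The new angles are obtained from the \(\theta_{ij}\) by a Walsh–Hadamard-type linear transform, which is invertible. This is what makes the encoding exact rather than approximate: no angles are truncated. An arbitrary diagonal phase multiplexor admits the same decomposition. The total is therefore \(\mathcal O(N^2)\) one-qubit and CNOT gates, which proves the lemma.
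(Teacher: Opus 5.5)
Your proposal is correct. It is the FABLE circuit that the paper invokes only by citation without giving a proof: Hadamard layers on an $r$-qubit ancilla register around a multiplexed $R_Y$ (plus $R_Z$ for complex entries) oracle with $2r$ controls, followed by a register swap, taken without FABLE's angle thresholding. Your matrix-element computation $A_{ij}/(N\|A\|_{\max})$, the $r+1$ ancilla count, and the exact $\mathcal O(N^2)$ Gray-code gate count all match that construction.
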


This gate count uses arbitrary one-qubit rotations. In the
algorithm, the construction is applied to the small Runge--Kutta
coefficient matrix.

\subsection{Basic operations on block-encodings}
\label{app:be-basic-operation}

We use the following rules for sums, products, and Kronecker
products. For sums and products, the input encodings are taken
on a common system register with compatible zero-padded embeddings.
Their ancilla counts refer to these embeddings.

\begin{lemma}[Basic operations]
\label{lem:manipulation-of-matrices}
Suppose that \(U_A\) and \(U_B\) are
\((\alpha_A,a_A,\varepsilon_A)\)- and
\((\alpha_B,a_B,\varepsilon_B)\)-block-encodings of \(A\) and
\(B\), respectively, with
\(\|A\|\leq\alpha_A\) and \(\|B\|\leq\alpha_B\).
Then the following constructions are available.
\begin{enumerate}
\item If \(A\) and \(B\) have the same dimensions, their sum
has a block-encoding with parameters
\[
\bigl(\alpha_A+\alpha_B,\,
\max\{a_A,a_B\}+1,\,
\varepsilon_A+\varepsilon_B\bigr).
\]
It uses one controlled call to each input encoding and two
additional one-qubit rotations.

\item If \(A\in\mathbb C^{n\times m}\) and
\(B\in\mathbb C^{m\times k}\), their product, on the common
system register, has a block-encoding with parameters
\[
\bigl(\alpha_A\alpha_B,\,a_A+a_B,\,
\alpha_A\varepsilon_B+\alpha_B\varepsilon_A\bigr),
\]
using one call to each input encoding.

\item If \(A\in\mathbb C^{n\times m}\) and
\(B\in\mathbb C^{l\times k}\), set
\[
\begin{aligned}
r_A&=\lceil\log_2\max\{n,m\}\rceil,\qquad
r_B=\lceil\log_2\max\{l,k\}\rceil,\\
r_{A\otimes B}&=\lceil\log_2\max\{nl,mk\}\rceil,\qquad
\Delta=r_A+r_B-r_{A\otimes B}.
\end{aligned}
\]
For input encodings on these minimal system registers,
\(A\otimes B\) has a block-encoding with parameters
\[
\bigl(\alpha_A\alpha_B,\,a_A+a_B+\Delta,\,
\alpha_A\varepsilon_B+\alpha_B\varepsilon_A
+\varepsilon_A\varepsilon_B\bigr).
\]
It uses one call to each input encoding and
\[
\mathcal O\!\left(
\log^2(2n)\log^2(2l)+\log^2(2m)\log^2(2k)
\right)
\]
additional one-qubit and CNOT gates.
\end{enumerate}
\end{lemma}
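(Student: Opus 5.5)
The three parts are standard block-encoding constructions, and I would prove them in the order stated. Throughout, write \(\widetilde A:=\alpha_A\mathcal B(U_A)\) and \(\widetilde B:=\alpha_B\mathcal B(U_B)\), so that \(\|\widetilde A-A_{\rm pad}\|\le\varepsilon_A\) and \(\|\widetilde B-B_{\rm pad}\|\le\varepsilon_B\). Since each is \(\alpha\) times a block of a unitary, \(\|\widetilde A\|\le\alpha_A\) and \(\|\widetilde B\|\le\alpha_B\). All error bounds will come from the triangle inequality applied to telescoping identities in \(\widetilde A,\widetilde B\).

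\emph{Sum.} I will use a two-term linear combination of unitaries.
\begin{enumerate}
\item Pad both encodings to a common ancilla register of \(\max\{a_A,a_B\}\) qubits; the extra ancillas are simply idle.
\item Add one selector qubit. Prepare it with the single rotation
\[
V\ket0=\sqrt{\alpha_A/(\alpha_A+\alpha_B)}\,\ket0+\sqrt{\alpha_B/(\alpha_A+\alpha_B)}\,\ket1 .
\]
\item Apply \(U_A\) controlled on \(\ket0\) and \(U_B\) controlled on \(\ket1\), then apply \(V^\dagger\).
\end{enumerate}
Projecting all ancillas onto \(\ket0\) gives the block \((\widetilde A+\widetilde B)/(\alpha_A+\alpha_B)\). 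The error is at most \(\varepsilon_A+\varepsilon_B\). The circuit uses one controlled call to each encoding and the two rotations \(V,V^\dagger\).

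\emph{Product.} I will compose the two encodings on disjoint ancilla registers.
\begin{enumerate}
\item Apply \((I_{a_A}\otimes U_B)\) and then \((I_{a_B}\otimes U_A)\), each acting on its own ancillas and on the shared system register.
\item Project both ancilla registers onto \(\ket0\). Because each projection separately extracts its block, the result is \(\widetilde A\widetilde B/(\alpha_A\alpha_B)\).
\item For the error, write
\[
\widetilde A\widetilde B-AB=\widetilde A(\widetilde B-B)+(\widetilde A-A)B ,
\]
and use \(\|\widetilde A\|\le\alpha_A\) and \(\|B\|\le\alpha_B\). This gives \(\alpha_A\varepsilon_B+\alpha_B\varepsilon_A\).
\end{enumerate}

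\emph{Kronecker product.} The algebra is again easy. Running \(U_A\otimes U_B\) on separate registers gives the block \(\widetilde A\otimes\widetilde B/(\alpha_A\alpha_B)\). The identity
\[
\widetilde A\otimes\widetilde B-A\otimes B=(\widetilde A-A)\otimes(\widetilde B-B)+(\widetilde A-A)\otimes B+A\otimes(\widetilde B-B)
\]
then yields the stated error \(\alpha_A\varepsilon_B+\alpha_B\varepsilon_A+\varepsilon_A\varepsilon_B\).

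The main obstacle is that this block is \(A_{\rm pad}\otimes B_{\rm pad}\) on \(r_A+r_B\) system qubits, not \((A\otimes B)_{\rm pad}\) on \(r_{A\otimes B}\) qubits. On the row side, entry \((i,j)\) sits at index \(i2^{r_B}+j\) rather than \(il+j\); the column side is analogous with \(k\) in place of \(l\).
\begin{enumerate}
\item I would fix this with permutation unitaries \(\Pi_{\rm row}\) and \(\Pi_{\rm col}\) that implement the bijections \(i2^{r_B}+j\mapsto il+j\) (for \(i<n,\ j<l\)) and \(i2^{r_B}+j\mapsto ik+j\) (for \(i<m,\ j<k\)).
\item Each injective map on the valid indices is extended arbitrarily to a permutation of all \(2^{r_A+r_B}\) basis states.
\item The circuit is \(\Pi_{\rm row}(U_A\otimes U_B)\Pi_{\rm col}^\dagger\).
\item The image of the valid indices lies in \([0,\max\{nl,mk\})\). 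So the top \(\Delta\) system qubits are \(\ket0\) on the relevant subspace, and I reclassify them as ancillas. This accounts for the ancilla count \(a_A+a_B+\Delta\).
\item Each \(\Pi\) can be built from a reversible constant multiplier and adder using schoolbook arithmetic with ancilla uncomputation. This costs \(\mathcal O(\log^2(2n)\log^2(2l))\) gates on the row side and \(\mathcal O(\log^2(2m)\log^2(2k))\) on the column side, which is at least as generous as needed.
\end{enumerate}
I would spend most care verifying that zeroed entries outside the valid index range remain zero after the permutation. That is what makes the projected block exactly \((A\otimes B)_{\rm pad}\), so the error bound transfers unchanged.
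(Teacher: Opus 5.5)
Your proposal is correct and follows the same route as the paper. The paper takes the sum and product rules directly from Gily\'en et al.\ (a two-term LCU and composition on disjoint ancilla registers), and builds the Kronecker rule from \(U_A\otimes U_B\) conjugated by row and column index permutations, with the surplus \(\Delta\) system qubits projected out and counted as ancillas. Your telescoping error identities and your check that the permuted ideal block is exactly \((A\otimes B)_{\rm pad}\) supply details the paper leaves to its references. The one point to make explicit is that the ``arbitrary'' extension of the index map must be one your arithmetic circuit actually realizes with workspace restored on all inputs, for example a mixed-radix bijection that also handles \(j\geq l\); naive multiply-then-divmod uncomputation leaves garbage there. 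This is the same gate-level detail the paper delegates to its citations.
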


The sum and product rules follow from \cite{Gilyen2019}.
The Kronecker-product construction combines the tensor product
of the encodings with row and column permutations; see
\cite{DongLiXue2025ProductsBlockEncodings,CampsVanBeeumen2021CircuitSynthesis}.
Any system qubits projected out after a product must also be
included in its final ancilla count.
For later use, define
\begin{equation}\label{eq:def-ancilla-p,n,m}
\tau_{p,n}=l_p+l_n-\lceil\log_2(pn)\rceil,
\qquad
\tau_{m,n}=l_m+l_n-\lceil\log_2(mn)\rceil.
\end{equation}
Both quantities belong to \(\{0,1\}\).

\subsection{QSVT-based inverse block-encoding}
\label{app:qsvt-inversion-details}

The next lemma includes the effect of error in the input
block-encoding. Zero padding is allowed, so the proof uses a
cutoff approximation of the reciprocal function.

\begin{lemma}[Inverse block-encoding]
\label{lem:inverse-be-approx}
Let \(A\in\mathbb C^{n\times n}\) be nonsingular, and suppose
that \(U_A\) is an \((\alpha,a,\delta)\)-block-encoding of \(A\).
Let \(\kappa_0\geq\max\{2,\alpha\|A^{-1}\|\}\).
For a target accuracy \(\varepsilon>0\), suppose that
\[
\delta\leq\min\left\{
\frac{\alpha}{4\kappa_0},
\frac{3\alpha^2\varepsilon}{8(1+\sqrt5)\kappa_0^2}
\right\}.
\]
Then one can construct a
\((4\kappa_0/\alpha,a+1,\varepsilon)\)-block-encoding of
\(A^{-1}\) using
\[
\mathcal O\!\left(
\kappa_0\log\!\left(2+\frac{\kappa_0}{\alpha\varepsilon}\right)
\right)
\]
queries to \(U_A\) and \(U_A^\dagger\), and
\[
\mathcal O\!\left(
(a+1)\kappa_0\log\!\left(2+\frac{\kappa_0}{\alpha\varepsilon}\right)
\right)
\]
additional one-qubit and CNOT gates.
\end{lemma}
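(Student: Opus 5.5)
The plan is to reduce the statement to a polynomial approximation of the reciprocal on the relevant singular-value range, applied through \gls{qsvt} to the Hermitian dilation or directly to the singular values of \(A/\alpha\). Let \(\widetilde A:=\alpha(\bra0_a\otimes I)U_A(\ket0_a\otimes I)\). This is the matrix actually encoded, padded to dimension \(N\), and \(\|\widetilde A-A_{\rm pad}\|\le\delta\).

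\emph{Step one: singular values of \(\widetilde A\).} Among the first \(n\) directions the singular values of \(A\) are at least \(\|A^{-1}\|^{-1}\ge\alpha/\kappa_0\). By Weyl's inequality, and using \(\delta\le\alpha/(4\kappa_0)\), the corresponding singular values of \(\widetilde A/\alpha\) are at least \(3/(4\kappa_0)\). The padded directions have singular values at most \(\delta/\alpha\le1/(4\kappa_0)\). There is therefore a gap, and the range splits into \([0,1/(4\kappa_0)]\) and \([3/(4\kappa_0),1]\).

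\emph{Step two: the polynomial.} I would take an odd polynomial \(g\) of degree \(\mathcal O(\kappa_0\log(2+\kappa_0/(\alpha\varepsilon)))\) with three properties:
\begin{itemize}
\item \(|g(x)|\le1\) on \([-1,1]\);
\item \(|g(x)-\tfrac{1}{4\kappa_0x}|\le\eta\) on \([1/(2\kappa_0),1]\);
\item \(g\) is small near \(0\).
\end{itemize}
The usual construction multiplies \(1/x\), with a damped factor such as \((1-(1-x^2)^b)/x\), by a smoothed sign or rectangle cutoff (erf-based). This is the cutoff the remark before the lemma alludes to, and it is needed so that the padded block is sent to approximately zero rather than blown up.

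\emph{Step three: QSVT and normalization.} Apply \gls{qsvt} with \(g\) to \(U_A\). This needs \(\deg g\) calls to \(U_A\) and \(U_A^\dagger\) and one extra ancilla for the projector-controlled phases, which gives the \(a+1\) ancillas. Each phase rotation contributes \(\mathcal O(a+1)\) gates, which gives the stated gate count. The output is a block-encoding of \(g^{\rm SV}(\widetilde A/\alpha)\), which approximates \(\widetilde A^{-1}\alpha/(4\kappa_0)\) on the nonsingular part and \(0\) on the padding. Rescaling by \(4\kappa_0/\alpha\) gives the normalization in the statement.

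\emph{Step four: error budget.} The error splits into three terms:
\begin{itemize}
\item the polynomial error, \((4\kappa_0/\alpha)\eta\);
\item the padded block, bounded via smallness of \(g\) near \(0\);
\item the perturbation term \(\|\widetilde A_{11}^{-1}-A^{-1}\|\le\|A^{-1}\|\,\|\widetilde A_{11}^{-1}\|\,\delta\le(\kappa_0/\alpha)(4\kappa_0/(3\alpha))\delta\), together with the contribution of the off-diagonal blocks of \(\widetilde A\).
\end{itemize}

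\textbf{Main obstacle.} The hard step is the perturbation term: \(\widetilde A\) is not block diagonal, so \(g^{\rm SV}(\widetilde A)\) mixes the padded and nonsingular subspaces. I would handle this by comparing \(g^{\rm SV}(\widetilde A/\alpha)\) with \(g^{\rm SV}(A_{\rm pad}/\alpha)\) through an operator-Lipschitz (or resolvent/Davis--Kahan) bound on the gapped spectral projections. The constant \((1+\sqrt5)\) looks like the one produced by such a bound: with \(\delta\le 3\alpha^2\varepsilon/(8(1+\sqrt5)\kappa_0^2)\), this term is at most \(\varepsilon/2\). Choosing \(\eta=\alpha\varepsilon/(8\kappa_0)\) then keeps the polynomial part at most \(\varepsilon/2\), and the logarithmic degree dependence follows.
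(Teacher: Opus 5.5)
\textbf{Verdict: genuine gap in Step four.} Steps one to three agree with the paper:
\begin{itemize}
\item the Weyl gap between $[0,1/(4\kappa_0)]$ and $[3/(4\kappa_0),1]$;
\item an odd bounded polynomial $P$ with $|P-1/(4\kappa_0x)|\le\eta$ on $[1/(2\kappa_0),1]$ and $|P|\le\eta$ on $[0,1/(4\kappa_0)]$;
\item QSVT with one extra ancilla, and $\eta=\alpha\varepsilon/(8\kappa_0)$.
\end{itemize}
Step four is the only place where the hypothesis on $\delta$ enters, and there you list options rather than give an argument. As you note yourself, $\|\widetilde A_{11}^{-1}-A^{-1}\|$ is not the quantity QSVT produces.

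Your proposed repair compares $g^{\rm SV}(\widetilde A/\alpha)$ with $g^{\rm SV}(A_{\rm pad}/\alpha)$ through a robustness or operator-Lipschitz estimate. As sketched, this does not recover the stated condition. The polynomial has degree $d=\Theta(\kappa_0\log(2+\kappa_0/(\alpha\varepsilon)))$. The standard QSVT robustness bounds for a bounded polynomial of degree $d$ either scale like $d\sqrt{\delta/\alpha}$, or like $d\,\delta/\alpha$ with a prefactor that degenerates when $\|A\|/\alpha$ approaches $1$. At best, then, you would need $\delta$ smaller than $3\alpha^2\varepsilon/(8(1+\sqrt5)\kappa_0^2)$ by a logarithmic factor. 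The constant $1+\sqrt5$ is guessed, not derived.

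\textbf{The missing idea:} no perturbation of the polynomial transform is needed. Your Step three already names the right object, provided ``nonsingular part'' means the span of the top $n$ singular vectors of $\widetilde A$ rather than the first $n$ coordinates.
\begin{itemize}
\item Let $\widehat A$ be the rank-$n$ truncation of $\widetilde A$, and write $\widetilde A/\alpha=W\Sigma V^\dagger$. Then $\widehat A^+$ and the QSVT output $VP(\Sigma)W^\dagger$ are diagonal in the same singular bases.
\item The gap therefore gives $\|\widehat A^+-\tfrac{4\kappa_0}{\alpha}VP(\Sigma)W^\dagger\|\le4\kappa_0\eta/\alpha\le\varepsilon/2$ directly. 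There is no mixing between padded and nonsingular directions.
\item What remains is a comparison of two matrices of equal rank $n$. The discarded singular values are at most $\delta$, so $\|\widehat A-A_{\rm pad}\|\le2\delta$. Also $\|\widehat A^+\|\le4\kappa_0/(3\alpha)$ and $\|A_{\rm pad}^+\|\le\kappa_0/\alpha$.
\item Wedin's equal-rank pseudo-inverse bound then gives
\[
\|\widehat A^+-A_{\rm pad}^+\|\le\frac{1+\sqrt5}{2}\cdot\frac{4\kappa_0}{3\alpha}\cdot\frac{\kappa_0}{\alpha}\cdot2\delta\le\frac{\varepsilon}{2}.
\]
\end{itemize}
This is exactly where the hypothesis on $\delta$ and the constant $1+\sqrt5$ come from. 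Since $A_{\rm pad}^+=(A^{-1})_{\rm pad}$, the two $\varepsilon/2$ contributions complete the proof.
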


\begin{proof}
Let \(N\) be the padded dimension and put
\[
\widetilde A=
\alpha(\bra{0}_a\otimes\mathbb I_N)
U_A(\ket{0}_a\otimes\mathbb I_N).
\]
Then \(\|\widetilde A-A_{\rm pad}\|\leq\delta\) and
\(\|\widetilde A/\alpha\|\leq1\).
The nonzero singular values of \(A_{\rm pad}\) are those of \(A\).
Writing the singular values in decreasing order, Weyl's inequality gives
\begin{equation}\label{eq:A_pad'-singularvalue}
\begin{alignedat}{2}
\sigma_j(\widetilde A/\alpha)
&\geq \sigma_j(A_{\rm pad}/\alpha)
      -\frac{\|\widetilde A-A_{\rm pad}\|}{\alpha}
 \geq \frac{1}{\alpha\|A^{-1}\|}-\frac{\delta}{\alpha}
 \geq \frac{3}{4\kappa_0},
&\quad& 0\leq j<n,
\\[5pt]
\sigma_j(\widetilde A/\alpha)
&\leq \frac{\|\widetilde A-A_{\rm pad}\|}{\alpha}
 \leq \frac{\delta}{\alpha}
 \leq \frac{1}{4\kappa_0},
&\quad& n\leq j<N.
\end{alignedat}
\end{equation}
The second range is empty when \(N=n\).

Let \(\widehat A\) be the rank-\(n\) truncation of
\(\widetilde A\). For clarity, write
$s_j:=\sigma_j(\widetilde A/\alpha)$, with
$s_0\geq s_1\geq\cdots\geq s_{N-1}\geq0$.
Since $\widehat A$ is obtained by retaining the $n$ largest
singular values of $\widetilde A$, the discarded part satisfies,
when $n<N$,
\[
\|\widetilde A-\widehat A\|
=\alpha\max_{n\leq j<N}s_j
=\alpha s_n
\leq\delta,
\]
where the last inequality follows from
\eqref{eq:A_pad'-singularvalue}.
If $n=N$, then $\widehat A=\widetilde A$, so the same bound
holds with a zero left-hand side.
Consequently, the triangle inequality gives
\[
\begin{aligned}
\|\widehat A-A_{\rm pad}\|
\leq
\|\widehat A-\widetilde A\|
+\|\widetilde A-A_{\rm pad}\|\leq\delta+\delta
=2\delta.
\end{aligned}
\]

Moreover, the nonzero singular values of $\widehat A$ are
$\alpha s_0,\ldots,\alpha s_{n-1}$, while those of
$A_{\rm pad}$ coincide with the singular values of $A$.
Recall that the spectral norm of the Moore--Penrose inverse
is the reciprocal of the smallest nonzero singular value.
Thus, denoting the latter by $\sigma_{\min}^{+}$, we obtain
\[
\begin{aligned}
\|\widehat A^+\|
&=\frac{1}{\sigma_{\min}^{+}(\widehat A)}
 =\frac{1}{\alpha s_{n-1}}
 \leq\frac{4\kappa_0}{3\alpha},\\[4pt]
\|A_{\rm pad}^+\|
&=\frac{1}{\sigma_{\min}^{+}(A_{\rm pad})}
 =\frac{1}{\sigma_{\min}(A)}
 =\|A^{-1}\|
 \leq\frac{\kappa_0}{\alpha},
\end{aligned}
\]
where we used \eqref{eq:A_pad'-singularvalue} and
$\kappa_0\geq\alpha\|A^{-1}\|$.
Since \(\widehat A\) and \(A_{\rm pad}\) have the same rank,
Wedin's perturbation bound \cite[Eq.~(1.3)]{XU2020124920} gives
\[
\begin{aligned}
\|\widehat A^+-A_{\rm pad}^+\|
&\leq\frac{1+\sqrt5}{2}
\|\widehat A^+\|\|A_{\rm pad}^+\|
\|\widehat A-A_{\rm pad}\|\\
&\leq\frac{4(1+\sqrt5)\kappa_0^2\delta}{3\alpha^2}
\leq\frac{\varepsilon}{2}.
\end{aligned}
\]

For \(0<\eta\leq1/4\), the bounded polynomial approximation
results in \cite{Gilyen2019} give an odd real polynomial \(P\)
of degree \(\mathcal O(\kappa_0\log\eta^{-1})\) such that
\[
\begin{aligned}
|P(x)|&\leq1, &&x\in[-1,1],\\
\left|P(x)-\frac{1}{4\kappa_0x}\right|
&\leq\eta, &&x\in[1/(2\kappa_0),1],\\
|P(x)|&\leq\eta, &&x\in[0,1/(4\kappa_0)].
\end{aligned}
\]
Write \(\widetilde A/\alpha=W\Sigma V^\dagger\).
Applying real \gls{qsvt} to \(U_A^\dagger\) gives an encoded
block \(B_{\rm out}=VP(\Sigma)W^\dagger\), with one additional
ancilla. To make the approximation error explicit, write the
singular-value decomposition as
\[
\widetilde A/\alpha
=W\operatorname{diag}(s_0,\ldots,s_{N-1})V^\dagger.
\]
By the definition of $\widehat A$, its Moore--Penrose inverse is
\[
\widehat A^+
=\frac{1}{\alpha}\,
V\operatorname{diag}
\left(s_0^{-1},\ldots,s_{n-1}^{-1},0,\ldots,0\right)
W^\dagger,
\]
whereas the encoded output block is
\[
B_{\rm out}
=V\operatorname{diag}
\left(P(s_0),\ldots,P(s_{N-1})\right)W^\dagger.
\]

Since $\widetilde A/\alpha$ is a block of a unitary,
all its singular values are at most $1$.
Together with \eqref{eq:A_pad'-singularvalue}, this gives
\[
\begin{alignedat}{2}
s_j
&\in\left[\frac{3}{4\kappa_0},1\right]
 \subseteq\left[\frac{1}{2\kappa_0},1\right],
&\qquad& 0\leq j<n,\\[4pt]
s_j
&\in\left[0,\frac{1}{4\kappa_0}\right],
&\qquad& n\leq j<N.
\end{alignedat}
\]
The approximation properties of $P$ therefore imply
\[
\begin{alignedat}{2}
\left|P(s_j)-\frac{1}{4\kappa_0s_j}\right|
&\leq\eta,
&\qquad& 0\leq j<n,\\[4pt]
|P(s_j)|
&\leq\eta,
&\qquad& n\leq j<N.
\end{alignedat}
\]
By unitary invariance of the spectral norm, we consequently have
\[
\begin{aligned}
\left\|
\frac{\alpha}{4\kappa_0}\widehat A^+-B_{\rm out}
\right\|
&=
\max\left\{
\max_{0\leq j<n}
\left|\frac{1}{4\kappa_0s_j}-P(s_j)\right|,
\max_{n\leq j<N}|P(s_j)|
\right\}\leq\eta,
\end{aligned}
\]
where the second maximum is omitted if $n=N$.
Multiplying by $4\kappa_0/\alpha$ yields
\[
\begin{aligned}
\left\|
\widehat A^+-\frac{4\kappa_0}{\alpha}B_{\rm out}
\right\|
&=
\frac{4\kappa_0}{\alpha}
\left\|
\frac{\alpha}{4\kappa_0}\widehat A^+-B_{\rm out}
\right\|\leq\frac{4\kappa_0\eta}{\alpha}.
\end{aligned}
\]
Choose \(\eta=\min\{1/4,\alpha\varepsilon/(8\kappa_0)\}\).
It follows that
\[
\left\|A_{\rm pad}^+
-\frac{4\kappa_0}{\alpha}B_{\rm out}\right\|
\leq\varepsilon.
\]
Since \(A_{\rm pad}^+=(A^{-1})_{\rm pad}\), this is the
required inverse block-encoding. The complexity follows from
the degree of \(P\) and the \gls{qsvt} construction.
\end{proof}

\begin{remark}\label{rem:QSVT-inversion}
One may use any available bound
\(\kappa_0=\Theta(\max\{2,\alpha\|A^{-1}\|\})\).
If \(\alpha\geq\|A\|\), then \(\alpha\|A^{-1}\|\geq1\),
and the query bound becomes
\[
\mathcal O\!\left(
\alpha\|A^{-1}\|
\log\!\left(2+\frac{\|A^{-1}\|}{\varepsilon}\right)
\right).
\]
The additional gate count has a further factor \(a+1\).
The inequality \(\alpha\geq\|A\|\) holds automatically for
an exact block-encoding.
\end{remark}

\subsection{Matrix-element estimation}
\label{app:expectation-estimation}

The algorithm reads scalar quantities from block-encoded
operators using a Hadamard test followed by amplitude estimation.

\begin{lemma}[Hadamard test and amplitude estimation]
\label{lem:Hadamard-test}
Let \(O\in\mathbb C^{n\times n}\), and let \(U_O\) be an
\((\alpha,a,\varepsilon')\)-block-encoding of \(O\).
Suppose that \(U_\psi\) and \(U_\varphi\) prepare normalized
states \(\ket{\psi}\) and \(\ket{\varphi}\) in \(\mathbb C^n\),
with zero padding to \(r=\lceil\log_2n\rceil\) qubits.
Controlled versions and inverses of these unitaries are available.
Given \(\varepsilon>0\) and \(0<\delta<1\), if
\(\varepsilon'\leq\varepsilon/2\), one can output an estimate
\(v\) satisfying
\[
\Pr\!\left(
|v-\langle\psi|O|\varphi\rangle|\leq\varepsilon
\right)\geq1-\delta
\]
using
\[
Q_{\rm est}=\mathcal O\!\left(
\left(1+\frac{\alpha}{\varepsilon}\right)
\log\frac{2}{\delta}\right)
\]
queries to these oracles and their inverses, and
\(\mathcal O((a+r+1)Q_{\rm est})\) additional one-qubit and
CNOT gates.
\end{lemma}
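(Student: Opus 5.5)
We estimate the real and imaginary parts of \(\langle\psi|O|\varphi\rangle\) separately with a Hadamard test, then reduce the query count from \(\mathcal O(\alpha^2/\varepsilon^2)\) to \(\mathcal O(\alpha/\varepsilon)\) by amplitude estimation. Write \(\widetilde O=\alpha(\bra0_a\otimes I)U_O(\ket0_a\otimes I)\), so that \(\|\widetilde O-O_{\rm pad}\|\leq\varepsilon'\leq\varepsilon/2\). By Cauchy--Schwarz with normalized states,
\[
|\langle\psi|\widetilde O|\varphi\rangle-\langle\psi|O|\varphi\rangle|\leq\varepsilon/2 .
\]
It therefore suffices to estimate \(\mu:=\langle\psi|\widetilde O|\varphi\rangle/\alpha=\bra0_a\bra\psi U_O\ket0_a\ket\varphi\) to additive error \(\varepsilon/(2\alpha)\) in modulus. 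We do this by estimating \(\operatorname{Re}\mu\) and \(\operatorname{Im}\mu\) each to error \(\varepsilon/(2\sqrt2\alpha)\).

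\emph{Hadamard-test circuit.} Take one control qubit in \(\ket+\), optionally followed by a phase gate \(S^\dagger\) for the imaginary part. Controlled on it, apply \(U_\psi^\dagger U_OU_\varphi\) on the ancilla-plus-system register initialized to \(\ket0_a\ket0_r\). Then apply a Hadamard on the control.

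\emph{Key identity.} Let \(V=(\bra0_a\bra0_r\otimes I)\cdots\) denote the projector onto \(\ket0_a\ket0_r\) acting on the target register, with the control qubit included. With this projector, the probability of observing control \(0\) together with all target qubits in \(\ket0\) is
\[
p_0=\bigl|1+\langle 0|U_\psi^\dagger U_OU_\varphi|0\rangle\bigr|^2/4=|1+\mu|^2/4 .
\]
Similarly, the outcome with control \(1\) and target zero has probability \(|1-\mu|^2/4\). Taking the difference of the two gives \(p_0-p_1=\operatorname{Re}\mu\). With the phase gate inserted, the same difference gives \(\operatorname{Im}\mu\).

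A simpler route is the standard one-control Hadamard test that measures only the control, which gives \(\Pr(0)=(1+\operatorname{Re}\langle\Phi|\cdot|\Phi\rangle)/2\). To use it, define the reflection-free unitary \(W=U_\psi^\dagger U_OU_\varphi\) on \(\ket0_{a+r}\). The quantity we want is then \(\bra0W\ket0=\mu\), and the control-only test gives \(\Pr(0)=(1+\operatorname{Re}\mu)/2\) directly.

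\emph{Amplitude estimation.} Apply amplitude estimation to the good-outcome projector \(\ket0\bra0\) on the control qubit. This estimates \(\Pr(0)\) to additive error \(\eta\) with success probability at least \(8/\pi^2\) using \(\mathcal O(1/\eta)\) applications of the circuit and its inverse. Each application makes a constant number of calls to \(U_O\), \(U_\psi\), \(U_\varphi\) or their controlled versions and inverses. Choosing \(\eta=\Theta(\varepsilon/\alpha)\) yields each of \(\operatorname{Re}\mu\) and \(\operatorname{Im}\mu\) to the required accuracy.

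\emph{Success amplification.} Take the median of \(\mathcal O(\log(2/\delta))\) independent repetitions. A Chernoff bound gives failure probability at most \(\delta/2\) for each part, and a union bound over the two parts gives at most \(\delta\) overall. Multiplying by \(\alpha\) gives \(v\) with
\[
|v-\alpha\mu|\leq\varepsilon/2 ,
\]
and combining with the first step gives \(|v-\langle\psi|O|\varphi\rangle|\leq\varepsilon\). The additive \(1\) in \(Q_{\rm est}=\mathcal O((1+\alpha/\varepsilon)\log(2/\delta))\) covers the regime \(\varepsilon>\alpha\), where a constant number of runs suffices.

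\emph{Gate count.} Each Grover iterate of amplitude estimation consists of the controlled oracles plus reflections about \(\ket0\) on \(a+r+1\) qubits. Such a reflection costs \(\mathcal O(a+r+1)\) one-qubit and CNOT gates using a linear-size multi-controlled Toffoli with ancillas. The controlling of the oracle calls is already counted as queries. The final phase-estimation QFT costs \(\mathcal O(\log^2(\alpha/\varepsilon))\), which is absorbed since \(\log^2(\alpha/\varepsilon)\lesssim \alpha/\varepsilon\) up to constants.

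The main obstacle is this bookkeeping: making sure the reflections and the control structure give \(\mathcal O((a+r+1)Q_{\rm est})\) gates, and that the QFT overhead of amplitude estimation is dominated. If it is not, we would instead use a QFT-free variant of amplitude estimation (e.g.\ iterative or maximum-likelihood amplitude estimation), which needs only \(\mathcal O(1/\eta)\) Grover iterates and classical postprocessing.
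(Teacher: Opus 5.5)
Your proposal is correct and follows essentially the same route as the paper. Both use a Hadamard test, with a phase gate on the control for the imaginary part, whose zero-outcome probability is $(1+\operatorname{Re}(e^{i\theta}\mu))/2$; amplitude estimation of that probability to accuracy $\Theta(\varepsilon/\alpha)$, with median boosting and a union bound over the two parts; and the triangle inequality with the block-encoding error $\varepsilon'\leq\varepsilon/2$. The only difference is cosmetic: you apply a single controlled $W=U_\psi^\dagger U_OU_\varphi$ to $\ket{0}_{a+r}$, whereas the paper prepares $\ket{0}_a\ket{\psi}$ and $U_O\ket{0}_a\ket{\varphi}$ in the two control branches, and both give the same overlap $\mu$.
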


\begin{proof}
Suppressing the zero padding, set
\[
z=(\bra{0}_a\otimes\bra{\psi})
U_O(\ket{0}_a\otimes\ket{\varphi}).
\]
The block-encoding error gives
\begin{equation}\label{eq:block-encoded-transition-error}
\left|\alpha z-\langle\psi|O|\varphi\rangle\right|
\leq\varepsilon'.
\end{equation}
The Hadamard-test circuit in
Figure~\ref{fig:transition-hadamard-test} prepares
\[
\frac{1}{\sqrt2}\left(
\ket{0}\ket{0}_a\ket{\psi}
+\ket{1}U_O(\ket{0}_a\ket{\varphi})
\right).
\]
After applying
\(P_\theta=\ket{0}\!\bra{0}+e^{i\theta}\ket{1}\!\bra{1}\)
and a Hadamard gate to the first qubit, the probability of
measuring zero is
\begin{equation}\label{eq:transition-hadamard-output}
p_\theta=\frac{1+\operatorname{Re}(e^{i\theta}z)}{2}.
\end{equation}
Thus \(2p_0-1=\operatorname{Re}z\) and
\(2p_{-\pi/2}-1=\operatorname{Im}z\).

\begin{figure}[!h]
\centering
\begin{quantikz}[
row sep={0.62cm,between origins},
column sep=0.50cm
]
\lstick{\ensuremath{\ket{0}}}
& \gate{H} & \octrl{2} & \ctrl{2} & \ctrl{1}
& \gate{P_\theta} & \gate{H} & \meter{} \\
\lstick{\ensuremath{\ket{0}_a}}
& \qw & \qw & \qw & \gate[2]{U_O}
& \qw & \qw & \qw \\
\lstick{\ensuremath{\ket{0}_r}}
& \qw & \gate{U_\psi} & \gate{U_\varphi} & \qw
& \qw & \qw & \qw
\end{quantikz}
\caption{Hadamard test for a block-encoded matrix element.}
\label{fig:transition-hadamard-test}
\end{figure}

Apply amplitude estimation to the two circuits with
\(\theta=0\) and \(\theta=-\pi/2\)
\cite{Brassard2002}.
Estimate each probability to error at most
\(\min\{1/4,\varepsilon/(8\alpha)\}\), with failure probability
at most \(\delta/2\). The estimates \(\widehat p_0\) and
\(\widehat p_{-\pi/2}\) then give
\[
v=\alpha\left[
(2\widehat p_0-1)+i(2\widehat p_{-\pi/2}-1)
\right].
\]
With probability at least \(1-\delta\), both probability
estimates satisfy their error bounds, and hence
\[
\begin{aligned}
|v-\langle\psi|O|\varphi\rangle|
&\leq\varepsilon'
+2\alpha\bigl(
|\widehat p_0-p_0|
+|\widehat p_{-\pi/2}-p_{-\pi/2}|
\bigr)\\
&\leq\frac{\varepsilon}{2}+\frac{\varepsilon}{2}
=\varepsilon.
\end{aligned}
\]
Amplitude estimation and repetition to reduce the failure
probability give the stated query bound. Each iteration uses
\(\mathcal O(a+r+1)\) additional gates for the required
reflections and control operations.
\end{proof}

When the matrix element is known to be real, only the circuit
with \(\theta=0\) is needed, with the same asymptotic complexity.

\section{Conditioning and perturbation analysis}
\label{app:conditioning-proofs}

We prove the estimates used in Sections~\ref{sec:main-results}
and~\ref{sec:algorithm-proof-overview}. Throughout, \(T>0\),
\(m\geq2\), \(h=T/(m-1)\), and \(t_k=kh\). We write
\(D:=BR^{-1}B^\top\) and use the notation of
Section~\ref{sec:problem-formulation}.

\subsection{Riccati--Lyapunov transformation}
\label{subsec:riccati-lyapunov-decoupling}
\label{app:riccati-lyapunov-proof}

Let \(P(t)\), \(W(t)\), and \(\mathcal S(t)\) be defined by
\eqref{eq:finite-horizon-riccati},
\eqref{eq:lyapunov-W}, and
\eqref{eq:riccati-lyapunov-transform}, respectively.
The assumptions \(Q,G\succeq0\) and \(R\succ0\) ensure that
\(P(t)=P(t)^\top\succeq0\) exists on \([0,T]\)
\cite{anderson1990,liberzon2012}.

\begin{theorem}[Riccati--Lyapunov decoupling]
\label{thm:riccati-lyapunov-decoupling}
Let \(y=(x^\top,\lambda^\top)^\top\) solve
\(\dot y=Hy\), and set
\[
z=\lambda-Px,\qquad v=x+Wz.
\]
Then
\[
\dot v=A_c(t)v,\qquad
\dot z=-A_c(t)^\top z,
\qquad A_c(t)=A-DP(t).
\]
The transformation
\((v^\top,z^\top)^\top=\mathcal S(t)y\) is invertible.
If \(x(0)=x^0\) and \(\lambda(T)=Gx(T)\), then
\(v(0)=x^0\) and \(z(T)=0\).
\end{theorem}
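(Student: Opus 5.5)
The plan is a direct differentiation argument, using only the Riccati equation \eqref{eq:finite-horizon-riccati}, the Lyapunov equation \eqref{eq:lyapunov-W}, and the blocks of \(H\) in \eqref{eq: ODE for LQR}, written with \(D=BR^{-1}B^\top\). Throughout, \(P\) and \(W\) are symmetric, since their defining equations preserve symmetry and the data \(G\) and \(0\) are symmetric.

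\emph{The \(z\)-equation.} Differentiate \(z=\lambda-Px\) and substitute \(\dot x=Ax-D\lambda\) and \(\dot\lambda=-Qx-A^\top\lambda\):
\[
\dot z=-Qx-A^\top\lambda-\dot Px-PAx+PD\lambda .
\]
Next replace \(\lambda\) by \(z+Px\) and \(-\dot P\) by \(A^\top P+PA-PDP+Q\). The terms in \(x\) are then
\[
-Q-A^\top P+A^\top P+PA-PDP+Q-PA+PDP,
\]
which sum to zero. The terms in \(z\) are \(-A^\top z+PDz=-(A-DP)^\top z\), using the symmetry of \(P\) and \(D\). This gives \(\dot z=-A_c^\top z\).

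\emph{The \(v\)-equation.} Along the way we record \(\dot x=Ax-D(z+Px)=A_cx-Dz\). Then
\[
\dot v=\dot x+\dot Wz+W\dot z=A_cx-Dz+(A_cW+WA_c^\top+D)z-WA_c^\top z .
\]
The \(D\) terms and the \(WA_c^\top\) terms cancel, leaving \(A_c(x+Wz)=A_cv\).

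\emph{Invertibility.} Expanding the product shows \(\mathcal S(t)\) equals the stated product of two block unit-triangular matrices:
\[
\begin{bmatrix}I&W\\0&I\end{bmatrix}\begin{bmatrix}I&0\\-P&I\end{bmatrix}=\begin{bmatrix}I-WP&W\\-P&I\end{bmatrix}.
\]
Each factor is inverted by flipping the sign of its off-diagonal block. Hence \(\mathcal S(t)^{-1}\) is the product of these inverses in reverse order. Moreover, \((v,z)=\mathcal S y\) follows at once from \(z=-Px+\lambda\) and \(v=x+W\lambda-WPx\).

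\emph{Boundary values.} From \(W(0)=0\) we get \(v(0)=x(0)=x^0\). From \(P(T)=G\) and \(\lambda(T)=Gx(T)\) we get \(z(T)=0\).

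The only delicate point is existence: \(P\) must be defined on all of \([0,T]\) so that \(A_c\) and \(W\) make sense. This is the standard result cited just before the theorem, and it holds because \(Q,G\succeq0\) and \(R\succ0\). Given \(P\), the equation for \(W\) is linear with continuous coefficients, so \(W\) exists on \([0,T]\). Everything else is bookkeeping of cancellations, and the one place to be careful is the sign of \(\dot P\).
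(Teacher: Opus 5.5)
Your proposal is correct and follows essentially the same route as the paper's proof: differentiate \(z=\lambda-Px\), use the Riccati equation to cancel the \(x\)-coefficient, use \(\dot x=A_cx-Dz\) and the Lyapunov equation to get \(\dot v=A_cv\), and obtain invertibility from the block unit-triangular factorization and the endpoint values from \(W(0)=0\) and \(P(T)=G\). The only differences are cosmetic. You substitute \(\dot x=Ax-D\lambda\) before replacing \(\lambda\), and you add remarks on the symmetry of \(P,W\) and the existence of \(W\); these are harmless extra detail.
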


\begin{proof}
Since \(\lambda=Px+z\), the state equation becomes
\(\dot x=A_cx-Dz\). Differentiating \(z\) gives
\[
\begin{aligned}
\dot z
&=-Qx-A^\top(Px+z)-\dot Px-P(A_cx-Dz)\\
&=\bigl(-Q-A^\top P-\dot P-PA+PDP\bigr)x
  +(-A^\top+PD)z.
\end{aligned}
\]
The coefficient of \(x\) vanishes by
\eqref{eq:finite-horizon-riccati}. As \(D=D^\top\) and
\(P=P^\top\), the remaining coefficient is \(-A_c^\top\).
Consequently,
\[
\begin{aligned}
\dot v
&=\dot x+\dot Wz+W\dot z\\
&=A_cv+\bigl(\dot W-A_cW-WA_c^\top-D\bigr)z
 =A_cv,
\end{aligned}
\]
where the last equality follows from \eqref{eq:lyapunov-W}.

The factorization
\[
\mathcal S(t)=
\begin{bmatrix}I_n&W(t)\\0&I_n\end{bmatrix}
\begin{bmatrix}I_n&0\\-P(t)&I_n\end{bmatrix}
\]
shows that \(\mathcal S(t)\) is invertible, with
\[
\mathcal S(t)^{-1}=
\begin{bmatrix}I_n&0\\P(t)&I_n\end{bmatrix}
\begin{bmatrix}I_n&-W(t)\\0&I_n\end{bmatrix}
=
\begin{bmatrix}
I_n&-W(t)\\
P(t)&I_n-P(t)W(t)
\end{bmatrix}.
\]
Finally, \(W(0)=0\) and \(P(T)=G\) imply
\[
v(0)=x(0)=x^0,\qquad
z(T)=\lambda(T)-Gx(T)=0.
\]
\end{proof}

For the LQR boundary conditions, uniqueness of the homogeneous
terminal-value problem gives \(z(t)=0\) on \([0,T]\).
Thus \(v=x\), \(\lambda=Px\), and
\begin{equation}\label{eq:close-loop-IVP}
\dot x=A_c(t)x,\qquad x(0)=x^0.
\end{equation}
The optimal feedback is therefore
\begin{equation}\label{eq:optimal-control-feedback}
u^*(t)=-R^{-1}B^\top P(t)x(t).
\end{equation}
The full transformation will still be needed below, since a bound
on \(L_{\rm ex}^{-1}\) must hold for arbitrary right-hand sides.

\subsection{Horizon-uniform closed-loop and Riccati bounds}
\label{app:horizon-uniform-bounds}

Under Assumption~\ref{ass:sta&det}, the algebraic Riccati equation
\[
A^\top P_++P_+A-P_+DP_++Q=0
\]
has a unique symmetric positive semidefinite solution \(P_+\),
and \(A_+:=A-DP_+\) is Hurwitz
\cite{Kucera1972Contribution,Molinari1977TimeInvariant}.
Let \(W_1\succ0\) and \(W_2\succeq0\) \cite[Theorem 56]{CallierDesoer1991LinearSystemTheory} solve
\begin{equation}\label{eq:Lyapunov-equation-I}
A_+^\top W_1+W_1A_+=-I_n,
\end{equation}
and
\[
A_+W_2+W_2A_+^\top=-D,
\]
respectively. We also set
\[
\widetilde G
:=(G-P_+)\bigl(I_n+W_2(G-P_+)\bigr)^{-1}.
\]
The inverse exists under Assumption~\ref{ass:sta&det} and
\(G\succeq0\); see
\cite[Theorem~2 and Corollary~1]{CallierWinkinWillems1994RDE}.

\subsubsection{Closed-loop stability}
\label{app:finite-horizon-closed-loop-stability-proof}

\begin{proposition}[Horizon-uniform closed-loop stability]
\label{prop:finite-horizon-closed-loop-stability}
Under Assumption~\ref{ass:sta&det}, there are constants
\(M_c\geq1\) and \(\alpha_c>0\), depending only on
\(A,B,Q,R,G\), such that the propagator of
\(\dot v=A_c(t)v\) satisfies
\begin{equation}\label{eq:horizon-uniform-closed-loop-stability}
\|\Psi(t,s)\|\leq M_ce^{-\alpha_c(t-s)},
\qquad 0\leq s\leq t\leq T,
\end{equation}
for every \(T>0\).
\end{proposition}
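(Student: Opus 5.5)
The plan is to compare the finite-horizon closed loop with the stabilizing infinite-horizon closed loop \(A_+=A-DP_+\), using the standard representation of the Riccati difference. Set \(\Delta(t):=P(t)-P_+\). Subtracting the algebraic Riccati equation from \eqref{eq:finite-horizon-riccati} gives
\[
-\dot\Delta=A_+^\top\Delta+\Delta A_+-\Delta D\Delta,\qquad \Delta(T)=G-P_+ .
\]
Its solution has the closed form (Callier--Winkin--Willems, the reference already used to define \(\widetilde G\))
\[
\Delta(t)=e^{A_+^\top(T-t)}\widetilde G_{T-t}\,e^{A_+(T-t)},
\]
where \(\widetilde G_\tau=(G-P_+)\bigl(I+W_2(\tau)(G-P_+)\bigr)^{-1}\) and \(W_2(\tau)=\int_0^\tau e^{A_+s}De^{A_+^\top s}ds\to W_2\). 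I would take from \cite{CallierWinkinWillems1994RDE} that the inverse exists and that \(\sup_{\tau\ge0}\|\widetilde G_\tau\|=:g<\infty\); the limit is the \(\widetilde G\) defined above. Since \(A_+\) is Hurwitz, the Lyapunov solution \(W_1\) of \eqref{eq:Lyapunov-equation-I} gives
\[
\|e^{A_+t}\|\le C_+e^{-\alpha_+t},\qquad C_+=\sqrt{\|W_1\|\|W_1^{-1}\|},\quad \alpha_+=\tfrac1{2\|W_1\|}.
\]
Hence
\[
\|\Delta(t)\|\le gC_+^2e^{-2\alpha_+(T-t)}.
\]
So \(P(t)\) converges to \(P_+\) exponentially away from the terminal time.

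Next, write \(A_c(t)=A_+-D\Delta(t)\) and treat \(-D\Delta(t)\) as a perturbation. By variation of constants,
\[
\Psi(t,s)=e^{A_+(t-s)}-\int_s^t e^{A_+(t-r)}D\Delta(r)\Psi(r,s)\,dr.
\]
Setting \(\phi(t)=e^{\alpha_+(t-s)}\|\Psi(t,s)\|\) gives
\[
\phi(t)\le C_++C_+\|D\|\int_s^t\|\Delta(r)\|\phi(r)\,dr.
\]
Gronwall's inequality then yields
\[
\phi(t)\le C_+\exp\Bigl(C_+\|D\|\int_0^T\|\Delta(r)\|dr\Bigr).
\]
The key point is that the integral is bounded uniformly in \(T\):
\[
\int_0^T\|\Delta(r)\|dr\le \frac{gC_+^2}{2\alpha_+}.
\]
Therefore one can take
\[
M_c=C_+\exp\!\Bigl(\frac{gC_+^3\|D\|}{2\alpha_+}\Bigr),\qquad \alpha_c=\alpha_+,
\]
which are independent of \(T\) and satisfy \(M_c\ge C_+\ge1\).

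The main obstacle is the uniform bound \(g\) on \(\widetilde G_\tau\) when \(G-P_+\) is indefinite. The first approach would be monotonicity. With \(\Pi_\tau:=G-P_+\), the matrix \(X_\tau=\widetilde G_\tau\) solves the Riccati equation of the dynamics \(\dot\xi=A_+\xi\), and the comparison principle would sandwich it between solutions with data \(-P_+\) and \(G\). Both of these are bounded: \(P(t)\ge0\) gives \(\Delta\succeq-P_+\), and \(P(t)\) is bounded above by the cost of a stabilizing feedback \(K\) from Assumption~\ref{ass:sta&det}. This yields \(-P_+\preceq\Delta(t)\preceq\bar P-P_+\) uniformly in \(T\), with \(\bar P\) the value under \(K\) plus the \(G\) term. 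It does not immediately give the needed decay, so the decay would be recovered through the invertibility result cited from \cite{CallierWinkinWillems1994RDE}. If that citation is insufficient, the fallback is a two-regime argument. On \([T-\tau_0,T]\), the bound \(\|\Psi\|\le e^{\|A_c\|\tau_0}\) holds with \(\|P\|\) uniformly bounded, and on \([0,T-\tau_0]\) the exponential closeness to \(P_+\) makes \(A_c\) a small perturbation of \(A_+\). The two regimes are then glued with a \(T\)-independent constant.
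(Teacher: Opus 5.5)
Your proof is correct, but it takes a different route from the paper.

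\textbf{The paper's route.} The paper quotes Callier--Winkin--Willems' Theorem~4, an estimate directly on the closed-loop propagator: \(\|\Psi_{T'}(t,0)-e^{A_+t}\|\leq M_2e^{-\alpha_cT'}e^{-\alpha_c(T'-t)}\). This only covers initial time \(s=0\). The paper then passes to arbitrary \(s\) through the time invariance \(\Psi_{T'}(t,s)=\Psi_{T'-s}(t-s,0)\). This gives \(M_c=\sqrt{\kappa(W_1)}+\|W_2\|\kappa(W_1)\max\{\|G-P_+\|,\|\widetilde G\|\}\).

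\textbf{Your route.} You use only the Riccati convergence estimate \(\|P(t)-P_+\|\leq g\kappa(W_1)e^{-2\alpha_+(T-t)}\), which is the same bound the paper later uses for \(M_P\). You combine it with variation of constants and Gronwall applied to \(A_c=A_+-D\Delta\).

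\textbf{What each approach buys.} Your argument is more elementary and more robust. It treats every \(s\leq t\) at once, needs no shift step, and works with any bound on \(\|P-P_+\|\) whose time integral is uniform in \(T\). The price is the constant. Write \(x=gC_+^3\|D\|/(2\alpha_+)\); your constant is \(M_c=C_+e^{x}\). Since \(\|W_2\|\leq C_+^2\|D\|/(2\alpha_+)\), the paper's constant is at most \(C_+(1+x)\), so it is linear in \(x\) where yours is exponential. Because \(M\) grows like \(M_c^6\), your exponential would propagate into \(\mathcal C\) and the query bounds. It remains uniform in \(T\), and for the mass--spring chain also in \(n_m\), so the corollary survives. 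However, the explicit formulas in Remark~\ref{re:ad-M_c-alpha_c} and Theorem~\ref{thm:Mc-alphac-Mp-bound} would change.

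\textbf{Your ``main obstacle'' is easier than you suggest.} Neither the comparison sandwich nor the two-regime fallback is needed. Your closed form gives \(\frac{d}{d\tau}\widetilde G_\tau=-\widetilde G_\tau e^{A_+\tau}De^{A_+^\top\tau}\widetilde G_\tau\preceq0\). Hence \(\widetilde G\preceq\widetilde G_\tau\preceq G-P_+\) and \(g=\max\{\|G-P_+\|,\|\widetilde G\|\}\). The only nontrivial input is invertibility of \(I+W_2(G-P_+)\) in the limit \(\tau\to\infty\), which the paper also takes from Callier--Winkin--Willems. For finite \(\tau\), invertibility already follows from global existence of \(P\).
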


\begin{proof}
The substitution \(B_R=BR^{-1/2}\) and \(u'=R^{1/2}u\)
reduces the control weight in the cost functional to the identity, so the results of
\cite{CallierWinkinWillems1994RDE} apply with \(B_RB_R^\top=D\).
Since \(A_+\) is Hurwitz, by \cite{CallierDesoer1991LinearSystemTheory} and \cite[p. 993-994]{CallierWinkinWillems1994RDE} there are \(M_1\geq1\) and
\(\alpha_c>0\) such that
\begin{equation}\label{eq:exp-Aplus-bound}
\|e^{A_+t}\|\leq M_1e^{-\alpha_ct},
\qquad t\geq0.
\end{equation}

To keep track of the horizon, let \(P_{T'}\) solve
\begin{equation}\label{eq:DRE-Tprime}
-\dot P_{T'}
=A^\top P_{T'}+P_{T'}A-P_{T'}DP_{T'}+Q,
\qquad P_{T'}(T')=G,
\end{equation}
and let \(\Psi_{T'}\) be the propagator of
\(A_{c,T'}(t):=A-DP_{T'}(t)\).
Taking the supremum over unit initial states in
\cite[Theorem~4]{CallierWinkinWillems1994RDE} gives
\(M_2\geq0\), independent of \(T'\), such that
\begin{equation}\label{eq:Psi-difference-bound}
\|\Psi_{T'}(t,0)-e^{A_+t}\|
\leq M_2e^{-\alpha_cT'}e^{-\alpha_c(T'-t)}
\leq M_2e^{-\alpha_ct},
\qquad 0\leq t\leq T'.
\end{equation}
The last inequality uses \(2T'-t\geq t\). Hence
\begin{equation}\label{eq:Psi-zero-bound}
\|\Psi_{T'}(t,0)\|\leq \|\Psi_{T'}(t,0)-e^{A_+t}\|+\|e^{A_+t}\|
\leq (M_1+M_2)e^{-\alpha_ct}.
\end{equation}

For \(0\leq s<T'\), the function
\(\widehat P(\tau):=P_{T'}(s+\tau)\) solves
\eqref{eq:DRE-Tprime} on \([0,T'-s]\) with terminal value \(G\).
Uniqueness therefore gives
\[
P_{T'-s}(\tau)=P_{T'}(s+\tau),
\qquad 0\leq\tau\leq T'-s.
\]
The corresponding propagators satisfy
\[
\Psi_{T'}(t,s)=\Psi_{T'-s}(t-s,0).
\]
Applying \eqref{eq:Psi-zero-bound} with horizon \(T'-s\) gives
\[
\|\Psi_{T'}(t,s)\| = \|\Psi_{T'-s}(t-s,0)\|
\leq(M_1+M_2)e^{-\alpha_c(t-s)}.
\]
Taking \(T'=T\) and \(M_c=M_1+M_2\) proves the claim.
The case \(s=t=T\) follows from \(\Psi(T,T)=I_n\).
\end{proof}

\subsubsection{Explicit stability constants}
\label{app:ad-M_c-alpha_c}

\begin{remark}\label{re:ad-M_c-alpha_c}
An admissible choice in
Proposition~\ref{prop:finite-horizon-closed-loop-stability} is
\[
\begin{aligned}
M_c
&=\sqrt{\kappa(W_1)}
+\|W_2\|\kappa(W_1)
  \max\{\|G-P_+\|,\|\widetilde G\|\},\\
\alpha_c&=\frac{1}{2\|W_1\|}.
\end{aligned}
\]
These constants are independent of \(T\).
\end{remark}

\begin{proof}
Let \(\dot\xi=A_+\xi\) and \(V=\xi^\top W_1\xi\).
Equation~\eqref{eq:Lyapunov-equation-I} gives
\[
\dot V=-\|\xi\|^2\leq-\frac{V}{\|W_1\|}.
\]
By Gr\"{o}nwall's inequality
\[
V(t)\leq e^{-t/\|W_1\|}V(0).
\]
Using
\(\lambda_{\min}(W_1)\|\xi\|^2\leq V
\leq\lambda_{\max}(W_1)\|\xi\|^2\) and $\|W_1\|=\lambda_{\max}(W_1)$, we obtain
\[
\|\xi(t)\|^2 \leq \frac{\lambda_{\max}(W_1)}{\lambda_{\min}(W_1)} e^{-t/\|W_1\|}\|\xi(0)\|^2
=\kappa(W_1)e^{-t/\|W_1\|}\|\xi(0)\|^2.
\]
Thus \eqref{eq:exp-Aplus-bound} holds with
\[
M_1=\sqrt{\kappa(W_1)},\qquad 
\alpha_c=\frac{1}{2\|W_1\|}.
\]
For these values, the estimate in
\cite[Theorem~4]{CallierWinkinWillems1994RDE} allows
\[
M_2=\|W_2\|M_1^2
\max\{\|G-P_+\|,\|\widetilde G\|\}.
\]
Substituting \(M_c=M_1+M_2\) gives the stated formula.
\end{proof}

\subsubsection{A uniform bound on the Riccati solution}
\label{app::horizon-uniform-riccati-bound}

\begin{proposition}[Horizon-uniform Riccati bound]
\label{prop:horizon-uniform-riccati-bound}
Under Assumption~\ref{ass:sta&det}, the solution of
\eqref{eq:finite-horizon-riccati} satisfies
\[
\sup_{t\in[0,T]}\|P(t)\|\leq M_P,
\]
where one may take
\[
M_P=\|P_+\|+
\kappa(W_1)\max\{\|G-P_+\|,\|\widetilde G\|\}.
\]
In particular, \(M_P\geq0\) is independent of \(T\).
\end{proposition}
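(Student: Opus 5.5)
We bound \(P(t)\) through its deviation from the stabilizing solution \(P_+\), using the same Callier--Winkin--Willems representation already invoked in the proof of Proposition~\ref{prop:finite-horizon-closed-loop-stability}. Set \(X(t):=P(t)-P_+\). Subtracting the algebraic Riccati equation from \eqref{eq:finite-horizon-riccati} gives the reduced Riccati equation
\[
-\dot X=A_+^\top X+XA_+-XDX,\qquad X(T)=G-P_+ .
\]
Its solution has the closed form
\[
X(t)=e^{A_+^\top(T-t)}\,(G-P_+)\bigl(I_n+\mathcal W(T-t)(G-P_+)\bigr)^{-1}e^{A_+(T-t)},
\]
where \(\mathcal W(\tau)=\int_0^\tau e^{A_+\sigma}De^{A_+^\top\sigma}\,\mathrm d\sigma\). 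I would verify this directly by differentiating, using \(\dot{\mathcal W}(\tau)=e^{A_+\tau}De^{A_+^\top\tau}\). The invertibility of the middle factor for every \(\tau\geq0\) is exactly the content of \cite[Theorem~2 and Corollary~1]{CallierWinkinWillems1994RDE} under Assumption~\ref{ass:sta&det} and \(G\succeq0\). This is the same result that guarantees that \(\widetilde G\) exists.

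Next I bound the middle factor uniformly in \(\tau\). Write \(\Delta:=G-P_+\) and \(F(\tau):=\Delta\bigl(I_n+\mathcal W(\tau)\Delta\bigr)^{-1}\). Then \(F(0)=\Delta\) and \(F(\tau)\to\widetilde G\) as \(\tau\to\infty\), since \(\mathcal W(\tau)\uparrow W_2\). The claim to establish is
\[
\|F(\tau)\|\leq\max\{\|\Delta\|,\|\widetilde G\|\}\qquad\text{for all }\tau\geq0.
\]
This step is the main obstacle. My first attempt is a monotonicity or sandwich argument, the same one the paper uses for the constant \(M_2\) in Remark~\ref{re:ad-M_c-alpha_c} via \cite[Theorem~4]{CallierWinkinWillems1994RDE}. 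Since \(\mathcal W(\tau)\) is increasing in the Loewner order, the map \(\mathcal W\mapsto\Delta(I+\mathcal W\Delta)^{-1}\) is order-decreasing wherever it is defined; formally it behaves like \((\Delta^{-1}+\mathcal W)^{-1}\). Hence \(F(\tau)\) lies between \(F(\infty)=\widetilde G\) and \(F(0)=\Delta\) in the Loewner order. A symmetric matrix sandwiched between two symmetric matrices has spectral norm at most the larger of their norms, which gives the claim. If the order argument fails because \(\Delta\) is indefinite or singular, I would instead cite the corresponding bound in \cite[Theorem~4]{CallierWinkinWillems1994RDE} directly.

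Finally, I combine this with the exponential factors. From the proof of Remark~\ref{re:ad-M_c-alpha_c}, \(\|e^{A_+\tau}\|\leq\sqrt{\kappa(W_1)}\,e^{-\alpha_c\tau}\leq\sqrt{\kappa(W_1)}\), and the same bound holds for \(e^{A_+^\top\tau}\). Therefore
\[
\|X(t)\|\leq\kappa(W_1)\max\{\|G-P_+\|,\|\widetilde G\|\},
\]
and the triangle inequality \(\|P(t)\|\leq\|P_+\|+\|X(t)\|\) yields the stated \(M_P\). This constant does not involve \(T\), which proves horizon independence.
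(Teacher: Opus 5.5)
Your proof is correct, and its last step (the triangle inequality $\|P(t)\|\leq\|P_+\|+\|P(t)-P_+\|$) is exactly the paper's. The difference lies in how the deviation estimate is obtained. The paper gets $\|P(t)-P_+\|\leq\kappa(W_1)\max\{\|G-P_+\|,\|\widetilde G\|\}e^{-2\alpha_c(T-t)}$ in one line by citing \cite[Theorem~3]{CallierWinkinWillems1994RDE}. You instead rederive it from four ingredients: the reduced Riccati equation for $X=P-P_+$, its closed-form solution, a Loewner sandwich for the middle factor $F(\tau)$, and the Lyapunov bound $\|e^{A_+\tau}\|\leq\sqrt{\kappa(W_1)}\,e^{-\alpha_c\tau}$ from the proof of Remark~\ref{re:ad-M_c-alpha_c}. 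The citation is shorter. Your derivation is self-contained apart from invertibility of $I_n+\mathcal W(\tau)(G-P_+)$ on $[0,\infty]$. It shows where the two factors $\sqrt{\kappa(W_1)}$ and the two norms (the values of $F$ at $\tau=0$ and $\tau=\infty$) come from, and it recovers the decay $e^{-2\alpha_c(T-t)}$.

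For finite $\tau$ even that invertibility needs no citation. Writing $\Delta=G-P_+$ as you do, the matrix $e^{-A_+\tau}(I_n+\mathcal W(\tau)\Delta)$ is the denominator $M(\tau)$ in the representation $X=NM^{-1}$ through the associated linear Hamiltonian system. Global existence of $P$ forces $M(\tau)$ to be nonsingular, so only $\tau=\infty$ genuinely uses \cite[Theorem~2 and Corollary~1]{CallierWinkinWillems1994RDE}.

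On the monotonicity step, your hedge about indefinite or singular $\Delta$ is unnecessary. The formal picture $(\Delta^{-1}+\mathcal W)^{-1}$ should be replaced by a direct computation. By the push-through identity, $F(\tau)^\top=(I_n+\Delta\mathcal W(\tau))^{-1}\Delta=F(\tau)$. Differentiating gives $\dot F(\tau)=-F(\tau)\,e^{A_+\tau}De^{A_+^\top\tau}F(\tau)\preceq0$ for every symmetric $\Delta$. Hence $F$ is Loewner-decreasing, and $\widetilde G\preceq F(\tau)\preceq\Delta$ follows from $\mathcal W(\tau)\to W_2$ and continuity of the inverse at the nonsingular matrix $I_n+W_2\Delta$.

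One small misattribution: the paper does not use a sandwich argument for $M_2$ in Remark~\ref{re:ad-M_c-alpha_c}; it simply cites \cite[Theorem~4]{CallierWinkinWillems1994RDE}. The analogous monotonicity argument in the paper is the one in Lemma~\ref{lem:chain-terminal-correction}, applied to $X\mapsto X(I_n+W_2X)^{-1}$.
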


\begin{proof}
With the constants chosen in Remark~\ref{re:ad-M_c-alpha_c},
\cite[Theorem~3]{CallierWinkinWillems1994RDE} gives
\begin{equation}\label{eq:error-bound-are-dre}
\|P(t)-P_+\|
\leq
\kappa(W_1)\max\{\|G-P_+\|,\|\widetilde G\|\}
e^{-2\alpha_c(T-t)},
\qquad 0\leq t\leq T.
\end{equation}
Since the exponential factor is at most one,
\[
\|P(t)\|\leq\|P_+\|+\|P(t)-P_+\|\leq M_P.
\]
\end{proof}

\subsection{Decoupling of the multiple-shooting matrix}
\label{subsec:linear-system-decoupling}
\label{app:transformed-one-step-map}

\begin{lemma}[Adjoint propagator]
\label{lem:propagation-of-z}
Let \(\Psi(t,s)\) satisfy
\begin{equation}\label{eq:Psi-propagator-definition}
\partial_t\Psi(t,s)=A_c(t)\Psi(t,s),
\qquad \Psi(s,s)=I_n,
\end{equation}
and let \(\Xi(t,s)\) be the propagator of
\(\dot z=-A_c(t)^\top z\). Then
\[
\Xi(t,s)=\Psi(t,s)^{-\top},
\qquad s,t\in[0,T].
\]
In particular, \(\Xi(s,t)=\Psi(t,s)^\top\).
\end{lemma}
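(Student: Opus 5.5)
The identity will follow from the classical fact that the adjoint equation propagates the inverse transpose. Set \(Y(t):=\Psi(t,s)^{-\top}\) for fixed \(s\in[0,T]\). First we need \(\Psi(t,s)\) to be invertible for all \(s,t\in[0,T]\). This holds because \(A_c(t)=A-DP(t)\) is continuous on \([0,T]\), since \(P\) solves the Riccati equation \eqref{eq:finite-horizon-riccati} there. Hence \(\Psi\) is a well-defined fundamental solution, and Liouville's formula \(\det\Psi(t,s)=\exp\int_s^t\operatorname{tr}A_c(\tau)\,\mathrm d\tau\neq0\) gives invertibility. It also gives the composition law \(\Psi(t,r)\Psi(r,s)=\Psi(t,s)\), and in particular \(\Psi(s,t)=\Psi(t,s)^{-1}\).

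Next we differentiate \(Y\). Differentiating \(\Psi(t,s)\Psi(t,s)^{-1}=I_n\) in \(t\) gives
\[
\partial_t\Psi(t,s)^{-1}=-\Psi(t,s)^{-1}A_c(t).
\]
Transposing this yields
\[
\partial_tY(t)=-A_c(t)^\top Y(t),\qquad Y(s)=I_n.
\]
Thus \(Y\) satisfies the same matrix initial-value problem that defines \(\Xi(\cdot,s)\). Uniqueness of solutions to linear ODEs with continuous coefficients then gives \(\Xi(t,s)=Y(t)=\Psi(t,s)^{-\top}\) for all \(s,t\in[0,T]\), including \(t<s\).

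For the final claim, we swap the roles of the two arguments: \(\Xi(s,t)=\Psi(s,t)^{-\top}\). Since \(\Psi(s,t)^{-1}=\Psi(t,s)\) by the composition law, this equals \(\Psi(t,s)^\top\). Consequently the terminal data of \(z\) is transported backward by \(\Psi^\top\), which is the identity used in Section~\ref{sec:condition_error}.

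The only point needing care is the regularity of \(A_c\) on the whole interval, which is needed for existence, uniqueness and two-sided invertibility of the propagators. This follows from the global existence of \(P\) on \([0,T]\) stated at the start of Appendix~\ref{app:riccati-lyapunov-proof}, so there is no real obstacle.
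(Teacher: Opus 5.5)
Your proof is correct and is essentially the paper's argument: the paper differentiates the product $\Psi(t,s)^\top\Xi(t,s)$ and shows it is constant and equal to $I_n$, which gives the identity and invertibility at once without Liouville's formula, whereas you differentiate $\Psi(t,s)^{-\top}$ directly and invoke uniqueness for the adjoint initial-value problem. The final claim then follows from $\Psi(s,t)=\Psi(t,s)^{-1}$ exactly as in the paper; note that this composition law comes from uniqueness of solutions, not from Liouville's formula.
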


\begin{proof}
Differentiation gives
\[
\frac{\mathrm d}{\mathrm dt}
\bigl(\Psi(t,s)^\top\Xi(t,s)\bigr)
=
\Psi(t,s)^\top A_c(t)^\top\Xi(t,s)
-\Psi(t,s)^\top A_c(t)^\top\Xi(t,s)=0.
\]
The product equals \(I_n\) at \(t=s\), so
\(\Xi(t,s)=\Psi(t,s)^{-\top}\).
The last identity follows from
\(\Psi(s,t)=\Psi(t,s)^{-1}\).
\end{proof}

\begin{lemma}[Transformed one-step propagator]
\label{lem:transformed-one-step-map}
Set \(\mathcal S_k=\mathcal S(t_k)\),
\(\Psi_k=\Psi(t_k,t_{k-1})\), and
\(\Phi_{\rm ex}=e^{hH}\). Then
\begin{equation}\label{eq:transformed-one-step-map}
\mathcal S_k\Phi_{\rm ex}\mathcal S_{k-1}^{-1}
=
\begin{bmatrix}
\Psi_k&0\\
0&\Psi_k^{-\top}
\end{bmatrix},
\qquad k=1,\ldots,m-1.
\end{equation}
\end{lemma}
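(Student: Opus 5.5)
The identity says that the change of variables $\mathcal S(t)$ conjugates the exact one-step map into the block-diagonal propagator of the decoupled system. The plan is to read it off from Theorem~\ref{thm:riccati-lyapunov-decoupling} and Lemma~\ref{lem:propagation-of-z}, without computing $e^{hH}$ explicitly.

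Fix $k\in\{1,\ldots,m-1\}$ and an arbitrary vector $w\in\mathbb R^{2n}$. Let $y(t)$ be the solution of $\dot y=Hy$ on $[0,T]$ with $y(t_{k-1})=\mathcal S_{k-1}^{-1}w$. This is a linear constant-coefficient ODE, so the solution exists globally and $y(t_k)=e^{hH}y(t_{k-1})=\Phi_{\rm ex}y(t_{k-1})$.

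The first part of Theorem~\ref{thm:riccati-lyapunov-decoupling} uses only $\dot y=Hy$, the Riccati equation \eqref{eq:finite-horizon-riccati} and the Lyapunov equation \eqref{eq:lyapunov-W}, and not the boundary conditions. Hence $(v^\top,z^\top)^\top:=\mathcal S(t)y(t)$ satisfies
\[
\dot v=A_c(t)v,\qquad \dot z=-A_c(t)^\top z
\]
on $[0,T]$. Consequently $v(t_k)=\Psi(t_k,t_{k-1})v(t_{k-1})=\Psi_kv(t_{k-1})$. Lemma~\ref{lem:propagation-of-z} gives $z(t_k)=\Xi(t_k,t_{k-1})z(t_{k-1})=\Psi_k^{-\top}z(t_{k-1})$.

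Combining these relations with $(v(t_{k-1})^\top,z(t_{k-1})^\top)^\top=\mathcal S_{k-1}y(t_{k-1})=w$ yields
\[
\mathcal S_k\Phi_{\rm ex}\mathcal S_{k-1}^{-1}w
=\mathcal S_ky(t_k)
=\begin{bmatrix}\Psi_k&0\\0&\Psi_k^{-\top}\end{bmatrix}w .
\]
Since $w$ is arbitrary, this proves \eqref{eq:transformed-one-step-map}. Invertibility of $\mathcal S_{k-1}$ comes from the triangular factorization in Theorem~\ref{thm:riccati-lyapunov-decoupling}, and invertibility of $\Psi_k$ from its being a propagator.

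The only point needing care is to confirm that the decoupling computation in Theorem~\ref{thm:riccati-lyapunov-decoupling} holds for arbitrary solutions of $\dot y=Hy$, not just the optimal one. Inspecting that proof shows the boundary data are used only in its last sentence. A second check is that $P$ and $W$ exist and are $C^1$ on all of $[0,T]$, so the chain rule applies on $[t_{k-1},t_k]$. The paper already states this: $P$ exists by the positivity assumptions, and $W$ then solves a linear ODE with continuous coefficients.
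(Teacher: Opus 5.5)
Your proposal is correct and follows the paper's own proof essentially verbatim: fix an arbitrary \(w\), propagate \(y(t_{k-1})=\mathcal S_{k-1}^{-1}w\) exactly by \(e^{hH}\), and then read off \(\mathcal S_ky(t_k)=\operatorname{diag}(\Psi_k,\Psi_k^{-\top})w\) from Theorem~\ref{thm:riccati-lyapunov-decoupling} and Lemma~\ref{lem:propagation-of-z}. Your added checks are valid and make explicit what the paper leaves implicit: the decoupling computation does not use the boundary conditions, and \(P\) and \(W\) are \(C^1\) on \([0,T]\).
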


\begin{proof}
For arbitrary \(w_{k-1}\in\mathbb R^{2n}\), let \(y(t)\)
solve \(\dot y=Hy\) with
\(y(t_{k-1})=\mathcal S_{k-1}^{-1}w_{k-1}\).
Exact propagation gives
\[
\mathcal S_k y(t_k)
=\mathcal S_k\Phi_{\rm ex}\mathcal S_{k-1}^{-1}w_{k-1}.
\]
On the other hand, Theorem~\ref{thm:riccati-lyapunov-decoupling}
and Lemma~\ref{lem:propagation-of-z} give
\[
\mathcal S_k y(t_k)
=\operatorname{diag}(\Psi_k,\Psi_k^{-\top})w_{k-1}.
\]
The two matrices agree on every \(w_{k-1}\), proving the identity.
\end{proof}

We now apply this transformation to
\(L_{\rm ex}\) in \eqref{eq:exact-shooting-operator-block-form}.
For arbitrary nodal vectors, write
\(w_k=(v_k^\top,z_k^\top)^\top=\mathcal S_k y_k\).
The endpoint values \(W(0)=0\) and \(P(T)=G\) give
\begin{equation}\label{eq:transformed-BC}
B_0\mathcal S_0^{-1}w_0
+B_T\mathcal S_{m-1}^{-1}w_{m-1}
=
\begin{bmatrix}v_0\\z_{m-1}\end{bmatrix}.
\end{equation}
For a matching row,
\begin{equation}\label{eq:transformed-matrix-action}
\mathcal S_k(y_k-\Phi_{\rm ex}y_{k-1})
=
w_k-\mathcal S_k\Phi_{\rm ex}
       \mathcal S_{k-1}^{-1}w_{k-1}.
\end{equation}

Define the variable and row transformations by
\begin{equation}\label{eq:defition-right-transformation}
\mathcal S_{\rm grid}
:=\operatorname{diag}(\mathcal S_0,\ldots,\mathcal S_{m-1}),
\end{equation}
and
\begin{equation}\label{eq:defition-left-transformation}
\mathcal R
:=\operatorname{diag}(I_{2n},\mathcal S_1,\ldots,\mathcal S_{m-1}).
\end{equation}
Thus \(\mathcal R L_{\rm ex}\mathcal S_{\rm grid}^{-1}\)
has boundary row \((v_0^\top,z_{m-1}^\top)^\top\) and matching
rows
\[
\begin{bmatrix}
v_k-\Psi_kv_{k-1}\\
z_k-\Psi_k^{-\top}z_{k-1}
\end{bmatrix}.
\]
To put the second component in backward form, set
\[
\mathcal D:=\operatorname{diag}(I_{2n},D_1,\ldots,D_{m-1}),
\qquad
D_k:=\begin{bmatrix}I_n&0\\0&\Psi_k^\top\end{bmatrix}.
\]
The matching rows of
\(\mathcal D\mathcal R L_{\rm ex}\mathcal S_{\rm grid}^{-1}\)
are then
\[
\begin{bmatrix}
v_k-\Psi_kv_{k-1}\\
\Psi_k^\top z_k-z_{k-1}
\end{bmatrix}.
\]

Let $\Pi_c$ be the permutation that maps the grouped variables
to the nodewise ordering, namely,
\[
(v_0,\ldots,v_{m-1},z_0,\ldots,z_{m-1})
\longmapsto
(v_0,z_0,\ldots,v_{m-1},z_{m-1}).
\]
Let $\Pi_r$ group the equations in the order
\[
\left(
v_0,\,
(v_k-\Psi_k v_{k-1})_{k=1}^{m-1},\,
z_{m-1},\,
(\Psi_k^\top z_k-z_{k-1})_{k=1}^{m-1}
\right).
\]
With \(E_{ij} = \ket{i}^{(m)}\bra{j}^{(m)}\) denoting the \(m\times m\) matrix units, this gives
\begin{equation}\label{eq:L-diag-relationship}
\Pi_r\mathcal D\mathcal R L_{\rm ex}
\mathcal S_{\rm grid}^{-1}\Pi_c
=
\begin{bmatrix}L_v&0\\0&L_z\end{bmatrix},
\end{equation}
where
\begin{equation}\label{eq:matrix Lv}
L_v
=\sum_{k=0}^{m-1}E_{kk}\otimes I_n
-\sum_{k=1}^{m-1}E_{k,k-1}\otimes\Psi_k,
\end{equation}
and
\begin{equation}\label{eq:matrix Lz}
L_z
=E_{0,m-1}\otimes I_n
+\sum_{k=1}^{m-1}E_{kk}\otimes\Psi_k^\top
-\sum_{k=1}^{m-1}E_{k,k-1}\otimes I_n.
\end{equation}
Thus \(L_v\) is solved forward from \(v_0\), while \(L_z\)
is solved backward from \(z_{m-1}\).

\subsection{Bounds for the exact linear system}
\label{subsec:condition-number-bound}

For brevity, set
\[
\gamma_h:=\sum_{j=0}^{m-1}e^{-\alpha_cjh}
=\frac{1-e^{-\alpha_chm}}{1-e^{-\alpha_ch}}.
\]

\subsubsection{Inverse bounds for the decoupled matrices}
\label{app:triangular-inverse-bounds-proof}

\begin{lemma}[Inverse bounds for the decoupled matrices]
\label{lem:Lv-Lz-inverse-bounds}
Under Assumption~\ref{ass:sta&det}, the matrices \(L_v\) and
\(L_z\) are nonsingular and
\[
\max\{\|L_v^{-1}\|,\|L_z^{-1}\|\}\leq M_c\gamma_h.
\]
This is the estimate stated in \eqref{eq:Lv-Lz-inverse-bound}.
\end{lemma}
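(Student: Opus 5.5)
We solve $L_v v=f$ and $L_z z=g$ explicitly by recursion, read off the block kernels of the inverses, and bound the resulting block-Toeplitz-like operators by a weighted Cauchy--Schwarz (Schur-test) argument using Proposition~\ref{prop:finite-horizon-closed-loop-stability}.

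\emph{Forward system.} From \eqref{eq:matrix Lv}, $L_v$ is block lower bidiagonal with identity diagonal blocks, hence nonsingular. Writing $f=(f_0,\ldots,f_{m-1})$, the equations $v_0=f_0$ and $v_k-\Psi_kv_{k-1}=f_k$ give, by induction and the propagator semigroup property $\Psi_k\Psi_{k-1}\cdots\Psi_{j+1}=\Psi(t_k,t_j)$,
\[
v_k=\sum_{j=0}^{k}\Psi(t_k,t_j)f_j .
\]
Thus the $(k,j)$ block of $L_v^{-1}$ is $\Psi(t_k,t_j)$ for $k\geq j$ and zero otherwise, and by Proposition~\ref{prop:finite-horizon-closed-loop-stability} its norm is at most $M_ce^{-\alpha_c(k-j)h}$.

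\emph{Backward system.} From \eqref{eq:matrix Lz}, with $g=(g_0,\ldots,g_{m-1})$, row $0$ reads $z_{m-1}=g_0$ and row $k\geq1$ reads $\Psi_k^\top z_k-z_{k-1}=g_k$. Hence $z_{k-1}=\Psi_k^\top z_k-g_k$, and backward induction, using $\Psi(t_i,t_j)^\top=\Psi_{j+1}^\top\cdots\Psi_i^\top$, gives
\[
z_j=\Psi(t_{m-1},t_j)^\top g_0-\sum_{i=j+1}^{m-1}\Psi(t_{i-1},t_j)^\top g_i .
\]
So $L_z$ is nonsingular and every block of $L_z^{-1}$ is, up to sign, of the form $\Psi(t_{l},t_j)^\top$ with $l\ge j$. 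After relabelling the equation indices by $g_0\leftrightarrow$ node $m-1$ and $g_i\leftrightarrow$ node $i-1$ (a permutation, which preserves norms), the block coupling node $l$ to node $j$ has norm at most $M_ce^{-\alpha_c(l-j)h}$, and each relabelled source index $l\in\{0,\ldots,m-1\}$ occurs exactly once.

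\emph{Norm bound.} For either operator we have $\|u_k\|\le \sum_{j}a_{|k-j|}\|f_j\|$ with $a_r=M_ce^{-\alpha_c rh}$ on the one-sided range. Weighted Cauchy--Schwarz, $(\sum_j a_{|k-j|}\|f_j\|)^2\le(\sum_j a_{|k-j|})(\sum_j a_{|k-j|}\|f_j\|^2)$, bounds each one-sided row sum and each one-sided column sum by $M_c\sum_{r=0}^{m-1}e^{-\alpha_crh}=M_c\gamma_h$. Summing over $k$ therefore gives $\|u\|^2\le (M_c\gamma_h)^2\|f\|^2$, which is the claimed estimate.

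The main obstacle is the bookkeeping for $L_z^{-1}$: we must verify the exact index shift produced by the cyclic boundary row $E_{0,m-1}$. We also need to check that the relabelling is a genuine bijection, so that the one-sided row and column sums are still bounded by $\gamma_h$ and no extra factor of $2$ or of $m$ appears.
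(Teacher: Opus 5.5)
Your proposal is correct and follows essentially the same route as the paper: explicit forward and backward recursions give block kernels $\Psi(t_k,t_j)$ and $\Psi(t_l,t_j)^\top$, and the weighted Cauchy--Schwarz (Schur-test) estimate with one-sided sums at most $M_c\gamma_h$ yields the bound. Your signed cyclic relabelling for $L_z$ is just the inverse of the paper's norm-preserving rewrite of the right-hand side as $\widehat g=(g_{m-1}^\top,-g_0^\top,\ldots,-g_{m-2}^\top)^\top$, and your index bookkeeping for the boundary row $E_{0,m-1}$ checks out, so no extra factor of $2$ or $m$ appears.
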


\begin{proof}
Let \(L_vv=f\), with
\(v=(v_0^\top,\ldots,v_{m-1}^\top)^\top\) and
\(f=(f_0^\top,\ldots,f_{m-1}^\top)^\top\).
Equation~\eqref{eq:matrix Lv} gives
\[
v_0=f_0,\qquad v_k=\Psi_kv_{k-1}+f_k.
\]
This recursion has a unique solution. By the composition property
of the propagator,
\[
v_k=\sum_{j=0}^k\Psi(t_k,t_j)f_j.
\]
Proposition~\ref{prop:finite-horizon-closed-loop-stability}
and weighted Cauchy--Schwarz imply
\[
\begin{aligned}
\|v_k\|^2
&\leq M_c^2
\left(\sum_{j=0}^k e^{-\alpha_ch(k-j)}\|f_j\|\right)^2\\
&\leq M_c^2
\left(\sum_{j=0}^k e^{-\alpha_ch(k-j)}\right)
\left(\sum_{j=0}^k e^{-\alpha_ch(k-j)}\|f_j\|^2\right)\\
&\leq M_c^2\gamma_h
\sum_{j=0}^k e^{-\alpha_ch(k-j)}\|f_j\|^2.
\end{aligned}
\]
Summing over \(k\) and reversing the order of summation yields
\[
\begin{aligned}
\|v\|^2
&\leq M_c^2\gamma_h
\sum_{j=0}^{m-1}
\left(\sum_{k=j}^{m-1}e^{-\alpha_ch(k-j)}\right)\|f_j\|^2\\
&\leq M_c^2\gamma_h^2\|f\|^2.
\end{aligned}
\]
Hence
\begin{equation}\label{eq: upper bound for L_v inverse}
\|L_v^{-1}\|\leq M_c\gamma_h.
\end{equation}

For the backward solve, take arbitrary
\(g=(g_0^\top,\ldots,g_{m-1}^\top)^\top\) and write the
right-hand side in the row ordering of \(L_z\) as
\[
\widehat g
=(g_{m-1}^\top,-g_0^\top,\ldots,-g_{m-2}^\top)^\top.
\]
This signed permutation preserves the Euclidean norm.
Then \(L_zz=\widehat g\) is equivalent to
\[
z_{m-1}=g_{m-1},\qquad
z_k=\Psi_{k+1}^\top z_{k+1}+g_k,
\quad k=m-2,\ldots,0.
\]
The solution is unique and satisfies
\[
z_k=\sum_{j=k}^{m-1}\Psi(t_j,t_k)^\top g_j.
\]
As above,
\[
\|z_k\|^2
\leq M_c^2\gamma_h
\sum_{j=k}^{m-1}e^{-\alpha_ch(j-k)}\|g_j\|^2.
\]
Therefore,
\[
\begin{aligned}
\|z\|^2
&\leq M_c^2\gamma_h
\sum_{j=0}^{m-1}
\left(\sum_{k=0}^{j}e^{-\alpha_ch(j-k)}\right)\|g_j\|^2\\
&\leq M_c^2\gamma_h^2\|g\|^2
=M_c^2\gamma_h^2\|\widehat g\|^2.
\end{aligned}
\]
This proves \(\|L_z^{-1}\|\leq M_c\gamma_h\).
\end{proof}

\subsubsection{The inverse of the exact shooting matrix}
\label{app:inverse-bound-L}

\begin{theorem}[Inverse bound for the exact shooting matrix]
\label{thm:inverse-bound-L}
Under Assumption~\ref{ass:sta&det}, \(L_{\rm ex}\) is
nonsingular and
\[
\|L_{\rm ex}^{-1}\|\leq M\gamma_h,
\]
where
\[
M=M_c^2
\left(1+\frac{M_c^2\|B\|^2\|R^{-1}\|}{2\alpha_c}\right)^2
(1+M_P)^2
\]
is the constant defined in Section~\ref{sec:main-results}.
\end{theorem}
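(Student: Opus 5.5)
The proof uses the factorization \eqref{eq:L-diag-relationship} and Lemma~\ref{lem:Lv-Lz-inverse-bounds}. The matrices $\Pi_r$, $\Pi_c$, $\mathcal D$, $\mathcal R$ and $\mathcal S_{\rm grid}$ are all invertible: the permutations trivially, $\mathcal D$ because each $\Psi_k$ is a propagator, and $\mathcal R$, $\mathcal S_{\rm grid}$ by the factorization of $\mathcal S(t)$ in Theorem~\ref{thm:riccati-lyapunov-decoupling}. Since $L_v$ and $L_z$ are nonsingular, solving \eqref{eq:L-diag-relationship} for $L_{\rm ex}$ shows that $L_{\rm ex}$ is nonsingular, with
\[
L_{\rm ex}^{-1}
=\mathcal S_{\rm grid}^{-1}\Pi_c
\begin{bmatrix}L_v^{-1}&0\\0&L_z^{-1}\end{bmatrix}
\Pi_r\mathcal D\mathcal R .
\]
Because the permutations are orthogonal and the middle factor is block diagonal, this gives
\[
\|L_{\rm ex}^{-1}\|\le \|\mathcal S_{\rm grid}^{-1}\|\,\max\{\|L_v^{-1}\|,\|L_z^{-1}\|\}\,\|\mathcal D\|\,\|\mathcal R\|
\le \|\mathcal S_{\rm grid}^{-1}\|\,M_c\gamma_h\,\|\mathcal D\|\,\|\mathcal R\|.
\]
It remains to bound the three block-diagonal factors uniformly in $T$, using that the norm of a block-diagonal matrix is the maximum of its block norms.

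\textbf{Bounding $W$.} The Lyapunov equation \eqref{eq:lyapunov-W} with $W(0)=0$ has the variation-of-constants solution
\[
W(t)=\int_0^t\Psi(t,s)D\Psi(t,s)^\top\,\mathrm ds .
\]
This can be checked directly by differentiating with \eqref{eq:Psi-propagator-definition}. Proposition~\ref{prop:finite-horizon-closed-loop-stability} and $\|D\|\le\|B\|^2\|R^{-1}\|$ then give
\[
\|W(t)\|\le M_c^2\|B\|^2\|R^{-1}\|\int_0^t e^{-2\alpha_c(t-s)}\,\mathrm ds\le \frac{M_c^2\|B\|^2\|R^{-1}\|}{2\alpha_c}.
\]

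\textbf{Bounding $\mathcal S$ and $\mathcal S^{-1}$.} From the triangular factorizations of $\mathcal S(t)$ and $\mathcal S(t)^{-1}$, each factor has the form $\bigl[\begin{smallmatrix}I&X\\0&I\end{smallmatrix}\bigr]$ or $\bigl[\begin{smallmatrix}I&0\\X&I\end{smallmatrix}\bigr]$, whose norm is at most $1+\|X\|$. Combining this with Proposition~\ref{prop:horizon-uniform-riccati-bound} gives
\[
\max\{\|\mathcal S(t)\|,\|\mathcal S(t)^{-1}\|\}\le(1+M_P)(1+\|W(t)\|)\le a_S .
\]
Hence $\|\mathcal S_{\rm grid}^{-1}\|\le a_S$ and $\|\mathcal R\|\le\max\{1,a_S\}=a_S$.

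\textbf{Bounding $\mathcal D$.} Each block is $\operatorname{diag}(I,\Psi_k^\top)$, so $\|\mathcal D\|\le\max\{1,\max_k\|\Psi_k\|\}\le M_c$. This uses $M_c\ge1$ and $\|\Psi_k\|\le M_ce^{-\alpha_ch}\le M_c$.

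Multiplying these bounds gives
\[
\|L_{\rm ex}^{-1}\|\le a_S\cdot M_c\gamma_h\cdot M_c\cdot a_S=M_c^2a_S^2\gamma_h=M\gamma_h .
\]

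The only step needing care is confirming that the assembled transformation in \eqref{eq:L-diag-relationship} exactly reproduces the row and variable orderings of $L_v$ and $L_z$. In particular, the sign flips in $L_z$ must be absorbed by an orthogonal signed permutation, as in the proof of Lemma~\ref{lem:Lv-Lz-inverse-bounds}, so that they do not affect the norms. Once that is accepted, the rest is the norm bookkeeping above.
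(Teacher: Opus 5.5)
Your proposal is correct and follows the paper's proof essentially step for step. It uses the same factorization \eqref{eq:L-diag-relationship}, the same variation-of-constants formula and bound for $W(t)$, the same triangular-factor bound $a_S$ for $\mathcal S(t)^{\pm1}$, and $\|\mathcal D\|\le M_c$. The point you flag is not actually an issue: with the row ordering of $\Pi_r$, the rows of $L_z$ are exactly $z_{m-1}$ and $\Psi_k^\top z_k-z_{k-1}$. So the factorization holds with pure permutations, and the signed permutation appears only inside the proof of Lemma~\ref{lem:Lv-Lz-inverse-bounds}.
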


\begin{proof}
By \cite[Corollary 1.1.6]{aboukandil2003basic} the solution to the Lyapunov equation in \eqref{eq:lyapunov-W} is
\[
W(t)=\int_0^t\Psi(t,s)D\Psi(t,s)^\top\,\mathrm ds.
\]
Thus
\[
\begin{aligned}
\|W(t)\|
&\leq M_c^2\|B\|^2\|R^{-1}\|
       \int_0^t e^{-2\alpha_c(t-s)}\,\mathrm ds\\
&\leq\frac{M_c^2\|B\|^2\|R^{-1}\|}{2\alpha_c}.
\end{aligned}
\]
The factorization in
Appendix~\ref{app:riccati-lyapunov-proof} therefore gives
\[
\max\{\|\mathcal S(t)\|,\|\mathcal S(t)^{-1}\|\}
\leq(1+\|P(t)\|)(1+\|W(t)\|)
\leq a_S,
\]
where
\[
a_S:=(1+M_P)
\left(1+\frac{M_c^2\|B\|^2\|R^{-1}\|}{2\alpha_c}\right).
\]
By the block-diagonal definitions,
\[
\|\mathcal S_{\rm grid}^{-1}\|\leq a_S,\qquad
\|\mathcal R\|\leq a_S,
\]
and
\[
\|\mathcal D\|
=\max\left\{1,\max_{1\leq k\leq m-1}\|\Psi_k\|\right\}
\leq M_c.
\]

All transformations in \eqref{eq:L-diag-relationship} are
invertible, and Lemma~\ref{lem:Lv-Lz-inverse-bounds} gives
invertibility of both diagonal blocks. Consequently,
\[
L_{\rm ex}^{-1}
=\mathcal S_{\rm grid}^{-1}\Pi_c
\begin{bmatrix}L_v^{-1}&0\\0&L_z^{-1}\end{bmatrix}
\Pi_r\mathcal D\mathcal R.
\]
Since permutation matrices have norm one,
\[
\begin{aligned}
\|L_{\rm ex}^{-1}\|
&\leq
\|\mathcal S_{\rm grid}^{-1}\|
\max\{\|L_v^{-1}\|,\|L_z^{-1}\|\}
\|\mathcal D\|\|\mathcal R\|\\
&\leq a_S^2M_c^2\gamma_h
=M\gamma_h.
\end{aligned}
\]
This proves \eqref{eq:L-inverse-bound-explicit}.
\end{proof}

For every \(h>0\),
\[
\gamma_h
\leq\frac{1}{1-e^{-\alpha_ch}}
=1+\frac{1}{e^{\alpha_ch}-1}
\leq1+\frac{1}{\alpha_ch}.
\]
In particular, the inverse bound is uniform in \(T\) at fixed
\(h\), and is \(\mathcal O(h^{-1})\) as \(h\downarrow0\)
for fixed system matrices.

\subsubsection{The norm and condition number of the exact matrix}
\label{app:norm-bounds-L}

\begin{theorem}[Norm bounds for the exact shooting matrix]
\label{thm:norm-bounds-L}
For \(m\geq2\), the exact shooting matrix satisfies
\begin{equation}\label{eq:L-norm-two-sided-bound}
\begin{aligned}
\|L_{\rm ex}\|
&\geq\max\{\|\Phi_{\rm ex}\|,\sqrt{2+\|G\|^2}\},\\
\|L_{\rm ex}\|
&\leq1+\max\{\|\Phi_{\rm ex}\|,\sqrt{1+\|G\|^2}\}.
\end{aligned}
\end{equation}
\end{theorem}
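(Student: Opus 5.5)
The plan is to split \(L_{\rm ex}\) into two pieces whose norms are explicit; this gives the upper bound. For the lower bound, I will restrict \(L_{\rm ex}\) to suitable block columns and block entries, using that the norm of a submatrix never exceeds the norm of the whole matrix.

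\emph{Upper bound.} Write \(L_{\rm ex}=\mathcal E+\mathcal N\), where
\[
\mathcal E=E_{00}\otimes B_0+\sum_{k=1}^{m-1}E_{kk}\otimes I_{2n},
\qquad
\mathcal N=E_{0,m-1}\otimes B_T-\sum_{k=1}^{m-1}E_{k,k-1}\otimes\Phi_{\rm ex}.
\]
The matrix \(\mathcal E\) is block diagonal with blocks \(B_0\) and \(I_{2n}\). Since \(B_0\) is an orthogonal projection, \(\|\mathcal E\|=1\).

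In \(\mathcal N\), the nonzero blocks occupy pairwise distinct block rows, namely \(0,1,\ldots,m-1\). They also occupy pairwise distinct block columns, namely \(m-1,0,\ldots,m-2\). Thus \(\mathcal N\) is a block permutation matrix with blocks scaled by \(B_T\) and \(-\Phi_{\rm ex}\). Hence
\[
\|\mathcal N\|=\max\{\|B_T\|,\|\Phi_{\rm ex}\|\}.
\]
Indeed, \(\|\mathcal N x\|^2\) splits into a sum of terms \(\|X_i x_{\pi(i)}\|^2\), each bounded by \(\|X_i\|^2\|x_{\pi(i)}\|^2\).

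It remains to compute \(\|B_T\|\). We have
\[
B_TB_T^\top=\operatorname{diag}\bigl(0,\;I_n+GG^\top\bigr),
\]
so \(\|B_T\|=\sqrt{1+\|G\|^2}\). The triangle inequality then gives the stated upper bound.

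\emph{Lower bound.} First, the block \(-\Phi_{\rm ex}\) at position \((1,0)\) is a submatrix of \(L_{\rm ex}\). This exists because \(m\geq2\), and it gives \(\|L_{\rm ex}\|\geq\|\Phi_{\rm ex}\|\).

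Second, consider the last block column of \(L_{\rm ex}\). It contains \(B_T\) in block row \(0\) and \(I_{2n}\) in block row \(m-1\); when \(m=2\) these are distinct rows, since row \(1\) is the propagation row. The squared norm of this column is
\[
\|B_T^\top B_T+I_{2n}\|=\|B_T^\top B_T\|+1 .
\]
The matrices \(B_T^\top B_T\) and \(B_TB_T^\top\) share their nonzero eigenvalues. Therefore the largest eigenvalue of \(B_T^\top B_T\) is \(1+\|G\|^2\). This gives \(\|L_{\rm ex}\|\geq\sqrt{2+\|G\|^2}\).

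\emph{Main obstacle.} The only delicate point is the upper bound. A naive split, such as first block row versus the rest, would overlap in block column \(0\) and lose the sharp constant. Grouping \(B_0\) with the identity diagonal is what makes the remainder a block permutation matrix. The degenerate case \(m=2\) needs a brief check: there the block column \(m-1\) coincides with the diagonal identity of row \(1\), and the structure above still holds.
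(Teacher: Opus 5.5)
Your proof is correct and takes essentially the same approach as the paper. It uses the same splitting \(L_{\rm ex}=L_{\rm diag}+L_{\rm off}\), with \(\|L_{\rm diag}\|=1\), the off-diagonal part bounded blockwise, and \(\|B_T\|^2=1+\|G\|^2\); for the lower bounds it uses the same test vectors supported on the first block and on the last block column, which give \(\|\Phi_{\rm ex}\|\) and \(\sqrt{2+\|G\|^2}\).
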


\begin{proof}
Decompose \(L_{\rm ex}=L_{\rm diag}+L_{\rm off}\), where
\[
\begin{aligned}
L_{\rm diag}
&=E_{00}\otimes B_0
+\sum_{k=1}^{m-1}E_{kk}\otimes I_{2n},\\
L_{\rm off}
&=E_{0,m-1}\otimes B_T
-\sum_{k=1}^{m-1}E_{k,k-1}\otimes\Phi_{\rm ex}.
\end{aligned}
\]
We have \(\|L_{\rm diag}\|=1\). For an arbitrary vector
\(a=(a_0^\top,\ldots,a_{m-1}^\top)^\top\),
\[
\|L_{\rm off}a\|^2
=\|B_Ta_{m-1}\|^2+
\sum_{k=1}^{m-1}\|\Phi_{\rm ex}a_{k-1}\|^2.
\]
Since \(B_T=\begin{bmatrix}0&0\\-G&I_n\end{bmatrix}\),
\[
\|B_T\|^2=\|GG^\top+I_n\|=1+\|G\|^2.
\]
It follows that
\[
\|L_{\rm off}a\|^2
\leq
\max\{\|\Phi_{\rm ex}\|^2,1+\|G\|^2\}\|a\|^2.
\]
The triangle inequality now proves the upper bound.

For the first lower bound, take \(a\) with only its last block
nonzero. The boundary row and the last matching row give
\[
\|L_{\rm ex}a\|^2
=\|B_Ta_{m-1}\|^2+\|a_{m-1}\|^2.
\]
Choosing \(a_{m-1}\) as a unit right singular vector for
\(\|B_T\|\) yields
\(\|L_{\rm ex}\|\geq\sqrt{2+\|G\|^2}\).
Next, take \(a\) with only its first block nonzero.
The first matching row is \(-\Phi_{\rm ex}a_0\), so
\[
\|L_{\rm ex}a\|\geq\|\Phi_{\rm ex}a_0\|.
\]
Taking the supremum over unit \(a_0\) gives
\(\|L_{\rm ex}\|\geq\|\Phi_{\rm ex}\|\).
\end{proof}

Combining Theorems~\ref{thm:inverse-bound-L}
and~\ref{thm:norm-bounds-L}, with
\(\Phi_{\rm ex}=e^{hH}\), proves
\eqref{eq:condition-number-scaling} in
Theorem~\ref{thm:main-conditioning}.
For fixed system matrices,
\begin{equation}\label{eq:condition-number-order}
\kappa(L_{\rm ex})=\mathcal O(h^{-1}),
\qquad h\downarrow0,
\end{equation}
uniformly in \(T\). At a fixed step size \(h>0\), the same
estimate \eqref{eq:condition-number-scaling} is independent
of \(T\).

\subsection{Propagator perturbation and solution error}
\label{app:propagator-perturbation}

Let \(L\) be obtained from \(L_{\rm ex}\) by replacing
\(\Phi_{\rm ex}\) with \(\Phi\), and set
\[
\Delta_h:=\|\Phi-\Phi_{\rm ex}\|,
\qquad E:=L-L_{\rm ex}.
\]
The boundary row is unchanged, so
\[
E=-S_m\otimes(\Phi-\Phi_{\rm ex}),
\qquad
S_m=\sum_{k=1}^{m-1}E_{k,k-1}.
\]
Since \(\|S_m\|=1\) for \(m\geq2\),
\(\|E\|=\Delta_h\), as stated in
\eqref{eq:shooting-matrix-perturbation}. In this subsection,
write \(\theta:=\Delta_h\|L_{\rm ex}^{-1}\|\).

\subsubsection{Conditioning under propagator perturbation}
\label{app:approx-conditioning-proof}

\begin{theorem}[Conditioning under propagator perturbation]
\label{thm:approx-shooting-conditioning}
Suppose that \(L_{\rm ex}\) is nonsingular and \(\theta<1\).
Then \(L\) is nonsingular and
\begin{equation}\label{eq:approx-inverse-bound}
\|L^{-1}\|
\leq\frac{\|L_{\rm ex}^{-1}\|}{1-\theta}.
\end{equation}
Moreover,
\begin{equation}\label{eq:approx-kappa-bound}
\kappa(L)
\leq
\frac{(\|L_{\rm ex}\|+\Delta_h)\|L_{\rm ex}^{-1}\|}
{1-\theta}.
\end{equation}
If \(\theta\leq1/2\), then
\[
\|L^{-1}\|\leq2\|L_{\rm ex}^{-1}\|,
\qquad
\kappa(L)\leq3\kappa(L_{\rm ex}).
\]
\end{theorem}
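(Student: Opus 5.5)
We treat \(L\) as a perturbation of \(L_{\rm ex}\) and use the standard Neumann-series argument. The only structural input is the identity \(E=L-L_{\rm ex}=-S_m\otimes(\Phi-\Phi_{\rm ex})\) with \(\|E\|=\Delta_h\), established just above. First factor
\[
L=L_{\rm ex}+E=L_{\rm ex}\bigl(I+L_{\rm ex}^{-1}E\bigr).
\]
Since \(\|L_{\rm ex}^{-1}E\|\leq\|L_{\rm ex}^{-1}\|\Delta_h=\theta<1\), the Neumann series \(\sum_{j\geq0}(-L_{\rm ex}^{-1}E)^j\) converges. Hence \(I+L_{\rm ex}^{-1}E\) is invertible with
\[
\bigl\|(I+L_{\rm ex}^{-1}E)^{-1}\bigr\|\leq\frac{1}{1-\theta}.
\]
Because \(L_{\rm ex}\) is nonsingular by hypothesis, \(L\) is nonsingular. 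Its inverse is \(L^{-1}=(I+L_{\rm ex}^{-1}E)^{-1}L_{\rm ex}^{-1}\), and submultiplicativity gives \eqref{eq:approx-inverse-bound}.

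For \eqref{eq:approx-kappa-bound}, the triangle inequality gives \(\|L\|\leq\|L_{\rm ex}\|+\|E\|=\|L_{\rm ex}\|+\Delta_h\). Multiplying this by \eqref{eq:approx-inverse-bound} yields the stated bound on \(\kappa(L)=\|L\|\|L^{-1}\|\).

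The case \(\theta\leq1/2\) follows directly. We have \(1/(1-\theta)\leq2\), which gives \(\|L^{-1}\|\leq2\|L_{\rm ex}^{-1}\|\). For the condition number, write
\[
\kappa(L)\leq2\bigl(\|L_{\rm ex}\|\|L_{\rm ex}^{-1}\|+\theta\bigr)\leq2\kappa(L_{\rm ex})+1.
\]
It remains to absorb the additive \(1\). Here we use \(\kappa(L_{\rm ex})\geq1\), which holds for any nonsingular matrix since \(1=\|I\|\leq\|L_{\rm ex}\|\|L_{\rm ex}^{-1}\|\). This gives \(2\kappa(L_{\rm ex})+1\leq3\kappa(L_{\rm ex})\).

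There is no genuine obstacle. The only point to watch is this last step, where the additive term \(2\theta\leq1\) must be absorbed using \(\kappa(L_{\rm ex})\geq1\).
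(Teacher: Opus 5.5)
Your proposal is correct and follows the paper's proof essentially step for step. Both use the factorization \(L=L_{\rm ex}(I+L_{\rm ex}^{-1}E)\) with the Neumann-series bound, then the triangle inequality \(\|L\|\leq\|L_{\rm ex}\|+\Delta_h\), and finally absorb the additive term \(2\theta\leq1\) into \(3\kappa(L_{\rm ex})\) using \(\kappa(L_{\rm ex})\geq1\).
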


\begin{proof}
Factor \(L\) as
\begin{equation}\label{eq:tilde-L-another-form}
L=L_{\rm ex}(I+L_{\rm ex}^{-1}E).
\end{equation}
Since \(\|L_{\rm ex}^{-1}E\|\leq\theta<1\), the Neumann series \cite[Chapter 17]{Lax2002FunctionalAnalysis}
\begin{equation}\label{eq:Neumann-series}
(I+L_{\rm ex}^{-1}E)^{-1}
=\sum_{j=0}^{\infty}(-L_{\rm ex}^{-1}E)^j
\end{equation}
converges in operator norm and satisfies
\[
\|(I+L_{\rm ex}^{-1}E)^{-1}\|
\leq\sum_{j=0}^{\infty}\theta^j
=\frac{1}{1-\theta}.
\]
Thus \(L\) is nonsingular, and
\[
L^{-1}=(I+L_{\rm ex}^{-1}E)^{-1}L_{\rm ex}^{-1}
\]
proves \eqref{eq:approx-inverse-bound}.
Combining this with
\(\|L\|\leq\|L_{\rm ex}\|+\Delta_h\)
gives \eqref{eq:approx-kappa-bound}.

When \(\theta\leq1/2\),
\[
\kappa(L)
\leq2\bigl(\kappa(L_{\rm ex})+\theta\bigr)
\leq2\kappa(L_{\rm ex})+1
\leq3\kappa(L_{\rm ex}),
\]
where the last step uses \(\kappa(L_{\rm ex})\geq1\).
\end{proof}

\subsubsection{Error in the nodal solution}
\label{app:global-error}

\begin{theorem}[Global nodal error]
\label{thm:global-error}
Under Assumption~\ref{ass:sta&det}, suppose that \(\theta<1\),
\(L_{\rm ex}q_{\rm ex}=b\), and \(Lq=b\), with
\(b\) defined in \eqref{eq:def-b}. Then
\[
\|q_{\rm ex}-q\|
\leq
\frac{\theta}{1-\theta}\,
M_c
\sqrt{(1+M_P^2)
\frac{1-e^{-2\alpha_chm}}{1-e^{-2\alpha_ch}}}\,
\|x^0\|.
\]
If \(\theta\leq1/2\), this becomes
\[
\|q_{\rm ex}-q\|
\leq
2\Delta_h\|L_{\rm ex}^{-1}\|M_c
\sqrt{(1+M_P^2)
\frac{1-e^{-2\alpha_chm}}{1-e^{-2\alpha_ch}}}\,
\|x^0\|.
\]
\end{theorem}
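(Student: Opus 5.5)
We start from the identity \(q_{\rm ex}-q=L^{-1}(L-L_{\rm ex})q_{\rm ex}\), which follows from \(L_{\rm ex}q_{\rm ex}=b=Lq\). Theorem~\ref{thm:approx-shooting-conditioning} gives \(\|L^{-1}\|\leq\|L_{\rm ex}^{-1}\|/(1-\theta)\), and \eqref{eq:shooting-matrix-perturbation} gives \(\|L-L_{\rm ex}\|=\Delta_h\). Together these yield
\[
\|q_{\rm ex}-q\|\leq\frac{\theta}{1-\theta}\|q_{\rm ex}\|.
\]
The improved statement therefore reduces to a sharper bound on \(\|q_{\rm ex}\|\) itself. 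We will exploit the fact that the exact nodal solution is the optimal trajectory, not an arbitrary vector, instead of using the crude estimate \(\|q_{\rm ex}\|\leq\|L_{\rm ex}^{-1}\|\|b\|\).

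To bound \(\|q_{\rm ex}\|\), we use the structure established after Theorem~\ref{thm:riccati-lyapunov-decoupling}. For the LQR boundary conditions we have \(z\equiv0\), so \(\lambda(t)=P(t)x(t)\) and \(x(t)=\Psi(t,0)x^0\) by \eqref{eq:close-loop-IVP}. Hence
\[
\|y_k\|^2=\|x(t_k)\|^2+\|P(t_k)x(t_k)\|^2\leq(1+M_P^2)\|x(t_k)\|^2,
\]
where we use Proposition~\ref{prop:horizon-uniform-riccati-bound}. By Proposition~\ref{prop:finite-horizon-closed-loop-stability}, \(\|x(t_k)\|\leq M_c e^{-\alpha_c kh}\|x^0\|\). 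Summing over \(k=0,\ldots,m-1\) gives a geometric series:
\[
\|q_{\rm ex}\|^2\leq M_c^2(1+M_P^2)\|x^0\|^2\sum_{k=0}^{m-1}e^{-2\alpha_c kh}
=M_c^2(1+M_P^2)\frac{1-e^{-2\alpha_chm}}{1-e^{-2\alpha_ch}}\|x^0\|^2.
\]
Taking square roots and inserting this into the perturbation identity gives the first claimed inequality.

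For the second statement, \(\theta\leq1/2\) implies \(1/(1-\theta)\leq2\). Substituting \(\theta=\Delta_h\|L_{\rm ex}^{-1}\|\) then gives the stated form.

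The only point needing care is justifying that \(q_{\rm ex}\) is exactly the vector of nodal values of the optimal trajectory. This holds because \(L_{\rm ex}\) is nonsingular by Theorem~\ref{thm:inverse-bound-L}, and the true nodal values satisfy \(L_{\rm ex}q_{\rm ex}=b\). The solution \(z\equiv0\) is the unique solution of the homogeneous terminal-value problem, so \(\lambda=Px\) holds at every node. Beyond this, the argument is routine, and we do not expect a genuine obstacle.
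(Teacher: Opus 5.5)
Your proposal is correct and follows essentially the same route as the paper: the perturbation identity \(q_{\rm ex}-q=L^{-1}(L-L_{\rm ex})q_{\rm ex}\) with Theorem~\ref{thm:approx-shooting-conditioning} gives the factor \(\theta/(1-\theta)\). The norm \(\|q_{\rm ex}\|\) is then bounded, exactly as in the paper, via \(z\equiv0\), \(\lambda=Px\), \(x(t)=\Psi(t,0)x^0\) and the geometric sum over nodes.
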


\begin{proof}
Theorem~\ref{thm:approx-shooting-conditioning} ensures that
\(L\) is invertible. Subtracting the two systems gives
\[
q_{\rm ex}-q=L^{-1}(L-L_{\rm ex})q_{\rm ex}.
\]
Consequently,
\begin{equation}\label{eq:appendix-q-qhat-first-bound}
\|q_{\rm ex}-q\|
\leq\|L^{-1}\|\Delta_h\|q_{\rm ex}\|
\leq\frac{\theta}{1-\theta}\|q_{\rm ex}\|.
\end{equation}

For the LQR boundary data, the decoupling theorem gives
\(z(t)=0\), and hence
\[
x(t)=\Psi(t,0)x^0,\qquad
\lambda(t)=P(t)\Psi(t,0)x^0.
\]
At each node,
\[
\begin{aligned}
\|y_k\|^2
&=\|x(t_k)\|^2+\|\lambda(t_k)\|^2\\
&\leq(1+M_P^2)M_c^2e^{-2\alpha_ct_k}\|x^0\|^2.
\end{aligned}
\]
Summing over all nodes yields
\[
\begin{aligned}
\|q_{\rm ex}\|^2
&\leq(1+M_P^2)M_c^2
\sum_{k=0}^{m-1}e^{-2\alpha_ckh}\|x^0\|^2\\
&=(1+M_P^2)M_c^2
\frac{1-e^{-2\alpha_chm}}{1-e^{-2\alpha_ch}}
\|x^0\|^2.
\end{aligned}
\]
Substitution into \eqref{eq:appendix-q-qhat-first-bound}
proves the first estimate. The second follows from
\((1-\theta)^{-1}\leq2\) when \(\theta\leq1/2\).
\end{proof}

For the error allocation in Section~\ref{subsec:gauss-perturbation},
the simpler estimate
\[
\|q_{\rm ex}\|
\leq\|L_{\rm ex}^{-1}\|\|b\|
=\|L_{\rm ex}^{-1}\|\|x^0\|
\]
is sufficient. Inserting it into
\eqref{eq:appendix-q-qhat-first-bound} gives, for
\(\theta\leq1/2\),
\[
\|q_{\rm ex}-q\|
\leq2\Delta_h\|L_{\rm ex}^{-1}\|^2\|x^0\|.
\]

\section{Gauss--Legendre discretization and one-step error}
\label{app:gauss-discretization}
\label{app:runge-kutta approximation}

This appendix gives the coefficient bounds and one-step error estimate
used in Section~\ref{subsec:gauss-perturbation}. Throughout,
\(\|\cdot\|\) denotes the Euclidean vector norm or the induced matrix
norm, and \(\|\cdot\|_{\max}\) denotes the largest absolute matrix entry.

\subsection{Quadrature weights}
\label{app:gauss-weights}

Let \(0<c_0<\cdots<c_{p-1}<1\) be the \(p\) Gauss--Legendre nodes
on \([0,1]\), and let
\begin{equation}\label{eq:def-Lagrange-basis}
\ell_j(s)
=
\prod_{\substack{0\leq k<p\\k\ne j}}
\frac{s-c_k}{c_j-c_k},
\qquad 0\leq j<p.
\end{equation}
Recall from \eqref{eq:def-A-r} that
\[
\mathcal A_{ij}=\int_0^{c_i}\ell_j(s)\,\mathrm ds,
\qquad
r_j=\int_0^1\ell_j(s)\,\mathrm ds.
\]
Write \(\mathbb P_d\) for the space of real polynomials of degree
at most \(d\). The Gauss quadrature rule is exact on
\(\mathbb P_{2p-1}\):
\begin{equation}\label{eq:gauss-quadrature-exactness}
\int_0^1v(s)\,\mathrm ds
=
\sum_{j=0}^{p-1}r_jv(c_j),
\qquad v\in\mathbb P_{2p-1};
\end{equation}
see \cite[\S3.5(v)]{NIST:DLMF}.

\begin{lemma}[Uniform bounds for the quadrature weights]
\label{lem:uniform-bou-pr}
For every \(p\geq1\), the weights satisfy
\[
r_j>0,\qquad
\sum_{j=0}^{p-1}r_j=1,\qquad
pr_j\leq\frac{4\pi}{3},
\qquad 0\leq j<p.
\]
Consequently,
\[
\frac{1}{\sqrt p}
\leq\|r\|
\leq\sqrt{\frac{4\pi}{3p}}.
\]
\end{lemma}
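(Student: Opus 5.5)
Positivity and the unit sum are the standard facts about Gauss--Legendre quadrature. Taking \(v\equiv1\) in \eqref{eq:gauss-quadrature-exactness} gives \(\sum_j r_j=1\). Taking \(v=\ell_j^2\in\mathbb P_{2p-2}\subseteq\mathbb P_{2p-1}\), and using \(\ell_j(c_k)=\delta_{jk}\), gives \(r_j=\int_0^1\ell_j(s)^2\,\mathrm ds>0\). Once the pointwise bound \(pr_j\le4\pi/3\) is available, the norm bounds follow at once. Cauchy--Schwarz gives \(1=\sum_j r_j\le\sqrt p\,\|r\|\), hence \(\|r\|\ge1/\sqrt p\). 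For the upper bound, \(\|r\|^2=\sum_j r_j^2\le\max_j r_j\sum_j r_j\le 4\pi/(3p)\).

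The substance is therefore the uniform bound \(pr_j\le4\pi/3\). I would rescale to \([-1,1]\), where the classical weights are \(w_j=2r_j\) at nodes \(x_j=\cos\theta_j\). The target becomes \(w_j\le 8\pi/(3p)\). I would obtain this from the extremal characterization
\[
w_j=\min\Bigl\{\int_{-1}^1 v(x)^2\,\mathrm dx:\ v\in\mathbb P_{p-1},\ v(x_j)=1\Bigr\},
\]
which follows from the same exactness argument: any admissible \(v\) gives \(\int v^2=\sum_k w_kv(x_k)^2\ge w_j\), with equality for \(v=\ell_j\). It then suffices to exhibit, for each node, a polynomial \(v_j\) of degree at most \(p-1\) with \(v_j(x_j)=1\) and \(\int v_j^2\le 8\pi/(3p)\). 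This is presumably the ``function \(v_j\)'' mentioned in the acknowledgements.

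My first choice for \(v_j\) would be a normalized squared Fejér-type kernel in the angle variable. Concretely, with \(q=\lfloor (p-1)/2\rfloor\), set
\[
v_j(\cos\theta)=\Bigl(\frac{1}{q+1}\sum_{k=0}^{q}\cos k(\theta-\theta_j)\Bigr)^2
\]
or a symmetrized version in \(\pm\theta_j\). The symmetrization is needed so that the expression is an even trigonometric polynomial of degree at most \(p-1\), and hence a polynomial in \(x=\cos\theta\) of degree at most \(p-1\); it also gives \(v_j(x_j)=1\). Then I would bound
\[
\int_{-1}^1 v_j^2\,\mathrm dx=\int_0^\pi v_j(\cos\theta)^2\sin\theta\,\mathrm d\theta\le\int_0^\pi v_j(\cos\theta)^2\,\mathrm d\theta
\]
using \(\sin\theta\le1\), and evaluate the right-hand side by Parseval. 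For the Fejér kernel this gives an \(L^2\) norm of order \(1/q\approx 2/p\) times an explicit constant. The aim is to tune the kernel so that the constant comes out at most \(8\pi/3\).

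I expect the main obstacle to be this constant bookkeeping. It includes the small-\(p\) cases (\(p=1,2\), where \(q=0\) and one checks directly that \(pr_j\le1<4\pi/3\)), the degree parity, and the symmetrization cross term. If the Fejér choice does not reach \(4\pi/3\), my fallback is the classical bound \(w_j\le \pi\sqrt{1-x_j^2}/p\cdot(1+o(1))\), which follows from Markov--Stieltjes separation of Gauss nodes, \(\theta_{j+1}-\theta_j\le 2\pi/(2p+1)\), combined with the Chebyshev--Markov--Stieltjes inequality \(w_j\le\int_{x_{j-1}}^{x_{j+1}}\mathrm dx\). That route gives \(w_j\le 2\sin\bigl(2\pi/(2p+1)\bigr)\le 4\pi/(2p+1)\), hence \(pr_j\le \pi p/(2p+1)<\pi/2\le4\pi/3\), and it may well be the cleaner proof.
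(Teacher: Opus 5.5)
Your argument is correct once you commit to the fallback, and that is a genuinely different route from the paper's. The Fejér plan is only an aim and, as written, does not work without repair. You already note that the unsymmetrized square is not even in $\theta$; the symmetrized one also fails to satisfy $v_j(x_j)=1$ until it is renormalized.

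The paper keeps your extremal characterization $r_j=\min\{\int_0^1v^2:\ v\in\mathbb P_{p-1},\ v(c_j)=1\}$. Its test function is the unsquared, full-degree Chebyshev kernel $v_j(s)=\bigl(1+2\sum_{k=1}^{p-1}\cos(k\theta_j)T_k(2s-1)\bigr)/S_j$, with $\theta_j=\arccos(2c_j-1)$ and $S_j=1+2\sum_{k=1}^{p-1}\cos^2(k\theta_j)$. The extremal problem requires only $v_j(c_j)=1$, not $v_j\ge0$, so squaring a kernel just wastes half the available degree. After the same step $\sin\theta\le1$, orthogonality of the cosines gives exactly $r_j\le\pi/(2S_j)$. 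The closed form $S_j=\frac{2p-1}{2}+\frac{\sin((2p-1)\theta_j)}{2\sin\theta_j}$ then yields $S_j\ge(2p-1)/4$. Either the last term is nonnegative, or (since $2p-1$ is odd) $\theta_j$ lies more than $\pi/(2p-1)$ from both endpoints, so $\sin\theta_j\ge2/(2p-1)$. Hence $pr_j\le2\pi p/(2p-1)\le4\pi/3$.

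The paper's proof is self-contained and uses nothing about where the nodes lie; it bounds the Legendre Christoffel function at an arbitrary point by the Chebyshev one. Your fallback instead imports two classical facts: the Chebyshev--Markov--Stieltjes inequality $w_j<x_{j+1}-x_{j-1}$, and an angular spacing bound for Legendre zeros. In return it is shorter and gives a better constant. (The Fejér kernel does appear in the paper, but later, in constructing $g_a$ for the bound on $\mathcal A$.)

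Three details of the fallback need fixing.

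First, the spacing bound $\pi/(p+\frac12)=2\pi/(2p+1)$ between consecutive zeros in $\theta$ is not a Markov--Stieltjes fact. It comes from Sturm comparison for $u(\theta)=\sqrt{\sin\theta}\,P_p(\cos\theta)$, which satisfies $u''+\bigl[(p+\frac12)^2+\frac{1}{4\sin^2\theta}\bigr]u=0$. You also need it for the gaps to $\theta=0$ and $\theta=\pi$, because the outermost weights are bounded with $\pm1$ as neighbours. The same argument, or Bruns' inequality, supplies this.

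Second, use $|\cos a-\cos b|\le|a-b|$ directly. Your intermediate bound $w_j\le2\sin(2\pi/(2p+1))$ relies on monotonicity of $\sin$ beyond $\pi/2$ and is false for $p=1$, where $w_0=2>\sqrt3$.

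Third, $w_j\le4\pi/(2p+1)$ gives $r_j=w_j/2\le2\pi/(2p+1)$, hence $pr_j\le2\pi p/(2p+1)<\pi$, not $\pi p/(2p+1)<\pi/2$. The required bound $4\pi/3$ still follows.
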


\begin{proof}
Applying \eqref{eq:gauss-quadrature-exactness} to \(\ell_j^2\) gives
\begin{equation}\label{eq:r_j-another-expression2}
r_j=\int_0^1\ell_j(s)^2\,\mathrm ds>0.
\end{equation}
The identity \(\sum_jr_j=1\) follows by applying the same rule
to the constant polynomial. For any \(v\in\mathbb P_{p-1}\)
with \(v(c_j)=1\), exactness and positivity give
\begin{equation}\label{eq:r_j-inequality}
\int_0^1v(s)^2\,\mathrm ds
=
r_j+\sum_{i\ne j}r_iv(c_i)^2
\geq r_j.
\end{equation}
Equality holds for \(v=\ell_j\), so
\begin{equation}\label{eq:r_j-another-expression}
r_j
=
\min_{\substack{v\in\mathbb P_{p-1}\\v(c_j)=1}}
\int_0^1v(s)^2\,\mathrm ds.
\end{equation}

Suppose first that \(p\geq2\). Set
\[
\theta_j=\arccos(2c_j-1)\in(0,\pi),\qquad
S_j=1+2\sum_{k=1}^{p-1}\cos^2(k\theta_j).
\]
Using the Chebyshev polynomials \(T_k(\cos\theta)=\cos(k\theta)\),
define
\[
v_j(s)
=
\frac{1+2\sum_{k=1}^{p-1}\cos(k\theta_j)T_k(2s-1)}{S_j}.
\]
Then \(v_j\in\mathbb P_{p-1}\) and \(v_j(c_j)=1\). Changing
variables by \(s=(1+\cos\theta)/2\) in
\eqref{eq:r_j-another-expression}, we obtain
\[
\begin{aligned}
r_j
&\leq \int_0^1v_j(s)^2\,\mathrm ds\\
&=\frac{1}{2S_j^2}\int_0^\pi
\left(1+2\sum_{k=1}^{p-1}
\cos(k\theta_j)\cos(k\theta)\right)^2
\sin\theta\,\mathrm d\theta\\
&\leq\frac{1}{2S_j^2}\int_0^\pi
\left(1+2\sum_{k=1}^{p-1}
\cos(k\theta_j)\cos(k\theta)\right)^2
\,\mathrm d\theta
=\frac{\pi}{2S_j}.
\end{aligned}
\]
The last equality uses the orthogonality of the cosine functions.

The finite cosine sum gives
\[
S_j
=
p+\sum_{k=1}^{p-1}\cos(2k\theta_j)
=
\frac{2p-1}{2}
+\frac{\sin((2p-1)\theta_j)}{2\sin\theta_j}.
\]
If \(\sin((2p-1)\theta_j)\geq0\), then
\(S_j\geq(2p-1)/2\). Otherwise, let
\(\eta_j=\min\{\theta_j,\pi-\theta_j\}\). Since \(2p-1\) is odd,
\[
\sin((2p-1)\eta_j)
=
\sin((2p-1)\theta_j)<0,
\]
which implies \(\eta_j>\pi/(2p-1)\). As
\(0<\eta_j\leq\pi/2\), we have
\[
\sin\theta_j=\sin\eta_j
\geq\sin\frac{\pi}{2p-1}
\geq\frac{2}{2p-1}.
\]
It follows that
\[
S_j\geq\frac{2p-1}{2}-\frac{2p-1}{4}
=\frac{2p-1}{4}.
\]
This bound holds in both cases. Therefore
\[
pr_j\leq\frac{2\pi p}{2p-1}\leq\frac{4\pi}{3}.
\]
For \(p=1\), we have \(r_0=1\), so the same bound holds.

Finally, Cauchy--Schwarz gives
\(1=\sum_jr_j\leq\sqrt p\,\|r\|\), while
\[
\|r\|^2
=
\sum_{j=0}^{p-1}r_j^2
\leq\left(\max_jr_j\right)\sum_{j=0}^{p-1}r_j
\leq\frac{4\pi}{3p}.
\]
\end{proof}

\subsection{Uniform bounds for the coefficient matrix}
\label{app:gauss-coefficient-bounds}

We first construct a bounded polynomial approximation of a step
function. This will be used to estimate the entries of \(\mathcal A\).

\begin{lemma}\label{lem:h_a-polynomial-approx}
Let \(a\in[0,1]\) and \(h_a(s)=\mathbf 1_{[0,a]}(s)\) on \([0,1]\).
For every \(p\geq2\), there exists \(g_a\in\mathbb P_{p-1}\) such that
\[
0\leq g_a(s)\leq1\quad(0\leq s\leq1),
\qquad
\|h_a-g_a\|_{L^2[0,1]}^2\leq\frac{3}{\pi p}.
\]
\end{lemma}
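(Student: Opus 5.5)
The plan is to pass to the Chebyshev angle variable and smooth the step function with a Fejér mean. Nonnegativity of the Fejér kernel gives the pointwise bounds \(0\leq g_a\leq1\) for free. The \(L^2\) error then comes from Parseval and the \(1/|k|\) decay of the Fourier coefficients of an indicator function.

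\emph{Step 1 (change of variables).} Substitute \(s=(1+\cos\theta)/2\) with \(\theta\in[0,\pi]\), as in the proof of Lemma~\ref{lem:uniform-bou-pr}, and set \(\theta_a=\arccos(2a-1)\). Then \(s\leq a\) if and only if \(\theta\geq\theta_a\). Let \(f\) be the even \(2\pi\)-periodic extension of \(h_a((1+\cos\theta)/2)\). This is the indicator of the arc \(\{\theta_a\leq|\theta|\leq\pi\}\), which has at most two jumps. The endpoint cases \(a\in\{0,1\}\) give a constant up to a null set, so \(g_a\equiv h_a\) a.e.\ works there.

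\emph{Step 2 (Fejér smoothing).} Let \(N=p-1\geq1\) and define \(\sigma_Nf=F_N*f\), where \(F_N\) is the Fejér kernel of order \(N\). Since \(F_N\geq0\) and \(\frac1{2\pi}\int F_N=1\), we get \(0\leq\sigma_Nf\leq1\). Because \(f\) is even, \(\sigma_Nf\) is a cosine polynomial of degree at most \(N\). Hence \(\sigma_Nf(\theta)=g_a(\cos\text{-variable})\) for some \(g_a\in\mathbb P_{p-1}\) in \(s\), via \(\cos(k\theta)=T_k(2s-1)\), and \(0\leq g_a\leq1\) on \([0,1]\).

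\emph{Step 3 (error bound).} Using \(\mathrm ds=\tfrac12\sin\theta\,\mathrm d\theta\leq\tfrac12\mathrm d\theta\) and evenness,
\[
\|h_a-g_a\|_{L^2[0,1]}^2\leq\frac14\int_{-\pi}^{\pi}|f-\sigma_Nf|^2\,\mathrm d\theta
=\frac{\pi}{2}\sum_{k\in\mathbb Z}\mu_k^2|\hat f_k|^2 .
\]
Here \(\hat f_k=\frac1{2\pi}\int f e^{-ik\theta}\), with \(\mu_k=|k|/(N+1)\) for \(|k|\leq N\) and \(\mu_k=1\) otherwise. Integrating the exponential over an arc gives \(|\hat f_k|\leq1/(\pi|k|)\) for \(k\neq0\), and \(\mu_0=0\). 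The low-frequency part is then at most
\[
\frac{2N}{\pi^2(N+1)^2}\leq\frac{2}{\pi^2(N+1)},
\]
and the tail is at most \(\frac{2}{\pi^2}\sum_{k>N}k^{-2}\leq\frac{2}{\pi^2N}\). Altogether,
\[
\|h_a-g_a\|^2\leq\frac1\pi\Bigl(\frac1p+\frac1{p-1}\Bigr)\leq\frac{3}{\pi p},
\]
where the last inequality is equivalent to \(p/(p-1)\leq2\), i.e.\ \(p\geq2\).

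The main point to get right is the constant bookkeeping in Step 3: the Parseval normalization, the factor \(\tfrac14\) from the change of variables, and the coefficient bound \(|\hat f_k|\leq1/(\pi|k|)\). These together must produce exactly \(3/(\pi p)\). If the constant comes out slightly too large, I would first try using \(\sin\theta\leq1\) more carefully, or splitting the tail sum at \(N+1\).
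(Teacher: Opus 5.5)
Your proposal is correct and follows essentially the same route as the paper: the substitution $s=(1+\cos\theta)/2$, Fej\'er smoothing of the even indicator $\mathbf 1_{[\alpha,\pi]}(|\theta|)$, where positivity of the kernel gives $0\leq g_a\leq1$, and Parseval with $|\hat f(k)|\leq1/(\pi|k|)$. Your final estimate $\frac1\pi\bigl(\frac1p+\frac1{p-1}\bigr)\leq\frac{3}{\pi p}$ matches the paper's $\frac{\pi}{2}\cdot\frac{6}{\pi^2 p}$, and your separate treatment of $a\in\{0,1\}$ is harmless but unnecessary, since the general argument already covers those cases.
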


\begin{proof}
Consider the Fej\'er kernel
\begin{equation}\label{eq:Fejer-kernel}
F_{p-1}(\theta)
=
\frac1p\left(\frac{\sin(p\theta/2)}{\sin(\theta/2)}\right)^2
=
\sum_{|k|<p}\left(1-\frac{|k|}{p}\right)e^{ik\theta},
\end{equation}
with its values at multiples of \(2\pi\) defined by continuity.
This kernel is even, nonnegative, and satisfies
\((2\pi)^{-1}\int_{-\pi}^{\pi}F_{p-1}(\theta)\,\mathrm d\theta=1\);
see \cite[\S1.15(iii)]{NIST:DLMF}.

Set \(\alpha=\arccos(2a-1)\in(0,\pi)\), and define the even \(2\pi\)-periodic
function
\[
f_\alpha(\theta)=\mathbf 1_{[\alpha,\pi]}(|\theta|),
\qquad -\pi\leq\theta\leq\pi.
\]
Then \(h_a(s)=f_\alpha(\arccos(2s-1))\). Let
\begin{equation}\label{eq:def-g-alpha}
J_\alpha(\theta)
=
\frac1{2\pi}\int_{-\pi}^{\pi}
F_{p-1}(\theta-t)f_\alpha(t)\,\mathrm dt.
\end{equation}
Since \(F_{p-1}/(2\pi)\) is nonnegative and has unit integral, it follows that \(0\leq J_\alpha\leq 1\).
The Fourier coefficients of \(f_\alpha\) are
\[
\widehat f_\alpha(k)
=
\frac1{2\pi}\int_{-\pi}^{\pi}f_\alpha(t)e^{-ikt}\,\mathrm dt
=
\begin{cases}
1-\alpha/\pi,&k=0,\\[2pt]
-\sin(k\alpha)/(\pi k),&k\ne0.
\end{cases}
\]
Substituting \eqref{eq:Fejer-kernel} into \eqref{eq:def-g-alpha} then gives
\begin{equation}\label{eq:J-alpha-Fourier-series}
\begin{aligned}
J_\alpha(\theta)
&=\sum_{|k|<p}
\left(1-\frac{|k|}{p}\right)\widehat f_\alpha(k)e^{ik\theta}\\
&=\widehat f_\alpha(0)
+2\sum_{k=1}^{p-1}
\left(1-\frac{k}{p}\right)\widehat f_\alpha(k)\cos(k\theta).
\end{aligned}
\end{equation}
Hence
\[
g_a(s):=J_\alpha(\arccos(2s-1))
=
\widehat f_\alpha(0)
+2\sum_{k=1}^{p-1}
\left(1-\frac{k}{p}\right)\widehat f_\alpha(k)T_k(2s-1)
\]
belongs to \(\mathbb P_{p-1}\) and takes values in \([0,1]\).

For \(1\leq|k|<p\), the \(k\)th Fourier coefficient of
\(f_\alpha-J_\alpha\) is
\((|k|/p)\widehat f_\alpha(k)\). For \(|k|\geq p\), it is
\(\widehat f_\alpha(k)\), and the zeroth coefficient vanishes.
Parseval's identity therefore gives
\[
\begin{aligned}
\frac1{2\pi}\int_{-\pi}^{\pi}
|f_\alpha(\theta)-J_\alpha(\theta)|^2\,\mathrm d\theta
&=
2\sum_{k=1}^{p-1}\frac{k^2}{p^2}|\widehat f_\alpha(k)|^2
+2\sum_{k=p}^{\infty}|\widehat f_\alpha(k)|^2\\
&\leq\frac{2}{\pi^2}
\left(\frac{p-1}{p^2}+\sum_{k=p}^{\infty}\frac1{k^2}\right)
\leq\frac{6}{\pi^2p}.
\end{aligned}
\]
Here we used \(|\widehat f_\alpha(k)|\leq1/(\pi|k|)\) and
\[
\sum_{k=p}^{\infty}\frac1{k^2}
\leq\int_{p-1}^{\infty}\frac{\mathrm dt}{t^2}
=\frac1{p-1}\leq\frac2p.
\]
Changing variables back to \(s\), and using evenness, yields
\[
\begin{aligned}
\|h_a-g_a\|_{L^2[0,1]}^2
&=\frac12\int_0^\pi
|f_\alpha(\theta)-J_\alpha(\theta)|^2
\sin\theta\,\mathrm d\theta\\
&\leq\frac14\int_{-\pi}^{\pi}
|f_\alpha(\theta)-J_\alpha(\theta)|^2\,\mathrm d\theta
\leq\frac{3}{\pi p}.
\end{aligned}
\]
\end{proof}

\begin{lemma}\label{lem:mathcal-A-norm-bound}
For every \(p\geq1\),
\[
p\|\mathcal A\|_{\max}\geq\frac12,
\qquad
\|\mathcal A\|
\leq p\|\mathcal A\|_{\max}
\leq\frac{4\pi}{3}+2
\leq2\pi.
\]
\end{lemma}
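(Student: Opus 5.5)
The plan is to prove the three inequalities separately. The lower bound comes from a row-sum identity, the inequality \(\|\mathcal A\|\le p\|\mathcal A\|_{\max}\) is generic, and the entrywise upper bound combines Gauss quadrature exactness with the bounded step-function approximation of Lemma~\ref{lem:h_a-polynomial-approx}. The case \(p=1\) is checked directly: there \(c_0=1/2\) and \(\ell_0\equiv1\), so \(\mathcal A=[1/2]\) and all claims hold. Assume \(p\ge2\) from now on.

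\emph{Lower bound.} Since \(\sum_j\ell_j\equiv1\), the rows of \(\mathcal A\) satisfy \(\sum_{j}\mathcal A_{ij}=c_i\). The Gauss nodes are symmetric about \(1/2\), so \(c_{p-1}\ge1/2\). Then
\[
\frac12\le c_{p-1}\le\sum_j|\mathcal A_{p-1,j}|\le p\|\mathcal A\|_{\max}.
\]
\emph{Norm versus max-entry.} For a \(p\times p\) matrix,
\[
\|\mathcal A\|\le\|\mathcal A\|_F\le p\|\mathcal A\|_{\max}.
\]

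\emph{Entrywise upper bound.} Fix \(i,j\) and set \(a=c_i\), so that \(\mathcal A_{ij}=\int_0^1h_a\ell_j\,\mathrm ds\). Take \(g_a\in\mathbb P_{p-1}\) from Lemma~\ref{lem:h_a-polynomial-approx} and split
\[
\mathcal A_{ij}=\int_0^1g_a\ell_j\,\mathrm ds+\int_0^1(h_a-g_a)\ell_j\,\mathrm ds.
\]
For the first term, the product \(g_a\ell_j\) has degree at most \(2p-2\). Quadrature exactness \eqref{eq:gauss-quadrature-exactness} and \(\ell_j(c_k)=\delta_{jk}\) therefore give that the term equals \(r_jg_a(c_j)\). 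Since \(0\le g_a\le1\), its absolute value is at most \(r_j\le4\pi/(3p)\) by Lemma~\ref{lem:uniform-bou-pr}.

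For the second term, Cauchy--Schwarz gives the bound \(\|h_a-g_a\|_{L^2}\|\ell_j\|_{L^2}\). By \eqref{eq:r_j-another-expression2}, \(\|\ell_j\|_{L^2}^2=r_j\). Hence the term is at most
\[
\sqrt{\frac{3}{\pi p}\cdot\frac{4\pi}{3p}}=\frac2p.
\]
Combining the two terms gives \(p|\mathcal A_{ij}|\le 4\pi/3+2\). Finally, \(4\pi/3+2\le2\pi\) is equivalent to \(2\le2\pi/3\), which holds.

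The only delicate point is the entrywise bound. A naive estimate of \(\int_0^{c_i}\ell_j\) degrades because the Lagrange polynomials are large in the sup norm. The step-function approximation sidesteps this: it converts the truncated integral into a quadrature sum plus an \(L^2\) remainder, and both pieces are controlled through \(r_j\).
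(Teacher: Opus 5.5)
Your proof is correct and follows the paper's argument: the same row-sum lower bound, the same split of $\mathcal A_{ij}$ using the bounded step-function approximant $g_{c_i}$ of Lemma~\ref{lem:h_a-polynomial-approx}, and the same Cauchy--Schwarz estimate, which gives $p|\mathcal A_{ij}|\le pr_j+\sqrt{3pr_j/\pi}\le 4\pi/3+2$. The only difference is that the paper writes $\ell_j=r_jK_{p-1}(\cdot,c_j)$ with the Legendre reproducing kernel and uses $\|K_{p-1}(\cdot,c_j)\|_{L^2}^2=1/r_j$; you instead apply Gauss exactness directly to $g_{c_i}\ell_j$ and use $\|\ell_j\|_{L^2}^2=r_j$, which yields the same bound with less machinery.
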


\begin{proof}
Since \(\sum_j\ell_j(s)=1\) and \(c_{p-1}\geq1/2\),
\[
p\|\mathcal A\|_{\max}
\geq\left|\sum_{j=0}^{p-1}\mathcal A_{p-1,j}\right|
=\int_0^{c_{p-1}}1\,\mathrm ds
=c_{p-1}\geq\frac12.
\]

For the upper bound, suppose first that \(p\geq2\). Let \(P_k\)
be the Legendre polynomial normalized by \(P_k(1)=1\), and set
\[
\varphi_k(s)=\sqrt{2k+1}\,P_k(2s-1),
\qquad
K_{p-1}(s,t)=\sum_{k=0}^{p-1}\varphi_k(s)\varphi_k(t).
\]
The polynomials \(\varphi_k\) are orthonormal in \(L^2[0,1]\).
Thus \(K_{p-1}\) is the reproducing kernel for \(\mathbb P_{p-1}\):
\[
v(s)
=
\sum_{k=0}^{p-1}\varphi_k(s)
\int_0^1\varphi_k(t)v(t)\,\mathrm dt
=
\int_0^1K_{p-1}(s,t)v(t)\,\mathrm dt,
\qquad v\in\mathbb P_{p-1}.
\]
See also \cite[\S18.2(v)]{NIST:DLMF}. Since
\(K_{p-1}(s,\cdot)\ell_j(\cdot)\in\mathbb P_{2p-2}\), exact
Gauss quadrature gives
\begin{equation}\label{eq:ell_j-kernel}
\begin{aligned}
\ell_j(s)
&=\int_0^1K_{p-1}(s,t)\ell_j(t)\,\mathrm dt\\
&=\sum_{i=0}^{p-1}r_iK_{p-1}(s,c_i)\ell_j(c_i)
=r_jK_{p-1}(s,c_j).
\end{aligned}
\end{equation}
Taking \(s=c_j\) shows that \(r_jK_{p-1}(c_j,c_j)=1\).
Orthonormality also gives
\begin{equation}\label{eq:kernel-integral}
\int_0^1K_{p-1}(s,c_j)^2\,\mathrm ds
=
\sum_{k=0}^{p-1}\varphi_k(c_j)^2
=
K_{p-1}(c_j,c_j)
=\frac1{r_j}.
\end{equation}

With \(h_a\) as in Lemma~\ref{lem:h_a-polynomial-approx} and \eqref{eq:ell_j-kernel},
\begin{equation}\label{eq:mathcal_A-equation}
\mathcal A_{ij}
=
\int_0^{c_i} \ell_j(s) \mathrm ds
=
r_j\int_0^1h_{c_i}(s)K_{p-1}(s,c_j)\,\mathrm ds.
\end{equation}
Choose the polynomial \(g_{c_i}\) from that lemma. The reproducing
property gives
\[
\mathcal A_{ij}
=
r_jg_{c_i}(c_j)
+
r_j\int_0^1
\bigl(h_{c_i}(s)-g_{c_i}(s)\bigr)K_{p-1}(s,c_j)\,\mathrm ds.
\]
By Cauchy--Schwarz and \eqref{eq:kernel-integral},
\[
|\mathcal A_{ij}|
\leq r_j
+r_j\|h_{c_i}-g_{c_i}\|_{L^2[0,1]}
\|K_{p-1}(\cdot,c_j)\|_{L^2[0,1]}
\leq r_j+\sqrt{\frac{3r_j}{\pi p}}.
\]
Lemma~\ref{lem:uniform-bou-pr} now yields
\[
p|\mathcal A_{ij}|
\leq pr_j+\sqrt{\frac{3pr_j}{\pi}}
\leq\frac{4\pi}{3}+2.
\]
For \(p=1\), \(\mathcal A_{00}=1/2\), so this bound also holds.
Finally,
\[
\|\mathcal A\|
\leq\|\mathcal A\|_{\mathrm F}
\leq p\|\mathcal A\|_{\max},
\]
which proves the result.
\end{proof}

In particular, the normalization used for the dense block-encoding,
\[
\alpha_{\mathcal A}
=
2^{\lceil\log_2p\rceil}\|\mathcal A\|_{\max},
\]
satisfies
\begin{equation}\label{eq:gauss-coefficient-normalization}
\frac12
\leq p\|\mathcal A\|_{\max}
\leq\alpha_{\mathcal A}
\leq2p\|\mathcal A\|_{\max}
\leq4\pi.
\end{equation}
Thus \(\alpha_{\mathcal A}=\Theta(1)\), uniformly in \(p\).

\subsection{The stage system and symplecticity}
\label{app:gauss-stage-system}

For \(y'=Hy\), write the stage derivatives as \(K_i y\), where
\(K_i\in\mathbb R^{2n\times2n}\) satisfy
\[
K_i
=
H\left(I_{2n}+h\sum_{j=0}^{p-1}\mathcal A_{ij}K_j\right),
\qquad 0\leq i<p.
\]
With \(\mathcal K=(K_0^\top,\ldots,K_{p-1}^\top)^\top\), these
equations become
\[
\mathcal G_h\mathcal K=\mathbf1_p\otimes H,
\qquad
\mathcal G_h=I_p\otimes I_{2n}-\mathcal A\otimes hH.
\]
If \(\mathcal G_h\) is invertible, the update
\(y\mapsto y+h\sum_i r_iK_i y\) gives exactly the propagator
\(\Phi\) in \eqref{eq:rk-propagator-definition}.

\begin{lemma}\label{lem:gauss-rk-propagator-v2}
If \(h\|H\|<1\), then \(\mathcal G_h\) is invertible.
If, in addition, \(h\|H\|<1/(4\pi)\), then
\[
\frac12<\|\mathcal G_h\|<\frac32,\qquad
\frac23<\|\mathcal G_h^{-1}\|<2,\qquad
\kappa(\mathcal G_h)<3,\qquad
\|\Phi\|\leq2.
\]
These bounds are independent of \(p\).
\end{lemma}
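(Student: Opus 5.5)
The plan is to treat $\mathcal G_h=I-X$ with $X:=\mathcal A\otimes hH$ and obtain every bound from $\|X\|=h\|\mathcal A\|\|H\|$. Lemma~\ref{lem:mathcal-A-norm-bound} gives $\|\mathcal A\|\le 2\pi$.

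For invertibility under $h\|H\|<1$ alone, the bound $\|X\|\le 2\pi h\|H\|$ is not sufficient. Instead I would use the spectrum. The eigenvalues of $\mathcal A\otimes hH$ are the products $\mu\lambda$, with $\mu\in\sigma(\mathcal A)$ and $\lambda\in\sigma(hH)$. Hence $\mathcal G_h$ is singular exactly when $1/\mu = h\lambda$ for some pair. It is a standard fact that the Gauss coefficient matrix has all eigenvalues in the open right half-plane, and that the values $1/\mu$ are the poles of the $[p/p]$ Padé approximant of $e^z$.

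What is needed is that these poles satisfy $|1/\mu|\ge 1$, i.e.\ $|\mu|\le 1$. This holds because the spectral radius of $\mathcal A$ is at most $\max_i c_i<1$: the eigenvalues of the Gauss matrix are the reciprocals of the zeros of the Padé denominator, and those zeros lie outside the unit disk. I would justify this by citing the known location of Padé denominator zeros, which have modulus greater than $p$ for $p\ge1$. Since $|h\lambda|\le h\|H\|<1$, no product $\mu h\lambda$ equals $1$. This step is the main obstacle if a self-contained argument is wanted; I would fall back on the classical result on the zeros of Padé denominators.

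Under the stronger condition $h\|H\|<1/(4\pi)$ we get $\|X\|<1/2$. The Neumann series then gives $\|\mathcal G_h^{-1}\|\le 1/(1-\|X\|)<2$, and the triangle inequality gives $\|\mathcal G_h\|\le 1+\|X\|<3/2$. The reverse bounds follow from the same estimates: $\|\mathcal G_h\|\ge 1-\|X\|>1/2$, and $\|\mathcal G_h^{-1}\|\ge 1/\|\mathcal G_h\|>2/3$.

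For $\kappa(\mathcal G_h)<3$, the product of the two upper bounds gives only $(1+x)/(1-x)$ with $x<1/2$, hence $<3$. This is strict, as required.

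For $\Phi$, I would use \eqref{eq:rk-propagator-definition} with submultiplicativity:
\[
\|\Phi\|\le 1+\|r\|\,\|\mathcal G_h^{-1}\|\,\|\mathbf 1_p\|\,h\|H\|\le 1+\sqrt{4\pi/3}\cdot 2\cdot\frac{1}{4\pi}.
\]
This uses Lemma~\ref{lem:uniform-bou-pr} for $\sqrt p\,\|r\|\le\sqrt{4\pi/3}$ and $\|r^\top\otimes I\|=\|r\|$. The right-hand side is about $1.33\le2$. All constants are independent of $p$, which completes the plan.
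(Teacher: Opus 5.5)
Your proposal is correct and follows the paper's proof. Invertibility comes from $\rho(\mathcal A\otimes hH)=\rho(\mathcal A)\,\rho(hH)<1$, and the remaining bounds use the Neumann series, the triangle inequality and submultiplicativity with $\|\mathcal A\|\le 2\pi$ and $\sqrt p\,\|r\|\le\sqrt{4\pi/3}$, as in the paper. The only difference is that the paper cites $\rho(\mathcal A)<1$ directly, while you route it through the zeros of the Pad\'e denominator $Q_p$; your side claims $\rho(\mathcal A)\le\max_i c_i$ and that these zeros have modulus $>p$ are unproved and unnecessary, because the paper's own bound $a_{p,k}\le 1/(2^k k!)$ gives $|Q_p(z)-1|\le e^{|z|/2}-1<1$ for $|z|\le 1$, which already places every zero outside the unit disk.
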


\begin{proof}
The Gauss coefficient matrix satisfies \(\rho(\mathcal A)<1\),
where \(\rho\) denotes the spectral radius; see the proof of
\cite[Theorem~3.11]{wu2024structure}. Hence, with
\(\xi=h\|H\|<1\),
\[
\rho(\mathcal A\otimes hH)
=
\rho(\mathcal A)\rho(hH)
\leq\rho(\mathcal A)\xi<1.
\]
Thus \(\mathcal G_h\) is invertible.

Under the stronger condition \(\xi<1/(4\pi)\),
Lemma~\ref{lem:mathcal-A-norm-bound} gives
\[
\eta:=\|\mathcal A\otimes hH\|
=\|\mathcal A\|\xi<\frac12.
\]
The triangle inequality and the Neumann series imply
\[
1-\eta\leq\|\mathcal G_h\|\leq1+\eta,
\qquad
\|\mathcal G_h^{-1}\|\leq\frac1{1-\eta}<2.
\]
Also,
\[
\|\mathcal G_h^{-1}\|
\geq\frac1{\|\mathcal G_h\|}>\frac23,
\qquad
\kappa(\mathcal G_h)
\leq\frac{1+\eta}{1-\eta}<3.
\]
Finally, by \eqref{eq:rk-propagator-definition} and
Lemma~\ref{lem:uniform-bou-pr},
\[
\begin{aligned}
\|\Phi\|
&\leq
1+h\|r\|\,\|\mathcal G_h^{-1}\|
\|\mathbf1_p\|\,\|H\|\\
&\leq
1+2\sqrt{\frac{4\pi}{3}}\,\xi
<1+\frac1{\sqrt{3\pi}}<2.
\end{aligned}
\]
\end{proof}

The Hamiltonian matrix of the LQR system satisfies
\[
H^\top\mathbb J_{2n}+\mathbb J_{2n}H=0,
\qquad
\mathbb J_{2n}
=
\begin{bmatrix}
0&I_n\\
-I_n&0
\end{bmatrix}.
\]
The Gauss method preserves the corresponding symplectic form.

\begin{lemma}[Symplecticity]\label{lem:gauss-symplecticity}
If \(\mathcal G_h\) is invertible and
\(H^\top\mathbb J_{2n}+\mathbb J_{2n}H=0\), then
\[
\Phi^\top\mathbb J_{2n}\Phi=\mathbb J_{2n}.
\]
\end{lemma}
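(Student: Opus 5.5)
I will prove symplecticity by working directly with the stage equations rather than with the closed-form expression \eqref{eq:rk-propagator-definition}. For an arbitrary pair of initial vectors \(y,w\in\mathbb R^{2n}\), let \(Y_i=(I_{2n}+h\sum_j\mathcal A_{ij}K_j)y\) and \(W_i\) (defined analogously from \(w\)) be the internal stage values, and set \(k_i=HY_i\), \(\ell_i=HW_i\). Invertibility of \(\mathcal G_h\) guarantees that these stage values exist uniquely and that \(\Phi y=y+h\sum_ir_ik_i\). The goal is the bilinear identity \((\Phi y)^\top\mathbb J_{2n}(\Phi w)=y^\top\mathbb J_{2n}w\) for all \(y,w\). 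Taking \(y,w\) to be standard basis vectors then gives the matrix identity.

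\emph{Step 1 (expansion).} Expanding the bilinear form gives
\[
(\Phi y)^\top\mathbb J_{2n}\Phi w-y^\top\mathbb J_{2n}w
=h\sum_ir_i\bigl(k_i^\top\mathbb J_{2n}w+y^\top\mathbb J_{2n}\ell_i\bigr)
+h^2\sum_{i,j}r_ir_jk_i^\top\mathbb J_{2n}\ell_j .
\]
I then substitute \(y=Y_i-h\sum_j\mathcal A_{ij}k_j\) and \(w=W_i-h\sum_j\mathcal A_{ij}\ell_j\) inside the \(i\)th summand.

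\emph{Step 2 (Hamiltonian term).} The resulting first-order term is \(h\sum_ir_i\bigl(Y_i^\top H^\top\mathbb J_{2n}W_i+Y_i^\top\mathbb J_{2n}HW_i\bigr)\), which vanishes because \(H^\top\mathbb J_{2n}+\mathbb J_{2n}H=0\).

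\emph{Step 3 (quadratic term).} The remaining terms collect into
\[
h^2\sum_{i,j}\bigl(r_ir_j-r_i\mathcal A_{ij}-r_j\mathcal A_{ji}\bigr)k_i^\top\mathbb J_{2n}\ell_j .
\]
In deriving this I must take care with the index swap and with the transpose of the antisymmetric \(\mathbb J_{2n}\). For the term \(k_i^\top\mathbb J_{2n}\mathcal A_{ij}\ell_j\), one gets \(-h^2r_i\mathcal A_{ij}k_i^\top\mathbb J\ell_j\). For the term \((\mathcal A_{ij}k_j)^\top\mathbb J\ell_i\), relabeling \(i\leftrightarrow j\) gives \(-h^2r_j\mathcal A_{ji}k_i^\top\mathbb J\ell_j\). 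Both terms therefore carry the same sign, matching the display above.

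\emph{Step 4 (coefficient identity).} It remains to verify the Cooper condition
\[
r_ir_j=r_i\mathcal A_{ij}+r_j\mathcal A_{ji}\qquad\text{for all }i,j.
\]
This is where the Gauss structure enters, and I expect it to be the main obstacle. I plan to use \eqref{eq:gauss-quadrature-exactness} applied to the polynomial \(v(s)=\ell_i(s)\int_0^s\ell_j(\tau)\,\mathrm d\tau\). Since \(\deg v=2p-1\), quadrature is exact, and because \(\ell_i(c_k)=\delta_{ik}\) the quadrature sum reduces to \(r_i\int_0^{c_i}\ell_j=r_i\mathcal A_{ij}\). The same argument with \(i\) and \(j\) swapped gives \(r_j\mathcal A_{ji}\) as the integral of the corresponding polynomial. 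Adding the two, the integrand is the exact derivative
\[
\frac{\mathrm d}{\mathrm ds}\Bigl(\int_0^s\ell_i\int_0^s\ell_j\Bigr),
\]
so the sum equals \(\int_0^1\ell_i\int_0^1\ell_j=r_ir_j\) by the definition \eqref{eq:def-A-r}. The coefficient of every quadratic term therefore vanishes, and the identity \(\Phi^\top\mathbb J_{2n}\Phi=\mathbb J_{2n}\) follows.
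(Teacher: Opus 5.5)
Your proof is correct, and it takes a different route from the paper. The paper never expands the bilinear form. It introduces the matrix-valued collocation polynomial $\mathcal Y(s)=I_{2n}+h\sum_j L_j(s)K_j$, with $L_j(s)=\int_0^s\ell_j$, and notes that $\mathcal Y(0)=I_{2n}$, $\mathcal Y(1)=\Phi$ and $\mathcal Y'(c_i)=hH\mathcal Y(c_i)$. It then writes $\Phi^\top\mathbb J_{2n}\Phi-\mathbb J_{2n}$ as $\int_0^1\frac{\mathrm d}{\mathrm ds}(\mathcal Y^\top\mathbb J_{2n}\mathcal Y)\,\mathrm ds$ and evaluates this integral by Gauss quadrature, which is exact because the integrand has degree $2p-1$. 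Each node then contributes $hr_i\mathcal Y(c_i)^\top(H^\top\mathbb J_{2n}+\mathbb J_{2n}H)\mathcal Y(c_i)=0$.

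You instead give the algebraic Runge--Kutta argument. The Hamiltonian property kills the $\mathcal O(h)$ terms, and the Cooper condition $r_ir_j=r_i\mathcal A_{ij}+r_j\mathcal A_{ji}$ kills the $\mathcal O(h^2)$ terms. Your expansion, index relabeling and signs are right; no transposition of $\mathbb J_{2n}$ is actually needed. Your proof of the Cooper condition is also sound: it uses exactness on $\ell_iL_j\in\mathbb P_{2p-1}$ together with $\ell_iL_j+\ell_jL_i=(L_iL_j)'$.

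The two arguments rest on the same ingredient, Gauss exactness on $\mathbb P_{2p-1}$. The coefficient identity you prove is what the paper's combination of the fundamental theorem of calculus and quadrature exactness encodes implicitly, since $\frac{\mathrm d}{\mathrm ds}(\mathcal Y^\top\mathbb J_{2n}\mathcal Y)$ expands in the products $\ell_iL_j$. The paper's version is shorter and makes the collocation interpretation explicit, but it uses that the method is a collocation method. Yours isolates the coefficient condition, so it applies verbatim to any Runge--Kutta scheme satisfying it, collocation or not, at the cost of a longer index computation.
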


\begin{proof}
Let \(L_j(s)=\int_0^s\ell_j(t)\,\mathrm dt\), and define
\[
\mathcal Y(s)
=
I_{2n}+h\sum_{j=0}^{p-1}L_j(s)K_j.
\]
Then \(\mathcal Y(0)=I_{2n}\), \(\mathcal Y(1)=\Phi\), and the
stage equations imply
\[
\mathcal Y'(c_i)=hK_i=hH\mathcal Y(c_i).
\]
Each entry of
\(\frac{\mathrm d}{\mathrm ds}
(\mathcal Y(s)^\top\mathbb J_{2n}\mathcal Y(s))\)
has degree at most \(2p-1\). Applying exact Gauss quadrature
entrywise gives
\[
\begin{aligned}
\Phi^\top\mathbb J_{2n}\Phi-\mathbb J_{2n}
&=
\int_0^1
\bigl(\mathcal Y'(s)^\top\mathbb J_{2n}\mathcal Y(s)
+\mathcal Y(s)^\top\mathbb J_{2n}\mathcal Y'(s)\bigr)
\,\mathrm ds\\
&=
h\sum_{i=0}^{p-1}r_i\mathcal Y(c_i)^\top
\bigl(H^\top\mathbb J_{2n}+\mathbb J_{2n}H\bigr)
\mathcal Y(c_i)
=0.
\end{aligned}
\]
\end{proof}

\subsection{One-step approximation error}
\label{app:rk-global-error-proof-v2}

We now prove the one-step bound \eqref{eq:rk-one-step-error}.
This estimate only requires \(h\|H\|<1\).

\begin{lemma}[One-step propagator approximation error]
\label{lem:rk-global-error-v2}
Let \(\xi=h\|H\|<1\) and \(\Phi_{\rm ex}=e^{hH}\). Then
\[
\Delta_h:=\|\Phi-\Phi_{\rm ex}\|
\leq
\frac{e^\xi}{2-e^{\xi/2}}
\frac{(p!)^2}{(2p)!(2p+1)!}\xi^{2p+1}
\leq
\frac{8(p!)^2}{(2p)!(2p+1)!}\xi^{2p+1}.
\]
In particular,
\begin{equation}\label{eq:one-step-error-scaling}
\Delta_h
=
\mathcal O\!\left(
\frac{p^{1/2}}{2^{2p}(2p+1)!}
\right).
\end{equation}
\end{lemma}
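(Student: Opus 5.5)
The plan is to reduce the matrix statement to a scalar identity for the stability function of the Gauss method, and then lift it to the matrix \(hH\) by functional calculus. Write \(z\) for a complex variable and let
\[
R_p(z)=1+z\,r^\top(I_p-z\mathcal A)^{-1}\mathbf 1_p
\]
be the stability function. It is classical that for the \(p\)-stage Gauss method \(R_p(z)=N_p(z)/N_p(-z)\), the diagonal \([p/p]\) Pad\'e approximant of \(e^z\). Here
\[
N_p(z)=\sum_{k=0}^p c_kz^k,\qquad
c_k=\frac{(2p-k)!\,p!}{(2p)!\,k!\,(p-k)!}.
\]
Write \(D_p(z):=N_p(-z)\) for the denominator.

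\textbf{Step 1: identify \(\Phi\) with \(R_p(hH)\).} By Lemma~\ref{lem:gauss-rk-propagator-v2}, \(\rho(\mathcal A)<1\), and here \(\xi<1\), so \(\rho(\mathcal A\otimes hH)<1\). I would expand \(\mathcal G_h^{-1}\) in \eqref{eq:rk-propagator-definition} as a Neumann series; convergence follows from the spectral-radius bound, e.g.\ in a suitable norm or via Gelfand's formula. This gives
\[
\Phi=I_{2n}+\sum_{k\geq0}\bigl(r^\top\mathcal A^k\mathbf 1_p\bigr)(hH)^{k+1}.
\]
The right-hand side is exactly the power series of \(R_p\) at \(hH\). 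That series converges for \(|z|<1/\rho(\mathcal A)\), a disk containing the spectrum of \(hH\). Hence \(\Phi=R_p(hH)=D_p(hH)^{-1}N_p(hH)\); all these factors commute because they are functions of the single matrix \(hH\). Invertibility of \(D_p(hH)\) is checked in Step 3.

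\textbf{Step 2: Pad\'e remainder in matrix form.} I would use the standard integral remainder
\[
D_p(z)e^z-N_p(z)=\frac{(-1)^pz^{2p+1}}{(2p)!}\int_0^1e^{tz}t^p(1-t)^p\,\mathrm dt .
\]
This can be verified by repeated integration by parts, or by checking that both sides are entire and agree to order \(2p+1\) with the correct structure. Both sides are entire functions, so the identity holds with \(z\) replaced by \(hH\). Multiplying by \(D_p(hH)^{-1}\) then gives
\[
\Phi_{\rm ex}-\Phi
=\frac{(-1)^p}{(2p)!}D_p(hH)^{-1}(hH)^{2p+1}\int_0^1e^{thH}t^p(1-t)^p\,\mathrm dt .
\]
No normality or diagonalizability of \(H\) is used. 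For the norm bound, use \(\|e^{thH}\|\leq e^{\xi}\) together with the Beta integral \(\int_0^1t^p(1-t)^p\,\mathrm dt=(p!)^2/(2p+1)!\).

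\textbf{Step 3: bound \(\|D_p(hH)^{-1}\|\).} This is the step I expect to need the most care. The coefficients satisfy
\[
k!\,c_k=\prod_{j=0}^{k-1}\frac{p-j}{2p-j}\leq2^{-k}.
\]
Therefore
\[
\|D_p(hH)-I\|\leq\sum_{k\geq1}\frac{\xi^k}{2^kk!}=e^{\xi/2}-1<1 .
\]
A Neumann series then gives that \(D_p(hH)\) is invertible with \(\|D_p(hH)^{-1}\|\leq1/(2-e^{\xi/2})\), justifying the inversion used in Step 1. Combining the three steps yields the first inequality.

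For the second inequality, \(e^\xi/(2-e^{\xi/2})\) is increasing on \([0,1)\). Its value at \(\xi=1\) is \(e/(2-\sqrt e)\approx7.7<8\). Finally, Stirling's formula gives \((p!)^2/(2p)!\sim\sqrt{\pi p}\,4^{-p}\), which, with \(\xi<1\), yields \eqref{eq:one-step-error-scaling}.

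The main obstacle is Step 1: rigorously matching the Runge--Kutta map \eqref{eq:rk-propagator-definition} with the rational function \(R_p(hH)\), and ensuring the Pad\'e form is well defined at \(hH\). I would handle it through the convergent power series above. That reduces the claim to the scalar identity \(R_p=N_p/D_p\) of Gauss methods, which I would cite; alternatively it can be rederived from \(C(p)\)/\(B(2p)\) order conditions.
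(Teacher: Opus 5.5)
Your proposal is correct and follows essentially the same route as the paper. Both arguments identify $\Phi$ with the Gauss stability function $\mathscr R_p(hH)=Q_p(hH)^{-1}N_p(hH)$, bound $\|Q_p(hH)^{-1}\|\le 1/(2-e^{\xi/2})$ via $a_{p,k}\le 1/(2^k k!)$ and a Neumann series, lift the diagonal Pad\'e integral remainder to $hH$, and finish with the Beta integral, $e/(2-\sqrt e)<8$, and Stirling. The differences are cosmetic: you justify $\Phi=\mathscr R_p(hH)$ directly from the Neumann expansion of $\mathcal G_h^{-1}$ where the paper cites Butcher, and your kernel $e^{tz}$ matches the paper's $e^{(1-\theta)z}$ under $t=1-\theta$.
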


\begin{proof}
The scalar stability function of the Gauss method is
\[
\mathscr R_p(z)
=
1+z\,r^\top(I_p-z\mathcal A)^{-1}\mathbf1_p.
\]
By Lemma~\ref{lem:gauss-rk-propagator-v2}, the stage system is
invertible for \(\xi<1\), and
\(\Phi=\mathscr R_p(hH)\); see
\cite[Eq.~(350a)]{butcher2008numerical} for the matrix form of
the Runge--Kutta update.

The stability function is the \([p/p]\) Pad\'e approximant of
\(e^z\); see \cite[Table~3, p.~1285]{hairer2015runge}. Thus, by \cite[Method 2]{moler1978nineteen}
\[
\mathscr R_p(z)=Q_p(z)^{-1}N_p(z),
\qquad
N_p(z)=\sum_{k=0}^p a_{p,k}z^k,
\qquad
Q_p(z)=N_p(-z),
\]
where
\[
a_{p,k}
=
\frac{(2p-k)!\,p!}{(2p)!\,(p-k)!\,k!},
\qquad 0\leq k\leq p.
\]
Set \(X=hH\). We first bound \(Q_p(X)^{-1}\).
For \(1\leq k\leq p\),
\[
a_{p,k}
=
\frac1{k!}\prod_{\ell=0}^{k-1}
\frac{p-\ell}{2p-\ell}
\leq\frac1{2^k k!}.
\]
It follows that
\[
\|Q_p(X)-I_{2n}\|
\leq\sum_{k=1}^p a_{p,k}\xi^k
\leq e^{\xi/2}-1<1.
\]
The Neumann series therefore gives
\begin{equation}\label{eq:Qp-inverse-bound}
\|Q_p(X)^{-1}\|
\leq
\frac1{1-\|Q_p(X)-I_{2n}\|}
\leq\frac1{2-e^{\xi/2}}.
\end{equation}
In particular,
\(\Phi=\mathscr R_p(X)=Q_p(X)^{-1}N_p(X)\) is well defined.

The diagonal Pad\'e remainder formula yields
\begin{equation}\label{eq:pade-remainder-matrix}
e^X-\mathscr R_p(X)
=
(-1)^p Q_p(X)^{-1}
\frac{X^{2p+1}}{(2p)!}
\int_0^1
e^{(1-\theta)X}\theta^p(1-\theta)^p\,\mathrm d\theta;
\end{equation}
see \cite[Appendix~1, proof of Lemma~3]{moler1978nineteen}.
To obtain the matrix identity, multiply the scalar remainder
formula by \(Q_p(z)\) and compare its entire power series.
Substitution of \(X\) is valid term by term, with uniform
convergence in \(\theta\in[0,1]\). No normality assumption on
\(H\) is needed.

Taking norms in \eqref{eq:pade-remainder-matrix}, using
\(\|e^{(1-\theta)X}\|\leq e^{(1-\theta)\xi}\leq e^\xi\), and
applying \eqref{eq:Qp-inverse-bound}, we obtain
\[
\begin{aligned}
\|e^X-\mathscr R_p(X)\|
&\leq
\|Q_p(X)^{-1}\|
\frac{\xi^{2p+1}}{(2p)!}
\int_0^1
\|e^{(1-\theta)X}\|\theta^p(1-\theta)^p\,\mathrm d\theta\\
&\leq
\frac{e^\xi}{2-e^{\xi/2}}
\frac{\xi^{2p+1}}{(2p)!}
\int_0^1\theta^p(1-\theta)^p\,\mathrm d\theta\\
&=
\frac{e^\xi}{2-e^{\xi/2}}
\frac{(p!)^2}{(2p)!(2p+1)!}\xi^{2p+1}.
\end{aligned}
\]
The last equality uses
\(\int_0^1\theta^p(1-\theta)^p\,\mathrm d\theta
=(p!)^2/(2p+1)!\).
Since \(0\leq\xi<1\),
\[
\frac{e^\xi}{2-e^{\xi/2}}
\leq\frac{e}{2-\sqrt e}<8,
\]
which proves \eqref{eq:rk-one-step-error}. Finally, Stirling's
formula gives
\[
\frac{(p!)^2}{(2p)!}
=
\Theta\!\left(\frac{p^{1/2}}{2^{2p}}\right),
\]
so
\[
\Delta_h
=
\mathcal O\!\left(
\frac{p^{1/2}}{2^{2p}(2p+1)!}\xi^{2p+1}
\right).
\]
Using \(\xi<1\) proves \eqref{eq:one-step-error-scaling}.
\end{proof}

For completeness, the logarithmic choice of \(p\) stated in
Section~\ref{subsec:gauss-perturbation} follows directly from
this bound. Define
\[
C_p=\frac{8(p!)^2}{(2p)!(2p+1)!}.
\]
Then
\[
C_1=\frac23,\qquad
\frac{C_{p+1}}{C_p}
=
\frac1{4(2p+1)(2p+3)}
\leq\frac1{60},
\]
and hence \(\Delta_h\leq(2/3)60^{-(p-1)}\) for \(\xi<1\).
Thus a one-step tolerance \(\delta\in(0,1)\) can be attained with
\(p=\mathcal O(1+\log(1/\delta))\). The resulting perturbation of
the nodal system is estimated in
Appendix~\ref{app:approx-conditioning-proof}.

\section{Auxiliary block-encoding constructions}
\label{app:auxiliary-block-encodings}

We give the projections, shifts, and rectangular factors used in
Appendix~\ref{app:quantum-algorithm-proofs}. All embeddings follow
Definition~\ref{def:block-encoding}. In this section, a dirty ancilla is a workspace qubit whose initial state need not be known but which must be restored to that state at the end of the computation. We exclude such workspace qubits from the ancilla count of the block-encoding. 
Gate counts below refer to one-qubit and CNOT gates.

\subsection{Coordinate projections}

\begin{lemma}\label{lem:projection-block-encoding}
Let \(1\leq a\leq n\). The matrices
\[
\Pi_{n,a}=\operatorname{diag}(I_a,0_{n-a}),
\qquad
\Pi_{n,a}'=\operatorname{diag}(0_{n-a},I_a)
\]
admit exact \((1,1)\)-block-encodings using
\(\mathcal O(\log(2n))\) gates.
\end{lemma}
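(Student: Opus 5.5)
The plan is to realize each projection as the top-left block of a \emph{classical reversible permutation} that writes a membership flag into a single ancilla. Let \(r=\lceil\log_2 n\rceil\) and \(N=2^r\). By Definition~\ref{def:block-encoding}, the matrices to encode are the padded versions
\[
(\Pi_{n,a})_{\rm pad}=\operatorname{diag}(I_a,0_{N-a}),
\qquad
(\Pi'_{n,a})_{\rm pad}=\operatorname{diag}(0_{n-a},I_a,0_{N-n}).
\]
Each is the orthogonal projector onto the span of \(\ket{j}_r\) with \(j\in S\), where \(S=[a]\) or \(S=\{n-a,\ldots,n-1\}\). For a set \(S\subseteq[N]\), define the unitary
\[
U_S:\ \ket{b}\ket{j}_r\mapsto\ket{b\oplus[j\notin S]}\ket{j}_r .
\]
It is a self-inverse permutation of basis states, hence unitary. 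Moreover,
\[
(\bra0\otimes\mathbb I_N)\,U_S\,(\ket0\otimes\mathbb I_N)=\sum_{j\in S}\ket{j}\bra{j},
\]
because the ancilla remains \(\ket0\) exactly when \(j\in S\), and no off-diagonal terms arise since the system register is untouched. This gives an exact \((1,1)\)-block-encoding with normalization \(1\) and one ancilla, provided \(U_S\) can be implemented with \(\mathcal O(\log(2n))\) gates.

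For \(\Pi_{n,a}\), the flag \([j\notin S]=[j\geq a]\) is a comparison of the register against the classical constant \(a\). For \(\Pi'_{n,a}\), the complement of \(S\) is the disjoint union \(\{j<n-a\}\cup\{j\geq n\}\). The flag is therefore the XOR of the two comparison bits \([j<n-a]\) and \([j\geq n]\), and I will apply two constant comparators in sequence, each XORing its result into the same ancilla. When \(n=N\), the second comparator is trivial and is omitted. Similarly, the edge cases \(a=n\) (for which \(\Pi'_{n,a}=\Pi_{n,a}\)) and \(a=N\) (identity encoding) need only a direct remark.

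The main step, and the only nontrivial one, is implementing a comparator \(\ket b\ket j\mapsto\ket{b\oplus[j\geq c]}\ket j\) for a classical \(c\) using \(\mathcal O(r)\) one-qubit and CNOT gates and no extra clean ancillas. I would use the standard approach: \([j\geq c]\) is the carry-out of \(j+(N-c)\), or equivalently the carry of \(j-c\). The carry chain of an \(r\)-bit addition with a classical constant can be computed with \(\mathcal O(r)\) Toffoli-type gates using workspace qubits borrowed as dirty ancillas and restored afterwards, as allowed by the convention at the start of this appendix. Each Toffoli decomposes into \(\mathcal O(1)\) one-qubit and CNOT gates, and the uncomputation doubles the count. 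Finally, \(r=\mathcal O(\log(2n))\) gives the stated gate count. If a dirty-ancilla comparator turns out to be delicate to cite precisely, my fallback is a compute--copy--uncompute construction of the carry chain on the borrowed workspace, which still uses \(\mathcal O(r)\) gates.
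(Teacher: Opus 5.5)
Your proposal is correct and matches the paper's proof. Both use a single flag qubit toggled by constant comparators, and your flag \([j<n-a]\oplus[j\geq n]\) for \(\Pi'_{n,a}\) is exactly the paper's \(1\oplus\mathbf 1_{\{j>n-a-1\}}\oplus\mathbf 1_{\{j>n-1\}}\), which the paper realizes as \(U_{n-1}U_{n-a-1}(X\otimes I_N)\). The only difference is how the comparator is built: the paper cites a classical--quantum comparator that needs a single dirty workspace qubit, whereas a dirty-ancilla carry chain in the style of H\"aner et al.\ borrows about \(r-1\) dirty qubits, which the appendix's convention still permits. Your compute--copy--uncompute fallback would, however, need clean workspace, which is not covered by that dirty-qubit exemption, so you should rely on the dirty-ancilla carry construction.
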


\begin{proof}
Set \(N=2^{l_n}\). For \(0\leq t<N\), consider the comparator
\begin{equation}\label{eq:def-U_t}
U_t\ket z_1\ket j_{l_n}
=
\ket{z\oplus\mathbf1_{\{j>t\}}}_1\ket j_{l_n}.
\end{equation}
The classical--quantum comparator in
\cite[Section~4.2]{vandaele2026comparators} implements this map
with \(\mathcal O(l_n+1)\) gates over the \{$X$,CNOT, Toffoli\} gate set and one dirty workspace qubit.
Each Toffoli gate has a constant-size decomposition into the
gate set used here. For \(0\leq j\leq N\), we have
\[
U_{a-1}|j\rangle_{l_n}|0\rangle
=
\begin{cases}
|j\rangle_{l_n}|0\rangle,
& j\leq a-1,\\[2mm]
|j\rangle_{l_n}|1\rangle,
& j > a-1.
\end{cases}
\]
Consequently,
\[
(\bra0\otimes I_N)U_{a-1}(\ket0\otimes I_N)
=
\sum_{j=0}^{a-1}\ket j\bra j
=
(\Pi_{n,a})_{\rm pad}.
\]
\(U_{a-1}\) is an exact \((1,1)\)-block-encoding of \(\Pi_{n,a}\).

If \(a=n\), the same construction with \(U_{n-1}\) encodes
\(\Pi_{n,n}'=I_n\). If \(a<n\), set
\[
\widetilde U_a
=
U_{n-1}U_{n-a-1}(X\otimes I_N).
\]
Starting with a zero signal \(\ket{0}\), the final value of the signal qubit is
\[
1\oplus\mathbf1_{\{j>n-a-1\}}
 \oplus\mathbf1_{\{j>n-1\}}.
\]
It is zero exactly for \(n-a\leq j<n\). Consequently,
\[
(\bra0\otimes I_N)\widetilde U_a(\ket0\otimes I_N)
=
\sum_{j=n-a}^{n-1}\ket j\bra j
=
(\Pi_{n,a}')_{\rm pad}.
\]
\(\widetilde U_a\) is an exact \((1,1)\)-block-encoding of \(\Pi'_{n,a}\). Only a constant number of comparators is used.
\end{proof}

\subsection{Shifts and matrix units}

\begin{lemma}\label{lem:shift-matrix-block-encoding}
For \(n\geq2\), the shift
\begin{equation}
\label{eq:shift}
S_n=\sum_{j=1}^{n-1}\ket j^{(n)}\bra{j-1}^{(n)}
\end{equation}
and every matrix unit
\(E_{ij}=\ket i^{(n)}\bra j^{(n)}\), \(i,j\in[n]\),
admit exact \((1,1)\)-block-encodings using
\(\mathcal O(\log(2n))\) gates. In particular, this applies to
\(E_{0,n-1}\).
\end{lemma}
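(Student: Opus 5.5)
The shift $S_n$ and a general matrix unit $E_{ij}$ can be realized as exact block-encodings by combining a reversible modular increment with the comparator projections of Lemma~\ref{lem:projection-block-encoding}. Set $N=2^{l_n}$ and work on $l_n$ system qubits plus one signal qubit.

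\emph{The shift.} Let $\mathrm{INC}$ be the cyclic increment $\ket j_{l_n}\mapsto\ket{j+1 \bmod N}_{l_n}$. It can be implemented with $\mathcal O(l_n)$ one-qubit and CNOT gates by a ripple-carry adder that adds the constant $1$, using one dirty workspace qubit; any Toffoli gates are decomposed at constant cost. Compute the comparator $U_{n-2}$ from \eqref{eq:def-U_t} on the input index, which flags the inputs $j>n-2$, and then apply $\mathrm{INC}$ to the system register. The signal qubit stays $\ket0$ exactly for inputs $0\le j\le n-2$, which are mapped to $\ket{j+1}$. Hence
\[
(\bra0\otimes I_N)\,(I\otimes\mathrm{INC})\,U_{n-2}\,(\ket0\otimes I_N)=\sum_{j=0}^{n-2}\ket{j+1}\bra j=(S_n)_{\rm pad}.
\]
The case $n\ge2$ guarantees $n-2\ge0$, so $U_{n-2}$ is defined. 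Flagging must precede the increment, or the wraparound inputs $j\ge n-1$ would be mixed in; with this ordering they are all sent to the flagged subspace. The total cost is a constant number of comparators and adders, i.e.\ $\mathcal O(\log(2n))$ gates.

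\emph{Matrix units.} Use the unitary $X^{\mathbf i}X^{\mathbf j}$ built from Pauli $X$ gates on the bits where the binary strings of $i$ and $j$ differ; it maps $\ket j\mapsto\ket i$. Precede it by a flag marking $\ket k$ for $k\neq j$. One way to obtain this flag is to apply $X$ gates so that $\ket j$ becomes $\ket0$, use a comparator $U_0$ (which flags $k>0$) as a multi-controlled Toffoli, and then undo the $X$ gates. That undo is in fact absorbed, since the target of the mapping is the $X$-pattern for $i$ anyway. The projected block is then $\ket i\bra j$, padded to dimension $N$. The costs are $\mathcal O(l_n)$ for the $X$ gates and $\mathcal O(l_n)$ for the comparator.

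The main obstacle is only the bookkeeping of gate counts for the adder and the equality test with a single dirty workspace qubit. I would cite standard linear-size increment circuits for this, mirroring the citation to \cite{vandaele2026comparators} used in Lemma~\ref{lem:projection-block-encoding}. The case $E_{0,n-1}$ then follows by specialization.
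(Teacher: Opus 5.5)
Your proof is correct and essentially matches the paper's: your matrix-unit circuit reduces to $(I\otimes V_i)\,U_0\,(I\otimes V_j)$, which is exactly the paper's $W_{ij}$, and your shift uses the same increment-plus-comparator construction. The only difference is the order: you apply $U_{n-2}$ to project onto $\{0,\ldots,n-2\}$ before incrementing, whereas the paper increments first and then applies the trailing projection $(\Pi'_{n,n-1})_{\rm pad}$. Both give $(S_n)_{\rm pad}$, and yours needs one comparator instead of two. So your remark that flagging \emph{must} precede the increment is true only for your choice of flag.
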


\begin{proof}
Let \(N=2^{l_n}\), and define
\[
O_{\rm shift}\ket j_{l_n}
=
\ket{(j+1)\bmod N}_{l_n}.
\]
By \cite[Theorem~4]{vandaele2026comparators}, this increment
uses \(\mathcal O(l_n+1)\) gates and one dirty workspace qubit.
The padded trailing projection satisfies
\[
(\Pi_{n,n-1}')_{\rm pad}O_{\rm shift}
=
\sum_{j=1}^{n-1}\ket j_{l_n}\bra{j-1}_{l_n}
=
(S_n)_{\rm pad}.
\]
Thus composing the increment with the projection encoding from
Lemma~\ref{lem:projection-block-encoding} gives the required
encoding of \(S_n\).

For a fixed \(j\in[n]\), let \(V_j\) apply a Pauli-\(X\) gate
at each nonzero bit of \(j\). Then
\(V_j\ket k=\ket{k\oplus j}\), \(V_j^\dagger=V_j\), and
\[
V_i\ket0\bra0V_j=\ket i\bra j.
\]
If \(U_{\Pi_{n,1}}\) encodes \(\ket0\bra0\), then
\[
W_{ij} = (I_2\otimes V_i)U_{\Pi_{n,1}}(I_2\otimes V_j)
\]
encodes \(E_{ij}\), because
\[
\begin{aligned}
(\bra0\otimes I_N)W_{ij}(\ket0\otimes I_N) 
&= 
V_i \left[(\bra0\otimes I_N)U_{\Pi_{n,1}}(\ket0\otimes I_N)\right]V_j
&=
V_i\ket0\bra0V_j=\ket i\bra j.
\end{aligned}
\]
The two bit-flip permutations \(V_i\), \(V_j\) require at most
\(2l_n\) gates.
\end{proof}

For the type register, \(\ket0\bra1\) has the particularly simple
encoding shown in Figure~\ref{fig:block-encoding1}. Indeed,
applying \(X\) to the system qubit and then a CNOT from the system
to the signal gives
\[
(\bra0_{\rm sig}\otimes I_2)
\,\mathrm{CNOT}_{{\rm sys}\to{\rm sig}}(I_2\otimes X)
(\ket0_{\rm sig}\otimes I_2)
=
\ket0\bra1.
\]

\begin{figure}[htbp]
\centering
\begin{quantikz}[
row sep={0.5cm,between origins},
column sep=0.6cm
]
\lstick{\(\ket0_{\rm sig}\)}
& \qw & \targ{} & \qw \\
\lstick{\(\ket\psi_{\rm sys}\)}
& \gate{X} & \ctrl{-1} & \qw
\end{quantikz}
\caption{An exact \((1,1)\)-block-encoding of \(\ket0\bra1\).}
\label{fig:block-encoding1}
\end{figure}

\subsection{Extension and compression of system registers}
\label{app:register-padding}

The rectangular products below require compatible embeddings.
For integers \(u\leq v\), write
\begin{equation}
\label{eq:def-e()}
e(u,v)=
\begin{cases}
0,&u=v,\\
1,&u<v.
\end{cases}
\end{equation}

\begin{lemma}\label{lem:register-extension-compression}
Suppose that \(U_A\) is an
\((\alpha,a,\varepsilon)\)-block-encoding on an \(r\)-qubit
system register.
\begin{enumerate}
\item Its zero extension to an \(s\)-qubit system register,
\(s\geq r\), has parameters
\((\alpha,a+e(r,s),\varepsilon)\). It uses one call to \(U_A\)
and \(\mathcal O(s+1)\) additional gates.
\item Suppose that the target matrix is supported on the first
\(2^t\) coordinates, where \(t\leq r\). Moving the other \(r-t\)
system qubits into the projected register gives an
\((\alpha,a+r-t,\varepsilon)\)-block-encoding on \(t\) system
qubits.
\end{enumerate}
\end{lemma}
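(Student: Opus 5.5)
For part 1, I would build the extended unitary as \(U_A\) acting on the low \(r\) system qubits, with the extra \(s-r\) system qubits untouched, and then add one flag ancilla. Write a basis state of the \(s\)-qubit register as \(\ket{j}_s\) with \(j=j_{\rm hi}2^r+j_{\rm lo}\). Without a flag, the projected block would be \(I_{2^{s-r}}\otimes A_{\rm pad}/\alpha\), not the zero extension \(\operatorname{diag}(A_{\rm pad},0)/\alpha\). To remove the spurious copies, I would use a flag qubit that is flipped exactly when the high qubits are not all zero, that is, when \(j\geq2^r\).

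This flag is the comparator \(U_{2^r-1}\) from the proof of Lemma~\ref{lem:projection-block-encoding}. Equivalently, it is an \((s-r)\)-controlled \(X\) on zero, preceded by an \(X\) on the flag. It costs \(\mathcal O(s+1)\) gates with one dirty workspace qubit. I would apply it both before and after \(U_A\); the second application is optional for the projected block but keeps the construction symmetric. Projecting the flag and the original \(a\) ancillas onto \(\ket0\) then leaves
\[
\ket0\bra0_{\rm hi}\otimes\bigl(\bra0_a\otimes I\bigr)U_A\bigl(\ket0_a\otimes I\bigr).
\]
This block differs from \(\operatorname{diag}(A_{\rm pad},0)/\alpha\) by at most \(\varepsilon/\alpha\) in norm, because the zero padding is isometric. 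Hence the parameters are \((\alpha,a+1,\varepsilon)\) when \(s>r\). When \(s=r\), no flag is needed and the parameters stay \((\alpha,a,\varepsilon)\), which matches \(e(r,s)\).

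For part 2, I would reuse the unitary unchanged and only reinterpret the registers. The high \(r-t\) system qubits are relabelled as ancillas. Projecting them onto \(\ket0\) on both sides restricts \(\alpha\mathcal B(U_A)\) to its leading \(2^t\times2^t\) block, \(P\,\alpha\mathcal B(U_A)\,P^\dagger\), where \(P\) is the corresponding coordinate isometry. By hypothesis the target matrix is supported on the first \(2^t\) coordinates, so \(PA_{\rm pad}P^\dagger\) is exactly the \(t\)-qubit padding of the target, with the rows and columns beyond \(2^t\) all zero. Compressing by an isometry does not increase the spectral norm, so the error is still at most \(\varepsilon\). The ancilla count becomes \(a+r-t\), and no extra gates are needed.

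The main obstacle is only bookkeeping. I need to fix the qubit ordering so that the leading coordinates correspond to the high qubits being zero, and I need to check that the dirty workspace qubit in the comparator is returned to its initial state, so it is excluded from the ancilla count as in Section~\ref{app:auxiliary-block-encodings}.
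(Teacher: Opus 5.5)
\textbf{Part 1 fails as written.} Your route is the paper's route: in part~1 you tensor \(U_A\) with the exact \((1,1)\)-encoding of \(\ket0\bra0_{s-r}\) that the comparator of Lemma~\ref{lem:projection-block-encoding} provides on the high qubits, and in part~2 you compress by the isometry \(V=\ket0_{r-t}\otimes I_{2^t}\). Part~2 is fine. The problem is in part~1. You apply the comparator \(C\) both before and after \(U_A\) on the same flag qubit, and you call the second application optional. It is not optional; it cancels the first.

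Since the dirty workspace qubit is restored, \(C\) acts as
\[
\sum_{j_{\rm hi}}\ket{j_{\rm hi}}\bra{j_{\rm hi}}\otimes X_{\rm flag}^{g(j_{\rm hi})},
\qquad g(j_{\rm hi})=1 \text{ iff } j_{\rm hi}\neq0,
\]
that is, only on the flag and the high system qubits. \(U_A\) acts only on the \(a\) ancillas and the low system qubits. So \(C\) and \(U_A\) commute, and because \(C^2=I\),
\[
C\,U_A\,C=U_A .
\]
The flag therefore always returns to \(\ket0\). Projecting it onto \(\ket0\) gives \(I_{2^{s-r}}\otimes(\bra0_a\otimes I)U_A(\ket0_a\otimes I)\). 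These are exactly the spurious copies you set out to remove. Your displayed block \(\ket0\bra0_{\rm hi}\otimes(\bra0_a\otimes I)U_A(\ket0_a\otimes I)\) is not what the symmetric circuit encodes.

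\textbf{The fix.} Apply \(C\) exactly once, either before or after \(U_A\); this is what the paper's tensor product of the projection encoding with \(U_A\) amounts to. The flag then carries \(g(j_{\rm hi})\), which \(U_A\) does not change. Post-selecting the flag on \(\ket0\) removes every \(j_{\rm hi}\neq0\), which gives your block and the parameters \((\alpha,a+1,\varepsilon)\) with \(\mathcal O(s+1)\) gates.

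A genuinely symmetric construction would need a second, fresh flag qubit. That costs \(a+2\) ancillas and no longer matches \(a+e(r,s)\). With the single application, the rest of your argument, including the \(s=r\) case and the compression step, goes through.
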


\begin{proof}
Let \(\widetilde A\) be the unnormalized encoded block of \(U_A\).
For \(s>r\), apply \(U_A\) to the low \(r\) system qubits and
combine it with an exact encoding of
\(P=\ket0_{s-r}\bra0_{s-r}\) on the high qubits. The encoded block
is \(P\otimes\widetilde A\), and
\[
\|P\otimes\widetilde A-P\otimes A_{\rm pad}\|
=
\|\widetilde A-A_{\rm pad}\|
\leq\varepsilon.
\]
Lemma~\ref{lem:projection-block-encoding} supplies the additional
signal qubit and gate bound. No extension is needed when \(s=r\).

For compression, put
\(V=\ket0_{r-t}\otimes I_{2^t}\). The new encoded block is
\(V^\dagger\widetilde A V\), and
\[
\|V^\dagger(\widetilde A-A_{\rm pad})V\|
\leq\varepsilon.
\]
The \(r-t\) high system qubits are now initialized and projected
onto zero, so they count as block-encoding ancillas.
\end{proof}

\subsection{Gauss factors and readout operators}
\label{app:auxiliary-factors}

Let \(N_p=2^{l_p}\), \(P_0=\ket0_{l_p}\bra0_{l_p}\), and
\(P_p=\sum_{j=0}^{p-1}\ket j_{l_p}\bra j_{l_p}\).
The Gauss weights satisfy
\begin{equation}\label{eq:r-bound}
\|r\|\leq\|r\|_1=\sum_{j=0}^{p-1}r_j=1,
\end{equation}
by Lemma~\ref{lem:uniform-bou-pr}. Let \(U_r\ket0=\ket r\).
Then
\[
P_0U_r^\dagger=\ket0\bra r,
\qquad
P_p H_{\rm Had}^{\otimes l_p}P_0
=
2^{-l_p/2}\sum_{j=0}^{p-1}\ket j\bra0,
\]
where \(H_{\rm Had}\) is the Hadamard gate. These are the padded
row and column factors, respectively. The projection and product
constructions give exact encodings
\begingroup
\renewcommand{\arraystretch}{1.35}
\[
\begin{array}{c|c}
\text{matrix} & (\text{normalization},\text{ancillas})\\ \hline
r^\top       & (\|r\|,1)\\
\mathbf{1}_p & (2^{l_p/2},2)
\end{array}
\]
\endgroup
The first uses one call to \(U_r^\dagger\); the second uses
\(\mathcal O(l_p+1)\) gates. All products use separate signal
ancillas.

The coefficient matrix \(\mathcal A\) has an exact encoding with
\[
\alpha_{\mathcal A}
=
2^{l_p}\|\mathcal A\|_{\max},
\qquad a_{\mathcal A}=l_p+1,
\qquad
\frac12\leq\alpha_{\mathcal A}\leq4\pi,
\]
by Lemma~\ref{lem:be-dense} and
\eqref{eq:gauss-coefficient-normalization}. Its implementation
uses \(\mathcal O(p^2)\) gates. A recursive preparation by
controlled rotations gives \(U_r\) using \(\mathcal O(p)\) gates.
These costs are included when the auxiliary oracles are expanded
in Appendix~\ref{app:be-L-construction}.

For the output operators, let \(E_{ij}\) now denote the
\(m\times m\) matrix units and define
\begin{equation}\label{eq:cost-transfer-operator-v2}
E_0=E_{00}\otimes\ket0\bra1\otimes I_n.
\end{equation}
Recall also
\[
S_{\lambda,k}
=
\bra k^{(m)}\otimes\bra1^{(2)}\otimes I_n
\in\mathbb R^{n\times2nm}.
\]
The matrix \(E_0\) moves the initial costate into the initial-state
slot, while \(S_{\lambda,k}\) extracts the costate at node \(k\).
Both have norm one.

\begin{lemma}\label{lem:readout-operator-encoding}
The operators \(E_0\) and \(S_{\lambda,k}\) admit exact
\((1,3+\tau_{m,n})\)-block-encodings. Each construction uses
\[
\mathcal O(\log^2(2m)\log^2(2n))
\]
gates.
\end{lemma}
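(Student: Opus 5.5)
The plan is to write each operator as a Kronecker product of three factors on the time register ($m$-dimensional), the type register (one qubit) and the state register ($n$-dimensional). Each factor gets an exact encoding from the earlier lemmas, and the pieces are then combined with the Kronecker-product rule of Lemma~\ref{lem:manipulation-of-matrices}. For $E_0=E_{00}\otimes\ket0\bra1\otimes I_n$:
\begin{itemize}
\item $E_{00}$ on the $m$-register has an exact $(1,1)$-block-encoding by Lemma~\ref{lem:shift-matrix-block-encoding}, using $\mathcal O(\log(2m))$ gates.
\item $\ket0\bra1$ has the exact $(1,1)$-encoding of Figure~\ref{fig:block-encoding1}.
\item $I_n$ is encoded exactly with normalization one by the projection $\Pi_{n,n}$ of Lemma~\ref{lem:projection-block-encoding}. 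This costs one signal ancilla, which is needed whenever $n$ is not a power of two, since the padded identity is then the projector onto the first $n$ coordinates.
\end{itemize}

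Next I will combine the factors. Tensoring $E_{00}$ with $\ket0\bra1$ costs no padding mismatch, because the one-qubit type register satisfies $l_m+1=\lceil\log_2(2m)\rceil$. So the first Kronecker product has $\Delta=0$ and gives a $(1,2)$-encoding. Tensoring this with the encoding of $I_n$ produces the mismatch $\Delta=l_{2m}+l_n-\lceil\log_2(2mn)\rceil=\tau_{m,n}$. The total ancilla count is therefore $1+1+1+\tau_{m,n}=3+\tau_{m,n}$, and the normalization is $1\cdot1\cdot1=1$. All the input encodings are exact, so the error term $\alpha_A\varepsilon_B+\alpha_B\varepsilon_A+\varepsilon_A\varepsilon_B$ vanishes. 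The gate count is dominated by the permutation overhead of Lemma~\ref{lem:manipulation-of-matrices}, which is $\mathcal O(\log^2(2m)\log^2(2n))$ because the type register contributes only a constant factor. The $\mathcal O(\log(2m)+\log(2n))$ gates for the factors themselves are absorbed into this.

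For $S_{\lambda,k}=\bra k^{(m)}\otimes\bra1^{(2)}\otimes I_n$ the argument is the same, except that the factors are rectangular row vectors. I will encode $\bra k^{(m)}$ in its zero-padded form as the matrix unit $E_{0k}$: its first row is $\bra k$ and every other row is zero. This has an exact $(1,1)$-encoding by Lemma~\ref{lem:shift-matrix-block-encoding}. Likewise $\bra1^{(2)}$ pads to $\ket0\bra1$, which again uses Figure~\ref{fig:block-encoding1}. The padded Kronecker product $E_{0k}\otimes\ket0\bra1\otimes I_n$ then has as its leading $n\times 2mn$ block exactly $S_{\lambda,k}$, zero-padded to a square matrix, as Definition~\ref{def:block-encoding} requires. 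The ancilla count, normalization and gate count are the same as for $E_0$.

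The main obstacle is bookkeeping. I have to check that the zero-padded embedding of the rectangular $S_{\lambda,k}$ produced by the tensor product coincides with the $A_{\rm pad}$ of Definition~\ref{def:block-encoding}, up to the allowed enlargement of the padding dimension. I also have to check that the ancilla mismatch is exactly $\tau_{m,n}$ and not $\tau_{m,n}+1$. If the type-register padding introduces an extra index shift, I would instead group the type qubit with the state register first: tensor $\ket0\bra1$ with the encoding of $I_n$, then tensor with the time factor. I would then use the fact, noted after \eqref{eq:def-ancilla-p,n,m}, that these mismatches lie in $\{0,1\}$, together with $\lceil\log_2(2n)\rceil=l_n+1$, to recover the same count.
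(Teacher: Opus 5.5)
Your proposal is correct and follows the paper's proof. You encode the three factors exactly with $(1,1)$-encodings, padding $\bra k^{(m)}$ and $\bra1^{(2)}$ to $\ket0\bra k$ and $\ket0\bra1$; the first Kronecker product then has no padding gap, and the second contributes $(l_m+1)+l_n-\lceil\log_2(2mn)\rceil=\tau_{m,n}$ ancillas.

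The bookkeeping worry in your last paragraph is already handled by the row and column permutations built into the Kronecker-product rule of Lemma~\ref{lem:manipulation-of-matrices}, so no regrouping is needed. Those permutations are also why your sentence about the leading $n\times2mn$ block is only true after applying the rule: the literal tensor product of padded factors places the selected columns with stride $2^{l_n}$ rather than $n$.
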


\begin{proof}
The factors \(E_{00}\), \(\ket0\bra1\), and \(I_n\) each have
exact \((1,1)\)-block-encodings. For the selector, the padded
embedding of \(\bra k^{(m)}\) is \(\ket0\bra k\), obtained from
Lemma~\ref{lem:shift-matrix-block-encoding}; the padded embedding
of \(\bra1^{(2)}\) is \(\ket0\bra1\).
Apply the Kronecker-product rule in
Lemma~\ref{lem:manipulation-of-matrices}.
The first product has no padding gap, and the second contributes
\[
(l_m+1)+l_n-\lceil\log_2(2mn)\rceil=\tau_{m,n}.
\]
This gives the stated ancilla and gate counts.
\end{proof}

\section{Quantum implementation and output guarantees}
\label{app:quantum-algorithm-proofs}

We use the primitives in Appendix~\ref{app:quantum-algorithm-primitives}
and the auxiliary constructions in
Appendix~\ref{app:auxiliary-block-encodings}. Put
\[
\rho_R=\|R^{-1}\|,\qquad
\kappa_R=\kappa(R),\qquad c_{\rm inv}=1+\sqrt5.
\]
Queries include controlled calls and calls to inverse oracles.
Zero summands are omitted. Inversion parameters may use available
upper bounds on the corresponding inverse norms.

\subsection{Block-encoding of the Hamiltonian matrix}
\label{app:be-H-construction}

\begin{lemma}[Block-encoding of \(H\)]
\label{lem:be-H-construction}
Suppose that Assumption~\ref{ass:be} holds and \(H\ne0\).
For every \(\varepsilon_H>0\), there is an
\((\alpha_H,a_H,\varepsilon_H)\)-block-encoding of \(H\), where
\begin{equation}\label{eq:alpha-H}
\alpha_H
=
2\alpha_A+\alpha_Q+8\alpha_B^2\rho_R
=
2s\|A\|_{\max}+s\|Q\|_{\max}
+8s^2\|B\|_{\max}^2\rho_R.
\end{equation}
Let \(l_*=\max\{l_n,l_d\}\) and
\[
b_H=2a_B+a_R+1+e(l_d,l_*)+l_*-l_n,
\]
the definition of \(e(\cdot, \cdot)\) is given in \eqref{eq:def-e()}. A sufficient ancilla count is
\begin{equation}\label{eq:a_H}
a_H=3+\max\{a_A,a_Q,b_H\}
=\mathcal O(\log(2n)+\log(2d)).
\end{equation}
The construction uses a constant number of calls to
\(U_A,U_B,U_Q\) and their inverses, and
\[
\mathcal O\!\left(
s\kappa_R\log\!\left(
2+\frac{s^2\kappa_R\|H\|}{\varepsilon_H}
\right)\right)
\]
calls to \(U_R\) and \(U_R^\dagger\). The additional gate count is
\[
\mathcal O\!\left(
s\kappa_R\log\!\left(
2+\frac{s^2\kappa_R\|H\|}{\varepsilon_H}
\right)\log(2d)
+\log^2(2n)+\log(2d)
\right).
\]
\end{lemma}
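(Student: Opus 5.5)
The plan is to assemble $H$ blockwise. We write
\[
H=E_{00}\otimes A-E_{11}\otimes A^\top-E_{10}\otimes Q-E_{01}\otimes(BR^{-1}B^\top),
\]
with $E_{ij}$ the $2\times2$ type-register matrix units, and encode each of the four summands separately. The summands are then combined with the sum rule of Lemma~\ref{lem:manipulation-of-matrices}. Two levels of a binary LCU tree account for the ``$+3$'' in \eqref{eq:a_H}: at most three signal qubits suffice for four terms with unequal normalizations.

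The type-register units have exact $(1,1)$-block-encodings; for $\ket0\bra1$ we use Figure~\ref{fig:block-encoding1}, and for the others we compose it with $X$ gates. We tensor these with $U_A$, with $U_A^\dagger$ (which encodes $A^\dagger=A^\top$ since $A$ is real), and with $U_Q$. Because $2n$ is padded to $l_n+1$ qubits, the Kronecker rule introduces no padding gap, so these terms carry ancilla counts $a_A+1$ and $a_Q+1$. The negative signs are absorbed by a $Z$ phase on the selection register. Summing the two $A$-terms gives normalization $2\alpha_A$.

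For $D=BR^{-1}B^\top$, we invoke Lemma~\ref{lem:inverse-be-approx} on $U_R$ with $\alpha=\alpha_R=s\|R\|_{\max}$. We take $\kappa_0=\max\{2,\alpha_R\rho_R\}=\mathcal O(s\kappa_R)$, using $\|R\|_{\max}\le\|R\|$. This yields a $(4\kappa_0/\alpha_R,a_R+1,\varepsilon_{R^{-1}})$-block-encoding with normalization at most $8\rho_R$; choosing $\kappa_0$ as a bound with factor $2$ slack gives exactly $8\rho_R$. The cost is $\mathcal O(s\kappa_R\log(2+s\kappa_R/(\alpha_R\varepsilon_{R^{-1}})))$ calls to $U_R$.

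Next we extend the $R^{-1}$ encoding from $l_d$ to $l_*$ system qubits, which contributes $e(l_d,l_*)$ ancillas by Lemma~\ref{lem:register-extension-compression}. We then take the product $U_B\cdot U_{R^{-1}}\cdot U_B^\dagger$ on the common $l_*$-qubit register. The result has normalization $8\alpha_B^2\rho_R$ and error $\alpha_B^2\varepsilon_{R^{-1}}$. Compressing the output to $l_n$ qubits adds $l_*-l_n$ ancillas, giving $b_H$ in total; the type unit then adds one more, which is absorbed in the stated count.

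The sum rule gives $\alpha_H$ as in \eqref{eq:alpha-H} and total error $\alpha_B^2\varepsilon_{R^{-1}}$. We therefore set $\varepsilon_{R^{-1}}=\varepsilon_H/\alpha_B^2$. We also check that the precondition of Lemma~\ref{lem:inverse-be-approx} holds trivially, since $U_R$ is exact ($\delta=0$).

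To put the logarithm in the stated form, we bound $\alpha_B^2/(\alpha_R\varepsilon_H)$ using $\|B\|_{\max}^2\rho_R\le\|BR^{-1}B^\top\|$-type estimates. Failing that, we use the crude bound $\alpha_B^2\le s^2\|B\|^2$ together with $\|B\|^2\rho_R\lesssim\kappa_R\|H\|$. This bound follows from $\|BR^{-1/2}\|^2=\|D\|\le\|H\|$ and $\|R^{-1/2}\|^2=\rho_R$, giving $\|B\|^2\le\|H\|\,\|R\|$. Hence $\alpha_B^2/(\alpha_R\varepsilon_H)\le s^2\|H\|\|R\|/(s\|R\|_{\max}\varepsilon_H)$, and since $\|R\|\le s\|R\|_{\max}$ we get $\lesssim s^2\kappa_R\|H\|/\varepsilon_H$ after absorbing $\kappa_R\ge1$.

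The gate count then follows. QSVT contributes $(a_R+1)$ gates per query, i.e.\ $\mathcal O(\log(2d))$. The Kronecker products contribute $\mathcal O(\log^2(2n))$, and the extension and compression steps contribute $\mathcal O(\log(2d))$.

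I expect the main obstacle to be the bookkeeping of registers. Specifically, we must verify that $B$ (which is $n\times d$) and $B^\dagger$, embedded in the $l_*$-qubit register, compose correctly with the extended $R^{-1}$ encoding. We must also check that the compressed block is exactly $D_{\rm pad}$, so that the ancilla count matches $b_H$ precisely.
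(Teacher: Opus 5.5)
Your proposal follows the paper's proof essentially step for step. It uses the same four-term type-register decomposition of $H$ and the inverse encoding of $R$ from Lemma~\ref{lem:inverse-be-approx} with $\kappa_0=2\alpha_R\rho_R$ (normalization $8\rho_R$). It then extends to $l_*$ qubits, forms the product with $U_B$ and $U_B^\dagger$, and compresses to $l_n$ qubits to get $b_H$. The error choice $\varepsilon_{R^{-1}}=\varepsilon_H/\alpha_B^2$ and the estimate $\|B\|^2\rho_R\le\kappa_R\|BR^{-1}B^\top\|\le\kappa_R\|H\|$ also match. Two pieces of bookkeeping need correcting before you reach the stated constants.

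\textbf{Ancilla count.} By the sum rule in Lemma~\ref{lem:manipulation-of-matrices}, each level of the binary combination adds one qubit, so the two-level sum adds $2$, not $3$. The ``$+3$'' in $a_H$ is these two selection qubits plus the one type-register qubit already contained in the term counts $a_A+1$, $a_Q+1$, $b_H+1$. As you wrote it (three LCU qubits, with the type qubit ``absorbed''), the total would be $\max\{a_A,a_Q,b_H\}+4$.

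\textbf{Logarithm.} Do not replace $\kappa_0$ by $\mathcal O(s\kappa_R)$ inside the log. With $\kappa_0=2\alpha_R\rho_R$,
\[
\kappa_0/(\alpha_R\varepsilon_{R^{-1}})=2\alpha_B^2\rho_R/\varepsilon_H\le 2s^2\kappa_R\|H\|/\varepsilon_H,
\]
directly from $\alpha_B^2\rho_R\le s^2\|B\|^2\rho_R$. Your numerator $s\kappa_R$ exceeds $\kappa_0$ by the factor $\|R\|/(2\|R\|_{\max})$, which can be of order $s$. Your final ``$\lesssim s^2\kappa_R\|H\|/\varepsilon_H$'' drops this prefactor; tracked honestly, your route gives $\log(2+s^3\kappa_R\|H\|/\varepsilon_H)$.

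Also discard the first estimate you float, $\|B\|_{\max}^2\rho_R\le\|BR^{-1}B^\top\|$. It fails for $d>1$: take $R=\operatorname{diag}(1,\epsilon)$ and $B=[e_1,0]$, giving $1/\epsilon$ versus $1$. This is exactly why the factor $\kappa_R$ appears.

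Finally, handle $B=0$ separately, since $\varepsilon_H/\alpha_B^2$ is then undefined; in that case the $BR^{-1}B^\top$ term and the inverse construction are simply omitted.
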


\begin{proof}
Since \(\alpha_R\geq\|R\|\), we may apply
Lemma~\ref{lem:inverse-be-approx} with
\(\kappa_0=2\alpha_R\rho_R\geq2\). For any
\(\varepsilon_{R^{-1}}>0\), this gives an
\((8\rho_R,a_R+1,\varepsilon_{R^{-1}})\)-block-encoding of
\(R^{-1}\), using
\[
\mathcal O\!\left(
\alpha_R\rho_R
\log\!\left(2+\frac{\rho_R}{\varepsilon_{R^{-1}}}\right)
\right)
\]
queries to \(U_R\) and its inverse. The additional gate count has
a further factor \(a_R+1\).

Assume first that \(B\ne0\), and choose
\[
\varepsilon_{R^{-1}}=\frac{\varepsilon_H}{\alpha_B^2}.
\]
Extend the encoding of \(R^{-1}\) to the \(l_*\)-qubit system
register of \(U_B\), using
Lemma~\ref{lem:register-extension-compression}. Multiplication by
\(U_B\) and \(U_B^\dagger\) then encodes \(BR^{-1}B^\top\)
with normalization \(8\alpha_B^2\rho_R\), error at most
\(\alpha_B^2\varepsilon_{R^{-1}}=\varepsilon_H\), and
\(2a_B+a_R+1+e(l_d,l_*)\) ancillas.
The target is supported on its first \(n\) coordinates.
Compressing the system register from \(l_*\) to \(l_n\) therefore
adds \(l_*-l_n\) projected qubits, giving \(b_H\) ancillas.

Write
\[
H
=
\ket0\bra0\otimes A
-\ket0\bra1\otimes BR^{-1}B^\top
-\ket1\bra0\otimes Q
-\ket1\bra1\otimes A^\top.
\]
The four type-register matrix units have exact \((1,1)\)
encodings. The Kronecker-product rule gives ancilla counts
\(a_A+1,b_H+1,a_Q+1,a_A+1\), respectively.
A balanced two-level linear combination of these four terms
adds two selection qubits. Its normalization is
\eqref{eq:alpha-H}, its error is at most \(\varepsilon_H\),
and its ancilla count is \eqref{eq:a_H}.
If \(B=0\), the corresponding term and inverse construction
are simply omitted.

We next bound the normalization. Since
\(B=(BR^{-1/2})R^{1/2}\),
\begin{equation}\label{eq:alpha-H-inequality0}
\begin{aligned}
\|B\|_{\max}^2\rho_R
&\leq\|B\|^2\rho_R\\
&\leq\|BR^{-1/2}\|^2\|R\|\rho_R
=\kappa_R\|BR^{-1}B^\top\|
\leq\kappa_R\|H\|.
\end{aligned}
\end{equation}
Each block of \(H\) has norm at most \(\|H\|\). Thus
\begin{equation}\label{eq:alpha-H-inequality}
\alpha_H
\leq(3s+8s^2\kappa_R)\|H\|
\leq11s^2\kappa_R\|H\|.
\end{equation}
Conversely, the triangle inequality gives
\begin{equation}\label{eq:alpha_H-geq-H}
\|H\|
\leq2\|A\|+\|Q\|+\|BR^{-1}B^\top\|
\leq\alpha_H.
\end{equation}
The inverse construction therefore uses
\begin{equation}\label{eq:BE-R-inverse-scaling}
\begin{aligned}
\mathcal O\!\left(
\alpha_R\rho_R
\log\!\left(2+\frac{\alpha_B^2\rho_R}{\varepsilon_H}\right)
\right)
=
\mathcal O\!\left(
s\kappa_R
\log\!\left(2+\frac{s^2\kappa_R\|H\|}{\varepsilon_H}\right)
\right)
\end{aligned}
\end{equation}
queries. Register extension, compression, and the type-register
products give the stated additional gate count.
\end{proof}

\subsection{Block-encoding of the one-step propagator}
\label{app:be-Phi-construction}

\begin{lemma}[Block-encoding of \(\Phi\)]
\label{lem:be-Phi-construction}
Suppose that \(H\ne0\), \(h\|H\|<1/(4\pi)\), and let
\(\chi=\|\mathcal G_h^{-1}\|\). The known upper bound
\(\chi=2\) may instead be used in the construction.
For any target \(\varepsilon_\Phi>0\), choose
\begin{equation}\label{eq:eplison_G-Phi-relation}
\varepsilon_{\mathcal G_h^{-1}}
=
\min\left\{
\frac12,\,
\frac{4c_{\rm inv}\alpha_{\mathcal A}\chi}
{2^{l_p/2}(4c_{\rm inv}\alpha_{\mathcal A}
h\alpha_H\chi+3)}
\varepsilon_\Phi
\right\},
\end{equation}
and
\begin{equation}\label{eq:vareplison-H}
\varepsilon_H
=
\frac{3\varepsilon_{\mathcal G_h^{-1}}}
{32c_{\rm inv}h\alpha_{\mathcal A}\chi^2}.
\end{equation}
Then \(\Phi\) has a block-encoding with parameters
\begin{equation}\label{eq:alpha_Phi-and-a_Phi}
\begin{aligned}
\alpha_\Phi
&=1+2^{l_p/2+3}h\alpha_H\chi\|r\|,\\
a_\Phi
&=a_{\mathcal A}+2a_H+l_p+7+2\tau_{p,n},
\end{aligned}
\end{equation}
and error at most \(\varepsilon_\Phi\).
It uses one call to \(U_r^\dagger\) and
\[
\mathcal O\!\left(s^2\kappa_R\Lambda_\Phi\right),
\qquad
\Lambda_\Phi=
\log\!\left(2+\frac{\sqrt p\,s^2\kappa_R}{\varepsilon_\Phi}\right),
\]
calls to \(U_{\mathcal A},U_H\) and their inverses.
Excluding the implementations of these oracles and \(U_r\),
the additional gate count is
\[
\mathcal O\!\left(
s^2\kappa_R\Lambda_\Phi
\log(2d)\log^2(2n)\log^2(2p)
\right).
\]
\end{lemma}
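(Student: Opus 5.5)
The construction follows the formula \eqref{eq:rk-propagator-definition}, $\Phi=I_{2n}+(r^\top\otimes I_{2n})\mathcal G_h^{-1}(\mathbf 1_p\otimes hH)$. I would assemble it in four stages.

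\emph{Stage 1: encode $\mathcal G_h$.} Take the exact encoding of $\mathcal A$ with parameters $(\alpha_{\mathcal A},l_p+1)$ and the $(\alpha_H,a_H,\varepsilon_H)$-encoding of $H$ from Lemma~\ref{lem:be-H-construction}. The Kronecker rule of Lemma~\ref{lem:manipulation-of-matrices} then gives $\mathcal A\otimes H$ with normalization $\alpha_{\mathcal A}\alpha_H$, error $\alpha_{\mathcal A}\varepsilon_H$, and $a_{\mathcal A}+a_H+\tau_{p,n}$ ancillas. A linear combination with the identity, weighted $1$ and $h$, encodes $\mathcal G_h$ with normalization $\alpha_G=1+h\alpha_{\mathcal A}\alpha_H$ and error $h\alpha_{\mathcal A}\varepsilon_H$. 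Because $h\alpha_H\le 11s^2\kappa_R h\|H\|$, the normalization satisfies $\alpha_G=\mathcal O(s^2\kappa_R)$.

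\emph{Stage 2: invert $\mathcal G_h$.} Apply Lemma~\ref{lem:inverse-be-approx} with $\kappa_0=\max\{2,\alpha_G\chi\}$. This yields a $(4\kappa_0/\alpha_G,\cdot,\varepsilon_{\mathcal G_h^{-1}})$-encoding, whose normalization is $\approx 4\chi$. The cost is $\mathcal O(\kappa_0\log(\cdot))=\mathcal O(s^2\kappa_R\Lambda_\Phi)$ calls, which is the claimed query count.

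The main obstacle is checking that \eqref{eq:vareplison-H} satisfies the two hypotheses on $\delta=h\alpha_{\mathcal A}\varepsilon_H$ in Lemma~\ref{lem:inverse-be-approx}:
\[
\delta\le\frac{\alpha_G}{4\kappa_0},\qquad \delta\le\frac{3\alpha_G^2\varepsilon}{8c_{\rm inv}\kappa_0^2}.
\]
Using $\kappa_0\le 2\alpha_G\chi$ (valid since $\alpha_G\chi\ge\|\mathcal G_h\|\|\mathcal G_h^{-1}\|\ge1$), the second condition reduces to $h\alpha_{\mathcal A}\varepsilon_H\le 3\varepsilon/(32c_{\rm inv}\chi^2)$, which is exactly \eqref{eq:vareplison-H}. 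The first condition then follows from $\varepsilon_{\mathcal G_h^{-1}}\le 1/2$ together with $\chi\ge 2/3$.

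\emph{Stage 3: sandwich the inverse.} Multiply on the right by $\mathbf 1_p\otimes hH$ and on the left by $r^\top\otimes I_{2n}$. By the auxiliary table, $\mathbf 1_p$ has normalization $2^{l_p/2}$ and $r^\top$ has normalization $\|r\|$; the Kronecker products with $H$ and $I$ add $\tau_{p,n}$ ancillas each. The product rule then gives normalization $\|r\|\cdot 4\chi\cdot 2^{l_p/2}h\alpha_H$, up to the factor $2$ absorbed in the $2^{l_p/2+3}$ of \eqref{eq:alpha_Phi-and-a_Phi}. Its error is
\[
\|r\|\,2^{l_p/2}\bigl(h\alpha_H\varepsilon_{\mathcal G_h^{-1}}+4\chi h\varepsilon_H\bigr)
\]
plus a cross term, using $\|r\|\le1$.

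\emph{Stage 4: add the identity.} A final linear combination with $I_{2n}$ gives $\alpha_\Phi=1+2^{l_p/2+3}h\alpha_H\chi\|r\|$. Substituting \eqref{eq:vareplison-H} expresses the total error as
\[
2^{l_p/2}\varepsilon_{\mathcal G_h^{-1}}\bigl(4c_{\rm inv}\alpha_{\mathcal A}h\alpha_H\chi+3\bigr)/(4c_{\rm inv}\alpha_{\mathcal A}\chi),
\]
so the choice \eqref{eq:eplison_G-Phi-relation} makes it at most $\varepsilon_\Phi$.

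Counting ancillas (two copies of $H$, namely one inside $\mathcal G_h$ and one in the right factor, the $l_p$-dependent factors, and the LCU and inversion qubits) gives $a_\Phi$. The gate count follows from the QSVT overhead together with the $\log^2$ Kronecker permutation costs. The only delicate point in the logarithm is that $\Lambda_\Phi$ carries $\sqrt p$, which comes from $2^{l_p/2}\le\sqrt{2p}$ in the error allocation.
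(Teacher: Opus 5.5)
Your proposal follows the paper's proof essentially step for step. It uses the same LCU encoding of $\mathcal G_h$, the same QSVT inversion via Lemma~\ref{lem:inverse-be-approx} with the hypotheses checked through $\kappa_0\le 2\alpha_{\mathcal G_h}\chi$, the same sandwich by $r^\top\otimes I_{2n}$ and $\mathbf 1_p\otimes hH$, and the same error and $\sqrt p$ logarithm bookkeeping. Two bookkeeping points need tightening.

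First, take $\kappa_0=2\alpha_{\mathcal G_h}\chi$ outright. It satisfies $\kappa_0\ge\max\{2,\alpha_{\mathcal G_h}\|\mathcal G_h^{-1}\|\}$ because $\alpha_{\mathcal G_h}\chi\ge\|\mathcal G_h\|\|\mathcal G_h^{-1}\|\ge1$, and your hypothesis check already covers it. This choice gives inverse normalization exactly $8\chi$, hence exactly the stated $\alpha_\Phi$. Your choice $\kappa_0=\max\{2,\alpha_{\mathcal G_h}\chi\}$ instead gives $4\kappa_0/\alpha_{\mathcal G_h}\le 8\chi$, and a smaller normalization cannot simply be ``absorbed'' without an extra ancilla. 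Also, the product rule of Lemma~\ref{lem:manipulation-of-matrices} has no cross term. The error is therefore exactly $\|r\|2^{l_p/2}\bigl(h\alpha_H\varepsilon_{\mathcal G_h^{-1}}+8\chi h\varepsilon_H\bigr)$, which is the quantity you bound in Stage~4.

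Second, the explicit $l_p$ in $a_\Phi$ comes from a compression step you omit. The three factors act on a common $\lceil\log_2(2pn)\rceil$-qubit system register. Compressing it to $l_n+1$ qubits moves $l_p-\tau_{p,n}$ system qubits into the ancilla register (Lemma~\ref{lem:register-extension-compression}). Without this step your count gives $a_{\mathcal A}+2a_H+7+3\tau_{p,n}$ on the larger register, not the stated $a_\Phi$.
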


\begin{proof}
Combine \(U_{\mathcal A}\) and \(U_H\) to encode
\(\mathcal A\otimes hH\). Adding the padded identity gives an
encoding of \(\mathcal G_h\) with
\[
\begin{aligned}
\alpha_{\mathcal G_h}&=1+h\alpha_{\mathcal A}\alpha_H,\\
a_{\mathcal G_h}&=a_{\mathcal A}+a_H+\tau_{p,n}+1,\\
\varepsilon_{\mathcal G_h}&=h\alpha_{\mathcal A}\varepsilon_H.
\end{aligned}
\]
Since \(\alpha_{\mathcal A}\geq\|\mathcal A\|\) and
\(\alpha_H\geq\|H\|\), we have
\(\alpha_{\mathcal G_h}\geq\|\mathcal G_h\|\).
Apply Lemma~\ref{lem:inverse-be-approx} with
\(\kappa_0=2\alpha_{\mathcal G_h}\chi\).
Its input error condition is
\[
\varepsilon_{\mathcal G_h}
\leq
\min\left\{
\frac1{8\chi},
\frac{3\varepsilon_{\mathcal G_h^{-1}}}
{32c_{\rm inv}\chi^2}
\right\}.
\]
Our choices satisfy this condition: the second term is the
chosen error, and it does not exceed the first because
\(\varepsilon_{\mathcal G_h^{-1}}\leq1/2\) and
\(\chi>2/3\).
Thus the inverse encoding has normalization \(8\chi\),
\(a_{\mathcal A}+a_H+\tau_{p,n}+2\) ancillas, and error
\(\varepsilon_{\mathcal G_h^{-1}}\).

The two rectangular factors in
\eqref{eq:rk-propagator-definition} have parameters
\[
\begin{array}{c|c|c|c}
\text{matrix}
&\text{normalization}
&\text{ancillas}
&\text{error}\\
\hline
\rule{0pt}{2.8ex}r^\top\otimes I_{2n}
&\|r\|
&2+\tau_{p,n}
&0\\
\mathbf1_p\otimes hH
&2^{l_p/2}h\alpha_H
&a_H+2+\tau_{p,n}
&2^{l_p/2}h\varepsilon_H
\end{array}
\]
These follow from Appendix~\ref{app:auxiliary-factors} and the
Kronecker-product rule. All three factors act on the common
system register of size
\[
r_{\mathcal G}
=
\lceil\log_2(2pn)\rceil
=
l_p+l_n+1-\tau_{p,n}.
\]
Their product has
\(a_{\mathcal A}+2a_H+6+3\tau_{p,n}\) ancillas.
Its target is supported on the first \(2n\) coordinates.
Compression to \(l_n+1\) system qubits adds
\(l_p-\tau_{p,n}\) projected qubits. The final sum with \(I_{2n}\)
adds one selection qubit. This proves both counts in
\eqref{eq:alpha_Phi-and-a_Phi}.

The product rule gives the following error bound:
\begin{equation}\label{eq:vareplison-Phi}
\begin{aligned}
\varepsilon_{\rm out}
&\leq
2^{l_p/2}h\alpha_H\|r\|
\varepsilon_{\mathcal G_h^{-1}}
+
2^{l_p/2+3}h\chi\|r\|\varepsilon_H\\
&\leq
2^{l_p/2}
\left(h\alpha_H+
\frac{3}{4c_{\rm inv}\alpha_{\mathcal A}\chi}\right)
\varepsilon_{\mathcal G_h^{-1}}
\leq\varepsilon_\Phi.
\end{aligned}
\end{equation}
Here we used \(\|r\|\leq1\) and \eqref{eq:vareplison-H}.
This argument retains the minimum in
\eqref{eq:eplison_G-Phi-relation}, so it applies to every target
accuracy.

The inverse construction uses
\[
\mathcal O\!\left(
(1+h\alpha_{\mathcal A}\alpha_H)\chi
\log\!\left(2+\frac{\chi}{\varepsilon_{\mathcal G_h^{-1}}}\right)
\right)
\]
calls to \(U_{\mathcal G_h}\) and its inverse.
Now \(\alpha_{\mathcal A}=\Theta(1)\), \(\chi=\Theta(1)\), and
\(h\alpha_H=\mathcal O(s^2\kappa_R)\). Moreover,
\[
\varepsilon_{\mathcal G_h^{-1}}^{-1}
=
\mathcal O\!\left(
1+\frac{\sqrt p(1+h\alpha_H)}{\varepsilon_\Phi}
\right).
\]
These estimates give the query bound. The register permutations
and QSVT reflections give the stated additional gate count.
\end{proof}

\subsection{Block-encoding of the nodal matrix and its inverse}
\label{app:be-L-construction}

\begin{theorem}[Block-encoding of \(L\) and \(L^{-1}\)]
\label{thm:be-L-construction}
Under the assumptions of Lemma~\ref{lem:be-Phi-construction},
there is an \((\alpha_L,a_L,\varepsilon_\Phi)\)-block-encoding
of \(L\), where
\[
\alpha_L=\alpha_\Phi+\alpha_G+3,\qquad
a_L=\max\{a_\Phi+1,a_G+3\}+\tau_{m,n}+2.
\]
It uses one call to \(U_\Phi\), one call to \(U_G\), and
\(\mathcal O(\log^2(2m)\log^2(2n))\) additional gates.

Suppose that \(L\) is nonsingular. Let
\(\nu\geq\|L^{-1}\|\) be an available bound and
\(0<\varepsilon_{L^{-1}}\leq\nu\). Choose
\begin{equation}\label{eq:inverse-L-input-accuracy}
\varepsilon_\Phi=\varepsilon_L
=
\min\left\{
\frac1{8\nu},
\frac{3\varepsilon_{L^{-1}}}{32c_{\rm inv}\nu^2}
\right\}.
\end{equation}
Then there is an
\((8\nu,a_L+1,\varepsilon_{L^{-1}})\)-block-encoding of
\(L^{-1}\). Define
\[
\begin{aligned}
\Gamma_L&=s^2\kappa_R+s\|G\|_{\max}+1,\\
\Lambda_L&=\log\!\left(2+\frac{\nu}{\varepsilon_{L^{-1}}}\right),\\
\Lambda_\Phi&=\log\!\left(
2+\frac{\sqrt p\,s^2\kappa_R\nu^2}{\varepsilon_{L^{-1}}}
\right).
\end{aligned}
\]
One implementation uses
\begin{equation}\label{eq:inverse-L-total-query-cost}
Q_{\rm inv}
=
\mathcal O\!\left(
\Gamma_Ls^3\kappa_R^2\nu\Lambda_L\Lambda_\Phi^2
\right)
\end{equation}
queries to the input matrix oracles.
\end{theorem}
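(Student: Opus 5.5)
The plan is to build \(L\) from its block form as a linear combination of three Kronecker-product terms:
\[
L=\bigl(E_{00}\otimes B_0\bigr)+\bigl(E_{0,m-1}\otimes B_T\bigr)+\sum_{k=1}^{m-1}E_{kk}\otimes I_{2n}-S_m\otimes\Phi .
\]
Here \(B_0=\ket0\bra0\otimes I_n\) and \(B_T=\ket1\bra1\otimes I_n-\ket1\bra0\otimes G\), written on the type register. The diagonal identity part \(\sum_{k\ge1}E_{kk}\) is the trailing projection \(\Pi'_{m,m-1}\) from Lemma~\ref{lem:projection-block-encoding}. The shift \(S_m\) and the matrix units \(E_{00},E_{0,m-1}\) come from Lemma~\ref{lem:shift-matrix-block-encoding}. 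The type-register units come from the same lemmas and Figure~\ref{fig:block-encoding1}.

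Each term receives its encoding through the Kronecker-product rule of Lemma~\ref{lem:manipulation-of-matrices}, with the padding gap \(\tau_{m,n}\) accounted for as in Lemma~\ref{lem:readout-operator-encoding}. The term \(S_m\otimes\Phi\) has normalization \(\alpha_\Phi\) and error \(\varepsilon_\Phi\), because \(\|S_m\|=1\) and its encoding is exact. The \(G\)-term contributes \(\alpha_G\). The three remaining exact pieces (\(E_{00}\otimes B_0\), the \(\ket1\bra1\) part of \(B_T\), and the interior identity) each contribute normalization \(1\).

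A linear combination (iterated sum rule, or a balanced LCU with two selection qubits) then gives normalization \(\alpha_\Phi+\alpha_G+3\) and error \(\varepsilon_\Phi\), since all other terms are exact. The ancillas are the maximum of \(a_\Phi+1\) and \(a_G+3\), plus \(\tau_{m,n}\), plus two selection qubits. This matches \(a_L\), provided I group the four exact terms so that the \(G\) branch carries its two type/time matrix-unit ancillas plus one. The gate count \(\mathcal O(\log^2(2m)\log^2(2n))\) comes from the Kronecker permutations, with one call each to \(U_\Phi\) and \(U_G\). Matching the exact ancilla bookkeeping is tedious but routine.

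For the inverse, I apply Lemma~\ref{lem:inverse-be-approx} with \(\alpha=\alpha_L\) and \(\kappa_0=2\alpha_L\nu\ge\max\{2,\alpha_L\|L^{-1}\|\}\). This is valid because \(\alpha_L\ge3\) and \(\nu\ge\|L^{-1}\|\), and it gives normalization \(4\kappa_0/\alpha_L=8\nu\) and \(a_L+1\) ancillas. The lemma's input condition then reads
\[
\delta\le\min\Bigl\{\tfrac1{8\nu},\,\tfrac{3\varepsilon_{L^{-1}}}{32c_{\rm inv}\nu^2}\Bigr\},
\]
which is exactly \eqref{eq:inverse-L-input-accuracy} with \(\delta=\varepsilon_\Phi\). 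The lemma uses \(\mathcal O(\alpha_L\nu\Lambda_L)\) calls to \(U_L\), and here \(\kappa_0/(\alpha_L\varepsilon_{L^{-1}})=2\nu/\varepsilon_{L^{-1}}\).

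The main obstacle is showing \(\alpha_L=\mathcal O(\Gamma_L)\) and expanding each \(U_L\) into input-oracle queries. I take \(\alpha_G=s\|G\|_{\max}\) and \(\chi\le2\), and use Lemma~\ref{lem:uniform-bou-pr} to get \(2^{l_p/2}\|r\|\le\sqrt{2p}\sqrt{4\pi/(3p)}=\mathcal O(1)\). Combining these with \(h\alpha_H\le11s^2\kappa_R h\|H\|=\mathcal O(s^2\kappa_R)\) from \eqref{eq:alpha-H-inequality} gives \(\alpha_\Phi=\mathcal O(s^2\kappa_R)\), and therefore \(\alpha_L=\mathcal O(\Gamma_L)\).

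Each \(U_L\) costs one \(U_\Phi\). By Lemma~\ref{lem:be-Phi-construction}, \(U_\Phi\) costs \(\mathcal O(s^2\kappa_R\Lambda_\Phi)\) calls to \(U_H\), with \(\varepsilon_\Phi\) substituted through \eqref{eq:inverse-L-input-accuracy}. This substitution turns its logarithm into the \(\Lambda_\Phi\) defined in the theorem, with the \(\nu^2/\varepsilon_{L^{-1}}\) dependence. Each \(U_H\) costs \(\mathcal O(s\kappa_R\log(\cdot))\) calls to \(U_R\) by Lemma~\ref{lem:be-H-construction}. Here I must check that \(\varepsilon_H\) is polynomially related to \(\varepsilon_\Phi\) through \eqref{eq:vareplison-H} and \eqref{eq:eplison_G-Phi-relation}, so that this logarithm is also \(\mathcal O(\Lambda_\Phi)\). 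Multiplying the factors gives
\[
\Gamma_L\nu\Lambda_L\cdot s^2\kappa_R\Lambda_\Phi\cdot s\kappa_R\Lambda_\Phi
=\Gamma_Ls^3\kappa_R^2\nu\Lambda_L\Lambda_\Phi^2,
\]
which is \eqref{eq:inverse-L-total-query-cost}.
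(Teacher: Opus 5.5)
Your proposal matches the paper's proof essentially step for step. It uses the same four-term decomposition, with the two pieces of $B_T$ summed before tensoring with $E_{0,m-1}$, and the same inversion via Lemma~\ref{lem:inverse-be-approx} with $\kappa_0=2\alpha_L\nu$. It obtains $\alpha_L=\mathcal O(\Gamma_L)$ from $\alpha_\Phi=\mathcal O(1+h\alpha_H)$, and it uses the same nested count $U_L\to U_\Phi\to U_H\to U_R$, with both logarithms controlled by $\Lambda_\Phi$.

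One justification needs tightening: $\kappa_0\ge2$ does not follow from $\alpha_L\ge3$ and $\nu\ge\|L^{-1}\|$ alone. You can close it with $\alpha_L\ge\|L\|$, so that $\alpha_L\|L^{-1}\|\ge1$. Alternatively, use the paper's route: $\|L^{-1}\|\ge1$, because the solution of $Lq=b$ carries $x^0$ in its first block.
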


\begin{proof}
Decompose \(L=L_1+L_2+L_3+L_4\), where
\[
\begin{aligned}
L_1&=\sum_{k=1}^{m-1}E_{kk}\otimes I_{2n},
&L_2&=-S_m\otimes\Phi,\\
L_3&=E_{00}\otimes B_0,
&L_4&=E_{0,m-1}\otimes B_T.
\end{aligned}
\]
Here the shift \(S_m\) is defined in \eqref{eq:shift}. The matrices \(L_1\) and \(L_3\) are coordinate projections:
\[
L_1=\operatorname{diag}(0_{2n},I_{2n(m-1)}),
\qquad
L_3=\operatorname{diag}(I_n,0_{2nm-n}).
\]
Each therefore has an exact \((1,1)\)-encoding by Lemma~\ref{lem:projection-block-encoding}.
For \(L_2\), the shift construction gives parameters
\((\alpha_\Phi,a_\Phi+1+\tau_{m,n},\varepsilon_\Phi)\).

The terminal block is
\[
B_T=-\ket1\bra0\otimes G
+\operatorname{diag}(0_n,I_n).
\]
Its two terms have exact encodings with parameters
\((\alpha_G,a_G+1)\) and \((1,1)\), respectively.
Their sum has parameters \((\alpha_G+1,a_G+2)\).
Combining it with \(E_{0,m-1}\) gives an exact
\((\alpha_G+1,a_G+3+\tau_{m,n})\)-encoding of \(L_4\).
A balanced sum of the four \(L_i\) yields the stated
\(\alpha_L,a_L,\varepsilon_L\). Only \(L_2\) contributes error.

Since \(\alpha_\Phi\geq\|\Phi\|\) and
\(\alpha_G\geq\|G\|\), this construction also gives
\(\alpha_L\geq\|L\|\).
Lemma~\ref{lem:inverse-be-approx}, with
\(\kappa_0=2\alpha_L\nu\), therefore applies under
\eqref{eq:inverse-L-input-accuracy}.
It gives output normalization \(8\nu\) and uses
\(\mathcal O(\alpha_L\nu\Lambda_L)\) calls to \(U_L\)
and its inverse.

Since \(\varepsilon_{L^{-1}}\leq\nu\), the second term in
\eqref{eq:inverse-L-input-accuracy} is the smaller one.
Thus \(\varepsilon_\Phi=\Theta(\varepsilon_{L^{-1}}/\nu^2)\).
The coefficient bounds in Appendix~\ref{app:gauss-discretization}
give
\[
2^{l_p/2}\|r\|\leq\sqrt{\frac{8\pi}{3}},
\qquad
\alpha_\Phi=\mathcal O(1+h\alpha_H)
=\mathcal O(s^2\kappa_R+1).
\]
Consequently, \(\alpha_L=\mathcal O(\Gamma_L)\).
Also \(\|L^{-1}\|\geq1\): for a right-hand side supported on
the initial-state boundary data, the solution contains that same
initial state. Thus \(\nu\geq1\).

Let \(Q_0=\mathcal O(\Gamma_L\nu\Lambda_L)\) be the outer
inversion query count. Each call to \(U_L\) uses one \(U_\Phi\),
and each \(U_\Phi\) uses
\(\mathcal O(s^2\kappa_R\Lambda_\Phi)\) calls to \(U_H\)
and \(U_{\mathcal A}\).
For the precision \eqref{eq:vareplison-H},
\[
\log\!\left(2+\frac{s^2\kappa_R\|H\|}{\varepsilon_H}\right)
=
\mathcal O(\Lambda_\Phi).
\]
To see this, use
\(\varepsilon_H^{-1}
=\mathcal O(h/\varepsilon_{\mathcal G_h^{-1}})\),
the bound for \(\varepsilon_{\mathcal G_h^{-1}}^{-1}\)
above, and \(h\|H\|=\mathcal O(1)\).
The resulting query counts are
\[
\begin{array}{c|c}
\text{oracle, including its inverse}&\text{number of calls}\\ \hline
U_G,\ U_r
&\mathcal O(\Gamma_L\nu\Lambda_L)\\
U_{\mathcal A},\ U_A,\ U_B,\ U_Q
&\mathcal O(\Gamma_Ls^2\kappa_R\nu\Lambda_L\Lambda_\Phi)\\
U_R
&\mathcal O(\Gamma_Ls^3\kappa_R^2\nu
\Lambda_L\Lambda_\Phi^2).
\end{array}
\]
This proves \eqref{eq:inverse-L-total-query-cost}.

For clarity, the auxiliary oracles also contribute gates.
Let \(Q_{\mathcal A}\) and \(Q_r\) denote their query bounds in
the table, and put
\[
\mathcal L_{\rm reg}
=
\log(2d)\log^2(2n)\log^2(2m)\log^2(2p).
\]
The fully expanded additional gate count is bounded by
\begin{equation}\label{eq:inverse-L-additional-gates}
\mathcal O\!\left(
Q_{\rm inv}\mathcal L_{\rm reg}
+Q_{\mathcal A}p^2+Q_rp
\right).
\end{equation}
The first term includes the nested QSVT procedures, register
permutations, and projections. The other two terms implement
\(U_{\mathcal A}\) and \(U_r\). The ancilla formulas above give
\(a_L=\mathcal O(\log(2n)+\log(2d)+\log(2p))\).
\end{proof}

\subsection{Grid choice and discretization bounds}
\label{app:algorithm-discretization}

Use the grid specified in Section~\ref{sec:main-results}:
\begin{equation}\label{eq:m-scaling}
m=1+\max\{1,\lceil8\pi T\|H\|\rceil\},
\qquad h=\frac{T}{m-1}.
\end{equation}
Then \(h\|H\|\leq1/(8\pi)<1/(4\pi)\). Define the available bound
\begin{equation}\label{eq:algorithm-exact-inverse-bound}
K_{\rm ex}
=
M\sum_{j=0}^{m-1}e^{-\alpha_cjh}
=
M\frac{1-e^{-\alpha_chm}}{1-e^{-\alpha_ch}}.
\end{equation}
By Theorem~\ref{thm:inverse-bound-L},
\(\|L_{\rm ex}^{-1}\|\leq K_{\rm ex}\).

This bound remains uniform also for short horizons.
If \(8\pi T\|H\|\leq1\), then \(m=2\), so \(K_{\rm ex}\leq2M\).
Otherwise \(m-1\leq16\pi T\|H\|\), and hence
\[
\sum_{j=0}^{m-1}e^{-\alpha_cjh}
\leq\frac1{1-e^{-\alpha_ch}}
\leq1+\frac1{\alpha_ch}
\leq1+\frac{16\pi\|H\|}{\alpha_c}.
\]
Thus, in both cases,
\begin{equation}\label{eq:algorithm-inverse-scaling}
K_{\rm ex}
=
\mathcal O\!\left(M(1+\alpha_c^{-1}\|H\|)\right).
\end{equation}

Lemma~\ref{lem:rk-global-error-v2} gives the convenient bound
\begin{equation}\label{eq:Delta_h-ineq}
\Delta_h
\leq\frac{8(p!)^2}{(2p)!(2p+1)!}
\leq\frac{16\sqrt p}{4^p p^{p+1}}
\leq\frac{16}{(4p)^p}.
\end{equation}
Here we used
\(\binom{2p}{p}\geq4^p/(2\sqrt p)\) and
\((2p+1)!\geq p^{p+1}\).
If \(\Delta_hK_{\rm ex}\leq1/2\), the perturbation estimates in
Appendix~\ref{app:approx-conditioning-proof} imply
\begin{equation}\label{eq:algorithm-nodal-error}
\|L^{-1}\|\leq2K_{\rm ex},
\qquad
\|q-q_{\rm ex}\|
\leq2\Delta_hK_{\rm ex}^2\|x^0\|.
\end{equation}
Indeed,
\(q-q_{\rm ex}=L^{-1}(L_{\rm ex}-L)q_{\rm ex}\) and
\(\|q_{\rm ex}\|\leq K_{\rm ex}\|b\|
=K_{\rm ex}\|x^0\|\).
We use the known inversion bound \(\nu=2K_{\rm ex}\) below.

The terminal matrix can also be bounded by the stability
parameters:
\begin{equation}\label{eq:G-M-relationship}
\|G\|_{\max}\leq\|G\|\leq M_P\leq M.
\end{equation}
The second inequality follows from \(P(T)=G\), and the last
from the definition of \(M\). We retain \(\|G\|_{\max}\) explicitly
in \(\Gamma_L\).

\subsection{The optimal-cost identity}
\label{app:optimal-cost-identity-proof-v2}

\begin{lemma}[Optimal-cost identity]
\label{lem:optimal-cost-identity-v2}
The optimal state and costate satisfy
\begin{equation}\label{eq:optimal-cost-initial-costate}
J^*=\frac12(x^0)^\top\lambda(0).
\end{equation}
\end{lemma}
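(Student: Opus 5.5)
The plan is to compute the time derivative of \(x(t)^\top\lambda(t)\) along the optimal state--costate trajectory and integrate it over \([0,T]\). Eliminating the control with \eqref{eq: u and lambda relation} should turn the cost functional into a total derivative plus boundary terms. From \eqref{eq: ODE for LQR} and the symmetry of \(Q\) and \(BR^{-1}B^\top\),
\[
\frac{\mathrm d}{\mathrm dt}\bigl(x^\top\lambda\bigr)
=\dot x^\top\lambda+x^\top\dot\lambda
=\bigl(Ax-BR^{-1}B^\top\lambda\bigr)^\top\lambda
+x^\top\bigl(-Qx-A^\top\lambda\bigr).
\]
This simplifies to \(-\lambda^\top BR^{-1}B^\top\lambda-x^\top Qx\), because the terms \(x^\top A^\top\lambda\) cancel.

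Next, with \(u^*=-R^{-1}B^\top\lambda\) we have
\[
(u^*)^\top Ru^*=\lambda^\top BR^{-1}RR^{-1}B^\top\lambda=\lambda^\top BR^{-1}B^\top\lambda.
\]
Hence the running cost satisfies
\[
x^\top Qx+(u^*)^\top Ru^*=-\frac{\mathrm d}{\mathrm dt}\bigl(x^\top\lambda\bigr).
\]
Integrating from \(0\) to \(T\) gives
\[
\frac12\int_0^T\bigl(x^\top Qx+(u^*)^\top Ru^*\bigr)\,\mathrm dt
=\frac12\bigl(x(0)^\top\lambda(0)-x(T)^\top\lambda(T)\bigr).
\]
The boundary conditions \eqref{eq: BC for LQR} give \(x(0)=x^0\) and \(x(T)^\top\lambda(T)=x(T)^\top Gx(T)\). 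Adding the terminal cost \(\frac12x(T)^\top Gx(T)\) therefore cancels the second boundary term, leaving \(J[u^*]=\frac12(x^0)^\top\lambda(0)\).

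The only non-computational step is to justify that \(J[u^*]\) equals the optimal value \(J^*\), that is, that the \gls{pmp} solution is the minimizer. For this I would invoke the stated fact that the conditions \eqref{eq: ODE for LQR}--\eqref{eq: BC for LQR} are necessary and sufficient; this rests on strict convexity of \(J\) for \(R\succ0\) and \(Q,G\succeq0\). Alternatively, Theorem~\ref{thm:riccati-lyapunov-decoupling} gives \(\lambda=Px\) together with the feedback law \eqref{eq:optimal-control-feedback}. A completion of squares with \(\frac{\mathrm d}{\mathrm dt}(x^\top Px)\) then shows \(J[u]\ge\frac12(x^0)^\top P(0)x^0\) for every admissible \(u\), with equality at \(u^*\). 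This identifies the value \(J^*=\frac12(x^0)^\top P(0)x^0=\frac12(x^0)^\top\lambda(0)\) independently of the computation above. Neither step is a real obstacle; the main care point is tracking the signs in the derivative.
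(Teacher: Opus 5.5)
Your proposal is correct and follows the paper's proof essentially step for step: differentiate \(x^\top\lambda\) along the state--costate system and use \(u^*=-R^{-1}B^\top\lambda\) to recognize the running cost as \(-\frac{\mathrm d}{\mathrm dt}(x^\top\lambda)\). Then integrate and cancel the terminal term using \(\lambda(T)=Gx(T)\). Your extra justification that \(J[u^*]=J^*\), via sufficiency of the optimality conditions or completion of squares with \(P\), is a point the paper simply takes from its statement that these conditions are necessary and sufficient.
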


\begin{proof}
The state and costate equations give
\[
\frac{\mathrm d}{\mathrm dt}(x^\top\lambda)
=
(Ax-BR^{-1}B^\top\lambda)^\top\lambda
+x^\top(-Qx-A^\top\lambda)
=
-x^\top Qx-\lambda^\top BR^{-1}B^\top\lambda.
\]
Since \(u^*=-R^{-1}B^\top\lambda\), the last quadratic term is
\((u^*)^\top Ru^*\). Integrating and using
\(\lambda(T)=Gx(T)\), we obtain
\[
\begin{aligned}
J^*
&=\frac12\bigl(x(0)^\top\lambda(0)
-x(T)^\top\lambda(T)\bigr)
+\frac12x(T)^\top Gx(T)\\
&=\frac12(x^0)^\top\lambda(0).
\end{aligned}
\]
\end{proof}

\subsection{Optimal-cost estimation}
\label{app:optimal-cost-estimation-proof-v2}

\begin{theorem}[Optimal-cost estimation]
\label{thm:optimal-cost-estimation-v2}
Under Assumptions~\ref{ass:be} and~\ref{ass:sta&det},
Algorithm~\ref{alg:optimal-cost}, with the parameter choices and
the preliminary check below, returns an estimate satisfying
\[
\Pr\bigl(|J^*_{\rm num}-J^*|\leq\varepsilon\bigr)\geq1-\delta.
\]
Its query complexity is
\[
\widetilde{\mathcal O}\!\left(
\frac{\Gamma_Ls^3\kappa_R^2K_{\rm ex}^2\|x^0\|^2}
{\varepsilon}\log\frac2\delta
\right).
\]
The additional gate count is bounded by this expression times
\(\log(2d)\log^2(2n)\log^2(2m)\), up to the suppressed
logarithmic factors.
\end{theorem}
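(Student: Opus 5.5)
The proof has two parts: correctness of Algorithm~\ref{alg:optimal-cost}, and its query and gate cost.

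\textbf{Correctness.} Both early returns are justified first. If \(\|x^0\|=0\), then \(J^*=0\). Otherwise, Lemma~\ref{lem:optimal-cost-identity-v2} gives \(J^*=\tfrac12(x^0)^\top\lambda(0)\). Combining \(|\lambda(0)|\leq\|q_{\rm ex}\|\leq\|L_{\rm ex}^{-1}\|\|b\|\leq K_{\rm ex}\|x^0\|\) with Theorem~\ref{thm:inverse-bound-L} yields \(|J^*|\leq K_{\rm ex}\|x^0\|^2/2\), so returning \(0\) is admissible when \(\varepsilon\) exceeds this bound.

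On the remaining branch, \(p\) is chosen by \eqref{eq:Delta_h-ineq} so that \(\Delta_hK_{\rm ex}\leq1/2\) and \(\Delta_hK_{\rm ex}^2\|x^0\|^2\leq\varepsilon/3\). Because \(\Delta_h\leq16/(4p)^p\), this needs only \(p=\mathcal O(\log(2+K_{\rm ex}\|x^0\|^2/\varepsilon))\). Theorem~\ref{thm:approx-shooting-conditioning} and \eqref{eq:algorithm-nodal-error} then give \(\|L^{-1}\|\leq\nu=2K_{\rm ex}\), and \eqref{eq:overview-output-discretization} gives \(|J_h-J^*|\leq\varepsilon/3\).

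Theorem~\ref{thm:be-L-construction} with \(\varepsilon_{L^{-1}}=\min\{\nu,2\varepsilon/(3\|x^0\|^2)\}\) produces \(U_{L^{-1}}\) with normalization \(8\nu\). The identity \(J_h=\tfrac{\|x^0\|^2}{2}\bra cL^{-1}\ket b\) then gives \(|\widetilde J_h-J_h|\leq\varepsilon/3\). Since \(L^{-1}\) is real, taking the real part does not enlarge the error.

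The state \(\ket b\) is prepared by \(U_{x^0}\) on fresh zero registers. The state \(\ket c\) is prepared by \(U_{x^0}\) followed by an \(X\) on the type qubit selecting the costate block at node \(0\), which costs \(\mathcal O(1)\) gates.

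Lemma~\ref{lem:Hadamard-test}, applied with target error \(\eta_J\), estimates \(\operatorname{Re}\bra c\mathcal B(U_{L^{-1}})\ket b\) on the normalized block. Rescaling the result by \(\alpha_{L^{-1}}\|x^0\|^2/2\) gives error \(\varepsilon/3\) with probability \(1-\delta\). The triangle inequality over the three contributions completes correctness.

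\textbf{Complexity.} The readout needs \(N_J=\mathcal O((1+\alpha_{L^{-1}}\|x^0\|^2/\varepsilon)\log(2/\delta))\) calls to \(U_{L^{-1}}\) and the state-preparation oracles. On the nontrivial branch \(\varepsilon<K_{\rm ex}\|x^0\|^2/2\), so the \(1\) is absorbed and \(N_J=\mathcal O(K_{\rm ex}\|x^0\|^2\varepsilon^{-1}\log(2/\delta))\).

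Each call to \(U_{L^{-1}}\) costs \(Q_{\rm inv}=\mathcal O(\Gamma_Ls^3\kappa_R^2\nu\Lambda_L\Lambda_\Phi^2)\) by \eqref{eq:inverse-L-total-query-cost}. Multiplying gives \(\Gamma_Ls^3\kappa_R^2K_{\rm ex}^2\|x^0\|^2/\varepsilon\). Here \(\Lambda_L\) and \(\Lambda_\Phi\) are logarithmic in \(\nu\), \(p\), \(s\), \(\kappa_R\), \(\|x^0\|\) and \(1/\varepsilon\), so they are suppressed.

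The gate count follows from \eqref{eq:inverse-L-additional-gates}. The factor \(\mathcal L_{\rm reg}\) contributes \(\log(2d)\log^2(2n)\log^2(2m)\) times \(\log^2(2p)\), and the latter is polylogarithmic. The terms \(Q_{\mathcal A}p^2+Q_rp\) are only polylogarithmically larger than \(Q_{\rm inv}\), because \(p\) is logarithmic. Lemma~\ref{lem:Hadamard-test} adds \(\mathcal O(a_L+\log(2nm))\) gates per readout call, which is dominated by the other terms.

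\textbf{Main obstacle.} The delicate step is the bookkeeping showing that the precision choices in Theorem~\ref{thm:be-L-construction} are consistent with \(\varepsilon_{L^{-1}}\). The requirement \(\varepsilon_\Phi=\Theta(\varepsilon_{L^{-1}}/\nu^2)\), propagated through Lemma~\ref{lem:be-Phi-construction} and Lemma~\ref{lem:be-H-construction}, must keep every nested logarithm at \(\mathcal O(\log(\text{poly}(\nu,p,s,\kappa_R,\|x^0\|,1/\varepsilon)))\). I would check this explicitly using \(h\alpha_H=\mathcal O(s^2\kappa_R)\), \(\nu\geq1\), and \(\varepsilon_{L^{-1}}\leq\nu\).
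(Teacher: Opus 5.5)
Your proof follows the paper's argument essentially step for step. It uses the same early-return bound $|J^*|\le\frac12K_{\rm ex}\|x^0\|^2$ and the same two conditions on $p$, which give $\|L^{-1}\|\le 2K_{\rm ex}$ and $|J_h-J^*|\le\varepsilon/3$. It also has the same $\varepsilon/3$ encoding error from $\varepsilon_{L^{-1}}$, the same Hadamard test on the normalization-one block $\mathcal B(U_{L^{-1}})$, and the same product of $N_J$ with $Q_{\rm inv}$ for the query and gate counts.

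\textbf{The preparation of $\ket{c}$ fails as written.} You prepare $\ket{c}$ with a single $X$ on a ``type qubit''. The encoding of $L$ in Theorem~\ref{thm:be-L-construction} uses the contiguous zero-padded embedding of $\mathbb R^{2nm}$ on $\lceil\log_2(2nm)\rceil$ qubits. That is what the Kronecker-product rule of Lemma~\ref{lem:manipulation-of-matrices} produces, with its row and column permutations and the $\tau_{m,n}$ ancilla. The initial costate therefore sits at indices $n,\dots,2n-1$.

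Flipping the bit of weight $2^{l_n}$ sends $j$ to $2^{l_n}+j$, which equals $n+j$ only when $n$ is a power of two. For $n=3$ it moves the populated indices $0,1,2$ to $4,5,6$ rather than $3,4,5$. The Hadamard test would then estimate a matrix element that mixes two costate components at $t_0$ with a state component at $t_1$.

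The paper instead follows $U_{x^0}$ with a reversible addition of the known constant $n$ on the index register, costing $\mathcal O(\log^2(2nm))$ gates. With that substitution your bounds are unchanged. Keeping the $X$-gate trick would require rebuilding every encoding in a blockwise-padded layout, which the cited constructions do not use.

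\textbf{A smaller slip in the stage count.} The condition $(4p)^p\ge48K_{\rm ex}^2\|x^0\|^2/\varepsilon$ gives $p=\mathcal O(\log(2+K_{\rm ex}^2\|x^0\|^2/\varepsilon))$, not $\mathcal O(\log(2+K_{\rm ex}\|x^0\|^2/\varepsilon))$. The nontrivial-branch condition $\varepsilon<K_{\rm ex}\|x^0\|^2/2$ does not control $\log K_{\rm ex}$. This is harmless inside $\widetilde{\mathcal O}$.
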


\begin{proof}
If \(x^0=0\), return zero. More generally,
\[
|J^*|\leq\frac12\|x^0\|\|\lambda(0)\|
\leq\frac12K_{\rm ex}\|x^0\|^2.
\]
Thus zero is already an admissible output when
\(\varepsilon\geq K_{\rm ex}\|x^0\|^2/2\).
Assume henceforth that
\(0<\varepsilon<K_{\rm ex}\|x^0\|^2/2\).

Choose \(p\) so that
\begin{equation}\label{eq:p-scaling-equation1}
(4p)^p\geq
\frac{48K_{\rm ex}^2\|x^0\|^2}{\varepsilon},
\end{equation}
and
\begin{equation}\label{eq:p-scaling-equation2}
(4p)^p\geq32K_{\rm ex}.
\end{equation}
The second condition gives \(\Delta_hK_{\rm ex}\leq1/2\).
Using the first condition and \eqref{eq:algorithm-nodal-error},
\[
|J_h-J^*|
\leq\frac{\|x^0\|}{2}\|q-q_{\rm ex}\|
\leq\Delta_hK_{\rm ex}^2\|x^0\|^2
\leq\frac{\varepsilon}{3}.
\]
With
\[
K_J=\max\left\{
32K_{\rm ex},
\frac{48K_{\rm ex}^2\|x^0\|^2}{\varepsilon}
\right\},
\]
the least sufficient stage number satisfies
\(p=\mathcal O(\log K_J/\log\log K_J)\).
In particular, it grows at most logarithmically in the inverse
required accuracy.

Let \(c=S_{\lambda,0}^\top x^0\). Since
\(\|b\|=\|c\|=\|x^0\|\), the cost-transfer operator in
\eqref{eq:cost-transfer-operator-v2} gives
\[
J_h
=
\frac12b^\top E_0L^{-1}b
=
\frac12c^\top L^{-1}b
=
\frac{\|x^0\|^2}{2}\bra cL^{-1}\ket b.
\]
The states \(\ket b\) and \(\ket c\) each use one call to
\(U_{x^0}\). The former is obtained by adding zero registers.
For the latter, shift the populated indices \(j\in[n]\) to
\(n+j\), using reversible addition of the known constant \(n\).
This works also when \(n\) is not a power of two and uses
\(\mathcal O(\log^2(2nm))\) additional gates.

Apply Theorem~\ref{thm:be-L-construction} with \(\nu=2K_{\rm ex}\)
and
\[
\varepsilon_{L^{-1}}
=
\min\left\{
\nu,\frac{2\varepsilon}{3\|x^0\|^2}
\right\},
\qquad
\alpha_{L^{-1}}=8\nu.
\]
Write
\[
\widetilde L^{-1}
=
\alpha_{L^{-1}}\mathcal B(U_{L^{-1}}),
\qquad
\widetilde J_h
=
\frac{\|x^0\|^2}{2}
\operatorname{Re}\bra c\widetilde L^{-1}\ket b.
\]
The encoding error satisfies
\[
|\widetilde J_h-J_h|
\leq\frac{\|x^0\|^2}{2}\varepsilon_{L^{-1}}
\leq\frac{\varepsilon}{3}.
\]

Since \(U_{L^{-1}}\) is an exact block-encoding of
\(\mathcal B(U_{L^{-1}})\) with normalization factor \(1\),
we use the estimation procedure in
Lemma~\ref{lem:Hadamard-test} to obtain an estimate
\(\widehat\mu_J\) of
\[
\mu_J=
\operatorname{Re}\bra c\mathcal B(U_{L^{-1}})\ket b
\]
with additive error at most
\[
\eta_J=
\frac{2\varepsilon}
{3\alpha_{L^{-1}}\|x^0\|^2}
\]
and failure probability at most \(\delta\).
We then return
\[
J^*_{\rm num}
=
\frac{\alpha_{L^{-1}}\|x^0\|^2}{2}\,
\widehat\mu_J.
\]
On the success event,
\[
|J^*_{\rm num}-\widetilde J_h|
\leq\frac{\alpha_{L^{-1}}\|x^0\|^2}{2}\eta_J
=\frac{\varepsilon}{3}.
\]
Adding the three errors proves the output guarantee.

The estimation procedure requires
\[
\mathcal O\!\left(
\left(1+\frac{\alpha_{L^{-1}}\|x^0\|^2}{\varepsilon}\right)
\log\frac2\delta
\right)
\]
calls to the inverse encoding and state-preparation oracles.
In the nontrivial regime considered here,
\(\alpha_{L^{-1}}\|x^0\|^2/\varepsilon=\Omega(1)\), multiplying by
\eqref{eq:inverse-L-total-query-cost} gives
\[
\widetilde{\mathcal O}\!\left(
\frac{\Gamma_Ls^3\kappa_R^2\nu^2\|x^0\|^2}
{\varepsilon}\log\frac2\delta
\right).
\]
Using \(\nu=2K_{\rm ex}\) proves the stated query bound.
Equation~\eqref{eq:inverse-L-additional-gates}, together with the
stage-number bound, gives the gate count: the factors polynomial
in \(p\) are logarithmic in the precision and problem parameters.
Finally, \eqref{eq:algorithm-inverse-scaling} gives
Theorem~\ref{thm:main-cost}.
\end{proof}

\subsection{Optimal-control estimation}
\label{app:optimal-control-estimation-proof-v2}

\begin{theorem}[Optimal-control estimation]
\label{thm:optimal-control-estimation-v2}
Under Assumptions~\ref{ass:be} and~\ref{ass:sta&det},
Algorithm~\ref{alg:optimal-control}, with the parameter choices
and the preliminary check below, returns a vector satisfying
\[
\Pr\bigl(\|u_{\rm num}(t_k)-u^*(t_k)\|
\leq\varepsilon\bigr)\geq1-\delta.
\]
Its query complexity is
\[
\widetilde{\mathcal O}\!\left(
\frac{
d^{3/2}\Gamma_Ls^4\kappa_R^2
K_{\rm ex}^2\|B\|_{\max}\rho_R\|x^0\|
}{\varepsilon}\log\frac{2d}{\delta}
\right).
\]
The additional gate count has the same factor
\(\log(2d)\log^2(2n)\log^2(2m)\) as in the cost estimate,
up to suppressed logarithmic factors.
\end{theorem}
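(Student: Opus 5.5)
The proof will follow the template of Theorem~\ref{thm:optimal-cost-estimation-v2}, replacing the scalar readout by $d$ component readouts of the operator $C_k=R^{-1}B^\top S_{\lambda,k}L^{-1}$.

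\emph{Trivial cases.} If $x^0=0$ or $B=0$, then $u^*\equiv0$ by \eqref{eq: u and lambda relation}, so returning zero is admissible. Otherwise, since $\|\lambda(t_k)\|\le\|q_{\rm ex}\|\le K_{\rm ex}\|x^0\|$ and $\|R^{-1}B^\top\|\le\|R^{-1}\|\|B\|\le\rho_R\,s\|B\|_{\max}=\beta$, we get $\|u^*(t_k)\|\le\beta K_{\rm ex}\|x^0\|$. This justifies the early return when $\varepsilon\ge\beta K_{\rm ex}\|x^0\|$.

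\emph{Discretization error.} In the remaining regime, choose $p$ via \eqref{eq:Delta_h-ineq} so that $(4p)^p\ge32K_{\rm ex}$ and $(4p)^p\ge96\beta K_{\rm ex}^2\|x^0\|/\varepsilon$. Then \eqref{eq:algorithm-nodal-error} gives $\|L^{-1}\|\le\nu=2K_{\rm ex}$ and $\|u_h(t_k)-u^*(t_k)\|\le2\beta\Delta_hK_{\rm ex}^2\|x^0\|\le\varepsilon/3$. As in the cost proof, $p=\mathcal O(\log K/\log\log K)$ with $K=\max\{32K_{\rm ex},96\beta K_{\rm ex}^2\|x^0\|/\varepsilon\}$, which is logarithmic in the precision.

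\emph{Encoding of $C_k$.} Theorem~\ref{thm:be-L-construction} with the stated $\varepsilon_{L^{-1}}$ gives $U_{L^{-1}}$ with normalization $8\nu$. Lemma~\ref{lem:readout-operator-encoding} supplies an exact encoding of $S_{\lambda,k}$, and Assumption~\ref{ass:be} an exact $(\alpha_B,a_B)$ encoding of $B^\top$ via $U_B^\dagger$. Lemma~\ref{lem:inverse-be-approx} with $\kappa_0=2\alpha_R\rho_R$ gives an $(8\rho_R,a_R+1,\varepsilon^{(u)}_{R^{-1}})$ encoding of $R^{-1}$. We extend all four to a common system register using Lemma~\ref{lem:register-extension-compression} and multiply them with the product rule of Lemma~\ref{lem:manipulation-of-matrices}. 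The resulting normalization is $64\beta\nu$, and the error is
\[
\alpha_B\alpha_{L^{-1}}\varepsilon^{(u)}_{R^{-1}}+8\rho_R\alpha_B\varepsilon_{L^{-1}}\le\frac{\varepsilon}{6\|x^0\|}+\frac{\varepsilon}{6\|x^0\|}.
\]
This is exactly where the two minima in Algorithm~\ref{alg:optimal-control} are used, with $8\beta\varepsilon_{L^{-1}}\le\varepsilon/(6\|x^0\|)$. Since $\|b\|=\|x^0\|$, we obtain $\|\widetilde u_h-u_h(t_k)\|\le\varepsilon/3$.

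\emph{Readout.} Note that $(u_h)_j=-\|x^0\|\bra jC_k\ket b$. Lemma~\ref{lem:Hadamard-test} is applied to $U_{C_k}$, regarded as an exact normalization-one encoding of $\mathcal B(U_{C_k})$, with additive error $\eta_u$ and failure probability $\delta/d$ per component. A union bound then makes all $d$ estimates succeed with probability at least $1-\delta$. The componentwise errors $\alpha_{C_k}\|x^0\|\eta_u$ combine in Euclidean norm to at most $\varepsilon/3$, and the triangle inequality gives total error $\varepsilon$.

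\emph{Complexity.} Each component estimate takes $\mathcal O(\alpha_{C_k}\sqrt d\|x^0\|/\varepsilon\cdot\log(2d/\delta))$ calls. Over $d$ components this totals $N_u=\mathcal O(d^{3/2}\beta\nu\|x^0\|\varepsilon^{-1}\log(2d/\delta))$, using that this ratio is $\Omega(1)$ in the nontrivial regime. Each call to $U_{C_k}$ costs $Q_{\rm inv}$ from \eqref{eq:inverse-L-total-query-cost}, plus a readout inversion of $R$ costing $\mathcal O(s\kappa_R\log(\cdot))$, which is dominated by $Q_{\rm inv}$. Multiplying gives
\[
\Gamma_Ls^3\kappa_R^2\nu\cdot d^{3/2}\,s\|B\|_{\max}\rho_R\,\nu\|x^0\|/\varepsilon,
\]
which is the claimed bound with $\nu=2K_{\rm ex}$. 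The gate count follows from \eqref{eq:inverse-L-additional-gates}, the per-iteration overhead in Lemma~\ref{lem:Hadamard-test}, and Lemma~\ref{lem:readout-operator-encoding}.

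\emph{Expected obstacle.} The main care is bookkeeping the product encoding. The rectangular factors $B^\top$ ($d\times n$), $S_{\lambda,k}$ ($n\times2nm$) and $R^{-1}$ ($d\times d$) must sit on compatible padded registers. We also need $\alpha_B\ge\|B\|$ and $8\rho_R\ge\|R^{-1}\|$ so that the product-rule error formula applies. Finally, we must check that the leading $d\times2nm$ block of $\alpha_{C_k}\mathcal B(U_{C_k})$ is what is compared with $C_k$, so that the basis states $\ket j$, $j<d$, read out the correct entries.
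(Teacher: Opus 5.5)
Your proposal is correct and follows the paper's proof essentially step for step. It uses the same early return at $\varepsilon\ge\beta K_{\rm ex}\|x^0\|$, the same two conditions on $(4p)^p$, the same four-factor product encoding of $C_k$ with normalization $64\beta\nu$ and error split $\varepsilon/(6\|x^0\|)+\varepsilon/(6\|x^0\|)$, the same per-component Hadamard-test readout with failure probability $\delta/d$ and a union bound, and the same final step of multiplying the readout count by $Q_{\rm inv}$, with the readout inversion of $R$ dominated by $Q_{\rm inv}$.
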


\begin{proof}
If \(x^0=0\) or \(B=0\), return zero.
Otherwise, put \(\beta=\alpha_B\rho_R\), so that
\(\|R^{-1}B^\top\|\leq\beta\). The exact control is
\begin{equation}\label{eq:OC-}
u^*(t_k)=-R^{-1}B^\top S_{\lambda,k}q_{\rm ex},
\end{equation}
and
\(\|u^*(t_k)\|\leq\beta K_{\rm ex}\|x^0\|\).
Hence zero is admissible if
\(\varepsilon\geq\beta K_{\rm ex}\|x^0\|\).
Assume below that
\(0<\varepsilon<\beta K_{\rm ex}\|x^0\|\).

For the discretization error, choose \(p\) so that
\begin{equation}\label{eq:control-p-condition-accuracy}
(4p)^p
\geq\frac{96\beta K_{\rm ex}^2\|x^0\|}{\varepsilon},
\end{equation}
and
\begin{equation}\label{eq:control-p-condition-stability}
(4p)^p\geq32K_{\rm ex}.
\end{equation}
Then \eqref{eq:Delta_h-ineq} and \eqref{eq:algorithm-nodal-error} give
\begin{equation}\label{eq:control-discretization-error}
\begin{aligned}
\|u_h(t_k)-u^*(t_k)\|
\leq\beta\|q-q_{\rm ex}\|\leq2\beta\Delta_hK_{\rm ex}^2\|x^0\|
\leq\frac{\varepsilon}{3}.
\end{aligned}
\end{equation}
With
\[
K_u=\max\left\{
32K_{\rm ex},
\frac{96\beta K_{\rm ex}^2\|x^0\|}{\varepsilon}
\right\},
\]
the least sufficient stage number satisfies
\begin{equation}\label{eq:control-p-scaling}
p=\mathcal O\!\left(\frac{\log K_u}{\log\log K_u}\right).
\end{equation}

We next construct the operator used in
Algorithm~\ref{alg:optimal-control},
\[
C_k=R^{-1}B^\top S_{\lambda,k}L^{-1},
\qquad
u_h(t_k)=-C_kb.
\]
Use \(\nu=2K_{\rm ex}\) and
\(\alpha_{L^{-1}}=8\nu\). A separate inverse encoding of \(R\),
for this final product, has normalization \(8\rho_R\) and
error \(\varepsilon_{R^{-1}}^{(u)}\).
Together with the exact encodings of \(B^\top\) and
\(S_{\lambda,k}\), the product rule gives
\begin{equation}\label{eq:control-encoding-parameters}
\begin{aligned}
\alpha_{C_k}
&=8\alpha_B\rho_R\alpha_{L^{-1}}
=64\beta\nu,\\
\varepsilon_{C_k}
&\leq
\alpha_B\alpha_{L^{-1}}\varepsilon_{R^{-1}}^{(u)}
+8\alpha_B\rho_R\varepsilon_{L^{-1}}.
\end{aligned}
\end{equation}
The inverse of \(R\) used inside \(U_H\) keeps the precision
specified in Appendix~\ref{app:be-H-construction}; the readout
precision is chosen separately.

To make the rectangular product precise, set
\[
r_L=\lceil\log_2(2nm)\rceil,\qquad
r_{\rm out}=\max\{r_L,l_d\}.
\]
Extend all four encodings to the common \(r_{\rm out}\)-qubit
system register. Lemma~\ref{lem:register-extension-compression}
gives the sufficient ancilla count
\[
\begin{aligned}
a_{C_k}
={}&a_R+1+e(l_d,r_{\rm out})
+a_B+e(l_*,r_{\rm out})\\
&+3+\tau_{m,n}+e(r_L,r_{\rm out})
+a_{L^{-1}}+e(r_L,r_{\rm out}).
\end{aligned}
\]
The zero extensions do not change the normalization or error.
They require only logarithmically many additional gates.

Choose the internal errors so that
\begin{equation}\label{eq:control-internal-errors}
\begin{aligned}
\varepsilon_{R^{-1}}^{(u)}
&=
\min\left\{
\rho_R,\,
\frac{\varepsilon}
{6\alpha_B\alpha_{L^{-1}}\|x^0\|}
\right\},\\
\varepsilon_{L^{-1}}
&=
\min\left\{
\nu,\,
\frac{\varepsilon}{48\beta\|x^0\|}
\right\}.
\end{aligned}
\end{equation}
Each term in the error bound
\eqref{eq:control-encoding-parameters} is then at most
\(\varepsilon/(6\|x^0\|)\).
Consequently,
\(\varepsilon_{C_k}\leq\varepsilon/(3\|x^0\|)\).
Let \(\widetilde C_k\) be the leading \(d\times2nm\) block of
\(\alpha_{C_k}\mathcal B(U_{C_k})\), and set
\[
\widetilde u_h=-\operatorname{Re}(\widetilde C_kb).
\]
we obtain the vector bound
\[
\|\widetilde u_h-u_h(t_k)\|
\leq\|\widetilde C_k-C_k\|\,\|b\|
\leq\frac{\varepsilon}{3}.
\]

Pad \(\ket b\) and the basis states \(\ket j\) to the common
\(r_{\rm out}\)-qubit system register.
For \(j\in[d]\), estimate the real matrix element
\[
\mu_{k,j}
=
\operatorname{Re}
\bra j\mathcal B(U_{C_k})\ket b
\]
to accuracy
\[
\eta_u=
\frac{\varepsilon}
{3\sqrt d\,\alpha_{C_k}\|x^0\|},
\]
with failure probability \(\delta/d\).
Set
\[
(u_{\rm num}(t_k))_j
=
-\alpha_{C_k}\|x^0\|\,\widehat\mu_{k,j}.
\]
By the union bound, all \(d\) estimates satisfy their error bounds
with probability at least \(1-\delta\). On that event,
\[
\begin{aligned}
\|u_{\rm num}(t_k)-\widetilde u_h\|^2
&\leq
\alpha_{C_k}^2\|x^0\|^2
\sum_{j=0}^{d-1}
|\widehat\mu_{k,j}-\mu_{k,j}|^2\\
&\leq
d\alpha_{C_k}^2\|x^0\|^2\eta_u^2
=\frac{\varepsilon^2}{9}.
\end{aligned}
\]
Combining this bound with the encoding and discretization errors
proves the output guarantee.

The total number of readout calls is
\begin{equation}\label{eq:control-total-readout-calls}
\begin{aligned}
\mathcal O\!\left(
d\left(1+
\frac{\sqrt d\,\alpha_{C_k}\|x^0\|}{\varepsilon}\right)
\log\frac{2d}{\delta}
\right)=
\mathcal O\!\left(
\frac{d^{3/2}\beta\nu\|x^0\|}{\varepsilon}
\log\frac{2d}{\delta}
\right).
\end{aligned}
\end{equation}
The equality uses the nontrivial precision regime and
\(\nu=2K_{\rm ex}\).
One call to \(U_{C_k}\) uses one \(U_{L^{-1}}\), one \(U_B^\dagger\),
and the readout encoding of \(R^{-1}\).
The last requires
\[
\mathcal O\!\left(
s\kappa_R\log\!\left(
2+\frac{\rho_R}{\varepsilon_{R^{-1}}^{(u)}}\right)
\right)
\]
queries. By \eqref{eq:control-internal-errors}, this is dominated
by \eqref{eq:inverse-L-total-query-cost}.
Therefore the total input query bound is
\begin{equation}\label{eq:control-query-before-substitution}
\widetilde{\mathcal O}\!\left(
\frac{
d^{3/2}\Gamma_Ls^4\kappa_R^2
\|B\|_{\max}\rho_R\nu^2\|x^0\|
}{\varepsilon}
\log\frac{2d}{\delta}
\right).
\end{equation}
Substituting \(\nu=2K_{\rm ex}\) proves the stated bound.
The selector, basis-state preparation, and register extensions
fit within the gate count in
\eqref{eq:inverse-L-additional-gates}.
Since \(p\) satisfies \eqref{eq:control-p-scaling}, its polynomial
cost is absorbed into the logarithmic factors.
Finally, \eqref{eq:algorithm-inverse-scaling} gives
Theorem~\ref{thm:main-control}.
\end{proof}

\section{Dimension-uniform analysis of the mass--spring chain}
\label{app:Proofs for the benchmark}

We prove Corollary~\ref{cor:main-mass-spring} for
Benchmark~\ref{bench:mass_spring_chain}. Throughout this appendix,
\(n=2n_m\), and the positive parameters
\(m_-,m_+,\delta,k_0,k_1,\mu,\gamma\) are fixed independently of
\(n_m\) and \(T\). Set
\[
k_+=k_0+4k_1,\qquad
q_-=\min\{m_+^{-1},k_0\},\qquad
q_+=\max\{m_-^{-1},k_+\}.
\]
The eigenvalues of
\(k_1\operatorname{tridiag}(-1,2,-1)\in\mathbb R^{n_m\times n_m}\) are
\[
4k_1\sin^2\!\left(\frac{j\pi}{2(n_m+1)}\right),
\qquad j=1,\ldots,n_m.
\]
Hence
\begin{equation}\label{eq:chain-coefficient-bounds}
\begin{gathered}
m_-I_{n_m}\preceq\mathcal M\preceq m_+I_{n_m},\qquad
k_0I_{n_m}\preceq\mathcal K\preceq k_+I_{n_m},\\
q_-I_n\preceq Q\preceq q_+I_n,\qquad
G=\gamma Q,\qquad
\|B\|=\|\mathcal F\|,\qquad \|R^{-1}\|=\mu^{-1}.
\end{gathered}
\end{equation}
We write \(D=BR^{-1}B^\top\), so \(\|D\|=\mu^{-1}\|\mathcal F\|^2\).

\subsection{Uniform open-loop stability}
\label{app:open-loop-stable}

\begin{lemma}[Open-loop stability]
\label{lem:open-loop-stable}
Define
\[
\eta=\frac12\min\left\{
\frac{\delta}{m_+},\sqrt{\frac{k_0}{m_+}}\right\},
\qquad
c_V=1+\eta\sqrt{\frac{m_+}{k_0}}
+\frac{\eta\delta m_+}{m_-k_0}.
\]
Then
\[
\|e^{At}\|\leq C_0e^{-\alpha_0t},\qquad t\geq0,
\qquad
C_0=\left(\frac{2q_+c_V}{q_-}\right)^{1/2},\qquad
\alpha_0=\frac{\eta}{c_V}.
\]
In particular, \(C_0\) and \(\alpha_0^{-1}\) are independent
of \(n_m\).
\end{lemma}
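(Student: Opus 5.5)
The plan is a direct Lyapunov argument with a perturbed mechanical energy. Write \(x=(\rho^\top,\omega^\top)^\top\), so that \(\dot\rho=-\delta\mathcal M^{-1}\rho-\mathcal K\omega\) and \(\dot\omega=\mathcal M^{-1}\rho\). I will use the energy \(E=\tfrac12x^\top Qx=\tfrac12\rho^\top\mathcal M^{-1}\rho+\tfrac12\omega^\top\mathcal K\omega\) and the coefficient bounds \eqref{eq:chain-coefficient-bounds}, which give \(\tfrac{q_-}{2}\|x\|^2\le E\le\tfrac{q_+}{2}\|x\|^2\).

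A direct computation gives \(\dot E=-\delta\|\mathcal M^{-1}\rho\|^2\le0\). This dissipates only the momentum, so I add the standard cross term and set \(V=E+\eta\,\rho^\top\omega\). Differentiating,
\[
\frac{\mathrm d}{\mathrm dt}(\rho^\top\omega)
=\rho^\top\mathcal M^{-1}\rho-\omega^\top\mathcal K\omega-\delta\rho^\top\mathcal M^{-1}\omega .
\]
The term \(-\omega^\top\mathcal K\omega\) supplies decay in the displacement. The positive term \(\eta\rho^\top\mathcal M^{-1}\rho\) is absorbed by the dissipation \(-\delta\|\mathcal M^{-1}\rho\|^2\ge-\delta m_+^{-1}\rho^\top\mathcal M^{-1}\rho\cdot(\ldots)\); here I use \(\rho^\top\mathcal M^{-1}\rho\le m_+\|\mathcal M^{-1}\rho\|^2\), and this is why \(\eta\le\delta/(2m_+)\) is imposed. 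The mixed term \(\delta\rho^\top\mathcal M^{-1}\omega\) is handled by Young's inequality: one half goes into the \(\omega^\top\mathcal K\omega\) budget via \(\mathcal K\succeq k_0I\), and the other into the momentum dissipation. This produces the factor \(\delta m_+/(m_-k_0)\) seen in \(c_V\).

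In parallel, I need the equivalence \(|\eta\rho^\top\omega|\le(c_V-1)E\), or the symmetric version needed for a lower bound. This follows from Cauchy--Schwarz, \(|\rho^\top\omega|\le\|\mathcal M^{-1/2}\rho\|\,\|\mathcal M^{1/2}\omega\|\), together with \(\|\mathcal M^{1/2}\omega\|^2\le (m_+/k_0)\,\omega^\top\mathcal K\omega\). That step uses the term \(\eta\sqrt{m_+/k_0}\), and the condition \(\eta\le\tfrac12\sqrt{k_0/m_+}\) keeps \(V\ge\tfrac12E\).

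The goal is an inequality \(\dot V\le-\eta E\le-(\eta/c_V)V\) after combining the two estimates. Grönwall's inequality then gives \(V(t)\le e^{-\eta t/c_V}V(0)\), hence \(E(t)\le 2c_Ve^{-\eta t/c_V}E(0)\). Converting with the \(q_\pm\) bounds yields
\[
\|x(t)\|^2\le\frac{2q_+c_V}{q_-}e^{-\eta t/c_V}\|x(0)\|^2,
\]
which gives the stated \(C_0\), and \(\alpha_0=\eta/c_V\) after the appropriate factor of one half from the square root. I will reconcile that factor by adjusting the decay rate that is aimed for in \(\dot V\). Every constant depends only on \(m_\pm,\delta,k_0,k_1\), so uniformity in \(n_m\) is automatic.

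The main obstacle is the bookkeeping in \(\dot V\). The Young's-inequality splits must be chosen so that both the momentum and the displacement coefficients stay negative with margin, and so that the resulting constants match the specific \(\eta\) and \(c_V\) in the statement. I will first try splitting \(\delta|\rho^\top\mathcal M^{-1}\omega|\le\tfrac{\delta}{2}(\epsilon^{-1}\|\mathcal M^{-1}\rho\|^2+\epsilon\|\omega\|^2)\) with \(\epsilon=k_0/\delta\), and then tune \(\eta\) downward if necessary. Any smaller admissible \(\eta\) still gives \(n_m\)-independent constants.
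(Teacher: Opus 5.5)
Your proposal has a genuine gap. With \(V=E+\eta\rho^\top\omega\) you cannot reach the constants in the statement. Your fallback of shrinking \(\eta\) proves only the qualitative last sentence, not the lemma with the displayed \(\eta,c_V,C_0,\alpha_0\).

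First, your target \(\dot V\le-\eta E\le-(\eta/c_V)V\) gives \(\|x(t)\|^2\lesssim e^{-\eta t/c_V}\|x(0)\|^2\), i.e.\ \(\|x(t)\|\lesssim e^{-\alpha_0t/2}\|x(0)\|\). The square root is what creates the factor one half, so it cannot also remove it. Going through \(V\le c_VE\), the rate \(\alpha_0=\eta/c_V\) requires \(\dot V\le-2\eta E\), with your \(E=\tfrac12x^\top Qx\). For your \(V\), however,
\[
\dot V+2\eta E=-\delta\|\mathcal M^{-1}\rho\|^2+2\eta\,\rho^\top\mathcal M^{-1}\rho-\eta\delta\,\rho^\top\mathcal M^{-1}\omega .
\]
The \(\omega^\top\mathcal K\omega\) terms cancel completely, so nothing is left to absorb the mixed term. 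Taking \(\omega=-t\mathcal M^{-1}\rho\) with \(t\to\infty\) makes this positive for every \(\eta>0\).

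Testing \(\dot V\le-2\alpha_0V\) directly, without passing through \(V\le c_VE\), does not rescue the exact constants either. Take unit masses, \(k_0=k_1=1\), \(\delta=10\) and \(n_m\geq4\), so that \(\eta=\tfrac12\) and \(c_V=\tfrac{13}{2}\). Restricted to \(\rho,\omega\in\operatorname{span}\{\phi_1\}\), with \(\phi_1\) the lowest eigenvector of \(\mathcal K\), the quadratic form \(\dot V+2\alpha_0V\) is indefinite. Relatedly, the term \(\eta\delta m_+/(m_-k_0)\) in \(c_V\) cannot come from a Young split in \(\dot V\). In your scheme \(c_V\) is just the constant in \(V\le c_VE\), which is \(1+\eta\sqrt{m_+/k_0}\).

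The missing idea is to cancel the mixed term exactly rather than estimate it. In your normalization, take
\[
V=E+\eta\,\rho^\top\omega+\tfrac{\eta\delta}{2}\|\omega\|^2 .
\]
This is one half of the paper's function \(\|v\|^2+z^\top Sz+2\eta z^\top v+\eta\delta z^\top\mathcal M^{-1}z\), written in the coordinates \(v=\mathcal M^{-1/2}\rho\), \(z=\mathcal M^{1/2}\omega\); note \(z^\top v=\rho^\top\omega\) and \(z^\top\mathcal M^{-1}z=\|\omega\|^2\). Because \(\tfrac{\mathrm d}{\mathrm dt}\tfrac{\eta\delta}{2}\|\omega\|^2=\eta\delta\,\omega^\top\mathcal M^{-1}\rho\), one gets
\[
\dot V=-\delta\|\mathcal M^{-1}\rho\|^2+\eta\rho^\top\mathcal M^{-1}\rho-\eta\omega^\top\mathcal K\omega\le-2\eta E ,
\]
using only \(\eta\le\delta/(2m_+)\) and no Young's inequality.

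The price appears only in the norm equivalence: \(\tfrac{\eta\delta}{2}\|\omega\|^2\le\tfrac{\eta\delta}{k_0}E\le\tfrac{\eta\delta m_+}{m_-k_0}E\), which is the third term of \(c_V\). Your Cauchy--Schwarz estimate supplies the second term and \(V\ge\tfrac12E\). Then \(V(t)\le e^{-2\alpha_0t}V(0)\), and your conversion through \(q_\pm\) yields exactly \(C_0\) and \(\alpha_0\).
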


\begin{proof}
For the uncontrolled system, introduce
\[
v=\mathcal M^{-1/2}\rho,\qquad
z=\mathcal M^{1/2}\omega,\qquad
S=\mathcal M^{-1/2}\mathcal K\mathcal M^{-1/2}.
\]
Equation~\eqref{eq:ode-for-oscillator} then becomes
\[
\dot z=v,\qquad
\dot v=-\delta\mathcal M^{-1}v-Sz.
\]
Since \(\mathcal K\succeq k_0I_{n_m}\) and
\(\mathcal M\preceq m_+I_{n_m}\), we have
\[
S\succeq k_0\mathcal M^{-1}
\succeq \frac{k_0}{m_+}I_{n_m}.
\]
Define
\[
E(v,z)=\|v\|^2+z^\top Sz,
\qquad
V(v,z)=E(v,z)+2\eta z^\top v
+\eta\delta z^\top\mathcal M^{-1}z.
\]
By Young's inequality,
\[
\begin{aligned}
2|z^\top v|
\leq 2\|z\|\|v\|
\leq
\sqrt{\frac{m_+}{k_0}}\|v\|^2
+\sqrt{\frac{k_0}{m_+}}\|z\|^2
\leq
\sqrt{\frac{m_+}{k_0}}E(v,z).
\end{aligned}
\]
Moreover, \(S\succeq k_0\mathcal M^{-1}\) implies
\[
z^\top\mathcal M^{-1}z
\leq \frac{1}{k_0}z^\top Sz
\leq \frac{1}{k_0}E(v,z).
\]
Combining these inequalities gives
\begin{equation}\label{eq:inequality-for-V1}
V(v,z)
\leq
\left(
1+\eta\sqrt{\frac{m_+}{k_0}}
+\frac{\eta\delta}{k_0}
\right)E(v,z)
=c_VE(v,z).
\end{equation}
For the lower bound, the last term in \(V\) is nonnegative.
Since \(\eta\sqrt{m_+/k_0}\leq 1/2\), we obtain
\begin{equation}\label{eq:inequality-for-V2}
V(v,z)
\geq E(v,z)-2\eta|z^\top v|
\geq \frac12E(v,z).
\end{equation}

Differentiating \(V\) along solutions and cancelling the
mixed terms gives
\[
\begin{aligned}
\dot V
&=-2\delta v^\top\mathcal M^{-1}v
  +2\eta\|v\|^2-2\eta z^\top Sz\\
&=-2v^\top(\delta\mathcal M^{-1}-\eta I_{n_m})v
  -2\eta z^\top Sz.
\end{aligned}
\]
The condition \(\eta\leq\delta/(2m_+)\) ensures that
\[
\delta\mathcal M^{-1}-\eta I_{n_m}
\succeq
\left(\frac{\delta}{m_+}-\eta\right)I_{n_m}
\succeq \eta I_{n_m}.
\]
Consequently, using~\eqref{eq:inequality-for-V1},
\[
\dot V
\leq -2\eta E(v,z)
\leq -\frac{2\eta}{c_V}V(v,z)
=-2\alpha_0V(v,z).
\]
Writing \(V(t)=V(v(t),z(t))\), Gr\"onwall's inequality yields
\begin{equation}\label{eq:inequality-for-V3}
V(t)\leq e^{-2\alpha_0t}V(0),
\qquad t\geq 0.
\end{equation}

In the original coordinates,
\[
E(v,z)
=\rho^\top\mathcal M^{-1}\rho
+\omega^\top\mathcal K\omega
=x^\top Qx.
\]
Thus, the bounds on \(Q\) give
\begin{equation}\label{eq:x-inequality}
q_-\|x\|^2
\leq E(v,z)
\leq q_+\|x\|^2.
\end{equation}
Combining
\eqref{eq:inequality-for-V1}--\eqref{eq:x-inequality},
we obtain
\[
\|x(t)\|^2
\leq \frac{2}{q_-}V(t)
\leq \frac{2c_Vq_+}{q_-}
e^{-2\alpha_0t}\|x(0)\|^2.
\]
Since \(x(t)=e^{tA}x(0)\), taking square roots and then
the supremum over unit initial states gives
\[
\|e^{tA}\|
\leq
\sqrt{\frac{2c_Vq_+}{q_-}}\,
e^{-\alpha_0t},
\qquad t\geq 0.
\]
This proves the claim.
\end{proof}

Since \(A\) is Hurwitz, both pairs in
Assumption~\ref{ass:sta&det} satisfy the required conditions.
No controllability assumption on the endpoint actuators is needed
for this conclusion.

\subsection{Bounds for the algebraic Riccati solution}
\label{app:are-bound}

\begin{lemma}[Algebraic Riccati bounds]
\label{lem:are-bound}
The stabilizing solution \(P_+\) of
\begin{equation}\label{eq:are-P-+}
A^\top P_++P_+A-P_+DP_++Q=0
\end{equation}
satisfies
\[
\underline p I_n\preceq P_+\preceq\bar p I_n,
\]
where
\[
\bar p=\frac{q_+C_0^2}{2\alpha_0},\qquad
\alpha_1=\frac{1+\delta}{m_-}+k_++\frac{\bar p}{\mu}\|\mathcal F\|^2,
\qquad
\underline p=\frac{q_-}{2\alpha_1}.
\]
These constants are positive and independent of \(n_m\).
\end{lemma}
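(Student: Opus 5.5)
We bound $P_+$ through the variational characterization of the stabilizing solution: for every $x_0$,
\[
x_0^\top P_+x_0=\min_u\int_0^\infty\bigl(x^\top Qx+u^\top Ru\bigr)\,\mathrm dt,
\]
where the minimum is attained by the feedback $u=-R^{-1}B^\top P_+x$. Here we use the convention without the factor $\tfrac12$, matching the ARE written in \eqref{eq:are-P-+}.

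\textbf{Upper bound.} We compare with the admissible zero control $u\equiv0$, which is legitimate because $A$ is Hurwitz by Lemma~\ref{lem:open-loop-stable}. This gives
\[
x_0^\top P_+x_0\le\int_0^\infty x_0^\top e^{A^\top t}Qe^{At}x_0\,\mathrm dt
\le q_+C_0^2\int_0^\infty e^{-2\alpha_0t}\,\mathrm dt\,\|x_0\|^2
=\bar p\|x_0\|^2 .
\]
To avoid quoting the optimal-control characterization, we can argue directly. Let $X=\int_0^\infty e^{A^\top t}Qe^{At}\,\mathrm dt$, so that $A^\top X+XA+Q=0$. Subtracting the ARE shows that $Y=X-P_+$ satisfies
\[
A_+^\top Y+YA_+=-YDY\preceq0 .
\]
Since $A_+$ is Hurwitz, the Lyapunov integral representation gives $Y\succeq0$, hence $P_+\preceq\bar pI_n$.

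\textbf{Lower bound.} Write $A_+=A-DP_+$ and rewrite the ARE as
\[
A_+^\top P_++P_+A_++P_+DP_++Q=0 .
\]
This is a Lyapunov equation, so
\[
P_+=\int_0^\infty e^{A_+^\top t}\bigl(Q+P_+DP_+\bigr)e^{A_+t}\,\mathrm dt\succeq\int_0^\infty e^{A_+^\top t}Qe^{A_+t}\,\mathrm dt\succeq q_-\int_0^\infty e^{A_+^\top t}e^{A_+t}\,\mathrm dt .
\]
The elementary bound $\|e^{A_+t}x\|\ge e^{-\|A_+\|t}\|x\|$, which follows from Grönwall applied to $\tfrac{\mathrm d}{\mathrm dt}\|e^{A_+t}x\|\ge-\|A_+\|\|e^{A_+t}x\|$, then yields
\[
P_+\succeq\frac{q_-}{2\|A_+\|}I_n .
\]
It therefore suffices to show $\|A_+\|\le\alpha_1$. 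For this we use
\[
\|A_+\|\le\|A\|+\|D\|\|P_+\|,\qquad \|D\|\|P_+\|\le\mu^{-1}\|\mathcal F\|^2\bar p .
\]

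\textbf{Bounding $\|A\|$.} This is the only step requiring care. A crude block bound gives
\[
\|A\|\le\|\delta\mathcal M^{-1}\|+\|\mathcal K\|+\|\mathcal M^{-1}\|\le\frac{\delta}{m_-}+k_++\frac1{m_-}=\frac{1+\delta}{m_-}+k_+ .
\]
Indeed, the block matrix splits as the sum of its block-diagonal part, of norm $\delta/m_-$, and its off-diagonal part, of norm $\max\{\|\mathcal K\|,\|\mathcal M^{-1}\|\}$. Either that maximum or the sum of the two norms fits the stated $\alpha_1$. Combining the pieces gives $\|A_+\|\le\alpha_1$ and hence $\underline p=q_-/(2\alpha_1)$. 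All constants depend only on $m_\pm,\delta,k_0,k_1,\mu,\|\mathcal F\|$, so they are independent of $n_m$. I expect no real obstacle beyond checking the sign/factor conventions of the cost representation.
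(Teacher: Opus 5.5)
Your argument is correct and is essentially the paper's. The zero-control comparison is exactly what the paper obtains by integrating $\frac{\mathrm d}{\mathrm dt}(x^\top P_+x)=x^\top(P_+DP_+-Q)x$ along $x(t)=e^{At}x_0$, dropping $P_+DP_+\succeq0$ and letting $\tau\to\infty$, and your lower bound matches the paper's (Lyapunov integral for $A_+$, $\|e^{A_+t}w\|\ge e^{-\|A_+\|t}\|w\|$, $\|A_+\|\le\|A\|+\|D\|\|P_+\|\le\alpha_1$).

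There is one slip in your optional ``direct'' upper-bound argument. Subtracting the ARE from $A^\top X+XA+Q=0$ gives $A^\top Y+YA=-P_+DP_+$, not $A_+^\top Y+YA_+=-YDY$; the latter is false in general, since $A_+^\top Y+YA_+=YDY-XDX$. Because $A$ is Hurwitz, the correct identity still yields $Y=\int_0^\infty e^{A^\top t}P_+DP_+e^{At}\,\mathrm dt\succeq0$, so the conclusion stands.
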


\begin{proof}
Assumption~\ref{ass:sta&det} gives a unique symmetric
positive semidefinite stabilizing solution; see
Appendix~\ref{app:horizon-uniform-bounds}.
Write \(A_+=A-DP_+\), which is Hurwitz.

For the upper bound, take \(x(t)=e^{At}x(0)\). The Riccati
equation gives
\[
\frac{\mathrm d}{\mathrm dt}(x^\top P_+x)
=x^\top P_+DP_+x-x^\top Qx.
\]
Since \(P_+DP_+\succeq0\), integration over \([0,\tau]\) yields
\begin{equation}\label{eq:x-integrating-inequality}
x(0)^\top P_+x(0)
\leq x(\tau)^\top P_+x(\tau)
+\int_0^\tau x(t)^\top Qx(t)\,\mathrm dt.
\end{equation}
The first term tends to zero as
\(\tau\to\infty\), by Lemma~\ref{lem:open-loop-stable}.
Consequently,
\[
x(0)^\top P_+x(0)
\leq q_+C_0^2\int_0^\infty e^{-2\alpha_0t}\,\mathrm dt
\,\|x(0)\|^2
=\bar p\|x(0)\|^2.
\]
Thus \(P_+\preceq\bar p I_n\).

For the lower bound, rewrite \eqref{eq:are-P-+} as
\begin{equation}\label{eq:ARE-for-P+2}
A_+^\top P_++P_+A_+=-Q-P_+DP_+.
\end{equation}
Since \(A_+\) is Hurwitz, by Lemma 3.18 in \cite{ZhouDoyleGlover1996} the solution to \eqref{eq:ARE-for-P+2} can be written as
\[
P_+=\int_0^\infty
e^{A_+^\top t}(Q+P_+DP_+)e^{A_+t}\,\mathrm dt.
\]
The block form of \(A\) in \eqref{eq:ode-for-oscillator} and the upper bound just proved imply
\[
\|A_+\|
\leq\|A\|+\|D\|\|P_+\|
\leq\frac{1+\delta}{m_-}+k_++\frac{\bar p}{\mu} \|\mathcal F\|^2
=\alpha_1.
\]
For every \(w\in\mathbb R^n\),
\begin{equation}\label{eq:chain-semigroup-lower-bound}
\|e^{A_+t}w\|
\geq\|e^{-A_+t}\|^{-1}\|w\|
\geq e^{-\alpha_1t}\|w\|.
\end{equation}
Using \(Q\succeq q_-I_n\) and \(P_+DP_+\succeq0\), we obtain
\[
w^\top P_+w
\geq q_-\int_0^\infty\|e^{A_+t}w\|^2\,\mathrm dt
\geq\frac{q_-}{2\alpha_1}\|w\|^2.
\]
This proves the lower bound and, in particular, \(P_+\succ0\).
\end{proof}

\subsection{Auxiliary Lyapunov bounds}
\label{app:lyapunov-bound}

\begin{lemma}[Lyapunov bounds]
\label{lem:lyapunov-bound}
Let \(W_1\) and \(W_2\) be the solutions of
\[
A_+^\top W_1+W_1A_+=-I_n,
\qquad
A_+W_2+W_2A_+^\top=-D.
\]
Then
\[
\frac1{2\alpha_1}I_n
\preceq W_1
\preceq\frac{\bar p^2}{\underline p q_-}I_n,
\qquad
0\preceq W_2,
\qquad
\|W_2\|\leq\frac{\bar p^2 \|\mathcal F\|^2}{\mu\underline p q_-}.
\]
\end{lemma}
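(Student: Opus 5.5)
The plan is to use the integral representations of both Lyapunov solutions. These are valid because \(A_+\) is Hurwitz (Lemma~3.18 in \cite{ZhouDoyleGlover1996}, as already used in Lemma~\ref{lem:are-bound}):
\[
W_1=\int_0^\infty e^{A_+^\top t}e^{A_+t}\,\mathrm dt,
\qquad
W_2=\int_0^\infty e^{A_+t}De^{A_+^\top t}\,\mathrm dt .
\]
Since \(D\succeq0\), the second formula immediately gives \(W_2\succeq0\). For every \(w\) we have \(w^\top W_1w=\int_0^\infty\|e^{A_+t}w\|^2\,\mathrm dt\). The lower bound on \(W_1\) then follows from \eqref{eq:chain-semigroup-lower-bound}, which gives \(\|e^{A_+t}w\|\geq e^{-\alpha_1t}\|w\|\); integrating \(e^{-2\alpha_1t}\) yields \(W_1\succeq(2\alpha_1)^{-1}I_n\).

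For the upper bound on \(W_1\), I would use \(P_+\) itself as a Lyapunov function for \(\dot\xi=A_+\xi\). By \eqref{eq:ARE-for-P+2},
\[
\tfrac{\mathrm d}{\mathrm dt}\,\xi^\top P_+\xi=-\xi^\top(Q+P_+DP_+)\xi\leq-q_-\|\xi\|^2 .
\]
Lemma~\ref{lem:are-bound} gives \(\|\xi\|^2\geq\bar p^{-1}\xi^\top P_+\xi\), so this derivative is at most \(-(q_-/\bar p)\,\xi^\top P_+\xi\). Gr\"onwall's inequality, combined with \(\underline p\|\xi\|^2\leq\xi^\top P_+\xi\leq\bar p\|\xi\|^2\), then gives
\[
\|e^{A_+t}\|^2\leq\frac{\bar p}{\underline p}\,e^{-q_-t/\bar p}.
\]
Integrating over \([0,\infty)\) yields
\[
\int_0^\infty\|e^{A_+t}\|^2\,\mathrm dt\leq\frac{\bar p^2}{\underline p\,q_-},
\]
and hence \(W_1\preceq\bar p^2/(\underline p q_-)\,I_n\).

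For \(\|W_2\|\), for a unit vector \(w\) I would write
\[
w^\top W_2w\leq\|D\|\int_0^\infty\|e^{A_+^\top t}w\|^2\,\mathrm dt\leq\|D\|\int_0^\infty\|e^{A_+t}\|^2\,\mathrm dt .
\]
The second inequality uses \(\|e^{A_+^\top t}\|=\|e^{A_+t}\|\). This matters because the integral in \(w^\top W_2w\) involves the Gramian of \(A_+^\top\) rather than \(W_1\), so one cannot simply compare with \(W_1\) directly. With \(\|D\|=\mu^{-1}\|\mathcal F\|^2\) and the integral bound from the previous paragraph, this gives the stated estimate for \(\|W_2\|\).

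The only step needing care is the upper decay estimate for \(e^{A_+t}\). Uniformity in \(n_m\) rests entirely on the two-sided bounds \(\underline p,\bar p\) from Lemma~\ref{lem:are-bound} and on \(Q\succeq q_-I_n\), all of which are independent of \(n_m\). No other step depends on the dimension.
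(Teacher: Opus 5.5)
Your proposal is correct and follows essentially the same route as the paper: you derive the decay bound $\|e^{A_+t}\|^2\le(\bar p/\underline p)e^{-q_-t/\bar p}$ by using $P_+$ as a Lyapunov function together with Gr\"onwall's inequality, then insert it into the integral representations of $W_1$ and $W_2$, and take the lower bound on $W_1$ from \eqref{eq:chain-semigroup-lower-bound}. Your remark that $\|e^{A_+^\top t}\|=\|e^{A_+t}\|$ only spells out a step the paper uses implicitly when bounding $\|W_2\|$.
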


\begin{proof}
For \(\xi(t)=e^{A_+t}\xi(0)\),
equation~\eqref{eq:ARE-for-P+2} gives
\[
\frac{\mathrm d}{\mathrm dt}(\xi^\top P_+\xi)
=-\xi^\top(Q+P_+DP_+)\xi
\leq -q_-\|\xi\|^2
\leq -\frac{q_-}{\bar p}\,\xi^\top P_+\xi.
\]
Gr\"onwall's inequality and
\(\underline p I_n\preceq P_+\preceq\bar p I_n\) imply
\[
\|\xi(t)\|^2 
\leq \frac{1}{\underline p} \xi(t)^\top P_+\xi(t) \leq \frac{1}{\underline p} e^{-q_- t/\bar p}\xi(0)^\top P_+\xi(0)
\leq \frac{\bar p}{\underline p}e^{-q_- t/\bar p}\|\xi(0)\|^2,
\]
Taking square roots and the supremum over unit initial states
yields
\begin{equation}\label{eq:chain-algebraic-closed-loop-bound}
\|e^{A_+t}\|
\leq \sqrt{\frac{\bar p}{\underline p}}\,
e^{-q_-t/(2\bar p)},
\qquad t\geq0.
\end{equation}
In particular, \(A_+\) is Hurwitz. The Lyapunov equations
therefore have unique solutions with the integral
representations~\cite[Lemma 3.18]{ZhouDoyleGlover1996}
\[
W_1=\int_0^\infty
e^{A_+^\top t}e^{A_+t}\,\mathrm dt,
\qquad
W_2=\int_0^\infty
e^{A_+t}De^{A_+^\top t}\,\mathrm dt.
\]
Since \(D\succeq0\), the second representation gives
\(W_2\succeq0\).
Using~\eqref{eq:chain-algebraic-closed-loop-bound} and
\(\|D\|=\mu^{-1}\|\mathcal F\|^2\), we obtain
\[
\begin{aligned}
W_1
&\preceq
\frac{\bar p}{\underline p}
\int_0^\infty e^{-q_-t/\bar p}\,\mathrm dt\,I_n
=\frac{\bar p^2}{\underline p q_-}I_n,\\
\|W_2\|
&\leq
\frac{\bar p\|D\|}{\underline p}
\int_0^\infty e^{-q_-t/\bar p}\,\mathrm dt
=\frac{\bar p^2\|\mathcal F\|^2}
{\mu\underline p q_-}.
\end{aligned}
\]
Finally, for every \(w\in\mathbb R^n\),
equation~\eqref{eq:chain-semigroup-lower-bound} gives
\[
w^\top W_1w
=\int_0^\infty\|e^{A_+t}w\|^2\,\mathrm dt
\geq
\int_0^\infty e^{-2\alpha_1t}\|w\|^2\,\mathrm dt
=\frac{1}{2\alpha_1}\|w\|^2.
\]
This proves the lower bound on \(W_1\).
\end{proof}

\subsection{The terminal correction}
\label{app:chain-terminal-correction}

The constants in Appendix~\ref{app:horizon-uniform-bounds}
also involve
\[
\widetilde G=(G-P_+)(I_n+W_2(G-P_+))^{-1}.
\]
We next derive a bound on \(\|\widetilde G\|\) that is
independent of \(n_m\). Define
\begin{equation}\label{eq:chain-terminal-gap}
\theta=\frac{q_-\underline p}{2\alpha_1\bar p^2}>0.
\end{equation}

\begin{lemma}[Terminal-correction bound]
\label{lem:chain-terminal-correction}
The matrix \(I_n+W_2(G-P_+)\) is invertible, and
\[
\|G-P_+\|\leq\gamma q_++\bar p,
\qquad
\|\widetilde G\|
\leq\max\left\{\frac{\bar p}{\theta},\gamma q_+\right\}.
\]
\end{lemma}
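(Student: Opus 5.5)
We split the statement into three parts: the elementary norm bound, invertibility together with the lower bound on \(\widetilde G\), and the upper bound on \(\widetilde G\). The norm bound is immediate. By \eqref{eq:chain-coefficient-bounds}, \(\|G\|=\gamma\|Q\|\leq\gamma q_+\), and Lemma~\ref{lem:are-bound} gives \(\|P_+\|\leq\bar p\). The triangle inequality then yields \(\|G-P_+\|\leq\gamma q_++\bar p\).

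For the rest, set \(X=G-P_+\), which is symmetric and satisfies
\[
-P_+\preceq X\preceq G .
\]
We work on the set
\[
\mathcal D_W=\{Y=Y^\top:\ I_n+W_2^{1/2}YW_2^{1/2}\succ0\}.
\]
For \(Y\in\mathcal D_W\), the eigenvalues of \(W_2Y\) coincide with those of \(W_2^{1/2}YW_2^{1/2}\), except for zeros. Hence \(I_n+W_2Y\) is invertible, and
\[
f(Y):=Y(I_n+W_2Y)^{-1}=(I_n+YW_2)^{-1}Y
\]
is symmetric. The set \(\mathcal D_W\) is convex and upward closed in the Loewner order. The derivative of \(f\) along a symmetric direction \(E\) is
\[
f'(Y)[E]=(I_n+YW_2)^{-1}E(I_n+W_2Y)^{-1},
\]
which is a congruence of \(E\). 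So \(f\) is operator monotone along segments in \(\mathcal D_W\). Once we know \(-P_+\in\mathcal D_W\), we get \(X,G\in\mathcal D_W\), which proves invertibility of \(I_n+W_2X\). Integrating along the segment then gives
\[
f(-P_+)\preceq \widetilde G=f(X)\preceq f(G).
\]

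The upper bound is straightforward. Since \(G\succ0\), we can write
\[
f(G)=G^{1/2}\bigl(I_n+G^{1/2}W_2G^{1/2}\bigr)^{-1}G^{1/2}\preceq G\preceq\gamma q_+I_n .
\]
For the lower bound, \(f(-P_+)=-P_+^{1/2}(I_n-N)^{-1}P_+^{1/2}\), where \(N=P_+^{1/2}W_2P_+^{1/2}\). Consequently,
\[
\|\widetilde G\|\leq\max\{\gamma q_+,\ \bar p\,\|(I_n-N)^{-1}\|\}.
\]
Everything therefore reduces to the key estimate
\[
N\preceq(1-\theta)I_n .
\]
This estimate also gives \(-P_+\in\mathcal D_W\), because \(W_2^{1/2}P_+W_2^{1/2}\) has the same nonzero spectrum as \(N\).

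The main obstacle is this key estimate. It encodes the separation between \(P_+\) and the anti-stabilizing Riccati solution, and \(\theta\) is exactly the quantity that should measure that gap. The first approach I would try uses a Lyapunov identity for \(Y:=P_+-P_+W_2P_+=P_+^{1/2}(I_n-N)P_+^{1/2}\). Combine the ARE in the form \eqref{eq:ARE-for-P+2} with \(A_+W_2+W_2A_+^\top=-D\), and write \(S:=Q+P_+DP_+\). Then \(A_+^\top Y+YA_+\) equals
\[
-(I_n-P_+W_2)S(I_n-W_2P_+)\;-\;P_+DP_+\;+\;P_+W_2SW_2P_+ .
\]
The goal is to control the last two terms, whose sign is not obvious. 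With that control, the integral representation of \(Y\) would give \(Y\succeq c\,I_n\), by the same argument as the lower bound on \(P_+\) in Lemma~\ref{lem:are-bound}. That argument uses \(\|e^{A_+t}w\|\geq e^{-\alpha_1t}\|w\|\) and \(Q\succeq q_-I_n\), and produces a factor \(q_-/(2\alpha_1)\). Dividing by \(\|P_+\|\leq\bar p\) and using \(P_+\succeq\underline p I_n\) should give the constant \(\theta=q_-\underline p/(2\alpha_1\bar p^2)\).

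If this identity does not close, I would fall back on a variational argument. For a unit vector \(v\), write \(v^\top P_+W_2P_+v\) as \(\int_0^\infty\|D^{1/2}e^{A_+^\top t}P_+v\|^2\,\mathrm dt\), and compare it with the cost representation \(v^\top P_+v=\int_0^\infty(\cdots)\). The deficit in this comparison is at least \(q_-\int_0^\infty\|e^{A_+t}v\|^2\,\mathrm dt\geq q_-/(2\alpha_1)\). This would bound \(N\) away from \(I_n\) with the same constant.
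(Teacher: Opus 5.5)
Your reduction is correct and is the same as the paper's. This covers the triangle-inequality bound, the monotonicity of $f(X)=X(I_n+W_2X)^{-1}$ on $\mathcal D_W$ via the congruence form of its derivative, and the endpoint formulas for $f(G)$ and $f(-P_+)$. It also covers the reduction of both invertibility and the $\bar p/\theta$ bound to $N=P_+^{1/2}W_2P_+^{1/2}\preceq(1-\theta)I_n$.

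\textbf{Gap.} This key estimate is never established, and neither of your routes closes. Your steps use nothing specific to the chain, so a general example suffices. Take $n=2$, $A=\begin{bmatrix}0&1\\0&0\end{bmatrix}$, $D=Q=I_2$, so $q_-=1$. Then $P_+=\begin{bmatrix}a&b\\b&c\end{bmatrix}$ with $b=\sqrt2-1$, $a=\sqrt{2\sqrt2-2}$, $c=\sqrt2\,a$ and $\det P_+=1$. One checks that $W_2=\operatorname{diag}(1/(2a),1/(2c))$ and $Y=P_+-P_+W_2P_+=\operatorname{diag}(1/(2c),1/(2a))$.

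\emph{First route.} Here $-(A_+^\top Y+YA_+)=\operatorname{diag}(1/\sqrt2,\sqrt2)\not\succeq Q$. So the cross terms cannot be discarded, and the constant $q_-/(2\alpha_1)$ does not come out of that identity. You also give no other way to bound its right-hand side below.

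\emph{Fallback.} Its central claim $v^\top Yv\geq q_-\int_0^\infty\|e^{A_+t}v\|^2\,\mathrm dt$ is false in general. For $v=e_1$ the left side is $1/(2c)\approx0.39$, whereas $\int_0^\infty\|e^{A_+t}e_1\|^2\,\mathrm dt\approx0.52$. The step silently moves $P_+$ past the semigroup. The correct identity is $v^\top Yv=\int_0^\infty\|Q^{1/2}P_+^{-1}e^{A_+^\top t}P_+v\|^2\,\mathrm dt$, which is not a trajectory integral for $A_+$. Relatedly, your remark that using $\underline p$ ``should give'' $\theta$ is matching the target rather than deriving it.

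\textbf{Missing idea (the paper's).} Work with $P_+^{-1}-W_2$ instead of $Y$. Multiplying \eqref{eq:ARE-for-P+2} on both sides by $P_+^{-1}$ gives $A_+P_+^{-1}+P_+^{-1}A_+^\top=-D-P_+^{-1}QP_+^{-1}$. This is a Lyapunov equation for the same operator $X\mapsto A_+X+XA_+^\top$ that defines $W_2$. Subtracting the $W_2$ equation cancels $D$ exactly and leaves $A_+(P_+^{-1}-W_2)+(P_+^{-1}-W_2)A_+^\top=-P_+^{-1}QP_+^{-1}$, with no cross terms. Hence $P_+^{-1}-W_2=\int_0^\infty e^{A_+t}P_+^{-1}QP_+^{-1}e^{A_+^\top t}\,\mathrm dt\succeq\frac{q_-}{2\alpha_1\bar p^2}I_n$. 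This uses $P_+^{-1}QP_+^{-1}\succeq(q_-/\bar p^2)I_n$ and $\|e^{A_+^\top t}w\|\geq e^{-\alpha_1t}\|w\|$. Then $I_n-N=P_+^{1/2}(P_+^{-1}-W_2)P_+^{1/2}\succeq\frac{q_-}{2\alpha_1\bar p^2}P_+\succeq\theta I_n$.

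In your own variable, $Y=P_+(P_+^{-1}-W_2)P_+$ satisfies the clean equation $F^\top Y+YF=-Q$ with $F=P_+^{-1}A_+^\top P_+=-(A+P_+^{-1}Q)$. So the generator is $F$, not $A_+$. The bound $\|F\|\leq(\bar p/\underline p)\alpha_1$ accounts for the factor $\underline p/\bar p$ in $\theta$. With this estimate in place, the rest of your argument goes through unchanged.
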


\begin{proof}
The first bound follows from \(G=\gamma Q\) and
Lemma~\ref{lem:are-bound}. To control the inverse, multiply
\eqref{eq:ARE-for-P+2} on both sides by \(P_+^{-1}\):
\[
A_+P_+^{-1}+P_+^{-1}A_+^\top
=-D-P_+^{-1}QP_+^{-1}.
\]
Subtracting the equation for \(W_2\) gives 
\[
A_+(P_+^{-1}-W_2)+(P_+^{-1}-W_2)A_+^\top=-P_+^{-1}QP_+^{-1},
\] 
therefore
\[
P_+^{-1}-W_2
=\int_0^\infty
e^{A_+t}P_+^{-1}QP_+^{-1}e^{A_+^\top t}\,\mathrm dt.
\]
Since
\(P_+^{-1}QP_+^{-1}\succeq(q_-/\bar p^2)I_n\),
the argument in \eqref{eq:chain-semigroup-lower-bound}, applied
to \(A_+^\top\), yields
\[
\begin{aligned}
w^\top(P_+^{-1}-W_2)w
\geq\frac{q_-}{\bar p^2}
\int_0^\infty\|e^{A_+^\top t}w\|^2\,\mathrm dt
\geq\frac{q_-}{2\alpha_1\bar p^2}\|w\|^2.
\end{aligned}
\]
Consequently,
\[
I_n-P_+^{1/2}W_2P_+^{1/2}
=P_+^{1/2}(P_+^{-1}-W_2)P_+^{1/2}
\succeq\theta I_n.
\]
The matrices \(P_+^{1/2}W_2P_+^{1/2}\) and
\(W_2^{1/2}P_+W_2^{1/2}\) have the same eigenvalues:
they are \(C^\top C\) and \(CC^\top\), respectively, for
\(C=W_2^{1/2}P_+^{1/2}\). Hence
\begin{equation}\label{eq:P_+-in-set}
I_n-W_2^{1/2}P_+W_2^{1/2}\succeq\theta I_n.
\end{equation}
Since \(G\succeq0\),
\begin{equation}\label{eq:G-P_+-in-set}
I_n+W_2^{1/2}(G-P_+)W_2^{1/2}
\succeq I_n-W_2^{1/2}P_+W_2^{1/2}
\succeq\theta I_n.
\end{equation}
Sylvester's determinant identity now gives
\[
\det(I_n+W_2(G-P_+))
=\det(I_n+W_2^{1/2}(G-P_+)W_2^{1/2})>0.
\]
This argument also applies when \(W_2\) is singular. Therefore \(I_n+W_2(G-P_+)\) is invertible.

To bound \(\widetilde G\), consider
\[
f(X)=X(I_n+W_2X)^{-1}
\]
on the convex set
\[
\mathcal D_W=
\{X=X^\top:I_n+W_2^{1/2}XW_2^{1/2}\succ0\}.
\]
Similarly, Sylvester's determinant identity gives
\[
\det(I_n+W_2X)
=\det(I_n+W_2^{1/2}XW_2^{1/2})>0, \qquad X\in\mathcal D_W
\]
which shows that \(f\) is well-defined on this set. Moreover,
\(f(X)=(I_n+XW_2)^{-1}X\), so \(f(X)\) is symmetric.
For \(X,Y\in\mathcal D_W\) with \(X\preceq Y\), put
\(X_s=X+s(Y-X)\). Differentiation gives
\[
\begin{aligned}
\frac{\mathrm d}{\mathrm ds}f(X_s)
&=(Y-X)(I_n+W_2X_s)^{-1}-X_s(I_n+W_2X_s)^{-1}W_2
(Y-X)(I_n+W_2X_s)^{-1}\\
&=(I_n-X_s(I_n+W_2X_s)^{-1}W_2)(Y-X)(I_n+W_2X_s)^{-1}\\
&=(I_n+X_sW_2)^{-1}(Y-X)(I_n+W_2X_s)^{-1}
\succeq0.
\end{aligned}
\]
The last equality comes from Woodbury matrix identity. In the last line, the outer factors are transposes of each other.
Integrating over \(0\leq s\leq1\) proves
\(f(X)\preceq f(Y)\).

Equations~\eqref{eq:P_+-in-set}--\eqref{eq:G-P_+-in-set}
show that \(-P_+,G-P_+,G\in\mathcal D_W\). Since
\(-P_+\preceq G-P_+\preceq G\),
\[
f(-P_+)\preceq\widetilde G\preceq f(G).
\]
The two endpoints satisfy
\[
\begin{aligned}
f(G)
&=G^{1/2}G^{1/2}(I_n+W_2G)^{-1}G^{-1/2}G^{1/2}=G^{1/2}(I_n+G^{1/2}W_2G^{1/2})^{-1}G^{1/2},\\
f(-P_+)
&=-P_+^{1/2}P_+^{1/2}(I_n-W_2P_+)^{-1}P_+^{-1/2}P_+^{1/2}=-P_+^{1/2}(I_n-P_+^{1/2}W_2P_+^{1/2})^{-1}P_+^{1/2}.
\end{aligned}
\]
Therefore
\[
0\preceq f(G)\preceq G\preceq\gamma q_+I_n,
\qquad
-\frac{\bar p}{\theta}I_n\preceq f(-P_+)\preceq0.
\]
It follows that
\(-\bar p\theta^{-1}I_n\preceq\widetilde G
\preceq\gamma q_+I_n\), which proves the norm bound.
\end{proof}

\subsection{Uniform finite-horizon bounds}
\label{app:Mc-alphac-Mp-bound}

\begin{theorem}[Dimension-uniform closed-loop and Riccati bounds]
\label{thm:Mc-alphac-Mp-bound}
For Benchmark~\ref{bench:mass_spring_chain}, the constants chosen
in Remark~\ref{re:ad-M_c-alpha_c} and
Proposition~\ref{prop:horizon-uniform-riccati-bound} satisfy
\[
\alpha_*:=\frac{\underline p q_-}{2\bar p^2}
\leq\alpha_c\leq\alpha_1.
\]
Let
\[
C_\star=\max\left\{
\frac{\bar p}{\theta},\gamma q_++\bar p\right\}.
\]
Then
\begin{equation}\label{eq:chain-uniform-stability-constants}
\begin{aligned}
M_c&\leq\overline M_c
:=\theta^{-1/2}
+\frac{\bar p^2\|\mathcal F\|^2C_\star}{\mu\underline p q_-\theta},\\
M_P&\leq\overline M_P
:=\bar p+\frac{C_\star}{\theta}.
\end{aligned}
\end{equation}
In particular, for every \(n_m\geq2\) and \(T>0\),
\[
\|\Psi(t,s)\|\leq\overline M_c e^{-\alpha_*(t-s)},
\qquad 0\leq s\leq t\leq T,
\qquad
\sup_{t\in[0,T]}\|P(t)\|\leq\overline M_P,
\]
with the same constants \(\overline M_c,\alpha_*,\overline M_P\).
\end{theorem}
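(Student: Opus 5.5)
The plan is to substitute the dimension-free bounds from Lemmas~\ref{lem:are-bound}, \ref{lem:lyapunov-bound} and~\ref{lem:chain-terminal-correction} into the explicit formulas of Remark~\ref{re:ad-M_c-alpha_c} and Proposition~\ref{prop:horizon-uniform-riccati-bound}. Both results apply to the benchmark family, because Lemma~\ref{lem:open-loop-stable} shows that \(A\) is Hurwitz, so Assumption~\ref{ass:sta&det} holds. Once the constants are bounded, the two displayed estimates for \(\Psi\) and \(P\) follow from Propositions~\ref{prop:finite-horizon-closed-loop-stability} and~\ref{prop:horizon-uniform-riccati-bound}. For the \(\Psi\) estimate we also use that \(e^{-\alpha_c(t-s)}\leq e^{-\alpha_*(t-s)}\) whenever \(\alpha_c\geq\alpha_*\).

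First, the decay rate. Recall \(\alpha_c=1/(2\|W_1\|)\). Lemma~\ref{lem:lyapunov-bound} gives the upper bound \(\|W_1\|\leq\bar p^2/(\underline p q_-)\), which yields \(\alpha_c\geq\underline p q_-/(2\bar p^2)=\alpha_*\). It also gives the lower bound \(W_1\succeq(2\alpha_1)^{-1}I_n\), hence \(\|W_1\|\geq 1/(2\alpha_1)\) and \(\alpha_c\leq\alpha_1\).

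Second, the condition number of \(W_1\). The same two-sided bound gives
\[
\kappa(W_1)\leq\frac{\bar p^2}{\underline p q_-}\cdot 2\alpha_1=\frac{1}{\theta},
\]
with \(\theta\) as in \eqref{eq:chain-terminal-gap}. Lemma~\ref{lem:chain-terminal-correction} bounds both \(\|G-P_+\|\leq\gamma q_++\bar p\) and \(\|\widetilde G\|\leq\max\{\bar p/\theta,\gamma q_+\}\), so
\[
\max\{\|G-P_+\|,\|\widetilde G\|\}\leq C_\star.
\]
Inserting these bounds and \(\|W_2\|\leq\bar p^2\|\mathcal F\|^2/(\mu\underline p q_-)\) into
\[
M_c=\sqrt{\kappa(W_1)}+\|W_2\|\kappa(W_1)\max\{\|G-P_+\|,\|\widetilde G\|\}
\]
gives \(\overline M_c\) directly. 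Similarly, \(\|P_+\|\leq\bar p\) inserted into \(M_P=\|P_+\|+\kappa(W_1)\max\{\cdot\}\) gives \(\overline M_P=\bar p+C_\star/\theta\).

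Third, independence from \(n_m\) and \(T\). Every constant involved (\(\bar p,\underline p,\alpha_1,q_\pm,\theta,\|\mathcal F\|=1\)) depends only on \(m_\pm,\delta,k_0,k_1,\mu,\gamma\), so it is independent of \(n_m\). Independence from \(T\) is inherited from the cited propositions.

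I expect no serious obstacle, since the argument is bookkeeping. The step needing the most care is the identity \(\kappa(W_1)\leq 1/\theta\), which must be checked against the definition of \(\theta\). One must also confirm that the constants in Remark~\ref{re:ad-M_c-alpha_c} are exactly those for which the cited Callier--Winkin--Willems estimates were invoked. That way \(M_c\) and \(M_P\) are admissible simultaneously with the same \(\alpha_c\), and replacing \(\alpha_c\) by the smaller \(\alpha_*\) in the exponential is legitimate.
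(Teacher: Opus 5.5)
Your proposal is correct and follows the paper's proof essentially step for step. The paper likewise reads off \(\kappa(W_1)\leq 2\alpha_1\bar p^2/(\underline p q_-)=\theta^{-1}\) and \(\alpha_*\leq 1/(2\|W_1\|)\leq\alpha_1\) from Lemma~\ref{lem:lyapunov-bound}, bounds \(\max\{\|G-P_+\|,\|\widetilde G\|\}\leq C_\star\) via Lemma~\ref{lem:chain-terminal-correction}, and substitutes these into the explicit formulas for \(\alpha_c\), \(M_c\), and \(M_P\); your remark that \(e^{-\alpha_c(t-s)}\leq e^{-\alpha_*(t-s)}\) spells out a step the paper leaves implicit in the ``in particular'' clause.
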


\begin{proof}
Lemma~\ref{lem:lyapunov-bound} gives
\[
\kappa(W_1)
\leq\frac{2\alpha_1\bar p^2}{\underline p q_-}
=\theta^{-1},
\qquad
\frac{\underline p q_-}{2\bar p^2}
\leq\frac1{2\|W_1\|}\leq\alpha_1.
\]
Set \(C_G=\max\{\|G-P_+\|,\|\widetilde G\|\}\).
Lemma~\ref{lem:chain-terminal-correction} implies
\(C_G\leq C_\star\).
The choices established in Appendix~\ref{app:horizon-uniform-bounds}
are
\begin{equation}\label{eq:M_c-M_P}
\begin{aligned}
\alpha_c&=\frac1{2\|W_1\|},\\
M_c&=\sqrt{\kappa(W_1)}+\|W_2\|\kappa(W_1)C_G,\\
M_P&=\|P_+\|+\kappa(W_1)C_G.
\end{aligned}
\end{equation}
Substituting the preceding bounds proves
\eqref{eq:chain-uniform-stability-constants}.
All quantities on the right depend only on the fixed physical
and cost parameters, which proves the asserted uniformity.
\end{proof}

\subsection{Conditioning and quantum complexity}
\label{app:chain-conditioning-complexity}

We complete the proof of Corollary~\ref{cor:main-mass-spring}
by checking the remaining parameters in the main results.
The coefficient bounds give
\[
\|A\|\leq a_0:=\frac{1+\delta}{m_-}+k_+,
\qquad
\|Q\|\leq q_+,\qquad
\|G\|\leq\gamma q_+.
\]
The Hamiltonian splits into its block-diagonal and off-diagonal
parts, so
\begin{equation}\label{eq:chain-H-norm-bound}
q_-\leq\|Q\|\leq\|H\|
\leq\|A\|+\max\{\|D\|,\|Q\|\}
\leq a_0+\max\{\mu^{-1}\|\mathcal F\|^2,q_+\}=:H_*.
\end{equation}
Each row and column of \(A\) has at most four nonzero entries;
those of \(Q,G\) have at most three, and those of \(B,R\)
have at most one. Thus the common sparsity bound \(s=4\) is
valid for every chain. Also,
\[
d=2,\qquad \kappa(R)=1,\qquad
\|B\|_{\max}=1,\qquad \|G\|_{\max}\leq\gamma q_+.
\]

The constant \(M\) in Section~\ref{sec:main-results} therefore
satisfies
\begin{equation}\label{eq:chain-M-bound}
M\leq\overline M
:=\overline M_c^2
\left(1+\frac{\overline M_c^2}{2\mu\alpha_*}\right)^2
(1+\overline M_P)^2.
\end{equation}
For any fixed step size \(h>0\),
Theorem~\ref{thm:main-conditioning} gives
\begin{equation}\label{eq:chain-conditioning-bound}
\begin{aligned}
\|L_{\rm ex}^{-1}\|
&\leq\frac{\overline M}{1-e^{-\alpha_*h}},\\
\kappa(L_{\rm ex})
&\leq
\frac{\overline M}{1-e^{-\alpha_*h}}
\left(1+\max\left\{e^{hH_*},
\sqrt{1+\gamma^2q_+^2}\right\}\right).
\end{aligned}
\end{equation}
Both bounds are independent of \(n_m\) and \(T\).
If \(hH_*<1/(4\pi)\), choose \(p\) so that
\[
(4p)^p\geq\frac{32\overline M}{1-e^{-\alpha_*h}}.
\]
Then \eqref{eq:Delta_h-ineq} ensures
\(\Delta_h\|L_{\rm ex}^{-1}\|\leq1/2\), and hence
\(\kappa(L)\leq3\kappa(L_{\rm ex})\).
Thus a stage number independent of \(n_m,T\) suffices to
preserve the conditioning bound at fixed \(h\).

For output estimation, use the grid \eqref{eq:m-scaling} and
choose \(p\) according to the desired accuracy as in
Appendix~\ref{app:algorithm-discretization} and the subsequent
output proofs. Equation~\eqref{eq:chain-H-norm-bound} gives
\(m=\mathcal O(1+T)\), while
\eqref{eq:algorithm-inverse-scaling} yields
\[
K_{\rm ex}
=\mathcal O\!\left(
\overline M(1+\alpha_*^{-1}H_*)\right)
=\mathcal O(1),
\]
uniformly also for short horizons.
Consequently, the complexity parameter
\(\mathcal C\) in Section~\ref{sec:main-results} is bounded
independently of \(n_m\) and \(T\).
Write \(\delta_{\rm fail}\in(0,1)\) for the failure probability,
to distinguish it from the damping coefficient \(\delta\).
Under Assumption~\ref{ass:be},
Theorems~\ref{thm:main-cost} and~\ref{thm:main-control} give
the respective query bounds
\[
\widetilde{\mathcal O}\!\left(
\frac{\|x^0\|^2}{\varepsilon}\log\frac2{\delta_{\rm fail}}\right),
\qquad
\widetilde{\mathcal O}\!\left(
\frac{\|x^0\|}{\varepsilon}\log\frac4{\delta_{\rm fail}}\right).
\]
The corresponding additional gate counts have the further factor
\(\log^2(2n)\log^2(2+T)\), up to the suppressed logarithmic
factors. If \(\|x^0\|\) is uniformly bounded, these estimates
have polylogarithmic dependence on \(n\) and \(T\).

\end{document}